\definecolor{darkgreen}{rgb}{0.0,0,0.9}
\newcommand{\Input}{\item[{\bf Input:}]}
\newcommand{\Output}{\item[{\bf Output:}]}
\renewcommand{\Return}{\item[{\bf return}]}
\def\mE{\mathbb{E}}
\renewcommand{\P}[1]{{\mathbb{P}}\left[#1\right]}
\newcommand{\I}[1]{{\mathbb{I}}\left[#1\right]}
\newcommand{\E}{{\mathbb{E}}}
\newcommand{\iEE}[2]{{\mathbb E}_{#1}[#2]}
\newcommand{\uE}[1]{\underset{#1}{\E}}
\newcommand{\EE}[2]{{\mathbb{E}}_{#1}\left[#2\right]}
\newcommand{\st}{\mbox{\rm s.t. }}
\providecommand{\norm}[1]{\left\lVert#1\right\rVert}
\newcommand{\tab}[4]{\Big[\begin{tabular}{cc} $#1$ & $#2$ \\ $#3$ & $#4$ \end{tabular}\Big]}
\def\equationautorefname~#1\null{%
  equation~(#1)\null
}
\declaretheorem[numberwithin=section]{theorem}
\declaretheorem[sibling=theorem]{lemma}
\declaretheorem[sibling=theorem]{proposition}
\declaretheorem[sibling=theorem]{claim}
\declaretheorem[sibling=theorem]{conjecture}
\declaretheorem[sibling=theorem]{corollary}
\declaretheorem[sibling=theorem]{fact}
\declaretheorem[sibling=theorem]{definition}
\declaretheorem[sibling=theorem]{example}
\def\setminus{-}
\def\br{\delta}
\newenvironment{proofof}[1]{{\medbreak\noindent \em Proof of #1.  }}{\hfill\qed\medbreak}
\def\bone{{\bf 1}}
\def\bzero{{\bf 0}}
\def\eps{{\epsilon}}
\def\expandertree{locally connected hierarchy}
\def\expandertrees{locally connected hierarchies}
\def\Expandertrees{Locally Connected Hierarchies}
\def\avoiding{avoiding}
\def\Avoiding{Avoiding}
\def\R{\mathbb{R}}
\def\Z{\mathbb{Z}}
\def\fbag{{\textup{FBag}}}
\def\bx{{\bf x}}
\def\cE{{\cal E}}
\def\cB{{\cal B}}
\def\expansion{\alpha}
\def\cC{{\cal C}}
\def\X{{\mathcal X}}
\def\cT{{\cal T}}
\def\maincp{\textup{Tree-CP}}
\def\averagecp{\textup{Average-CP}}
\def\maxcp{\textup{Max-CP}}
\def\cZ{{\cal Z}}
\def\cA{{\cal A}}
\def\cB{{\cal B}}
\def\setdeg{{\cal P}}
\def\talpha{\tilde{\alpha}}
\def\tlambda{\tilde{\lambda}}
\def\l{\ell}
\def\Pr{t_P}
\def\Ue{U^{e}}
\def\Uf{U^{f}}
\def\reff{{\cal R}\textup{eff}}
\def\cut{\mathcal{O}}
\def\deg{\partial}
\def\Xb{\mathbf{X}}
\def\Yb{\mathbf{Y}}
\def\vol{d}
\def\bad{\textup{bad}}
\def\conflictset{\textup{conflict set}}
\DeclareMathOperator{\diam}{diam}
\DeclareMathOperator{\rank}{rank}
\DeclareMathOperator{\poly}{poly}
\DeclareMathOperator{\loglog}{loglog}
\DeclareMathOperator{\polylog}{polylog}
\DeclareMathOperator{\polyloglog}{polyloglog}
\DeclareMathOperator{\bag}{Bag}
\DeclareMathOperator{\token}{token}
\DeclareMathOperator{\argmax}{argmax}
\DeclareMathOperator{\argmin}{argmin}
\DeclareMathOperator{\trace}{Tr}
\DeclareMathOperator{\border}{Bor}
\DeclareMathOperator{\interior}{Int}
\DeclareMathOperator{\shrink}{shrink}
\DeclareMathOperator{\excess}{excess}
\title{Effective-Resistance-Reducing Flows, Spectrally Thin Trees,\\ and Asymmetric TSP}
\author{Nima Anari
\thanks{Computer Science Division, UC Berkeley.
Email: \protect\url{anari@berkeley.edu}}
\and
Shayan Oveis Gharan
\thanks{Department of Computer Science and Engineering, University of Washington. This work was partly done while the author was a postdoctoral Miller fellow at UC Berkeley. Email: \protect\url{shayan@cs.washington.edu}}
}
\begin{document}

\maketitle
\begin{abstract}
We show that the integrality gap of the natural LP relaxation of the Asymmetric Traveling Salesman Problem is $\polyloglog(n)$. In other words, there is a polynomial time algorithm that approximates the {\em value} of the optimum tour within a factor of $\polyloglog(n)$, where $\polyloglog(n)$ is a bounded degree polynomial of $\loglog(n)$.
We prove this by showing that any $k$-edge-connected unweighted graph has a $\polyloglog(n)/k$-thin spanning tree.

Our main new ingredient is a procedure, albeit an exponentially sized convex program, that ``transforms'' graphs that do not admit any {\em spectrally} thin trees into those that provably have spectrally thin trees. More precisely, given a $k$-edge-connected graph $G=(V,E)$ where $k\geq 7\log(n)$,
we show that there is  a matrix $D$ that ``preserves'' the structure of all cuts of $G$ such that for a set $F\subseteq E$ that induces an $\Omega(k)$-edge-connected graph, the effective resistance of every edge in $F$ w.r.t. $D$ is at most $\polylog(k)/k$. Then, we use a recent extension of the seminal work of Marcus, Spielman, and Srivastava \cite{MSS13} by the authors \cite{AO14b} to prove the existence of a   $\polylog(k)/k$-spectrally thin tree with respect to $D$. Such a tree
is $\polylog(k)/k$-combinatorially thin with respect to $G$ as $D$ preserves the structure of cuts of~$G$.
\end{abstract}

\thispagestyle{empty}
\newpage 
\tableofcontents
\thispagestyle{empty}
\newpage
\setcounter{page}{1}

\section{Introduction}
\label{sec:introduction}
In the Asymmetric Traveling Salesman Problem (ATSP)
we are given a set $V$ of $n:=|V|$ vertices and a nonnegative cost
function $c:V\times V\to\R_+$. The goal is to find the shortest tour that visits every vertex {\em at least} once.

If the cost function is symmetric, i.e., $c(u,v) = c(v,u)$ for all $u,v\in V$, then the problem is known as the Symmetric Traveling Salesman Problem (STSP).
There is a $3/2$ approximation algorithm by Christofides \cite{Chr76} for STSP. 

There is a natural LP relaxation for ATSP proposed by Held and Karp \cite{HK70},
\begin{equation}
\begin{aligned}
\min & \sum_{u,v\in V} c(u,v) x_{u,v} &\\
\st  & \sum_{u\in S, v\notin S} x_{u,v} \geq 1 & \forall S\subseteq V,\\
& \sum_{v\in V} x_{u,v} =\sum_{v\in V} x_{v,u} = 1 & \forall u\in V,\\
& x_{u,v} \geq 0 & \forall u,v\in V.
\end{aligned}
\label{lp:tsp}
\end{equation}
It is conjectured that the integrality gap of the above LP relaxation is a constant, i.e., the optimum value of the above LP relaxation is within a constant factor of the length of the optimum ATSP tour. 
Until very recently, we had a very limited understanding of the solutions of the above LP relaxation.
To this date, the best known lower bound on the integrality gap of the above LP is 2 \cite{CGK06}.

Despite many efforts, there is no known constant factor approximation algorithm for ATSP. Recently, Asadpour, Goemans, Madry, the second author and Saberi \cite{AGMOS10}
designed an $O(\log n/\log\log n)$ approximation algorithm
for ATSP that broke the $O(\log n)$ barrier from Frieze, Galbiati, and Maffioli \cite{FGM82} and subsequent improvements \cite{Bla02, KLSS05, FS07}.
The result of \cite{AGMOS10} also upper-bounds the integrality gap of the Held-Karp LP relaxation by $O(\log n/\log\log n)$. 
Later, the second author with Saberi~\cite{OS11} and subsequently Erickson and Sidiropoulos~\cite{ES14}
designed  constant factor approximation algorithms for ATSP on planar and bounded genus graphs. 
 
\paragraph{Thin Trees.} The main ingredient of all of the above
recent developments is the construction of a ``thin'' tree. 
Let $G=(V,E)$ be an unweighted undirected $k$-edge-connected graph with $n$ vertices. Recall that $G$ is $k$-edge-connected if there are at least $k$ edges in every cut of $G$, see \autoref{subsec:graphtheory} for properties of $k$-edge-connected graphs.
We allow $G$ to have an arbitrary number of parallel edges, so we think of $E$ as a multiset of edges.
Roughly speaking, a spanning tree $T\subseteq E$ is $\alpha$-thin with respect to $G$ if it does not contain
more than $\alpha$-fraction of the edges of any cut in $G$.
\begin{definition}
A spanning tree $T\subseteq E$ is $\alpha$-thin with respect to a (unweighted) graph $G=(V,E)$, if for each set $S\subseteq V$,
$$ |T(S,\overline{S})| \leq \alpha \cdot |E(S,\overline{S})|,$$
where $T(S,\overline{S})$ and $E(S,\overline{S})$ are the set of edges of $T$ and $G$ in the cut $(S,\overline{S})$ respectively.
\end{definition}
\noindent One can analogously define $\alpha$-thin edge covers, $\alpha$-thin paths, etc. Note that thinness is a downward closed property, that is any subgraph of an $\alpha$-thin subgraph of $G$ is also $\alpha$-thin. In particular, any spanning tree of an $\alpha$-thin connected subgraph of $G$ is an $\alpha$-thin spanning tree of $G$. See \autoref{fig:thintreeexample} for two examples of thin trees.
\begin{figure}\centering
\begin{tikzpicture}[scale=1.5]
	 \tikzstyle{vertex}=[draw,fill,circle,minimum size=10pt,inner sep=0pt]
	 \tikzstyle{cutv} = [vertex, fill=red]
	 \tikzstyle{ncutv} = [vertex, fill=black]
	 \tikzstyle{selected edge} = [draw,line width=4pt,-,blue!50]
	 \tikzstyle{edge} = [draw,thick,-,black]
\foreach \i/\c in {0/cutv,1/ncutv}{	 
	 \begin{scope}[shift={(\i*5,0)}]
	 \node[\c] (v0) at (0,0) {}; %{$0000$};
	 \node[vertex] (v1) at (0,1) {}; %{$0001$};
	 \node[\c] (v2) at (1,0) {}; %{$0010$};
	 \node[vertex] (v3) at (1,1) {}; %{$0011$};
	 \node[\c] (v4) at (0.23, 0.4) {}; %{$0100$};
	 \node[vertex] (v5) at (0.23,1.4) {}; %{$0101$};
	 \node[\c] (v6) at (1.23,0.4) {};% {$0110$};
	 \node[vertex] (v7) at (1.23,1.4) {}; %{$0111$};
	 \node[\c] (v8) at (-1,-1) {}; %{$1000$};
	 \node[vertex] (v9) at (-1,2) {}; %{$1001$};
	 \node[vertex] (v13) at (-0.66,2.7) {}; %{$1101$};
	 \node[\c] (v12) at (-0.66,-0.3) {}; %{$1100$};
	 \node[\c] (v10) at (2,-1) {}; %{$1010$};
	 \node[\c] (v14) at (2.34,-0.3) {}; %{$1110$};
	 \node[vertex] (v11) at (2,2) {}; %{$1011$};
	 \node[vertex] (v15) at (2.34,2.7) {}; %{$1111$};
	 \draw[edge] (v0) -- (v1) -- (v3) -- (v2) -- (v0);
	 \draw[edge] (v0) -- (v4) -- (v5) -- (v1) -- (v0);
	 \draw[edge] (v2) -- (v6) -- (v7) -- (v3) -- (v2);
	 \draw[edge] (v4) -- (v6) -- (v7) -- (v5) -- (v4);
	 \draw[edge] (v8) -- (v9) -- (v13) -- (v12) -- (v8);
	 \draw[edge] (v0) -- (v4) -- (v12) -- (v8) -- (v0);
	 \draw[edge] (v1) -- (v9) -- (v13) -- (v5) -- (v1);
	 \draw[edge] (v2) -- (v10) -- (v14) -- (v6) -- (v2);
	 \draw[edge] (v8) -- (v10) -- (v14) -- (v12) -- (v8);
	 \draw[edge] (v3) -- (v11) -- (v15) -- (v7) -- (v3);
	 \draw[edge] (v10) -- (v11) -- (v15) -- (v14) -- (v10);
	 \draw[edge] (v9) -- (v11) -- (v15) -- (v13) -- (v9);
	 \ifthenelse {\equal{\i}{0}} 
	 	{\draw[selected edge] (v0)--(v1)
	  (v1)--(v3) (v3)--(v2) (v2)--(v6) (v6)--(v7) (v7)--(v5)
	  (v5)--(v4) (v4)--(v12) (v12) --(v13) (v13)--(v15)
	  (v15)--(v14) (v14)--(v10) (v10)--(v11) (v11)--(v9) (v9)--(v8);}
	  {\draw[selected edge] (v2) -- (v6) (v6) -- (v4) 
	  	(v4) -- (v5) (v5) -- (v13) (v13) -- (v12)
	  	 (v12) -- (v14) (v14) -- (v15) (v15) -- (v7)
	  	  (v7) -- (v3) (v3) -- (v1) (v1) -- (v9)
	  	   (v9) -- (v11) (v11) -- (v10) (v10) -- (v8);}
	  \end{scope}
	}
\end{tikzpicture}

%\subfloat[]{
%\centering
%\begin{subfigure}[b]{0.3\textwidth}
%	\includegraphics[width=5cm]{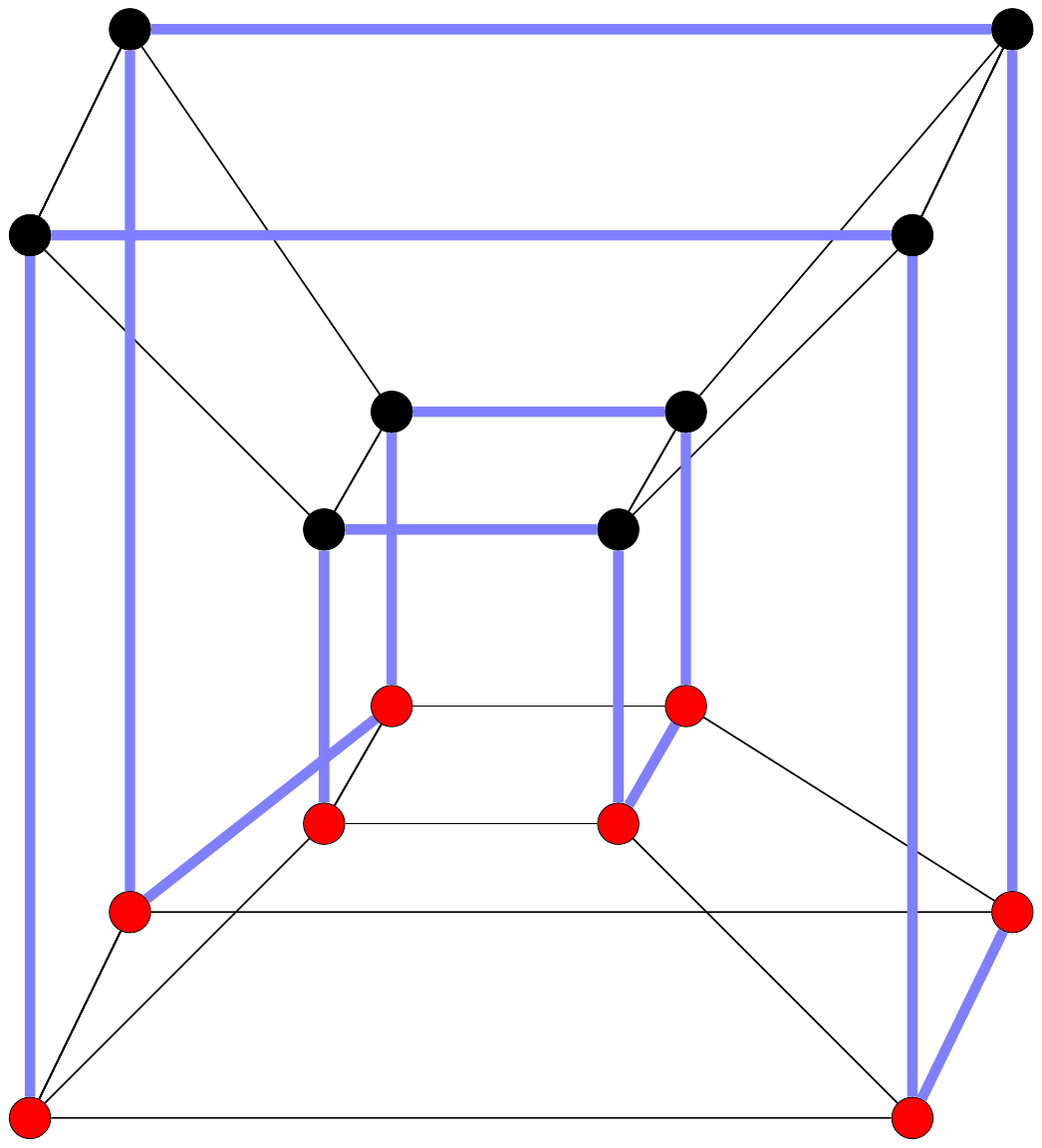}
%\end{subfigure}
%\begin{subfigure}[b]{0.3\textwidth}
%%\subfloat[]{
%\includegraphics[width=5cm]{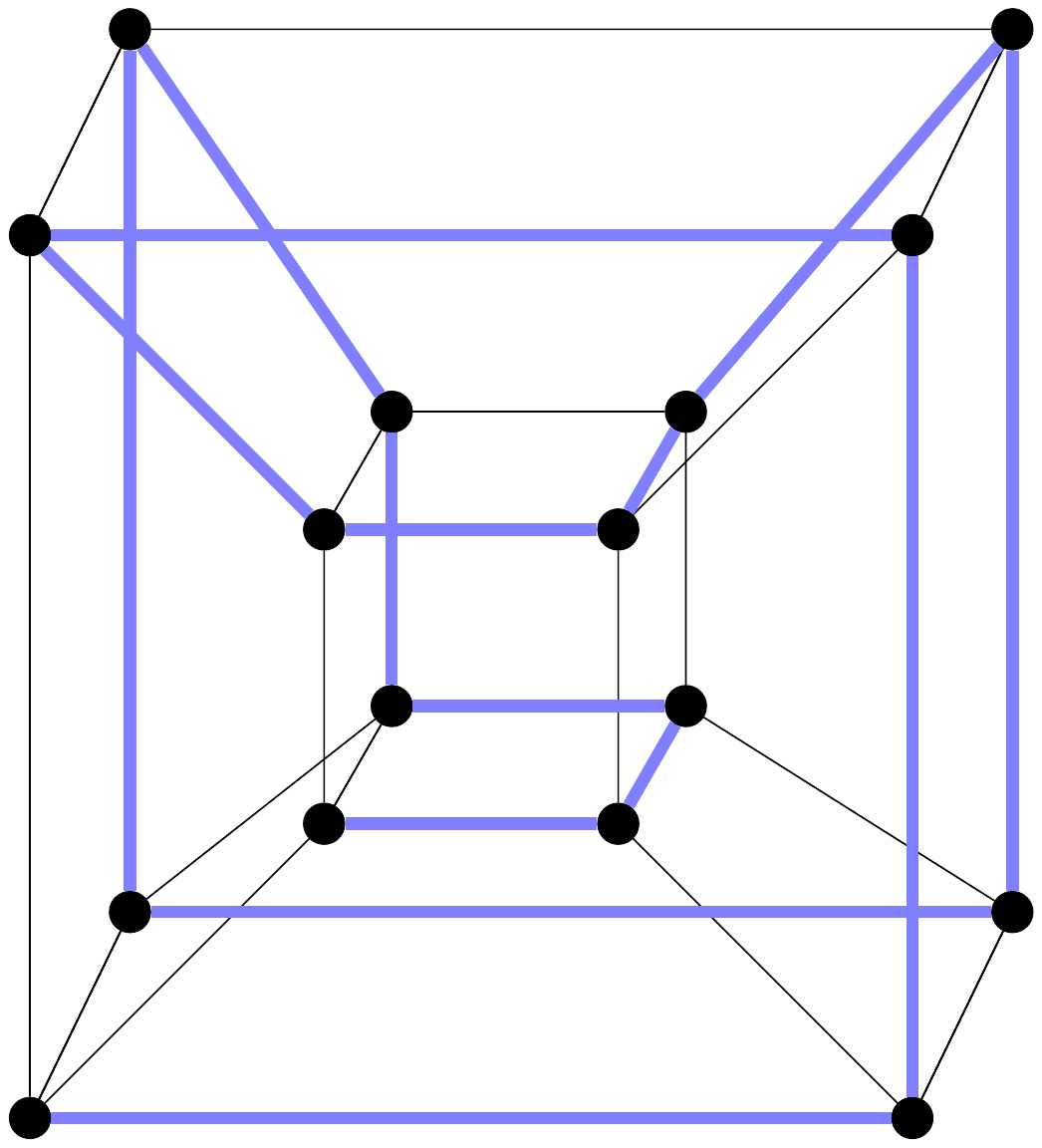}
%%}
%\end{subfigure}
\caption{Two spanning trees of  4-dimensional hypercube that is $4$-edge-connected. Although both of the trees are Hamiltonian paths, the left spanning tree is $1$-thin because all of the edges of the cut separating red vertices from the black ones are in the tree while the right spanning tree is $0.667$-thin.}	
\label{fig:thintreeexample}
\end{figure}

A key lemma in \cite{AGMOS10} shows that one can obtain an approximation algorithm
for ATSP by finding a thin tree of small cost with respect to the graph defined by the fractional solution
of the LP relaxation. In addition, proving the existence of a thin tree provides a bound
on the integrality gap of the Held-Karp LP relaxation for ATSP.

Later, in \cite{OS11}  this connection is made more concrete. Namely, to break the $\Theta(\frac{\log(n)}{\log\log(n)})$ barrier, it suffices to ignore the costs of the edges and construct a thin tree in every $k$-edge-connected graph for $k=\Theta(\log(n))$.
\begin{restatable}{theorem}{atspthinnessthm}
\label{thm:agmos}
For any $\alpha>0$ (which can be a function of $n$), and $k\geq \log n$, a polynomial-time construction of  an $\alpha/k$-thin tree 
in any $k$-edge-connected graph gives an $O(\alpha)$-approximation algorithm for ATSP. In addition, even an existential proof gives an $O(\alpha)$ upper bound on the integrality gap of the LP relaxation. 
\end{restatable}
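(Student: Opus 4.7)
The plan is a reduction: show that the Held--Karp LP, together with an $\alpha/k$-thin-tree subroutine and $k = \lceil \log n \rceil$, yields an ATSP tour of cost $O(\alpha) \cdot \opt$. Let $x^*$ be an optimal fractional solution of \eqref{lp:tsp} of cost $\opt$, and let $\bar x_{uv} := x^*_{uv} + x^*_{vu}$. The cut constraints of \eqref{lp:tsp} give $\bar x(S,\overline{S}) \ge 2$ for all nontrivial $S$, and the vertex-degree constraints make $x^*$ an Eulerian fractional circulation with $x^*(S,\overline{S}) = x^*(\overline{S},S)$ for every $S$.

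First, convert $x^*$ into an unweighted $\Omega(k)$-edge-connected multigraph $H$ on $V$ by independent randomized rounding of $(k/2)\bar x$: for instance, include $\mathrm{Bin}(k,\bar x_{uv}/2)$ copies of edge $\{u,v\}$. Karger's polynomial bound on the number of $O(1)$-approximate minimum cuts, together with a Chernoff bound on each such cut, shows that for $k \ge \log n$ we have with high probability
\[
\Omega(k)\cdot\bar x(S,\overline{S}) \;\le\; |H(S,\overline{S})| \;\le\; O(k)\cdot\bar x(S,\overline{S})
\qquad \forall\, \emptyset\ne S\subsetneq V.
\]
In particular $H$ is $\Omega(k)$-edge-connected. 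Applying the hypothesized $\alpha/k$-thin-tree algorithm to $H$ produces a spanning tree $T$ with $|T(S,\overline{S})| \le (\alpha/\Omega(k))\,|H(S,\overline{S})| \le O(\alpha)\cdot\bar x(S,\overline{S}) = 2\,O(\alpha)\cdot x^*(S,\overline{S})$ for every cut $S$; in other words, $T$ is $O(\alpha)$-thin with respect to the LP solution itself.

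It remains to turn $T$ into a cheap Eulerian directed integer multigraph spanning $V$, whose Eulerian circuit is the desired ATSP tour. The fractional vector $z_0 := C\alpha \cdot x^*$, for a large enough constant $C$, is already feasible for the cut-covering LP
\[
\min_{z \ge 0}\; c\cdot z \quad\text{s.t.}\quad z \text{ Eulerian},\ z(\delta^+(S)) \ge |T(S,\overline{S})| \ \forall S,
\]
because $x^*$ is Eulerian, $z_0(\delta^+(S)) = C\alpha\cdot x^*(S,\overline{S}) \ge |T(S,\overline{S})|$ by $O(\alpha)$-thinness, and $c\cdot z_0 = O(\alpha)\cdot\opt$. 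Since $T$ is a spanning tree, $|T(S,\overline{S})| \ge 1$ forces connectivity of $z$'s support, so any integer feasible $z$ gives a valid tour. Rounding $z_0$ to such an integer $z$ of cost $O(\alpha)\opt$---using the integrality of the Eulerian flow polytope plus a constant-factor correction to cover the cut demands $|T(S,\overline{S})|$, which is the main technical lemma of \cite{AGMOS10}---completes the construction.

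The main obstacle is precisely this last step: promoting $O(\alpha)$-thinness of $T$ with respect to $x^*$ into an integer Eulerian cut-covering multigraph of cost $O(\alpha)\opt$, via a flow/LP-duality argument in which the thinness bound serves as the feasibility certificate. All remaining pieces---randomized rounding to obtain $H$ and translating $H$-thinness of $T$ into $x^*$-thinness---are standard Chernoff/Karger concentration, and the hypothesis $k \ge \log n$ is used precisely to make the Chernoff union bound over all cuts go through. If the thin-tree algorithm is replaced by a mere existence proof, the same reduction still gives an existential $O(\alpha)$ bound on the integrality gap of \eqref{lp:tsp}, at the expense of polynomial-time constructibility of $T$.
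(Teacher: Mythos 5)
Your proposal follows the paper's high-level plan --- sample a $\Theta(k)$-edge-connected unweighted multigraph $H$ from the LP solution, run the thin-tree subroutine on $H$, and hand the result to the rounding lemma of \cite{AGMOS10} (restated as \autoref{thm:agmos_version}) --- but it is missing a step that the paper handles explicitly and that cannot be skipped: \emph{bounding the cost of the thin tree}.

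The thin-tree subroutine is oblivious to the cost function $c$; it only promises \emph{some} $\alpha/k$-thin spanning tree of $H$, which may be arbitrarily expensive. \autoref{thm:agmos_version} requires both a directed-thinness bound and a cost bound $\sum_{(u,v)\in T} c(u,v)\leq \beta\cdot c(x)$, and its output tour has length $O(\alpha+\beta)\cdot c(x)$; with no control on $c(T)$ the $\beta$-term is unbounded and the reduction yields nothing. You try to sidestep this by treating $T$ only as a source of cut demands, writing the LP $\min\{c\cdot z : z\ge 0 \text{ Eulerian},\ z(\delta^+(S))\ge |T(S,\overline{S})|\ \forall S\}$ and claiming an integral solution of cost $O(\alpha)\cdot\opt$ follows from ``integrality of the Eulerian flow polytope plus a constant-factor correction.'' This does not go through. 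Hoffman's circulation theorem gives integrality under \emph{arc} lower/upper bounds, whereas your constraints are \emph{cut} constraints; the polytope you wrote is essentially the Held--Karp polytope itself, whose integrality gap is exactly what \autoref{thm:agmos} is trying to bound, so the argument is circular. The rounding lemma of \cite{AGMOS10} is not a statement about this LP: it orients $T$, includes the oriented tree in the output, and finds a minimum-cost \emph{arc-constrained} circulation whose addition makes the oriented tree Eulerian --- which is precisely why $c(T)$ appears in the final bound.

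The paper's fix is short but essential. Since $G$ is $2k$-edge-connected and removing a $\beta$-thin spanning tree leaves it $2k(1-\beta)\geq k$-edge-connected, one can invoke the thin-tree subroutine $j=\lfloor 1/(2\beta)\rfloor$ times with $\beta=\alpha/k$ to produce edge-disjoint $\beta$-thin trees $T_1,\dots,T_j$, and then take the cheapest: $c(T_1)\leq c(E)/j = O(k\beta)\cdot c(x) = O(\alpha)\cdot c(x)$, exactly the cost bound \autoref{thm:agmos_version} needs. The rest of your argument --- the sampling, the Karger-style concentration using $k\geq\log n$, and translating thinness against $H$ into thinness against $x$ --- matches the paper's.
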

\noindent See \autoref{app:missingintroduction} for the proof of the above theorem.
%a construction of an $O(f(n)/k)$ thin tree in any $k$-edge connected graph gives an $O(f(n))$ approximation algorithm for ATSP.  In addition, if $O(f(n)/k)$ thin trees exists, then the optimum solution of LP \eqref{lp:tsp} estimates the value of the optimum of ATSP up to an $O(f(n))$ multiplicative factor.
The above theorem shows that to understand the solutions of LP \eqref{lp:tsp} it is enough to understand the thin tree problem in graphs with low connectivity. 

It is easy to show that any $k$-edge-connected graph has an $O(\log(n)/k)$-thin tree  \cite{GHJS09} using the independent randomized rounding method of Raghavan and Thompson~\cite{RT87}. It is enough to sample each edge of $G$ independently with probability $\Theta(\log(n)/k)$ and then choose an arbitrary spanning tree of the sampled graph. 

Asadpour et al.~\cite{AGMOS10} employ a more sophisticated randomized rounding algorithm and show that any $k$-edge-connected graph has a $\frac{\log(n)}{k\cdot \log\log(n)}$-thin tree. The basic idea of their algorithm is to use a correlated distribution, that is to sample edges almost independently while preserving the connectivity of the sampled set. More precisely, they sample a random spanning tree from a distribution where the edges are negatively correlated, so they get connectivity for free, and they only use the upper tail of the Chernoff types of bounds. The $1/\loglog(n)$ gain comes from the fact that the upper tail of the Chernoff bound is slightly stronger than the lower tail, 

\medskip 
Independently of the above applications of thin trees, Goddyn formulated the thin tree conjecture because of the close connections to several long-standing open problems regarding nowhere-zero flows.
\begin{conjecture}[Goddyn~\cite{God04}]
\label{conj:thintree}
There exists a function $f(\alpha)$ such that, for any $0<\alpha<1$, every $f(\alpha)$-edge-connected graph (of arbitrary size) has an $\alpha$-thin spanning tree.
\end{conjecture}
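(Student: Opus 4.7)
The plan is to extend the effective-resistance-reducing flow framework developed in this paper so as to remove the $\log(n)$ dependence in the edge-connectivity requirement entirely. Given $0<\alpha<1$, the target is to produce a function $f(\alpha)$ depending only on $\alpha$ such that every $f(\alpha)$-edge-connected graph $G$ admits an $\alpha$-thin spanning tree. The overall architecture would mimic the paper: first, transform $G$ into a reweighting matrix $D$ that preserves the structure of all cuts of $G$ while flattening effective resistances on a well-connected subset $F\subseteq E$; second, invoke an interlacing-polynomial argument in the style of \cite{MSS13, AO14b} to produce a spectrally thin tree with respect to $D$; third, translate this to combinatorial thinness for $G$ via the cut-preservation property. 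The aim is to make every quantitative bound in this chain a function of $\alpha$ alone.

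The first obstacle is the $\log(n)$ factor from concentration, since a Chernoff-type union bound over the $2^n$ cuts of $G$ cannot by itself deliver an $n$-independent result. I would bypass this by separating cuts into scales. Near-minimum cuts, those of size at most $\beta k$ for a constant $\beta=\beta(\alpha)$, are only polynomially many in $n$ by known counting bounds and can be organized into a cactus-like skeleton; this skeleton would be treated explicitly, using a polyhedral or matroidal rounding step that does not invoke randomness at all. Cuts of size substantially larger than $k$ would then be handled by concentration, exploiting that the relative deviation of $|T(S,\bS)|/|E(S,\bS)|$ shrinks faster in $|E(S,\bS)|$ than the count of large cuts grows, so that no $\log n$ loss survives.

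The second, and more serious, obstacle is intrinsic to the spectral route: the trace of the effective-resistance matrix is exactly $n-1$, so demanding $\reff(e)\le \alpha/k$ uniformly requires $|E|\ge \Omega(nk/\alpha)$, which already implicates $n$. The plan is to iterate the construction hierarchically. I would contract maximal $\Omega(k)$-edge-connected clusters so that the working graph at each level has average degree bounded in terms of $\alpha$ alone, apply the paper's effective-resistance-reducing transformation and the interlacing-polynomial construction at each level, and glue the resulting trees along an \expandertree\ whose depth depends only on $1/\alpha$. The flattened resistances at each level would then support a spectrally thin tree whose thinness parameter can be charged locally.

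The hard part will be producing a bounded-depth hierarchy and controlling the accumulation of thinness when local trees are glued. This composition is delicate because thinness behaves multiplicatively across levels, so even a $\loglog(n)$-depth hierarchy would reintroduce an $n$-dependence. Proving that every $f(\alpha)$-edge-connected graph admits a depth bounded purely in terms of $\alpha$ is essentially equivalent to Goddyn's conjecture itself, so this is where the proposal meets the central open difficulty. Genuine progress here would require either a new structural theorem for $k$-edge-connected graphs with $k=f(\alpha)$ only, or a strengthening of the interlacing-polynomials framework of \cite{MSS13, AO14b} that replaces the trace constraint $\sum_e \reff(e)=n-1$ with a scale-free condition on individual resistances.
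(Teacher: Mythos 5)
The statement you are addressing is a \emph{conjecture} (Goddyn's thin tree conjecture), and the paper does not prove it: the paper proves only the weaker statement that every $k$-edge-connected graph has a $\polyloglog(n)/k$-thin tree, and it says explicitly that this does not resolve \autoref{conj:thintree} because of the residual dependence on $n$. Your proposal does not close this gap either, and you concede as much: the decisive step --- gluing local spectrally thin trees along a hierarchy whose depth is bounded purely in terms of $\alpha$ --- is exactly where the open difficulty lives, so the argument is circular at its core. Concretely, the paper needs $k\geq 7\log n$ in \autoref{thm:hierarchydecomgeneral} because the construction of an expanding \expandertree\ (\autoref{thm:constructexpandertree}, via \autoref{prop:expsizedenseexpander}) requires extracting a dense induced expander, and the $\log n$-dimensional hypercube shows this requirement is tight; the hypercube (\autoref{ex:hypercube}) also shows that no $(\omega(1/k),\Omega(k))$-expanding hierarchy exists in general, which forces the $\log(1/\expansion)=\Theta(\log\log n)$ rounds of expansion boosting and hence the $\polyloglog(n)$ loss. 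Your plan to ``contract maximal $\Omega(k)$-edge-connected clusters'' and obtain a hierarchy of depth $f(\alpha)$ runs directly into this obstruction, and you offer no new structural theorem that would circumvent it.

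Two subsidiary points in your sketch also misdiagnose where the $n$-dependence arises in this framework. First, the $\log n$ here is not a Chernoff-union-bound artifact over the $2^n$ cuts (that is the older randomized-rounding route giving $O(\log n/k)$-thinness); so the scale-separation of cuts into a cactus of near-minimum cuts plus concentration for large cuts does not address the actual bottleneck, which is structural (existence of dense expanding induced subgraphs and the depth of the refinement process). Second, the trace identity $\sum_{e}\reff_{L_G}(e)=n-1$ is not by itself the obstruction to the spectral route --- a $k$-edge-connected graph already has $\Omega(nk)$ edges, so average resistance $O(1/k)$ is automatic; the real barrier, which the paper's shortcut matrices are designed to bypass, is that individual edges in some cuts can have effective resistance close to $1$ (\autoref{lem:necessaryspectralthinness} and the example of \autoref{fig:edgecover}), and \autoref{thm:avgeffres} shows one cannot flatten resistances densely in every cut even with a shortcut matrix. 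A correct proof of \autoref{conj:thintree} along these lines would need either an $n$-independent replacement for \autoref{thm:constructexpandertree} or a fundamentally different way to certify $\Omega(k)$-connectivity of the good edges without the iterative boosting; neither is supplied, so the conjecture remains open under your proposal.
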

Goddyn's conjecture in the strongest form postulates that
for a sufficiently large $k$ that is independent of the size of $G$, every $k$-edge-connected graph has an $O(1/k)$-thin tree.
Goddyn proved that if the above conjecture holds for an arbitrary function $f(.)$, it implies a weaker version of Jaeger's conjecture on the existence of circular nowhere-zero flows \cite{jaeger:flowconj}. Very recently, Thomassen proved a weaker version of Jaeger's conjecture \cite{Tho12,LTWZ13}, but his proof has not yet shed any light on the resolution of the thin tree conjecture.

 To this date, \autoref{conj:thintree} is only proved for planar and bounded genus graphs  \cite{OS11,ES14} and edge-transitive graphs\footnote{A graph $G=(V,E)$ is edge-transitive, if for any pair of edges $e,f\in E$ there is an automorphism of $G$ that maps $e$ to $f$.} \cite{MSS13,HO14}
for $f(\alpha)=O(1/\alpha)$. 
%It is worth noting that we are not aware of any proof of even the following weaker version of the above thin tree conjecture. 
%\begin{conjecture}
%\label{conj:weakthintree}
%There exists  numbers $k_0>1$  and $0< \alpha <1$ such that any $k_0$ edge connected graph contains an $\alpha$-thin spanning tree.
%\end{conjecture}
We remark that if Goddyn's thin tree conjecture holds for an arbitrary function $f(.)$, we get an  upper bound of $O(\log^{1-\Omega(1)}(n))$ on the integrality gap  of the LP relaxation of ATSP. 

\medskip
\paragraph{Summary of our Contribution.} In this paper, we show that any $k$-edge-connected graph has a $\polyloglog(n)/k$-thin tree.
Using \autoref{thm:agmos} for $\alpha=\polyloglog(n)$ and $k=\log(n)$ this implies that the integrality gap of the LP relaxation is $\polyloglog(n)$.
Note that this does not resolve Goddyn's conjecture. 
%but it implies that the integrality gap of the LP \eqref{lp:tsp} is $\polyloglog(n)$. 
Perhaps, one of the main consequences of our work is that we can  round (not necessarily in polynomial time) the solutions of the LP relaxation exponentially better than the  randomized rounding in the worst case.

The key to our proof is to rigorously relate  the thin tree problem to a seemingly related spectral question that is known as the Kadison-Singer problem in operator theory \cite{Wea04} and then to use tools in spectral (graph) theory to solve the new problem. Until very recently, the best solution to the Kadison-Singer problem and the Weaver conjecture was based on the randomized rounding technique and matrix Chernoff bounds and incurred a loss of $\log(n)$ \cite{Rud99,AW02}. Marcus, Spielman, and Srivastava \cite{MSS13} in a breakthrough managed to resolve the conjecture using spectral techniques with no cost that is dependent on $n$. As we will elaborate in the next section, the Kadison-Singer problem can be seen as an ``$L_2$'' version of the thin tree question, or thin tree question can be seen as an $L_1$ version of the Kadison-Singer problem. So, we can summarize our contribution as an $L_1$ to $L_2$ reduction. 

We construct this $L_1$ to $L_2$ reduction using a convex program that symmetrizes the $L_2$ structure of a given graph while preserving its $L_1$ structure. More precisely, a convex program that equalizes the {\em effective resistance} of the edges while preserving the cut structure of $G$. We expect to see several other applications of this convex program in combinatorial optimization and approximation algorithms. In addition to that, we extend the result of Marcus, Spielman, and Srivastava to a larger family of distributions known as {\em strongly Rayleigh} distributions \cite{AO14b}. Strongly Rayleigh distributions are a family of probability distributions with the strongest forms of negative dependence properties \cite{BBL09}. They have been used also in a recent work of the second author, Saberi, and Singh \cite{OSS11} to improve the Christofides approximation algorithm for STSP on graph metrics. We refer the interested readers to \cite{AO14b} for more information.

\paragraph{Subsequent Work.} Subsequent to our work, Svensson \cite{Sve15} employed a sophisticated cycle cover idea and designed a constant factor approximation algorithm for ATSP when $c(.,.)$ is the shortest path metric of an unweighted graph. 
It is unclear if a combination of the ideas in this work and \cite{Sve15} can lead to constant factor approximation algorithms for general ATSP.

The rest of this section is organized as follows: In \autoref{subsec:spectralthintree} we overview the connections of the thin tree problem and graph sparsifiers and in particular the Kadison-Singer problem. Then, in \autoref{subsec:ourcontribution} we present our main theorems. Finally, in \autoref{subsec:proofoverview} we highlight the main ideas of the proof. 

\subsection{Spectrally Thin Trees}
\label{subsec:spectralthintree}
As mentioned before, thin trees are the basis for the best-known approximation algorithms
for ATSP on planar, bounded genus, or general graphs. This follows from their intuitive definition
and the fact that they eliminate the difficulty arising from the underlying asymmetry and the cost function. On the other hand, the major challenge in constructing thin trees or proving their existence
is that we are not aware of any efficient algorithm for measuring or certifying the thinness of a given tree exactly. In order to verify the thinness of a given tree, it seems that one has to look at exponentially many cuts. 

One possible way to avoid this difficulty is to study a stronger definition of thinness, namely the {\em spectral} thinness. 
First, we define some notation. For a set $S\subseteq V$ we use $\bone_S\in \R^V$ to denote the indicator (column) vector of the set $S$. For a vertex $v\in V$, we abuse notation and write $\bone_v$ instead of $\bone_{\{v\}}$. %the vector that is $1$ in the $v$ coordinate and zero otherwise. 
For any edge $e=\{u,v\}\in E$ we fix an arbitrary orientation, say $u\to v$, and we define $\X_e := \bone_u-\bone_v$. 
The Laplacian of $G$, $L_G$, is defined as follows:
$$ L_G:=\sum_{e\in E} \X_e\X_e^\intercal.$$ 
If $G$ is weighted, then we scale up each term $\X_e\X_e^\intercal$ according to the weight of the edge $e$.
Also, for a set $T\subseteq E$ of edges, we write
$$L_T:=\sum_{e\in T} \X_e\X_e^\intercal.$$ 
%to denote the Laplacian matrix of a set $T\subseteq E$ of edges. 

We say a spanning tree, $T$, is $\alpha$-spectrally thin with respect to $G$ if %for any vector $x\in \R^n$,
\begin{equation}
\label{eq:spectralthinness}
 L_T \preceq \alpha\cdot L_G, \text{ i.e., for all $x\in \R^n,$ }x^\intercal L_T x \leq \alpha\cdot x^\intercal L_G x. 
 \end{equation}
 We also say $G$ has a spectrally thin tree if it has an $\alpha$-spectrally thin tree for some $\alpha<1/2$.
Observe that  if $T$ is $\alpha$-spectrally thin, then it is also $\alpha$-(combinatorially) thin. To see that,
note that for any set $S\subseteq V$, $\bone_S^\intercal L_T \bone_S=|T(S,\overline{S})|$ and
$\bone_S^\intercal L_G \bone_S=|E(S,\overline{S})|$. 

One can verify spectral thinness of $T$ (in polynomial time) by finding the smallest $\alpha\in \R$ such that
$$ L_G^{\dagger/2} L_T L_G^{\dagger/2} \preceq \alpha\cdot I,$$
i.e., by computing the largest eigenvalue of $L_G^{\dagger/2} L_T L_G^{\dagger/2}$.
Recall that $L_G^\dagger$ is the pseudoinverse of $L_G$, and $L_G^{\dagger/2}$ is the square root of the pseudoinverse of $L_G$; $L_G^{\dagger/2}$ is well-defined because $L_G^\dagger \succeq 0$. So, unlike the combinatorial thinness, spectral thinness can be computed {\em exactly} in polynomial time. 

The notion of spectral thinness is closely related to spectral sparsifiers of graphs, which have been studied extensively in the past few years \cite{ST04,SS11,BSS14,FHHP11}. Roughly speaking, a spectrally thin tree is a one-sided spectral sparsifier. A spectrally thin tree $T$ would be a true spectral sparsifier if in addition to \eqref{eq:spectralthinness},
it satisfies $\alpha\cdot (1-\eps) x^\intercal L_G x \preceq L_T$ for some constant $\eps$. Until the recent breakthrough of Batson, Spielman, and Srivastava, all constructions of spectral sparsifiers used at least $\Omega(n\log(n))$ edges of the graph \cite{ST04,SS11,FHHP11}. Because of this they are of no use for the particular application of ATSP. Batson, Spielman, and Srivastava \cite{BSS14} managed to construct a spectral sparsifier that uses only $O(n)$ edges of $G$. But in their construction, they assign different weights to the  edges of the sparsifier which again makes their contribution not helpful for ATSP.

Indeed, it was observed by  several people that there is an underlying barrier for the construction of spectrally thin trees and {\em unweighted} spectral sparsifiers. Many families of $k$-edge-connected graphs do not admit spectrally thin trees (see \cite[Thm 4.9]{HO14}).  
Let us elaborate on this observation.
The {\em effective resistance} of an edge $e=\{u,v\}$ in $G$, $\reff_{L_G}(e)$,
is the \emph{energy} of the electrical flow that sends 1 unit of current from $u$ to $v$
%the electrical resistance measured across the endpoints of $e$ 
when the network represents an electrical circuit with each edge being a resistor of resistance 1 
(and if $G$ is weighted, the resistance is the inverse of the weight of $e$).  
See \cite[Ch. 2]{LP13} for background on electrical flows and effective resistance.
Mathematically, the effective resistance can be computed using $L_G^{\dagger}$,
$$\reff_{L_G}(e):=\X_e^\intercal L_G^\dagger \X_e.$$
It is not hard to see that the spectral thinness of any spanning tree $T$ of $G$ is at least the maximum effective resistance of the edges of $T$ in $G$. 
\begin{lemma}
\label{lem:necessaryspectralthinness}
For any graph $G=(V,E)$, the spectral thinness of any spanning tree $T\subseteq E$ is at least
$\max_{e\in T} \reff_{L_G}(e).$
\end{lemma}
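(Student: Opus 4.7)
The plan is to show that any $\alpha$ witnessing spectral thinness of $T$ is an upper bound on each individual effective resistance $\reff_{L_G}(e)$ for $e\in T$. Suppose $T$ is $\alpha$-spectrally thin, i.e., $L_T\preceq \alpha\cdot L_G$. Since $L_T=\sum_{f\in T}\X_f\X_f^\intercal$ is a sum of PSD rank-one terms, dropping all but one gives $\X_e\X_e^\intercal\preceq L_T\preceq \alpha\cdot L_G$ for every $e\in T$. So the goal reduces to extracting, from the PSD inequality $\X_e\X_e^\intercal\preceq \alpha L_G$, the scalar inequality $\X_e^\intercal L_G^\dagger \X_e\le \alpha$.

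The natural way to do this, and the key step, is to test the PSD inequality against the vector $y:=L_G^\dagger \X_e$. On the left one gets
\[
y^\intercal(\X_e\X_e^\intercal) y \;=\; (\X_e^\intercal L_G^\dagger \X_e)^2 \;=\; \reff_{L_G}(e)^2,
\]
while on the right, using $L_G^\dagger L_G L_G^\dagger = L_G^\dagger$ (which holds because $\X_e$ lies in the image of $L_G$ when $G$ is connected; otherwise the claim is trivial as both sides are infinite/undefined in the relevant component),
\[
\alpha\cdot y^\intercal L_G y \;=\; \alpha\cdot \X_e^\intercal L_G^\dagger L_G L_G^\dagger \X_e \;=\; \alpha\cdot \reff_{L_G}(e).
\]
Hence $\reff_{L_G}(e)^2\le \alpha\cdot \reff_{L_G}(e)$, and since $\reff_{L_G}(e)>0$ for any edge that is not a self-loop, we conclude $\reff_{L_G}(e)\le \alpha$. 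Taking the maximum over $e\in T$ gives the claim.

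The only delicate point is the pseudoinverse identity $L_G^\dagger L_G L_G^\dagger = L_G^\dagger$, which requires that $\X_e$ be in the image of $L_G$; this is immediate once $G$ is connected (which it is, since it admits a spanning tree). Beyond that the argument is routine. I do not expect any real obstacle; the main ``creative'' choice is the test direction $y=L_G^\dagger \X_e$, which is forced by the fact that we want $\reff_{L_G}(e)=\X_e^\intercal L_G^\dagger \X_e$ to appear on both sides so that one factor can be cancelled.
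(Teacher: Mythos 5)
Your proof is correct and reaches the same inequality as the paper's, but by a slightly different and more elementary route. The paper conjugates the PSD inequality by $L_G^{\dagger/2}$ to get $L_G^{\dagger/2}\X_e\X_e^\intercal L_G^{\dagger/2}\preceq\alpha I$, then uses the fact that a rank-one matrix dominated by $\alpha I$ has trace at most $\alpha$, and identifies that trace with $\reff_{L_G}(e)$. You instead test the original inequality $\X_e\X_e^\intercal\preceq\alpha L_G$ directly against $y=L_G^\dagger\X_e$ and cancel a factor of $\reff_{L_G}(e)$. These are the same calculation in disguise (your test vector is exactly the conjugated eigenvector the paper's trace argument implicitly selects), but your version avoids any appeal to rank-one spectral facts and is arguably the most pedestrian way to extract a scalar from the matrix inequality.

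One small correction: the identity $L_G^\dagger L_G L_G^\dagger=L_G^\dagger$ is one of the four defining Moore--Penrose axioms and holds for \emph{any} matrix, with no hypothesis about $\X_e$ lying in the image of $L_G$ or about connectivity. What you actually need from connectivity is the positivity $\reff_{L_G}(e)>0$ for the final cancellation step, which you do invoke. So the proof stands; only the parenthetical justification of the pseudoinverse identity is misattributed.
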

\begin{proof}
Say the spectral thinness of $T$ is $\alpha$. Obviously, by the downward closedness of spectral thinness, the spectral thinness of any subset of edges of $T$ is at most $\alpha$, i.e., for any edge $e\in T$,
%By definition, for any edge $e\in T$,
$$ L_{\{e\}} \preceq L_T \preceq \alpha \cdot L_G.$$
%$$  L_G^{\dagger/2} \X_e \X_e^\intercal L_G^{\dagger/2}  = L_G^{\dagger/2}L_{\{e\}} L_G^{\dagger/2} \preceq L_G^{\dagger/2} L_T L_G^{\dagger/2} \preceq \alpha\cdot I. $$
But, the spectral thinness of an edge is indeed its effective resistance. More precisely, multiplying $L_G^{\dagger/2}$ on both sides of the above inequality we have
$$ L_G^{\dagger/2} \X_e \X_e^\intercal L_G^{\dagger/2} = L_G^{\dagger/2} L_{\{e\}} L_G^{\dagger/2} \preceq \alpha\cdot L_G^{\dagger/2} L_G L_G^{\dagger/2} \preceq \alpha\cdot I.$$
Since the matrix on the LHS has rank one, its only eigenvalue is equal to its trace; therefore,
$$ \trace(\X_e^\intercal L_G^{\dagger} \X_e) = \trace(L_G^{\dagger/2} \X_e\X_e^\intercal L_G^{\dagger/2}) \leq \alpha. $$
The lemma follows by the fact that $\reff_{L_G}(e) = \trace(\X_e^\intercal L_G^{\dagger} \X_e)$.
%for a proof of the first matrix inequality. So, for any vector $x$,
%$$ x^\intercal L_G^{\dagger/2} \X_e \X_e^\intercal L_G^{\dagger/2}x \leq \alpha\cdot x^\intercal x.$$
%Letting $x=L_G^{\dagger/2} \X_e$, we get 
%$$ \big(\X_e^\intercal L_G^{\dagger} \X_e\big)^2 \leq \alpha\cdot \X_e^\intercal L_G^{\dagger} \X_e.$$
%Therefore, $\reff_G(e)\leq \alpha$. 
\end{proof}
In light of the above lemma, a necessary condition for $G$ to have a spanning tree with spectral thinness bounded away from $1$ is that every cut of  $G$ must have at least one edge with effective resistance bounded away from 1. In other words, any graph $G$ with at least one cut where the effective resistance of every edge is very close to 1 has no spectrally thin tree (see \autoref{fig:edgecover} for an example of a graph  where the effective resistance of every edge in a cut is very close to 1). 

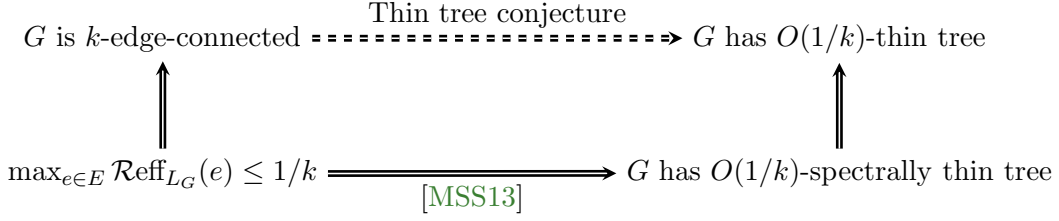
\begin{figure}
\centering
\begin{tikzpicture}
  \matrix (m) [matrix of nodes,
  row sep=3em,column sep=10em,minimum width=2em]
  {
      $G$   is  $k$-edge-connected & $G$ has  $O(1/k)$-thin tree \\
     $\max_{e\in E} \reff_{L_G}(e) \leq 1/k$  & $G$ has  $O(1/k)$-spectrally thin tree \\};
  \path[-stealth]
    (m-1-1) edge [double, dashed,line width=1] node [above] {Thin tree conjecture} (m-1-2)
    (m-2-1) edge [double, line width=1] (m-1-1)
    (m-2-1) edge [double, line width=1] node [below] {\cite{MSS13}} (m-2-2)
    (m-2-2) edge [double, line width=1] (m-1-2);
    %(m-1-1) edge [bend right=30, color=red, double, line width=1] node [left] {?} (m-2-1);
\end{tikzpicture}
\caption{A summary of the relationship between spectrally thin trees and combinatorially thin trees before our paper.} %In this paper we study if the red row.}
\label{fig:overviewspectrallythintree}
\end{figure}

In a very recent breakthrough, Marcus, Spielman, and Srivastava \cite{MSS13} proved the Kadison-Singer conjecture. As a byproduct of their result, it was shown in \cite{HO14} that a stronger version of the above condition is sufficient  for the existence of spectrally thin trees. 
\begin{theorem}[\cite{MSS13}]\label{thm:mss}
Any connected   graph $G=(V,E)$ has a spanning tree with spectral thinness
$O(\max_{e\in E} \reff_{L_G}(e))$. 
\end{theorem}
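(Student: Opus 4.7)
The plan is to recast spectral thinness as the spectral norm of a sum of rank-one matrices and to apply MSS-type interlacing machinery to a uniform random spanning tree. Set $R := \max_{e \in E} \reff_{L_G}(e)$ and, for each edge $e$, define $u_e := L_G^{\dagger/2} \X_e$. Then $\|u_e\|^2 = \X_e^\intercal L_G^\dagger \X_e = \reff_{L_G}(e) \le R$ and
$$\sum_{e \in E} u_e u_e^\intercal \;=\; L_G^{\dagger/2} L_G L_G^{\dagger/2} \;=\; \Pi,$$
the orthogonal projection onto $\mathrm{range}(L_G)$. Since the spectral thinness of a spanning tree $T$ equals $\|\sum_{e \in T} u_e u_e^\intercal\|$, it suffices to exhibit a spanning tree $T$ whose associated operator norm is $O(R)$.

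Next I would draw $T$ from the uniform random spanning tree distribution on $G$. By Kirchhoff's formula, $\Pr[e\in T] = \reff_{L_G}(e) = \|u_e\|^2$, hence
$$\E\Bigl[\sum_{e \in T} u_e u_e^\intercal\Bigr] \;=\; \sum_{e \in E} \|u_e\|^2\, u_e u_e^\intercal \;\preceq\; R \sum_{e \in E} u_e u_e^\intercal \;=\; R\,\Pi,$$
so the mean already has operator norm at most $R$. The MSS interlacing-polynomials proof of Theorem~1.4 gives exactly the desired ``additive $\sqrt{\text{mean}\cdot\epsilon}$'' concentration for independent rank-one matrices with $\|v_e\|^2 \le \epsilon$: there is a realization whose operator norm is at most $(1+\sqrt\epsilon)^2$ times the mean. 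The uniform spanning tree is not a product measure across edges, but it is a determinantal (hence strongly Rayleigh) distribution, so the averaged characteristic polynomial $\E[\det(xI-\sum_{e\in T} u_e u_e^\intercal)]$ is still real-rooted and the single-edge conditional polynomials form an interlacing family in the sense of MSS. Running the MSS barrier argument in this setting produces a realization $T$ with $\|\sum_{e\in T}u_eu_e^\intercal\|\le(1+\sqrt R)^2\,R = O(R)$, i.e.\ $L_T \preceq O(R)\,L_G$.

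The main obstacle is precisely the last step: extending the MSS real-rootedness/interlacing framework from independent coordinates to the globally correlated spanning tree measure (the type of extension developed in \cite{AO14b} for strongly Rayleigh distributions). Once the real-rootedness of the averaged polynomial and the interlacing structure for edge conditionings are in place, the remaining ingredients—choosing the normalization $u_e = L_G^{\dagger/2}\X_e$, computing the mean via Kirchhoff, and converting an operator-norm bound on $\sum_{e\in T}u_eu_e^\intercal$ back to $L_T \preceq O(R)L_G$—are routine. A more elementary alternative using only the independent-vectors version of MSS is to sample each edge independently with probability $p_e = C\,\reff_{L_G}(e)$ for a large constant $C$; MSS Theorem~1.4 then yields an edge subset $S$ with $L_S \preceq O(R)\,L_G$, from which any spanning tree inherits $O(R)$-spectral thinness. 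The catch there is to arrange, in the same realization, both that $S$ spans $V$ and that the MSS spectral event occurs, since MSS only guarantees positive probability of the latter.
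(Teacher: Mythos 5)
Your normalization $u_e := L_G^{\dagger/2}\X_e$, the identity $\sum_e u_eu_e^\intercal = \Pi$, and the observation that spectral thinness of $T$ equals $\|\sum_{e\in T}u_eu_e^\intercal\|$ are all correct and are indeed the right starting point. But both of your proposed routes have problems in the present context. The first route invokes the extension of Marcus--Spielman--Srivastava to strongly Rayleigh measures. That extension is precisely the content of the companion paper \cite{AO14b} (\autoref{thm:effrestothintree} here), and it is a strictly stronger and later result than \cite{MSS13}; the theorem you are asked to prove is attributed to \cite{MSS13} alone (with the derivation in \cite[App.\ E]{HO14}), so using the strongly Rayleigh machinery is circular with respect to the logical order of the paper --- and you correctly flag that you are not actually supplying this step. (Even granting \cite{AO14b}, one must still note that $\max_e \reff_{L_G}(e)\le R$ forces $G$ to be $\lceil 1/R\rceil$-edge-connected, so the $O(\eps + 1/k)$ bound collapses to $O(R)$.) The second route has the genuine gap you yourself identify: the interlacing-families argument only guarantees \emph{some} realization with small spectral norm, with no way to intersect that event with the event that the sampled set spans.

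The argument that actually works from \cite{MSS13} is an iterated bipartition, and the structural fact you are missing is that Weaver's $KS_2$ bipartition of an \emph{isotropic} family is automatically two-sided. Concretely: since $\sum_e u_eu_e^\intercal = \Pi$ and $\|u_e\|^2 \le R$, MSS gives a partition $E = S_1\sqcup S_2$ with $\|\sum_{e\in S_j}u_eu_e^\intercal\| \le \mu := \tfrac12 + O(\sqrt R)$ for both $j$. Because $\sum_{S_1} + \sum_{S_2} = \Pi$, this is equivalent to $(1-\mu)L_G \preceq L_{S_j} \preceq \mu L_G$ on $\bone^\perp$. The lower bound forces \emph{both} $S_1$ and $S_2$ to be spanning connected subgraphs, so you may recurse on either. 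After restricting to $S_1$ the effective resistances grow by at most a $1/(1-\mu) \approx 2$ factor, so after $t$ rounds the maximum effective resistance is about $2^t R$ (times a controlled product of $(1+O(\sqrt{2^j R}))$ errors), and the surviving set $S_t$ satisfies $L_{S_t} \preceq \bigl(\prod_{j<t}\mu_j\bigr)L_G$. Stopping at $t \approx \log(1/R)$ the error product is $O(1)$ (a geometric series in $\sqrt{2^j R}$), giving a connected spanning subgraph $S_t$ with $L_{S_t}\preceq O(R)\,L_G$; any spanning tree of $S_t$ is then $O(R)$-spectrally thin, since thinness is downward closed. This uses only the isotropic $KS_2$ statement of \cite{MSS13}, not any strongly Rayleigh extension, and the two-sided sandwich is exactly what keeps the sets spanning at every stage --- the piece your independent-sampling alternative cannot provide.
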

See \cite[Appendix E]{HO14} for a detailed proof of the above theorem.
It follows from the above theorem that every $k$-edge-connected edge-transitive graph has an $O(1/k)$-spectrally thin tree. This is because in any edge-transitive graph, by symmetry, the effective resistances of all edges are equal.

%We  conclude this part by
Let us summarize the relationship between spectrally thin trees and combinatorially thin trees that has been in the literature before our work. Goddyn conjectured that every $k$-edge-connected graph has an $O(1/k)$-thin tree. The result of \cite{MSS13} shows that a stronger assumption implies an stronger conclusion, i.e., if the maximum effective resistance of edges of $G$ is at most $1/k$, then $G$ has an $O(1/k)$-spectrally thin tree (see \autoref{fig:overviewspectrallythintree}).

We emphasize that $\max_{e\in E} \reff_{L_G}(e)\leq 1/k$ is a stronger assumption than $k$-edge-connectivity. %To see that first we recall an equivalent definition of effective resistance. Say we construct a resistor network by replacing each edge of $G$ with a resistor of resistance 1. The effective resistance of an edge $e=\{u,v\}$ is the {\em energy} of the electrical flow that sends 1 unit from $u$ to $v$ (see \cite[Chapter 2]{LP13} for background on electrical flows). 
If $\reff_{L_G}(u,v)\leq 1/k$,  it means that when we send one unit of flow from $u$ to $v$, the electric current divides and goes through at least $k$ parallel paths connecting $u$ to $v$, so, there are $k$ edge-disjoint paths between $u,v$. But the converse of this does not necessarily hold. If there are $k$ edge-disjoint paths from $u$ to $v$, the electric current may just use one of these paths if the rest are very long, so the effective resistance can be very close to 1. 
Therefore, if $\max_{e\in E}\reff_{L_G}(e)\leq 1/k$, there are $k$ edge-disjoint paths between each pair of vertices of $G$, and $G$ is $k$-edge-connected, but the converse does not necessarily hold.
For example in the graph in the top of \autoref{fig:edgecover}, even though there are $k$ edge-disjoint paths from $u_1$ to $v_1$, a unit electrical flow from $u_1$ to $v_1$ almost entirely goes through the edge $\{u_1,v_1\}$, so $\reff(u_1,v_1)\approx 1$. 

%In particular, for an edge $e=\{u,v\}$, $r\reff_G(e)\leq 1/k$ only if there are at least $k$-edge disjoint paths from $u$ to $v$ in $G$. Recall that $k$-edge connectivity is equivalent to having $k$-edge disjoint paths between each pair of the vertices, so $\max_{e\in E} \reff_G(e) \leq 1/k$ implies $k$-edge connectivity. The converse of this does not necessarily hold, because 

\iffalse
\medskip
We conclude this part by stating a simple extension of \cite{MSS13}. It follows from the simple extension of \cite{MSS13}, by Akemann and Weaver \cite{AW13}, that even if a constant fraction of edges in every cut of $G$ have small effective resistance, then $G$ has a spectrally thin tree.
\begin{theorem}
\label{thm:avgeffthintree}
Given a connected graph $G=(V,E)$, if
%suppose that there is a set $E'\subseteq E$ 
for any set $S\subseteq V$,
$$ \EE{e\sim E(S,\overline{S})}{\reff_{L_G}(e)} \leq \alpha,$$
%$|E'(S,\overline{S})| \geq \alpha\cdot |E(S,\overline{S})|$. %If $\max_{e\in E'} \reff_G(e)\ll \eps^4$, 
then $G$ has an $O( \sqrt[4]{\alpha})$-spectrally thin spanning tree. In the above, the expectation is over uniform distribution among all of the edges in the cut $(S,\overline{S})$.
\end{theorem}
%\noindent We prove the above theorem in \autoref{app:missingintroduction}.
We omit the proof of the above theorem as we have shown a stronger variant of it in our recent extension of \cite{MSS13} (see \autoref{thm:effrestothintree} below).
\fi

As a side remark, note that
 the sum of effective resistances of all edges of any connected graph $G$ is $n-1$, 
 $$ \sum_{e\in E} \X_e^\intercal L_G^{\dagger} \X_e = \sum_{e\in E} \trace(L_G^{\dagger/2} \X_e\X_e^\intercal L_G^{\dagger/2}) = \trace\Big(\sum_{e\in E}L_G^{\dagger/2} \X_e\X_e^\intercal L_G^{\dagger/2}\Big) =\trace(L_G^{\dagger/2} L_G L_G^{\dagger/2}) = n-1.  $$
In the last identity we use that $L_G^{\dagger/2} L_G L_G^{\dagger/2}$ is an identity matrix on the space of vectors that are orthogonal to the all-1s vector.

 If $G$ is $k$-edge-connected, by Markov's inequality, at most a quarter of the edges have effective resistance more than $8/k$. Therefore, by an application of \cite{MSS13}, any $k$-edge-connected graph $G$ has an $O(1/k)$-spectrally thin set of edges, $F\subset E$ where $|F|\geq \Omega(n)$ \cite{HO14}. Unfortunately, the corresponding subgraph $(V,F)$ may have $\Omega(n/k)$ connected components. So, this does not give any improved bounds on the approximability of ATSP.

%\medskip

%If $G$ has edges with effective resistance very close to 1
%in which the effective resistance of all edges is at most has a $1/2$-spectral thin tree. 
%For $G$ to have  a spectral thin tree, it must be that every cut has
% at least one edge with effective resistance bounded away from 1. 
%Conversely, if the effective resistance of all edges of $G$ are bounded away from 1 then it has a $1/2$-spectral thin tree. 

\subsection{Our Contribution}
\label{subsec:ourcontribution}
%Our goal in this paper is to use the recent developments on the existence of spectral thin sets of edges to find (combinatorially) thin 
In this paper we introduce a procedure to ``transform'' graphs that do not admit spectrally thin trees into those that {\em provably} have these trees. Then, we use our recent extension of \cite{MSS13} to {\em strongly Rayleigh distributions} \cite{AO14b} to find
spectrally thin trees in the transformed ``graph''. Finally, we show that any spectrally thin tree of the transformed ``graph'' is a (combinatorially) thin tree in the original graph.
%In this paper we use the seminal result of \cite{MSS13} to find (combinatorial) thin set of edges of a given graph by finding a spectral thin set of edges in a different graph. To do this, we design an algorithm to change an input graph into a new graph that admits spectral thin  trees. 
From a high level perspective, our transformation massages the graph to equalize the effective resistance of the  edges, while keeping the cut structure of the graph intact.

For two matrices $A,B\in\R^{n\times n}$, we write $A \preceq_{\square} B$,
if for any set $\emptyset\subset S\subsetneq V$,
$$ \bone_S^\intercal A\bone_S \leq \bone_S^\intercal B\bone_S.$$
Note that $A\preceq B$ implies $A\preceq_{\square} B$, but the converse is not necessarily true. 
We say a graph $D$ is a \emph{shortcut} graph with respect to $G$ if $L_D \preceq_\square L_G$. 
We say a positive definite (PD) matrix $D$ is a shortcut matrix with respect to $G$ if $D\preceq_{\square} L_G$.

Our ideal plan is as follows: 
Show that there is a (weighted) shortcut graph $D$  such that
for any edge $e\in E$, $\reff_{L_D}(e) \leq \tilde{O}(1/k)$.
%Since every edge of $G$ has effective resistance $\tilde{O}(1/k)$ and $L_D\preceq_{\square} L_G$, the average effective resistance across any cut is $\tilde{O}(1/k)$. %for any set $S\subseteq V$,
%$$ $$
Then, use  a simple extension of \autoref{thm:mss} such as \cite{AW13} to show that there is a spanning tree $T\subseteq E$ such that %has a spectrally thin tree supported on the edges of $E$. 
$$ L_T \preceq_\square \alpha \cdot (L_G+L_D),$$
for $\alpha=O(\max_{e\in E} \reff_{L_G+L_D}(e))=\tilde{O}(1/k)$.
But, since $L_D\preceq_{\square} L_G$, any $\alpha$-spectrally thin tree of $D+G$ is a $2\alpha$-combinatorially thin tree of $G$. 
In summary, the graph $D$ allows us to bypass  
the spectral thinness barrier that we described in \autoref{lem:necessaryspectralthinness}.

Let us give a clarifying example. Consider the $k$-edge-connected planar graph $G$ illustrated at the top of \autoref{fig:edgecover}. In this graph, all edges in the cut $(\{v_1,\dots,v_n\},\{u_1,\dots,u_n\})$ have effective resistance very close to 1. Now, let $D$ consist of the red edges  shown at the bottom. Observe that $L_D\preceq_{\square} L_G$. The effective resistance of every {\em black} edge in $G+D$ is  $O(1/\sqrt{k})$. Roughly speaking, this is because the red edges {\em shortcut} the long paths between the endpoints of vertical edges. This reduces the energy of the corresponding electrical flows. 
So, $G+D$ has a spectrally thin tree $T\subseteq E$. Such a tree is combinatorially thin with respect to $G$. 
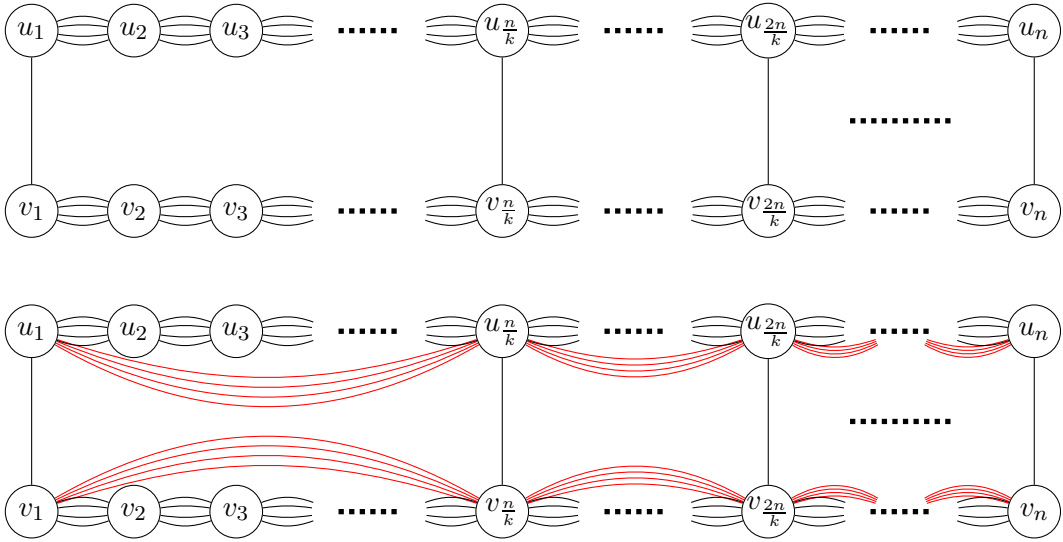
\begin{figure}\centering
\begin{tikzpicture}[scale=0.8]
\def\elen{1.5}
\tikzstyle{every node} = [draw,circle,minimum size=5.5mm,inner sep=0];
\foreach \y in {0,-5}{
\begin{scope}[shift={(0,\y)}]
\def\elen{1.7}
\tikzstyle{every node} = [draw,circle,minimum size=7mm,inner sep=0];
%\foreach \y in {0,-4}{
%\begin{scope}[shift={(0,\y)}]
\foreach \i/\j in {2/1,3/2,4/3}{
	\node at (\elen*\i,3) (u_\i)  {$u_\j$};
	\node at (\elen*\i,0) (v_\i)  {$v_\j$};
}
\foreach \i/\x in {5/5,6/5.6,8/7.6,9/8.2,11/10.2,12/10.8,14/10.5,15/10.5}{
	\node at (\elen*\x,3) [draw=none] (u_\i) {};
	\node at (\elen*\x,0) [draw=none] (v_\i) {};
}
\node at (\elen*6.6,3) (u_7) {$u_{\frac{n}{k}}$} ;
\node at (\elen*9.2,3) (u_10) {$u_{\frac{2n}{k}}$};
\node at (\elen*11.8,3) (u_13) {$u_n$};
\node at (\elen*6.6,0) (v_7) {$v_{\frac{n}{k}}$};
\node at (\elen*9.2,0) (v_10) {$v_{\frac{2n}{k}}$};
\node at (\elen*11.8,0) (v_13) {$v_n$};
\ifthenelse{\equal{\y}{0}}{}{
\foreach \i/\j in {7/2, 10/7, 14/10, 13/14}{
	\foreach \k in {18,22,26,30}{
		\path (u_\i) edge [bend left=\k,color=red] (u_\j)
			(v_\i) edge [bend right=\k,color=red] (v_\j);
	}
}}
\foreach \i/\j in {2/3, 3/4, 4/5, 6/7,7/8, 9/10, 10/11,12/13}{
	\foreach \a in {20, 8, -8, -20}{
		\path (u_\i) edge [bend left=\a] (u_\j) (v_\i) edge [bend left=\a] (v_\j);
	}
}
\path (v_2) edge (u_2) (v_7) edge (u_7) (v_10) edge (u_10) (v_13) edge (u_13);
\draw [dotted, line width=2] 
(\elen*5,0) -- (\elen*5.6,0) (\elen*5,3) -- (\elen*5.6,3) 
(\elen*7.6,0) -- (\elen*8.2,0) (\elen*7.6,3) -- (\elen*8.2,3)
(\elen*10.2,0) -- (\elen*10.8,0) (\elen*10.2,3) -- (\elen*10.8,3)
 (\elen*10,1.5) -- (\elen*11,1.5);
\end{scope}
}
\end{tikzpicture}
\caption{The top shows a $k$-edge-connected planar graph that has no spectrally thin tree. There are $k+1$ vertical edges, $(u_1,v_1), (u_{n/k},v_{n/k}), \dots, (u_n,v_n)$. For each $1\leq i\leq n-1$ there are $k$ parallel edges between  $u_i,u_{i+1}$ and $v_i,v_{i+1}$. The effective resistances of all vertical edges are $1-O(k^2/n)$.
The bottom shows a graph $G+D$ where the effective resistance of every black edge is $O(1/\sqrt{k})$. The red edges are edges in $D$ and there are $k$ parallel edges between the endpoints of consecutive vertical edges.
Note that $L_D\preceq_{\square} L_G$ by construction.}
%Therefore, by \autoref{lem:necessaryspectralthinness}
%any subgraph with at least one edge incident to vertex 1 has spectral thinness at least $1-O(k/n)$.}
\label{fig:edgecover}
\end{figure}

%Therefore, we take a detour. 
%Although we can not  Nonetheless, 
%Our main theorem proves (a stronger version of) the conclusion of the above lemma {\em unconditionally}.
It turns out that there are $k$-edge-connected graphs where it is impossible to reduce the effective resistance of all edges by a shortcut graph $D$ (see \autoref{sec:dual} for details).
So, in our main theorem, we prove a  weaker version of the above ideal plan. Firstly, instead of finding a shortcut graph $D$, we find a  PD shortcut matrix $D$.
The matrix $D$  does not necessarily represent the Laplacian matrix of a graph as it may have positive off-diagonal entries. 
Secondly, the shortcut matrix  reduces the effective resistance of only a set $F\subseteq E$ of edges, that we call \emph{good} edges, where $(V,F)$ is $\Omega(k)$-edge-connected.
% We show that there is  a {\em matrix} $D$ that reduces the effective resistance of only a set $F\subseteq E$ of \emph{good} edges where $(V,F)$ is $\Omega(k)$-edge-connected.

\begin{theorem}[Main]
\label{thm:hierarchydecomgeneral}
For any $k$-edge-connected graph $G=(V,E)$ where $k\geq 7\log(n)$, there is a shortcut matrix $0\prec D\preceq_{\square} L_G$ and a set of good edges $F\subseteq E$ such that the graph $(V,F)$ is $\Omega(k)$-edge-connected 
and that for any edge $e\in F$, 
$$\reff_D(e) \leq \tilde{O}(1/k),\footnote{For  functions $f(.),g(.)$ we write  $g=\tilde{O}(f)$ if $g(n) \leq \polylog(f(n))\cdot f(n)$ for all sufficiently large $n$.}$$
where $\reff_D(e)=\X_e^\intercal D^{-1}\X_e$.
\end{theorem}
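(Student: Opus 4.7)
The plan is to produce the matrix $D$ via an exponentially large convex program, identify the bad edges from its optimality conditions, and peel them off via a hierarchical decomposition. Concretely, I would consider the convex program
\begin{equation*}
  \min\ \max_{e \in E}\ \reff_D(e) \qquad \text{s.t. } D \succ 0,\ D \preceq_\square L_G.
\end{equation*}
The feasible set is convex (the $\preceq_\square$ constraint amounts to exponentially many linear inequalities $\bone_S^\intercal D \bone_S \leq |E(S,\bar S)|$) and each map $D \mapsto \X_e^\intercal D^{-1} \X_e$ is convex, so this is a well-defined convex min–max program. If its value is $\tilde{O}(1/k)$, we are finished by taking $F=E$, because $G$ is already $k$-edge-connected. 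The interesting case is when the optimum is larger, i.e.\ some edges remain stubbornly "bad"; we must show that the bad edges live in sparsely-connected portions of $G$ and hence can be removed without destroying $\Omega(k)$-edge-connectivity of what remains.

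Next I would extract combinatorial structure from the dual of this program. At optimality there is a probability distribution $\mu$ on $2^V$ and a weighting on $E$ certifying optimality; crucially, if $e=\{u,v\}$ is bad then the optimal potential realizing $\reff_D(e)$ must correlate strongly with $\bone_S$ for some tight cut $S$ (one where $\bone_S^\intercal D\bone_S \approx |E(S,\bar S)|$), for otherwise one could infinitesimally increase $D$ in the direction $\X_e\X_e^\intercal / (\X_e^\intercal D^{-1}\X_e)^2$ without violating any cut constraint, contradicting optimality. Combining the submodularity of the cut function $|E(\cdot,\overline{\cdot})|$ with a matching inequality for $\bone_S^\intercal D\bone_S$ (valid because $D$ is PD and minimal in the chosen sense), I would argue that the family of tight cuts is laminar up to complementation, and hence that the bad edges are covered by a laminar family of cuts of $G$, each of size $\Theta(k)$.

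Given this laminar structure, I would build a \expandertree\ on $V$ that recursively strips away the bad edges: at each level, discard the edges that cross tight cuts at that level, add the matrix $D$ obtained from the convex program at that level to a running sum, and recurse on the pieces. Each level contributes a $\polylog(k)$ factor to the effective-resistance bound (through the extension of \cite{MSS13} in \cite{AO14b}), and loses at most an $O(1/\polylog(k))$ fraction of the edges of any cut, so after the $O(\log n)$ levels needed to exhaust the hierarchy the surviving edge set $F$ still induces an $\Omega(k)$-edge-connected subgraph; here the hypothesis $k \geq 7\log n$ is exactly what is needed to absorb the per-level losses. The sum of the per-level shortcut matrices yields the final $D$, which satisfies $D \preceq_\square L_G$ by construction and achieves $\reff_D(e) \leq \tilde O(1/k)$ for every $e \in F$. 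The main obstacle I anticipate is the third step: controlling the geometry of the tight cuts well enough that recursion preserves both $\Omega(k)$-edge-connectivity of $F$ and the $\polylog(k)/k$ bound on effective resistance, rather than shredding the graph before enough edges become good.
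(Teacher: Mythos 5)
Your high-level plan (a cut-constrained convex program for $D$, a hierarchical decomposition to identify which edges to keep, recursion and summation of per-level shortcut matrices) is in the same spirit as the paper, but three of your specific steps fail.

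\textbf{The starting convex program is provably too strong.} You minimize $\max_{e\in E}\reff_D(e)$ over $D\preceq_\square L_G$ and declare the ``interesting case'' to be when this is not $\tilde O(1/k)$. But the paper's \autoref{thm:avgeffres} exhibits $\Omega(\log n)$-edge-connected graphs (\autoref{fig:maxeffresbadexample}) for which even the weaker program $\averagecp$ — minimize the worst \emph{average} effective resistance over all cuts — has optimum $\Theta(1)$. So your program's optimum is never small on the hard instances; you are always in the ``interesting case,'' and you obtain no shortcut matrix whose good edges span a dense subgraph. The paper's workaround is to replace the quantifier ``for all cuts'' with ``for all cuts $\cut(t)$ in a fixed \expandertree\ $\cT$'' (the program $\maincp$). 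This is exactly what makes the dual tractable, because then each edge appears in at most two $\cut(t)$'s and one can charge radii to disjoint collections of edges. Your approach skips this essential localization.

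\textbf{Laminarity of tight cuts is asserted, not established.} You claim the sets $S$ with $\bone_S^\intercal D\bone_S \approx |E(S,\bar S)|$ uncross to a laminar family because the cut function is submodular and ``$\bone_S^\intercal D\bone_S$ satisfies a matching inequality because $D$ is PD.'' The needed inequality is $\bone_{S\cap T}^\intercal D\bone_{S\cap T} + \bone_{S\cup T}^\intercal D\bone_{S\cup T} \geq \bone_S^\intercal D\bone_S + \bone_T^\intercal D\bone_T$, equivalently $D\bullet(\bone_{S\cap T}\bone_{S\cap T}^\intercal + \bone_{S\cup T}\bone_{S\cup T}^\intercal - \bone_S\bone_S^\intercal - \bone_T\bone_T^\intercal)\geq 0$; the bracketed matrix is not PSD (e.g.\ $S\cap T=\emptyset$ gives $\bone_S\bone_T^\intercal + \bone_T\bone_S^\intercal$), so positive-definiteness of $D$ alone does not deliver this, and the ``tight cut'' family need not uncross. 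The paper never needs such a claim: the laminar family is constructed combinatorially by the \expandertree, and then fed \emph{into} the convex program.

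\textbf{The recursion depth is off by a critical factor.} You run $O(\log n)$ levels and sum the per-level $D_i$. Summing violates $\preceq_\square L_G$; averaging preserves it but multiplies every effective resistance by the number of rounds. With $O(\log n)$ rounds you get $\reff_D(e)\leq O(\log n)\cdot\polylog(k)/k$, which is not $\tilde O(1/k)$. The paper terminates in $O(\log k)=\polyloglog(n)$ rounds: it first constructs (this is the one place the hypothesis $k\geq 7\log n$ is used, via \autoref{prop:expsizedenseexpander}) an LCH whose internal subgraphs are $\Omega(1/k)$-expanders, and then shows (Expansion Boosting, \autoref{lem:expansionboosting}) that each round at least doubles the expansion of the still-unfinished internal subgraphs, so after $\log(1/\alpha)=O(\log k)$ rounds everything is done. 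Your sketch does not quantify the per-round progress and therefore cannot cut the depth below $\log n$; indeed you flag this as the ``main obstacle,'' but it is precisely the point of the paper's argument rather than a detail to be controlled afterwards.

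The paper also does not analyze the primal KKT/tight-cut structure at all; it passes to the Lagrangian dual, which becomes a ratio over cut metrics and semiorthogonal matrices (\autoref{lem:sdpdual}), and then bounds that ratio with a geometric argument about $L_1$ balls and their charging against the $k$-edge-connectivity of $G$. If you wish to salvage your plan, the decisive change is to replace $\maxcp$ (or $\averagecp$) by $\maincp(\cT)$ for an explicitly constructed, expanding LCH, and to quantify expansion gain per round.
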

Note that in the above we upper bound the effective resistance of good edges with respect to $D$ as opposed to $D+L_G$; this is sufficient because $\reff_{L_G+D}(e) \leq \reff_D(e)$. 
We remark that the dependency on $\log(n)$ in the statement of the theorem is because of a limitation of  our current proof techniques. We expect that a corresponding statement  without any dependency on $n$ holds for any $k$-edge-connected graph $G$. Such  a statement would resolve Goddyn's thin tree conjecture \ref{conj:thintree} and may lead to improved bounds on the integrality gap of LP \eqref{lp:tsp}. Finally, the logarithmic dependency on $k$ in the upper bound on the effective resistance of the edges of $F$ is necessary.

Unfortunately, the good edges in the above theorem may be very sparse with respect to $G$, i.e., $G$ may have cuts $(S,\overline{S})$ such that 
$$|F(S,\overline{S})| \ll |E(S,\overline{S})|.$$ 
%In other words, although $G$ is $k$-edge-connected and $F$ is $\Omega(k)$ connected, $G$ may have cuts $(S,\overline{S})$ 
%where $|E(S,\overline{S})| \geq n$ while $|F(S,\overline{S})| =\Theta(k)$.
So, if we  use \autoref{thm:mss} or its simple extensions as in \cite{AW13}, we get a thin set of edges $T\subseteq E$ that may have $\Omega_k(n)$ many connected components. 
Instead, we use a theorem, that we proved in our recent extension of \cite{MSS13}, that shows that as long as $F$ is $\Omega(k)$-edge-connected, $G$ has a spanning tree $T$ that is $\tilde{O}(1/k)$-spectrally thin with respect to $D+L_G$.
\begin{theorem}[\cite{AO14b}]
\label{thm:effrestothintree}
Given a graph $G=(V,E)$, a PD matrix $D$ and $F\subseteq E$ such that $(V,F)$ is $k$-edge-connected, if for $\eps>0$, 
$$ \max_{e\in F} \reff_D(e) \leq \eps,$$
then $G$ has a spanning tree $T\subseteq F$ s.t., %and for all $e\in F$, $\norm{x_e}^2\leq \eps$, then $G$ has a spanning tree $T$ such that
$$ L_T \preceq O(\eps+1/k)(D+L_G).$$
\end{theorem}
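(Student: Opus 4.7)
The plan is to apply the extension of \cite{MSS13} to strongly Rayleigh distributions proved in \cite{AO14b} to a carefully chosen spanning-tree distribution on $(V,F)$. Roughly: I would find a strongly Rayleigh distribution $\mu$ on spanning trees of $(V,F)$ whose edge marginals are $O(1/k)$; observe that the rank-one summands $\X_e\X_e^\intercal$ are small in the $(D+L_G)$-metric by hypothesis; and then let the interlacing-polynomial machinery of \cite{AO14b} extract a single tree $T$ in the support meeting the claimed thinness bound.

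To produce the distribution, note that since $(V,F)$ is $k$-edge-connected, the tree-packing theorem of Nash--Williams gives $\lfloor k/2 \rfloor$ edge-disjoint spanning trees inside $F$; selecting one uniformly at random yields a spanning-tree distribution whose marginals are all at most $2/k$. This marginal vector lies in the spanning-tree polytope of $(V,F)$, so by a standard max-entropy argument (of the type used in \cite{AGMOS10}) there exist edge weights $\lambda_e>0$ such that the $\lambda$-weighted uniform spanning tree distribution $\mu_\lambda$ on $(V,F)$ realizes the same $O(1/k)$ marginals, and this $\mu_\lambda$ is strongly Rayleigh. Now set $M := D + L_G$ and $v_e := M^{-1/2}\X_e$ for $e\in F$. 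Since $M \succeq D$, we have $\|v_e\|^2 = \reff_M(e) \le \reff_D(e) \le \eps$, while, writing $p_e := \Pr_{T\sim\mu_\lambda}[e\in T] \le O(1/k)$,
$$\EE{T\sim\mu_\lambda}{M^{-1/2}L_T M^{-1/2}} \;=\; \sum_{e\in F} p_e\, v_e v_e^\intercal \;\preceq\; O(1/k)\cdot M^{-1/2}L_F M^{-1/2} \;\preceq\; O(1/k)\cdot I,$$
using $L_F \preceq L_G \preceq M$.

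At this point I would invoke the main theorem of \cite{AO14b}, which informally states that for any strongly Rayleigh distribution $\mu$ on subsets of a ground set, together with rank-one matrices $v_e v_e^\intercal$ with $\|v_e\|^2 \le \eps'$ and $\lambda_{\max}(\E[\sum_{e\in S} v_e v_e^\intercal]) \le \alpha$, there is a set $S$ in the support of $\mu$ with $\lambda_{\max}(\sum_{e\in S} v_e v_e^\intercal) = O(\alpha+\eps')$ (with the sharper form $(\sqrt{\alpha}+\sqrt{\eps'})^2$). Applied to $\mu_\lambda$ with $\alpha=O(1/k)$ and $\eps'=\eps$, this produces a spanning tree $T\subseteq F$ with $\lambda_{\max}(M^{-1/2}L_T M^{-1/2}) = O(\eps + 1/k)$, equivalent to $L_T \preceq O(\eps + 1/k)(D+L_G)$.

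The main obstacle is entirely absorbed in the black-box invocation of the strongly Rayleigh MSS theorem; extending the interlacing and real-rootedness arguments of \cite{MSS13} beyond the uniform spanning tree (or independent Bernoulli) setting is the real technical heart of \cite{AO14b}. Within the present reduction, the delicate point is that it does not suffice to take \emph{any} spanning-tree distribution with small marginals --- it must be strongly Rayleigh, which is why one cannot stop at the Nash--Williams tree packing and must pass through its max-entropy realization as a $\lambda$-weighted uniform spanning tree.
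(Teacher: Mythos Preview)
Your proposal is correct and matches the approach taken in the companion paper \cite{AO14b}; this paper does not itself prove the theorem but only sketches it in Section~1.3 via the ``thin basis'' abstraction (vectors in sub-isotropic position, small squared norm, and containing $k$ disjoint bases). Your reduction---Nash--Williams tree packing, then realizing the resulting $O(1/k)$ marginals by a $\lambda$-weighted uniform spanning tree to obtain a strongly Rayleigh distribution, then invoking the strongly Rayleigh extension of \cite{MSS13}---is exactly how that abstraction is unpacked, and your identification of the black-box strongly Rayleigh interlacing theorem as the real technical content is accurate.

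One small technical point worth tightening: the marginal vector coming from the uniform choice among $\lfloor k/2\rfloor$ edge-disjoint trees has $p_e\in\{0,2/k\}$ and hence sits on the boundary of the spanning-tree polytope of $(V,F)$; the max-entropy argument as usually stated produces $\lambda_e>0$ only for points in the relative interior. This is harmless---you can either restrict to the edges actually used by the packing (on whose spanning-tree polytope the marginals are interior) or perturb slightly into the interior, since you only need $p_e=O(1/k)$ rather than exact equality---but it is worth saying explicitly.
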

Putting \autoref{thm:hierarchydecomgeneral} and \autoref{thm:effrestothintree} together implies that any $k$-edge-connected graph has a $\polyloglog(n)/k$-thin tree.

\begin{corollary}
Any $k$-edge-connected graph $G=(V,E)$, has a $\polyloglog(n)/k$-thin tree.	
\end{corollary}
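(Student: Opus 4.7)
The plan is to chain Theorem~\ref{thm:hierarchydecomgeneral} with Theorem~\ref{thm:effrestothintree} and translate the resulting spectral bound into combinatorial thinness via the shortcut property $D \preceq_\square L_G$. A routine edge-sampling/edge-replication step then converts the natural $\polylog(k)/k$ bound into the sharper $\polyloglog(n)/k$ advertised in the statement.

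The core step is the regime $k \geq 7\log(n)$. I would invoke Theorem~\ref{thm:hierarchydecomgeneral} to obtain a PD shortcut matrix $D$ with $D \preceq_\square L_G$ and a set of good edges $F \subseteq E$ such that $(V, F)$ is $\Omega(k)$-edge-connected and $\max_{e \in F} \reff_D(e) \leq \tilde O(1/k)$. Feeding $D$, $F$, and $\eps = \tilde O(1/k)$ (with inner connectivity $\Omega(k)$) into Theorem~\ref{thm:effrestothintree} produces a spanning tree $T \subseteq F$ satisfying
\[ L_T \preceq O\!\left(\eps + \tfrac{1}{k}\right)(D + L_G) = \tilde O(1/k)\,(D + L_G). \]

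To pass from spectral to combinatorial thinness, for any cut $(S, \overline S)$ I would substitute $x = \bone_S$ into the above:
\[ |T(S, \overline S)| = \bone_S^\intercal L_T \bone_S \leq \tilde O(1/k)\bigl(\bone_S^\intercal D \bone_S + \bone_S^\intercal L_G \bone_S\bigr) \leq \tilde O(1/k)\cdot 2\,|E(S, \overline S)|, \]
where the second inequality applies $D \preceq_\square L_G$ to bound $\bone_S^\intercal D \bone_S \leq \bone_S^\intercal L_G \bone_S = |E(S, \overline S)|$. Hence $T$ is a $\tilde O(1/k) = \polylog(k)/k$-combinatorially thin spanning tree of $G$.

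Finally, to upgrade $\polylog(k)/k$ to $\polyloglog(n)/k$, reduce to the regime $k = \Theta(\log n)$, where the polylog factor $\polylog(\log n)$ is exactly $\polyloglog(n)$. If $k > 7\log n$, subsample each edge of $G$ independently with probability $p = 7\log(n)/k$; a Karger-style Chernoff-plus-union-bound argument gives, with high probability, a subgraph $H$ that is $\Theta(\log n)$-edge-connected and satisfies $|E_H(S, \overline S)| = \Theta(p\,|E_G(S, \overline S)|)$ simultaneously for every cut. Running the core step on $H$ produces a $\polyloglog(n)/\log(n)$-thin tree of $H$ which, dividing through by $p$, becomes a $\polyloglog(n)/k$-thin tree of $G$. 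Symmetrically, if $\polyloglog(n) \leq k < 7\log n$, replicate each edge $q = \lceil 7\log(n)/k \rceil$ times to obtain a multigraph $G'$ of connectivity $\Theta(\log n)$, apply the core step, and interpret the resulting tree (which cannot use two copies of the same edge) as a subset of $E$; the factor-$q$ blowup of every cut size in $G'$ versus $G$ exactly converts $\polyloglog(n)/\log n$ thinness in $G'$ into $\polyloglog(n)/k$ thinness in $G$. For $k \leq \polyloglog(n)$ the statement is vacuous since any spanning tree is $1$-thin. The main obstacle is not this corollary at all, but rather Theorem~\ref{thm:hierarchydecomgeneral} itself, whose construction of $D$ proceeds via a separate exponentially-sized convex program; once $D$ is granted, the derivation above is essentially just matching outputs to hypotheses and exploiting $\preceq_\square$.
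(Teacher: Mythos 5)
Your proposal is correct and follows essentially the same route as the paper's own proof: chain Theorem~\ref{thm:hierarchydecomgeneral} into Theorem~\ref{thm:effrestothintree} to handle the core case $k = \Theta(\log n)$, then reduce large $k$ via independent edge sampling and small $k$ via edge replication. The only additions beyond the paper's terse argument are that you spell out the $\bone_S$-substitution converting the spectral bound $L_T \preceq \tilde O(1/k)(D + L_G)$ into combinatorial thinness via $D \preceq_\square L_G$ (which the paper discusses earlier in \autoref{subsec:ourcontribution} but omits from the corollary's proof), and that you explicitly note the vacuous regime $k \leq \polyloglog(n)$, which the paper absorbs silently into the replication case.
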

\begin{proof}
First, observe that by theorems \ref{thm:hierarchydecomgeneral}, \ref{thm:effrestothintree}
any $7\log(n)$ connected graph has a $\polyloglog(n)/\log(n)$-thin tree. 

Now, if $G$ is $k$-edge-connected and $k \gg \log(n)$, then we simply construct a $7\log(n)$ connected subgraph of $G$ that is $7\log(n)/k$ thin by sampling each edge independently with probability $\Theta(\log n/k)$ (see the proof of \autoref{thm:agmos} for the details of the analysis). Then, we use the aforementioned statement  to prove the existence of a thin tree in the sampled graph.
	
	Otherwise, if $k \ll \log(n)$, then we add $7\log(n)/k$ copies of each edge of $G$ and make a new graph $H$ that is $7\log(n)$ connected, then we use the previous corollary to find a $\polyloglog(n)/\log(n)$-thin tree of $H$. Such a tree is $\polyloglog(n)/k$-thin with respect to $G$.
\end{proof}

% nonetheless it is 
%T can not simply use \autoref{thm:avgeffthintree} to find a spectrally thin tree with respect to the ``graph''
%The above theorem can be seen as a corollary of the thin tree conjecture.
%We conjecture that any graph $G=(V,E)$ with a set $F\subseteq E$ such that $(V,F)$ is $k$-edge-connected and $\max_{e\in F}\reff_G(e) \ll 1$ has a $1/2$-spectrally thin tree. %i.e., that the above theorem implies the thin tree conjecture. 
%In fact, a stronger variant of the main theorem of \cite{MSS13} together with our main theorem implies that the integrality of LP \eqref{lp:tsp} is $\tilde{O}(\sqrt{\log(n)})$.

%We emphasize that all of the above results are existential.

We remark that, the above theorems do not resolve Goddyn's thin tree conjecture because of the dependency on $n$.

At first inspection, it would seem that there are two nonalgorithmic ingredients in our proof. The first one is the exponential-sized convex program that we will use to find the shortcut matrix $D$; this is because verifying $D\preceq_\square L_G$ is equivalent to $2^n$ many linear constraints. 
Secondly, we need to have  a constructive (in polynomial time)  proof of \autoref{thm:effrestothintree}. 
The following theorem shows we can get around the first barrier. 
\begin{theorem}\label{thm:algorithmicatsp}
	Assume that there is an oracle that  takes an input graph $G=(V,E)$, PD matrix $D$, and a $k$-edge-connected  $F\subseteq E$, such that $\max_{e\in F} \reff_D(e)\leq \eps$, and returns the spanning tree $T$ promised by \autoref{thm:effrestothintree}, i.e., $L_T\preceq O(\eps+1/k)(D+L_G)$.
	For any $\l\leq \log\log n$,
 there is a $\polyloglog(n)\cdot \log(n)^{1/\l}$-approximation algorithm for ATSP that runs in time $n^{O(\l)}$ (and  makes at most $n^{O(\l)}$ oracle calls).
\end{theorem}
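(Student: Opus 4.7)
The plan is to convert the argument of \autoref{thm:hierarchydecomgeneral} into an algorithm whose only superpolynomial component is a controlled enumeration of depth $\ell$. The exponential convex program used to build the shortcut matrix $D$ is the sole non-algorithmic ingredient there — the tree-construction step is exactly what the assumed oracle provides — so the task reduces to producing the pair $(D,F)$ approximately in time $n^{O(\ell)}$ while losing only a multiplicative $\log(n)^{1/\ell}$ factor in the cut-preservation guarantee.

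I would first invoke \autoref{thm:agmos}: solve LP~\eqref{lp:tsp}, round the fractional solution by $\Theta(\log n)$-scaled independent sampling, and produce with constant probability an unweighted $\Theta(\log n)$-edge-connected graph $G$ whose total cost is $O(1)$ times the LP value. The problem then reduces to exhibiting an $\alpha/k$-thin spanning tree of $G$ for $k=\Theta(\log n)$ and $\alpha=\polyloglog(n)\cdot \log(n)^{1/\ell}$.

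Next I would relax the convex program of \autoref{thm:hierarchydecomgeneral}: run the ellipsoid method over the PD variable $D$ and the effective-resistance targets $\reff_D(e)\le \tilde O(1/k)$ on the edges of a candidate good set $F$, and enforce the $2^n$ cut constraints $\bone_S^\intercal D\bone_S\le \bone_S^\intercal L_G\bone_S$ only through an approximate separation oracle for the quadratic form $\bone_S^\intercal (D-L_G)\bone_S$. An $O(\log(n)^{1/\ell})$-approximate separator can be obtained from an $\ell$-deep strengthening such as $\ell$ levels of an LP/SDP hierarchy, each call costing $n^{O(\ell)}$ time and producing $n^{O(\ell)}$ candidate cuts; branching over the combinatorial backbone of $F$ (for instance over which Karger-enumerated small cuts to include) absorbs the remaining $n^{O(\ell)}$ factor. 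The output pair $(\tilde D,\tilde F)$ satisfies $\tilde D\preceq_\square \log(n)^{1/\ell}\cdot L_G$, keeps $\reff_{\tilde D}(e)\le \tilde O(1/k)$ for every $e\in \tilde F$, and preserves $\Omega(k)$-edge-connectivity of $(V,\tilde F)$. Feeding each candidate $(G,\tilde D,\tilde F)$ to the oracle of \autoref{thm:effrestothintree} — one oracle call per enumerated branch, hence $n^{O(\ell)}$ oracle calls in total — yields a spanning tree $T\subseteq \tilde F$ with $L_T\preceq \tilde O(1/k)\cdot (\tilde D+L_G)$. Since $\tilde D\preceq_\square \log(n)^{1/\ell}\cdot L_G$, evaluating this inequality on any cut indicator gives $|T(S,\overline{S})|\le \polyloglog(n)\cdot \log(n)^{1/\ell}\cdot |E(S,\overline{S})|/k$, so $T$ is combinatorially thin with the advertised parameter, and \autoref{thm:agmos} then yields the claimed ATSP approximation.

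The main obstacle will be the design and analysis of the approximate separation oracle at exactly this trade-off, and verifying that an $O(\log(n)^{1/\ell})$-violation of the shortcut constraint propagates only multiplicatively — rather than additively — through \autoref{thm:effrestothintree}. A related subtle point is that the existential construction of $D$ in \autoref{thm:hierarchydecomgeneral} appears to proceed by a hierarchical decomposition of $G$; truncating that hierarchy at depth $\ell$ is the natural source of both the $\log(n)^{1/\ell}$ slack and the $n^{O(\ell)}$ enumeration cost, and one must check that the truncation respects the cut-preservation condition $D\preceq_\square L_G$ at the quoted rate while still delivering the good-edge set $F$ with $\Omega(k)$-edge-connectivity.
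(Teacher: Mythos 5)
Your proposal takes a substantially different route from the paper, and the central step you rely on is unsupported. You propose to enforce the exponentially many constraints $\bone_S^\intercal D\bone_S\le \bone_S^\intercal L_G\bone_S$ via an ``$O(\log(n)^{1/\ell})$-approximate separation oracle obtained from $\ell$ levels of an LP/SDP hierarchy.'' There is no known result of this form: separating those constraints is a sparsest-cut-type problem for which hierarchies are not known to give any smooth approximation/time trade-off, and the best polynomial-time approximation factor is $O(\sqrt{\log n})$, not $\log(n)^{1/\ell}$. You flag yourself that the quoted rate is the ``main obstacle,'' but the obstacle is real: without such a separator your construction of $\tilde D$ does not go through. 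The auxiliary idea of ``branching over Karger-enumerated small cuts'' is also not connected to how $F$ actually arises (it is produced by the iterative expansion-boosting argument, not by cut enumeration).

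The paper avoids approximate separation entirely. The two mechanisms you are missing are as follows. First, the $2^n$ constraints are separated exactly and lazily: run the ellipsoid method while simply dropping the $D\preceq_\square L_G$ constraints, hand the resulting $(D,F)$ to the assumed oracle, and test the \emph{directed} thinness of the returned tree (which is polynomially checkable and suffices for ATSP via \autoref{thm:agmos_version}). If directed thinness fails on some cut $S$, then undirected thinness fails there too, and since $L_T\preceq O(\eps+1/k)(D+L_G)$ is guaranteed by the oracle, the only explanation is $\bone_S^\intercal D\bone_S > \bone_S^\intercal L_G\bone_S$ --- an explicit violated constraint fed back to the ellipsoid method. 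Second, the $\log(n)^{1/\ell}$ loss and the $n^{O(\ell)}$ time both come from truncating the expansion-boosting loop of \autoref{alg:expandertreeboosting} to $O(\ell)$ iterations (requiring a per-step expansion gain of $(1/\alpha)^{1/2\ell}$ with the correspondingly relaxed effective-resistance threshold for $F_i$), not from any approximate separator; the nesting of the $O(\ell)$ ellipsoid solves, each of whose separation failures forces restarts of the subsequent ones, is what produces the $n^{O(\ell)}$ oracle-call bound. Both ideas are absent from your proposal, so as written it does not establish the theorem.
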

We will prove this theorem in \autoref{subsec:algorithmicatsp}

%\newpage

\subsection{Main Components of the Proof}
\label{subsec:proofoverview}
Our proof has three main components, namely the thin basis problem, the effective resistance reducing convex programs, and the locally connected hierarchies. 
In this section we summarize the high-level interaction of these three components.

\paragraph{The Thin Basis Problem.}
Let us start by an overview of the proof of
\autoref{thm:effrestothintree} which has appeared  in a companion paper \cite{AO14b}. 
The thin basis problem is defined as follows: Given a set of vectors $\{x_e\}_{e\in E}\in \R^d$, 
%in the sub-isotropic position, 
%$$ \sum_{e\in E} x_ex_e^\intercal = I,$$
what is a sufficient condition for the existence of an $\alpha$-thin basis, namely, $d$ linearly independent set of vectors $T\subseteq E$ such that 
$$ \norm{\sum_{e\in T} x_ex_e^\intercal} \leq \alpha?$$
It follows from the work of Marcus, Spielman, and Srivastava \cite{MSS13} that a sufficient condition for the existence of an $\alpha$-thin basis is that  the vectors are in isotropic position, 
$$\sum_{e\in E} x_ex_e^\intercal = I,$$
and for all $e\in E$, $\norm{x_e}^2 \leq c\cdot \alpha$ for some universal constant $c<1$. 

The thin basis problem is closely related to the existential problem of spectrally thin trees. Say we want to see if a given graph $G=(V,E)$ has a spectrally thin tree. We can define a vector $y_e=L_G^{\dagger/2} \X_e$ for each edge $e\in E$. 
It turns out that these vectors are in isotropic position; in addition, if all edges of $G$ have  effective resistance at most $\eps$, then $\norm{y_e}^2=\X_e^\intercal L_G^\dagger \X_e\leq \eps$. So, these vectors contain an $O(\eps)$-thin basis. It is easy to see that such a basis corresponds to an $O(\eps)$-spectrally thin tree of $G$ (see \cite{AO14b} for details).

As alluded to in the introduction, if $G$ is a  $k$-edge-connected graph, it may have many edges of large effective resistance, so $\norm{y_e}^2$ in the above argument may be very close to 1. We use the shortcut matrix $D$ that is promised in \autoref{thm:hierarchydecomgeneral} to reduce the squared norm of the vectors. We assign a vector $y_e=(L_G+D)^{-1/2}\X_e$ to any good edge $e\in F$. It follows that
$$ \norm{y_e}^2 \leq \X_e^\intercal D^{-1} \X_e \leq \tilde{O}(1/k).$$
But, since the good edges are only a subset of the edges of $G$, the set of vectors $\{y_e\}_{e\in F}$ are not necessarily in an isotropic position; they are rather in a sub-isotropic position,
$$ \sum_{e\in F} y_ey_e^\intercal \preceq I.$$

In \cite{AO14b} we prove a weaker sufficient condition for the existence of a thin basis. If the  vectors $\{x_e\}_{e\in E}$ are in a sub-isotropic position, each of them has a squared norm at most $\eps$, and they contain $k$ disjoint bases, then there exists an $O(\eps+1/k)$-thin basis $T\subset E$
$$ \norm{\sum_{e\in E} x_ex_e^\intercal} \leq O(\eps+1/k).$$
Since, the set $F$ of good edges promised in \autoref{thm:hierarchydecomgeneral} is $\Omega(k)$-edge-connected, it contains $\Omega(k)$ edge-disjoint spanning trees, so the set of vectors $\{y_e\}_{e\in F}$ defined above contains $\Omega(k)$ disjoint bases. So, $\{y_e\}_{e\in F}$ contains a $\tilde{O}(1/k)$-thin basis $T$; this  corresponds to a $\tilde{O}(1/k)$-spectrally thin tree of $L_G+D$ and a $\tilde{O}(1/k)$-thin tree of $G$.
\paragraph{Effective Resistance Reducing Convex Programs.}
%As mentioned in the previous section, 
%our main idea is to ``symmetrize'' the $L_2$ structure of the graph while preserving its $L_1$ structure. 
%the thin tree conjecture can be seen as a special case of the $L_1$ version 
%of the Weaver conjecture. 
As illustrated in the previous section, at the heart of our proof we find a PD shortcut matrix $D$ to reduce the effective resistance of a subset of edges of $G$.

It turns out that the problem of finding the best shortcut matrix $D$ that reduces the maximum effective resistance of the edges of $G$ is  convex. 
%by writing a convex program to
%minimize the effective resistance of the edges of the graph, while  preserving the value of every cut up to a constant factor.
This is because %to a convex optimization problem,  because 
for any fixed vector $x$ and $D\succ 0$,
$x^\intercal D^{-1} x$ is a convex function of $D$.
See \autoref{lem:convexityeffres} for the proof.
The problem of minimizing the sum of effective resistances of all pairs of vertices in a given graph was previously studied in \cite{GBS08}.

The following (exponentially sized) convex program finds the best shortcut matrix $D$ that minimizes the maximum effective resistance of the edges of $G$ while preserving the cut structure of~$G$.
%\begin{equation}
%\label{cp:maxeffres}
\begin{table}[htb]\centering\begin{tabular}{l}
{\bf \maxcp:}\\
$\begin{aligned}
\min ~~~& ~~{\cal E},\\
\st ~~~&   
%\X_e^\intercal D^{-1} \X_e \leq {\cal E} 
\reff_{D}(e)\leq \cE
& \forall e\in E, \\
&  D \preceq_{\square} L_G, &\\
%\frac12 \bone_S^\intercal L_G \bone_S \leq  \bone_S^\intercal D \bone_S \leq \bone_S^\intercal L_G \bone_S & \forall \emptyset \subset S\subset V\\
& D \succ 0. & 
\end{aligned}$
%\end{equation} 
\end{tabular}\end{table}

%Note that the optimum matrix $D$ is not necessarily the Laplacian of a graph and indeed for our application of finding spectrally thin trees, any positive definite matrix $D$ that approximates the cut structure of $G$ is enough. 
%We can simplify the  program measuring the effective resistance with respect to $D$ as opposed to $D
%Note that $\X_e^\intercal D^\dagger \X_e$ is a convex function of $D$, so the above program is indeed a convex program. 

%+L_G$. %by dropping the constraint $\frac12  L_G \preceq_{\square} D$, and instead averaging out the optimum $D$ with the matrix $L_G$. 
Note that if we replace the constraint $D\preceq_{\square} L_G$ with $D\preceq L_G$, i.e., if we require $D$ to be upper-bounded by $L_G$ in the PSD sense, then the optimum $D$ for any graph $G$ is exactly $L_G$ and the optimum value is the maximum effective resistance of the edges of~$G$.  

Unfortunately, the optimum of the above program can be very close to 1 even if the input graph $G$ is $\log(n)$-edge-connected. A bad graph is shown in \autoref{fig:maxeffresbadexample}. In \autoref{thm:avgeffres} we show that the optimum of the above convex program for the family of graphs in \autoref{fig:maxeffresbadexample} is close to 1   by constructing a feasible solution of the dual. 
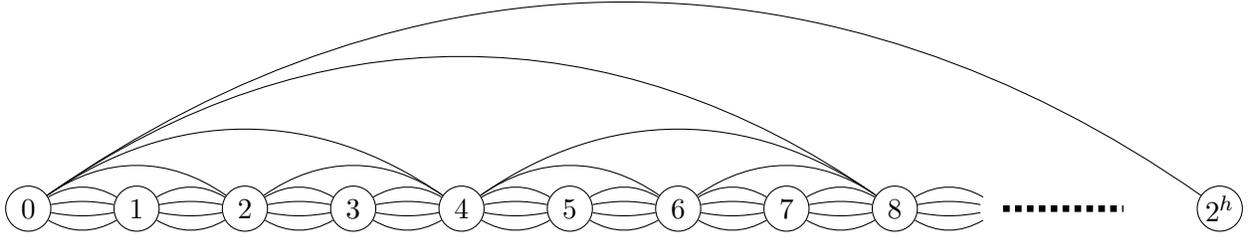
\begin{figure}
\centering
\begin{tikzpicture}[scale=0.8]
\tikzstyle{every node} = [draw,circle,minimum size=6mm,inner sep=0];
\foreach \i/\j in {1/0,2/1,3/2,4/3,5/4,6/5,7/6,8/7,9/8}{
	\node at (1.8*\i,0) (v_\i) {\j};
}
\node at (1.8*10,0) [draw=none] (v_10) {};
%\node at (1.8*9,0) (v_n1) {n-1};
\node at (1.8*12,0) (v_n) {$2^h$};
\foreach \i/\j in {1/2, 2/3, 3/4, 4/5, 5/6, 6/7, 7/8, 8/9, 9/10}{
	\foreach \a in {30, 10, -10, -30}{
		\path (v_\i) edge [bend left=\a] (v_\j);
	}
}
\path (v_1) edge [bend left=35] (v_3)  (v_3) edge [bend left=35] (v_5) (v_1) edge [bend left=35] (v_5) (v_5) edge [bend left=35] (v_7) (v_1) edge [bend left=35] (v_n)
(v_7) edge [bend left=35] (v_9) (v_5) edge [bend left=35] (v_9) (v_1) edge [bend left=35] (v_9);
\draw [dotted, line width=2.5] (18,0) -- (20,0);
\end{tikzpicture}
\caption{A tight example for \autoref{thm:avgeffres}. The graph has $2^h+1$ vertices labeled with $\{0,1,\ldots,2^{h}\}$. There are $k$ parallel edges
connecting each pair of consecutive vertices. In addition, for any $1\leq i\leq h$ and any $0\leq j< 2^{h-i}$ there is an edge $\{j\cdot 2^i,(j+1)\cdot 2^i\}$.}
\label{fig:maxeffresbadexample}
\end{figure}

To prove our main theorem, we study a variant of the above convex program that reduces the effective resistance of only a subset of edges of $G$ to $\tilde{O}(1/k)$. We will use combinatorial objects called locally connected hierarchies as discussed in the next paragraph to feed a carefully chosen set of edges into the convex program.
To show that the optimum value of the program is $\tilde{O}(1/k)$, we analyze its dual. The dual problem corresponds  to proving an upper bound on the ratio involving distances of pairs of vertices of $G$ with respect to an $L_1$ embedding of the vertices in a high-dimensional space.  We refrain from going into the details at this point. We will provide a more detailed overview in \autoref{sec:dual}.

\paragraph{Locally Connected Hierarchies.}
The main difficulty in proving \autoref{thm:hierarchydecomgeneral} is that the good edges, $F$, are unknown a priori. 
If we knew $F$ then we could use Max-CP to minimize the maximum effective resistance of edges of $F$ as opposed to $E$.
In addition,
the   $k$-th smallest  effective resistance of the edges of a cut of $G$ is not a convex function of $D$.
So, we cannot write a single program that gives us the best matrix $D$ for which there are at least $\Omega(k)$ edges of small effective resistance in every cut of $G$.

So, we take a detour. 
We use combinatorial structures that we call locally connected hierarchies that allow us to find an $\Omega(k)$-edge-connected set of good edges that may be very sparse with respect to $G$ in some of the cuts.
 Let us give an informal definition of locally connected hierarchies. 
Consider a {\em laminar} structure on the vertices of $G$, say $S_1,S_2, \dots \subseteq V$, where by a laminar structure we mean that there is no $i\neq j$ such that $S_i\cap S_j, S_i\setminus S_j, S_j\setminus S_i\neq \emptyset$. 
Modulo some technical conditions, if for all $i$, the induced subgraph on $S_i$, $G[S_i]$, is $k$-edge-connected, then we call $S_1,S_2,\dots$ a locally connected hierarchy. 

Let $S_{i^*}$ be the smallest set that is a superset of $S_i$ in the family, and let $\cut(S_i)=E(S_i, S_{i^*}\setminus S_i)$ be the set of edges leaving $S_i$ in the induced graph $G[S_{i^*}]$. 
In our main technical theorem we show that for any locally connected hierarchy we can find a shortcut matrix $D$ that reduces the maximum of the average effective resistance of all $\cut(S_i)$'s. In other words, the shortcut matrix $D$ reduces the effective resistance of at least half of the edges of each $\cut(S_i)$. %Let $F_1$ be . 
Unfortunately, these small effective resistance edges may have $\Omega(n)$ connected components. 

To prove \autoref{thm:hierarchydecomgeneral} we  choose $\polyloglog(n)$ many locally connected hierarchies adaptively, such that the following holds: Let the laminar family $S^j_1,S^j_2,\dots$ be the $j$-th locally connected hierarchy, and $D_j$ be a shortcut matrix that reduces the maximum average effective resistance  of $\cut(S^j_i)$'s. We let $F_j$ be the set of small effective resistance edges in $\cup_i \cut(S^j_i)$. 
We choose our  locally connected hierarchies such that $F=\cup_j F_j$ is $\Omega(k)$-edge-connected in $G$. To ensure this we  use several tools in graph partitioning.

\subsection{Organization}
The rest of the paper is organized as follows: We start with an overview of linear algebraic tools and graph theoretic tools that we use in the paper.
%In \autoref{sec:KS} we prove Theorems \ref{thm:KSapplication}, \ref{thm:edgecover}
%and \ref{thm:minorfree}.
In \autoref{sec:LCH} we given a high-level overview of our approach; we formally define \expandertrees~and we describe the main technical theorem \ref{thm:main}.
Then in \autoref{sec:partitioning} we prove
the main theorem \ref{thm:hierarchydecomgeneral} assuming the main technical theorem \ref{thm:main}.
The rest of the paper is dedicated to the proof of \autoref{thm:main}.
In \autoref{sec:dual} we characterize the dual of \maincp~and we prove \autoref{thm:avgeffres}, then in  the last two sections we upper-bound the value of the dual.
%\autoref{sec:balls} we upper-bound the numerator of the dual of \maincp~and in \autoref{sec:charging} we lower-bound the denominator of the dual of \maincp.

%such that the union of good edges that we obtain induces an $\Omega(k)$-edge-connected subgraph of $G$.

% say $F_1$, $F_2$, $\dots$, and we find a matrix $D$ that minimizes the maximum of the average effective resistance of edges across all $F_i$'s
%we use a combinatorial structure that we call expander trees 
% and the set $F$if the 

% !TEX root = main.tex

\section{Preliminaries}
%\subsection{Notation}
For an integer $k\geq 1$, we use $[k]$ to denote the set $\{1,\ldots,k\}$.
Unless otherwise specified, we assume that $G=(V,E)$ is an {\em unweighted} $k$-edge-connected graph with $n$ vertices. 
%Note that conventionally the term ``$k$-connectivity'' refers to $k$-vertex-connectivity, here we abuse the terminology and we write $k$-connectivity to refer to $k$-edge-connectivity. %for the sake brevity.
For a set $S\subseteq V$,
we use $G[S]$ to denote the induced subgraph of $G$ on $S$.
All graphs that we work with are unweighted with no loops but they may have an arbitrary number of parallel edges between every pair of vertices.

For a matrix $A\in\R^{m\times n}$ we write $A_i$ to denote the $i$-th column of $A$, $A^i$ to denote the $i$-th row of $A$ and $A_{i,j}$ to denote the $i,j$-th entry of $A$.

Throughout the paper we assume that there is a fixed ordering on the edges of $G$.
For an edge $e=\{u,v\}$ we use $\X_e=\bone_u - \bone_v$. We also write,
$$ L_{u,v} =\X_{u,v}\X_{u,v}^\intercal.$$
We use $\X\in\R^{V\times E}$ to denote the matrix where the $e$-th column is $\X_e$.
%Similarly, for a matrix $U$, and an edge $e$ of $G$, we write $U^e$ to denote the $e$-th row of $U$.

For disjoint sets $S,T\subseteq V$ we write
$$ E(S,T):=\{\{u,v\}: u\in S, v\in T\}.$$
We say two sets $S,T\subseteq V$ cross if $S\cap T, S\setminus T, T\setminus \neq \emptyset$.
For a set $S$ of elements we write
$\EE{e\sim S}{.}$ to denote the expectation under the uniform distribution over the elements of $S$.
We think of a permutation of a set $S$ as a bijection mapping the elements
of $S$ to $1,2,\ldots,|S|$. 
 For a vector $x\in\R^d$, we write
\begin{eqnarray*} 
\norm{x} &=& \sqrt{\sum_{i=1}^d x_i^2},\\
\norm{x}_1 &=& \sum_{i=1}^d |x_i|.
\end{eqnarray*}
We will use the following inequality in many places.
For any sequence of nonnegative numbers $a_1,\dots,a_m$ and $b_1,\dots,b_m$
\begin{equation}\label{eq:averagefracs} \min_{1\leq i\leq m}\frac{a_i}{b_i}\leq \frac{a_1+a_2 + \dots+a_m}{b_1+b_2+\dots+b_m} \leq \max_{1\leq i\leq m} \frac{a_i}{b_i}.
\end{equation}

\subsection{Balls and High-Dimensional Geometry}
\label{sec:prelim:balls}
For $x\in \R^d$ and $r\in \R$, an $L_1$ ball is the set of points at $L_1$ distance less than $r$ of $x$,
$$B(x,r):=\{ y\in \R^d: 0< \norm{x-y}_1 < r\}.$$
Unless otherwise specified, any ball that we consider in this paper is an $L_1$ ball.
We may also work with $L_2$ or $L_2^2$ balls and by that we are referring to  a set of points whose $L_2$ or $L_2^2$ distance from a center is bounded by $r$. 

An $L_1$ {\em hollowed} ball is a ball with part of it  removed; for $0\leq r_1<r_2$, we  define the hollowed ball $B(x,r_1\|r_2)$ as follows:
$$ B(x,r_1\|r_2) :=\{y\in \R^d: r_1 < \norm{x-y}_1 < r_2\}.$$
Observe that $B(x,r)=B_1(x,0\|r)$.
The {\em width} of $B(x,r_1\|r_2)$ is $r_2-r_1$.

We say a point $y\in \R^d$ is inside a hollowed ball $B=B(x,r_1\|r_2)$
if $$ r_1 < \norm{x-y}_1 < r_2,$$
and we say it is outside of $B$ otherwise. 
We also say a (hollowed) ball $B_1$ is inside a (hollowed) ball $B_2$
if every point $x\in B_1$ is also in $B_2$.

For a (finite) set of points $S\subseteq \R^d$, the $L_1$ diameters of $S$, $\diam(S)$ is defined as the maximum $L_1$ distance between points in $S$,
$$ \diam(S) = \max_{x,y\in S} \norm{x-y}_1.$$

For a set $S$ of elements we say $X:S\to \R^h$ is an $L^2_2$ metric if for any three elements $u,v,w\in S$,
$$ \norm{X_u-X_w}^2 \leq \norm{X_u-X_v}^2 + \norm{X_v-X_w}^2.$$
A cut metric of $S$ is a mapping $X:S\to\{0,1\}^h$ equipped with the $L_1$ metric. Note that any cut metric of $S$ is also a $L_2^2$ metric because for any two elements $u,v\in S$,
$$ \norm{X_u-X_v}_1 = \norm{X_u-X_v}^2.$$
Similarly, we define a weighted cut metric, $X:S\to\{0,1\}^h$ together with nonnegative weights $w_1,\dots,w_h$, to be
the be the points $\{X_v\}_{v\in S}$ where equipped with the weighted $L_1$ norm:
$$ \norm{x}_1 = \sum_{i=1}^h w_i\cdot |x_i|, \text{ for all } x\in\R^h.$$
If all the weights are $1$ we simply get an (unweighted) cut metric. It is easy to see that any weighted cut metric can be embedded, with arbitrarily small loss, (up to scaling) in an unweighted cut metric of a (possibly) higher dimension.

We can  look at an embedding $X$ as a matrix where there is a column $X_u$ for any vertex $u$.
We also write
$$ \Xb = X \X.$$
Therefore, for any edge $e=\{u,v\} \in E$ (oriented from $u$ to $v$),
$$ \Xb_e =  X\X_e = X_u-X_v. $$
%For a graph $G=(V,E)$, a cut metric $X:V\to\{0,1\}^h$,
%and an edge $e=\{u,v\}\in E$ (oriented from $u$ to $v$), we write
%$$ X_e = X\X_e = X_u-X_v.$$

\subsection{Facts from Linear Algebra}
%We use $J:=\frac1n \bone \bone^\intercal$. For a symmetric PSD matrix $A$ we use 
%$$ A^{\dagger/2} := (A^\dagger)^{1/2}.$$
%For a matrix $A$, let $\ker(A):=\{x\in \R^n: Ax=\bzero\}$.
%Also, $\image(A)$ be the set of all linear combinations of columns of $A$.
We use $I$ to denote the identity matrix
and $J$ to denote the all $1$'s matrix. 
%For a matrix $A\in\R^{m\times n}$ we write 
%$$ \trace(A):=\sum_{i=1}^{\min\{m,n\}} A_{i,i}.$$
A matrix $U\in\R^{n\times n}$ is called orthogonal/unitary if $UU^\intercal=U^\intercal U=I$.
An orthogonal matrix is a nonsingular square matrix whose singular values are all 1.
It follows by definition that orthogonal operators preserve $L_2$ norms of vectors, i.e., for any vector $x\in\R^n$,
$$ \norm{Ux} = \sqrt{(Ux)^\intercal Ux}=\sqrt{x^\intercal U^\intercal U x} = \sqrt{x^\intercal x}=\norm{x}. $$

A (not necessarily square) matrix $U$ is called semiorthogonal if $UU^{\intercal}=I$, i.e. the rows are orthonormal, and the number of rows is less than the number of columns. For any semiorthogonal $U\in \R^{m\times n}$, we can extend $U$ to an actual orthogonal matrix by adding $n-m$ rows.

For two matrices $A,B$ of the same dimension we define the matrix inner product $A\bullet B:=\trace(AB^\intercal)$.

For any matrix $A\in\R^{m\times n}$ and $B\in\R^{n\times m}$,
$$ \trace(AB)=\trace(BA). $$
For any two matrices $A\in\R^{m\times n},B\in\R^{n\times m}$, then the nonzero eigenvalues of $AB$ and $BA$ are the same with the same multiplicities. 

\begin{lemma}
\label{lem:kernelpsd}
If $A$, $B$ are positive semidefinite matrices of the same dimension,
then
$$ \trace(AB)\geq 0.$$
\end{lemma}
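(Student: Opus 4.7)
The plan is to reduce the claim to the two facts that (i) a product of the form $M^\intercal B M$ is positive semidefinite whenever $B$ is PSD, and (ii) the trace of a PSD matrix is nonnegative (as the sum of its eigenvalues).

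Concretely, I would first use the fact that every PSD matrix admits a PSD square root: since $A \succeq 0$, there exists $A^{1/2} \succeq 0$ with $A^{1/2} A^{1/2} = A$. Applying the cyclic property of trace recalled in the preliminaries gives
\begin{equation*}
\trace(AB) \;=\; \trace(A^{1/2} A^{1/2} B) \;=\; \trace(A^{1/2} B A^{1/2}).
\end{equation*}

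Next I would verify that $M := A^{1/2} B A^{1/2}$ is itself PSD: for any $x$,
\begin{equation*}
x^\intercal M x \;=\; x^\intercal A^{1/2} B A^{1/2} x \;=\; (A^{1/2} x)^\intercal B (A^{1/2} x) \;\geq\; 0,
\end{equation*}
where the final inequality uses $B \succeq 0$ together with the symmetry of $A^{1/2}$. Since $M$ is PSD, its eigenvalues are all nonnegative, and therefore $\trace(M) \geq 0$, which yields the lemma.

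No step is particularly difficult; the only subtlety to keep in mind is that one must invoke the symmetry of $A^{1/2}$ to move it across the inner product in the argument above, which is why taking the PSD square root (rather than an arbitrary factorization $A = C^\intercal C$) makes the calculation cleanest. Alternatively, one could bypass the square root by spectrally decomposing $A = \sum_i \lambda_i v_i v_i^\intercal$ with $\lambda_i \geq 0$ and writing $\trace(AB) = \sum_i \lambda_i\, v_i^\intercal B v_i$, each term being nonnegative; this is a drop-in replacement should a reader prefer to avoid referencing the PSD square root.
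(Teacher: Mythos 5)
Your proof is correct and essentially identical to the paper's: the paper writes $\trace(AB)=\trace(AB^{1/2}B^{1/2})=\trace(B^{1/2}AB^{1/2})\geq 0$, conjugating by $B^{1/2}$ rather than $A^{1/2}$, but since the roles of $A$ and $B$ are symmetric this is the same argument.
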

\begin{proof}
$$\trace(AB) = \trace(A B^{1/2} B^{1/2}) = \trace(B^{1/2} A B^{1/2}) \geq 0.$$
\end{proof}

\begin{fact}[{Schur's Complement \cite[Section A.5]{BV06}}]
For any symmetric positive-definite matrix $A \in\R^{n\times n}$ a (column) vector $x\in \R^n$   and $c\geq 0$,
we have $x^\intercal A^{-1} x \leq c$ if and only if
$$  \tab{c}{x^\intercal}{x}{A}  \succeq 0.$$
\end{fact}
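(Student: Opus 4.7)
The plan is to prove this via a block $LDL^\intercal$-type factorization that reduces the PSD test on the $2\times 2$ block matrix to a PSD test on a block-diagonal matrix. Since $A\succ 0$ is in particular invertible, I can eliminate the off-diagonal blocks by congruence. Concretely, I would write down the identity
$$\tab{c}{x^\intercal}{x}{A} \;=\; L^\intercal\,\tab{c - x^\intercal A^{-1} x}{0}{0}{A}\,L,$$
where $L = \tab{1}{0}{A^{-1}x}{I}$ is unit lower triangular, hence has determinant $1$ and is nonsingular. Expanding the right-hand side is a routine block multiplication: the bottom-right block produces $A$, the bottom-left produces $A\cdot A^{-1}x = x$, the top-right produces $x^\intercal A^{-1}\cdot A = x^\intercal$, and the top-left collapses to $(c - x^\intercal A^{-1}x) + x^\intercal A^{-1}x = c$. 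I would include this one-line verification.

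Next, I would invoke the fact that congruence by a nonsingular matrix preserves positive semidefiniteness: for any symmetric $M$ and nonsingular $L$, one has $M \succeq 0$ if and only if $L^\intercal M L \succeq 0$ (this follows immediately from the definition $y^\intercal M y \geq 0$ after substituting $y = Lz$, which is a bijection on $\R^{n+1}$). Applied to the identity above, the block matrix on the left is PSD if and only if the block-diagonal matrix in the middle is PSD, which in turn is equivalent to each diagonal block being PSD.

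To finish, I would observe that the bottom-right diagonal block is $A$, which is PSD by hypothesis, so the condition collapses to the scalar inequality $c - x^\intercal A^{-1} x \geq 0$, i.e.\ $x^\intercal A^{-1} x \leq c$. Since every step is an equivalence, this establishes both directions of the \emph{iff} simultaneously.

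There is no serious obstacle here; the argument is a textbook block Gaussian elimination and the assumption $A\succ 0$ is exactly what guarantees that $A^{-1}$ (and hence the Schur complement $c - x^\intercal A^{-1} x$) is well-defined. The only minor edge case to flag is $x=0$, where the factorization becomes trivial and the equivalence reduces to the obvious statement $0 \leq c$.
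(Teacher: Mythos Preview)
Your argument is correct and is exactly the standard block-$LDL^\intercal$ derivation of the Schur complement condition. Note, however, that the paper does not give its own proof of this statement at all: it records it as a cited fact from \cite[Section A.5]{BV06} and moves on, so there is nothing to compare against beyond observing that your proof is precisely the textbook one referenced.
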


The following  lemma proving the operator-convexity of the inverse of PD matrices is well-known. 
\begin{lemma}
\label{lem:convexityeffres}
For any two  symmetric $n\times n$ matrices  $A, B \succ 0$, %and $v\in\R^n$, $v^\intercal A^{-1} v$ is a convex function of $A$, i.e., if $A=\lambda B+(1-\lambda) C$ for PD matrices $B,C\in\R^{n\times n}$, then,
%$$ v^\intercal A^{-1} v \leq \lambda\cdot  v^\intercal B^{-1} v + (1-\lambda)\cdot v^\intercal C^{-1} v.$$
$$ \Big(\frac12 A + \frac12 B\Big)^{-1} \preceq \frac12 A^{-1} + \frac12 B^{-1}.$$
\end{lemma}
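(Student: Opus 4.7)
The plan is to reduce the operator inequality to a one-variable scalar inequality via a congruence transformation and simultaneous diagonalization. Since $A \succ 0$, the map $X \mapsto A^{1/2} X A^{1/2}$ is an order-preserving bijection on symmetric matrices; it suffices to prove the inequality after conjugating both sides by $A^{1/2}$.

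Concretely, I would set $C := A^{-1/2} B A^{-1/2}$, which is symmetric positive definite. A direct computation gives
\begin{equation*}
A^{1/2}\left(\tfrac{1}{2}A + \tfrac{1}{2}B\right)^{-1} A^{1/2} = \left(\tfrac{1}{2} I + \tfrac{1}{2} C\right)^{-1},
\qquad
A^{1/2}\left(\tfrac{1}{2}A^{-1} + \tfrac{1}{2}B^{-1}\right) A^{1/2} = \tfrac{1}{2} I + \tfrac{1}{2} C^{-1},
\end{equation*}
using the identity $A^{1/2} B^{-1} A^{1/2} = (A^{-1/2} B A^{-1/2})^{-1} = C^{-1}$. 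Hence the lemma is equivalent to showing
\begin{equation*}
\left(\tfrac{1}{2} I + \tfrac{1}{2} C\right)^{-1} \preceq \tfrac{1}{2} I + \tfrac{1}{2} C^{-1}
\end{equation*}
for every $C \succ 0$.

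Now I would diagonalize $C = U \Lambda U^\intercal$ with $U$ orthogonal and $\Lambda = \diag(\lambda_1,\ldots,\lambda_n)$ with $\lambda_i > 0$. Both sides of the reduced inequality are polynomials in $C$ (one involves $C$, the other $C^{-1}$), so the orthogonal change of basis $U$ simultaneously diagonalizes them. The matrix inequality therefore holds if and only if the corresponding diagonal entries satisfy the scalar inequality
\begin{equation*}
\frac{2}{1 + \lambda} \;\leq\; \frac{1 + \lambda^{-1}}{2} \qquad \text{for all } \lambda > 0.
\end{equation*}
Clearing denominators, this is $4\lambda \leq (1+\lambda)^2$, i.e.\ $0 \leq (1-\lambda)^2$, which is immediate.

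There is no real obstacle here; the only subtlety worth flagging is that the reduction truly is an equivalence (both the congruence by $A^{1/2}$ and the conjugation by $U$ are order-preserving bijections on the cone of symmetric matrices), and that the two sides of the reduced inequality are functions of the same matrix $C$, so a single eigenbasis diagonalizes both simultaneously. Everything else is a one-line scalar AM--HM check.
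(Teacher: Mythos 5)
Your proof is correct, but it takes a genuinely different route from the paper's. The paper proves this via Schur complements and convexity of the PSD cone: for each fixed vector $x$, the block matrices
$\left[\begin{smallmatrix} x^\intercal A^{-1}x & x^\intercal \\ x & A\end{smallmatrix}\right]$
and
$\left[\begin{smallmatrix} x^\intercal B^{-1}x & x^\intercal \\ x & B\end{smallmatrix}\right]$
are PSD by the Schur complement characterization, hence so is their average, and a second application of Schur's lemma to the averaged block matrix gives $x^\intercal(\frac12 A + \frac12 B)^{-1}x \leq \frac12 x^\intercal A^{-1}x + \frac12 x^\intercal B^{-1}x$. You instead reduce to the case $A=I$ by the congruence $X \mapsto A^{1/2} X A^{1/2}$, simultaneously diagonalize by the eigenbasis of $C = A^{-1/2}BA^{-1/2}$, and finish with the scalar AM--HM inequality $4\lambda \leq (1+\lambda)^2$. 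Both arguments are complete. The paper's route is shorter and avoids matrix square roots and spectral decomposition, and it makes the pointwise convexity $D \mapsto x^\intercal D^{-1}x$ explicit, which is the form in which the lemma is actually used in the paper; your route is the standard textbook proof of operator convexity of $t\mapsto t^{-1}$, more elementary in its prerequisites (no Schur complement lemma needed), and arguably more transparent about why the inequality is true.

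One small imprecision to flag: you say ``both sides of the reduced inequality are polynomials in $C$.'' They are not polynomials --- one involves $C^{-1}$ and the other $(\tfrac12 I + \tfrac12 C)^{-1}$ --- but they are both functions of $C$ in the spectral (functional calculus) sense, so they are simultaneously diagonalized by the orthogonal matrix $U$ that diagonalizes $C$. That is all your argument needs, so the proof stands; just replace ``polynomials'' with ``spectral functions of the same matrix $C$.''
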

\begin{proof}
%If $v\in \ker(B)$ or $v\in \ker(C)$ then the lemma holds trivially. Otherwise,
For any vector $x\in\R^n$,
$$ 
\frac12  \tab{x^\intercal A^{-1} x}{x^\intercal}{x}{A} + \frac12 \tab{x^\intercal B^{-1} x}{x^\intercal}{x}{B} = 
\tab{\frac12 x^\intercal A^{-1} x+\frac12 x^\intercal B^{-1} x}{x^\intercal}{x}{ \frac12 A+ \frac12 B}.
$$
By Schur complement both of the matrices on the LHS of above equality are PSD.
Therefore, by convexity of PSD matrices, the  matrix in RHS is also PSD.
By another application of Schur complement to the matrix in RHS we obtain the lemma.
\end{proof}

\begin{definition}[Matrix Norms]
\label{def:matrixnorms}
The trace norm (or nuclear norm) of  a matrix $A\in\R^{m\times n}$ is defined as follows:
$$ \norm{A}_* := \trace((A^\intercal A)^{1/2}) = \sum_{i=1}^{\min\{m,n\}} \sigma_i,$$
where $\sigma_i$'s are the singular values of $A$.
The Frobenius norm of $A$ is defined as follows:
$$ \norm{A}_F :=\sqrt{\sum_{1\leq i\leq m, 1\leq j\leq n} A_{i,j}^2} = \sqrt{\sum_{i=1}^{\min\{m,n\}} \sigma_i^2}.$$
\end{definition}
The following lemma is a well-known fact about the trace norm. 
\begin{lemma}
\label{lem:matrixtrace}
For any matrix $A\in\R^{n\times m}$ such that $n\geq m$,
$$ \norm{A}_* = \max_{\text{Semiorthogonal } U} \trace(UA),$$
where the maximum is over all semiorthogonal matrices $U\in\R^{m\times n}$.
In particular, $\trace(A) \leq \norm{A}_*$.
\end{lemma}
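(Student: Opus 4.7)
The plan is to prove this via the singular value decomposition. Write $A = V\Sigma W^\intercal$, where $V \in \R^{n\times m}$ has orthonormal columns, $\Sigma \in \R^{m\times m}$ is the diagonal matrix of singular values $\sigma_1,\ldots,\sigma_m$, and $W \in \R^{m\times m}$ is orthogonal. By definition $\|A\|_* = \trace(\Sigma) = \sum_i \sigma_i$.

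For the achievability direction, I would exhibit an explicit semiorthogonal matrix that attains the bound. Take $U := WV^\intercal \in \R^{m \times n}$. Then $UU^\intercal = WV^\intercal VW^\intercal = WIW^\intercal = I$, using that $V$'s columns are orthonormal and $W$ is orthogonal, so $U$ is semiorthogonal. Computing, $\trace(UA) = \trace(WV^\intercal V\Sigma W^\intercal) = \trace(W\Sigma W^\intercal) = \trace(\Sigma) = \|A\|_*$ by cyclicity.

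For the upper bound direction, let $U \in \R^{m\times n}$ be any semiorthogonal matrix. The key observation is that every singular value of $U$ equals $1$, since $UU^\intercal = I$. I then want to show $\trace(UA) \leq \sum_i \sigma_i(A) \cdot \sigma_i(U) = \sum_i \sigma_i$. The cleanest route is von Neumann's trace inequality, which states that for any two matrices $P, Q$ of compatible shape, $\trace(PQ) \leq \sum_i \sigma_i(P)\sigma_i(Q)$; applied with $P = U$ and $Q = A$ this immediately gives $\trace(UA) \leq \sum_i \sigma_i(A) = \|A\|_*$. If I wanted to avoid citing von Neumann, I could argue directly: writing $M := W^\intercal UV \in \R^{m\times m}$, the singular values of $M$ are at most $1$ (they equal the singular values of $UV$ restricted to the row space), and $\trace(UA) = \trace(UV\Sigma W^\intercal) = \trace(W^\intercal UV \Sigma) = \trace(M\Sigma) = \sum_i M_{ii}\sigma_i \leq \sum_i \sigma_i$, where the last step uses $|M_{ii}| \leq 1$ since $M$ has operator norm at most $1$.

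The last assertion $\trace(A) \leq \|A\|_*$ follows immediately from the variational formula by taking $U = I$, which is trivially semiorthogonal when $n = m$. The only subtle step worth being careful about is the inequality $|M_{ii}| \leq 1$ (or equivalently the bound on singular values of a product involving a semiorthogonal factor); everything else is bookkeeping with the SVD. I do not anticipate any real obstacle here.
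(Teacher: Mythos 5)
Your proof is correct and takes essentially the same route as the paper: both pull out the SVD $A = V\Sigma W^\intercal$, take $U = WV^\intercal$ (which the paper writes as $\sum_i v_i u_i^\intercal$) to achieve the bound, and invoke von Neumann's trace inequality for the upper bound. The direct fallback you sketch (bounding the diagonal entries of $M = W^\intercal U V$ by the operator norm) is a valid way to avoid citing von Neumann, but the paper does not do this, so the two arguments are otherwise the same.
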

\begin{proof}
Let the singular value decomposition of $A$ be the following
$$A=\sum_{i=1}^m \sigma_i u_i v_i^{\intercal},$$
where $s_1, \dots, s_m$ are the singular values and $u_1,\dots, u_m\in \R^n$ are the left singular vectors and $v_1,\dots ,v_m\in \R^m$ are the right singular vectors. Now let 
$$U=\sum_{i=1}^m v_i u_i^{\intercal}.$$
It is easy to observe that $U\in \R^{m\times n}$ is semiorthogonal, i.e. $UU^{\intercal}=I$. Now observe that
$$UA=\sum_{i=1}^{m} \sigma_i v_i \langle u_i, u_i\rangle v_i^\intercal = \sum_{i=1}^m \sigma_i v_i v_i^\intercal.$$
It is easy to see that $\trace(UA)=\sum_{i=1}^m \sigma_i=\norm{A}_*$.

It remains to prove the other side of the equation. By von Neumann's trace inequality \cite{vonneumann}, for any semiorthogonal matrix $U\in\R^{m\times n}$ we can write
$$ \trace(UA) \leq \sum_{i} 1\cdot \sigma_i = \norm{A}_*,$$
where $\sigma_1,\ldots,\sigma_m$ are the singular values of $A$.
\end{proof}

\begin{theorem}[Hoffman-Wielandt Inequality]
\label{thm:hoffmanineq}
Let $A,B\in\R^{n\times n}$ have singular values $\sigma_1\leq\sigma_2\leq\ldots\sigma_n$
and $\sigma'_1\leq\sigma'_2\leq\ldots\leq\sigma'_n$. Then,
$$ \sum_{i=1}^n (\sigma_i-\sigma'_i)^2 \leq \norm{A-B}_F^2.$$
\end{theorem}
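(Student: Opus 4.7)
The plan is to reduce the claim to von Neumann's trace inequality, the same tool that was already invoked in the proof of Lemma~\ref{lem:matrixtrace} above. The key idea is that squared Frobenius distance expands neatly via the trace inner product, and the only nontrivial term that appears is $\trace(A^\intercal B)$, which is exactly what von Neumann's inequality controls in terms of singular values.

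First I would expand
\begin{equation*}
\norm{A-B}_F^2 = \trace\bigl((A-B)^\intercal (A-B)\bigr) = \norm{A}_F^2 + \norm{B}_F^2 - 2\trace(A^\intercal B),
\end{equation*}
and use Definition~\ref{def:matrixnorms} to rewrite $\norm{A}_F^2 = \sum_{i=1}^n \sigma_i^2$ and $\norm{B}_F^2 = \sum_{i=1}^n (\sigma'_i)^2$. In parallel I would expand the left-hand side
\begin{equation*}
\sum_{i=1}^n (\sigma_i-\sigma'_i)^2 = \sum_{i=1}^n \sigma_i^2 + \sum_{i=1}^n (\sigma'_i)^2 - 2\sum_{i=1}^n \sigma_i\sigma'_i.
\end{equation*}
Cancelling the matching squared-singular-value terms, the whole inequality becomes equivalent to the single scalar statement
\begin{equation*}
\trace(A^\intercal B) \;\leq\; \sum_{i=1}^n \sigma_i\sigma'_i.
\end{equation*}

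Next I would invoke von Neumann's trace inequality: for any two square matrices $M,N$ of equal size, $\trace(MN) \leq \sum_i s_i(M)\, s_i(N)$ when the singular values on the right-hand side are paired in the same (either both increasing or both decreasing) order. Taking $M = A^\intercal$, whose singular values coincide with those of $A$, and $N = B$, and matching the two sequences in increasing order (the product $\sum_i \sigma_i \sigma'_i$ is invariant under reversing both sortings simultaneously), one obtains exactly the desired bound. Substituting back closes the argument.

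The main obstacle, if any, is simply justifying the sort-order matching in applying von Neumann's inequality, i.e.\ verifying that the rearrangement inequality yields $\sum_i \sigma_i \sigma'_i$ as the correct upper bound regardless of whether we list singular values in increasing or decreasing order. This is a one-line rearrangement observation, so the whole proof really collapses to: expand the Frobenius norm, identify the cross term, and cite von Neumann.
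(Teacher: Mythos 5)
The paper states the Hoffman--Wielandt inequality (more precisely Mirsky's singular-value variant) in the preliminaries as a known fact and does not give a proof, so there is nothing in the paper to compare against line by line. That said, your derivation is correct and is in fact the standard one: expanding $\norm{A-B}_F^2$ and the sum $\sum_i(\sigma_i-\sigma'_i)^2$, the two sides share the terms $\norm{A}_F^2=\sum_i\sigma_i^2$ and $\norm{B}_F^2=\sum_i(\sigma'_i)^2$, and after cancellation the claim reduces exactly to $\trace(A^\intercal B)\le\sum_i\sigma_i\sigma'_i$, which is von Neumann's trace inequality applied to $A^\intercal$ and $B$ (using $\sigma_i(A^\intercal)=\sigma_i(A)$). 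Your remark about sort order is right: von Neumann is usually stated with both sequences decreasing, but $\sum_i\sigma_i\sigma'_i$ is identical whether both sequences are listed increasingly or decreasingly, so the paper's increasing-order convention costs nothing. It is also consistent with the paper that you appeal to von Neumann, since the same inequality is the tool the authors invoke one lemma earlier to prove Lemma~\ref{lem:matrixtrace}.
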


\subsection{Background in Graph Theory}
\label{subsec:graphtheory}
For a graph $G=(V,E)$, 
and a set $S\subseteq V$, we define 
$$ \phi_G(S):=\frac{\deg_G(S)}{\vol_G(S)}$$
where 
$\deg_G(S):=|E(S,V\setminus S)|$
is the number of edges that leave $S$, and 
$ \vol_G(S)$ is the sum of the degrees (in $G$) of vertices
of $S$.
Note that, by definition, $\vol_G(v)=\deg_G(\{v\})$ for any vertex.
 If the graph is clear in the context we drop the subscript $G$.
The expansion of $G$ is defined as follows:
$$ \phi(G) := \min_{S\subset V} \frac{\deg_G(S)}{\min\{\vol_G(S),\vol_G(V\setminus S)\}} = \min_{S\subset V} \max\{\phi_G(S),\phi_G(V\setminus S)\},$$
We say a graph $G$ is an $\eps$-expander, if $\phi(G)\geq \eps$. Recall that in an expander graph, $\phi(G)= \Omega(1)$.

\medskip
An (unweighted) graph $G=(V,E)$ is $k$-edge-connected
if and only if for any pair of vertices $u,v\in V$,
there are at least $k$ edge-disjoint paths between $u,v$ in $G$. Equivalently, $G$ is $k$-edge-connected if for any set $\emptyset \subsetneq S\subsetneq V$, $\deg(S)\geq k$. 

There is a well-known theorem by Nash-Williams that gives an almost (up to a factor of 2) necessary and sufficient condition for $k$-connectivity.
\begin{theorem}[\cite{NW61}]
\label{thm:Nash-Williams}
For any $k$-edge-connected graph, $G=(V,E)$, there are at least $k/2$ disjoint spanning trees in $G$. 
\end{theorem}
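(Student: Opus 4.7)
The plan is to derive this from the tight form of the Nash-Williams tree-packing theorem, which states the stronger equivalence: a graph $G$ contains $t$ edge-disjoint spanning trees if and only if for every partition $\mathcal{P} = \{V_1, \ldots, V_p\}$ of $V$, the number $|E_G(\mathcal{P})|$ of edges of $G$ whose endpoints lie in distinct parts satisfies $|E_G(\mathcal{P})| \geq t(p-1)$. Taking this equivalence as the main input, it suffices to verify the partition inequality with $t = \lfloor k/2 \rfloor$.

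First I would fix an arbitrary partition $\mathcal{P} = \{V_1, \ldots, V_p\}$ with $p \geq 2$ parts; for $p = 1$ the condition is vacuous. By $k$-edge-connectivity, each part satisfies $|E(V_i, V \setminus V_i)| \geq k$. Summing these inequalities over $i$ and noting that every edge whose endpoints lie in two distinct parts is counted exactly twice, I would conclude
$$2 \cdot |E_G(\mathcal{P})| \;=\; \sum_{i=1}^{p} |E(V_i, V \setminus V_i)| \;\geq\; kp,$$
so that $|E_G(\mathcal{P})| \geq kp/2 \geq (k/2)(p-1) \geq \lfloor k/2 \rfloor (p-1)$, as required.

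The only nontrivial ingredient is the Nash-Williams theorem itself; its classical proof goes through Edmonds' matroid union/partition theorem applied to the $t$-fold union of the graphic matroid $M(G)$, expressing the maximum number of edge-disjoint bases as a min-max over partitions. If one wished to avoid black-boxing this, I would give that matroid-union proof directly, but since the paper cites \cite{NW61} and only uses the factor-of-two version, the short derivation above is the cleanest route. The only subtlety worth noting is the tightness of the factor $2$: because each crossing edge is counted twice in $\sum_i \deg(V_i)$, one cannot hope to improve $\lfloor k/2 \rfloor$ to roughly $k$ in general (e.g.\ a cycle is $2$-edge-connected but has only one spanning tree), so no further structural argument is needed.
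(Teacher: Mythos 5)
The paper itself gives no proof of this statement; it is stated as a black-box citation to \cite{NW61}, with only the remark that a cycle shows the factor of $2$ is tight. Your derivation from the Nash-Williams--Tutte tree-packing criterion is correct and is the standard route: for any partition into $p\geq 2$ parts, $2|E_G(\mathcal{P})| = \sum_i \deg(V_i) \geq kp$, so $|E_G(\mathcal{P})| \geq kp/2 \geq \lfloor k/2\rfloor (p-1)$, which is exactly the hypothesis needed for $t=\lfloor k/2\rfloor$ edge-disjoint spanning trees. You also correctly identify the cycle as the tight example and correctly attribute the unavoidable loss of $2$ to the double-counting of crossing edges. There is nothing to fix.
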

Note that any union of $k/2$ edge-disjoint spanning trees is a $k/2$-edge-connected graph. So, the above theorem does not give a necessary and sufficient condition for $k$-connectivity. A cycle gives a tight example for the loss
of $2$ in the above theorem.

Given a graph $G=(V,E)$, and a set $S\subseteq V$,
we write $G/S$ to denote the graph where the set $S$ is {\em contracted}, i.e., we remove all vertices $v\in S$
and add a new vertex $u$ instead, and for any vertex $w\notin S$, we let $|E(S,\{w\})|$ be the number of  (parallel) edges between $u$ and $w$. We also remove any self-loops that result from this operation.
The following fact will be used throughout the paper.
\begin{fact}
\label{fact:kconinvariant}
For any $k$-edge-connected graph $G=(V,E)$ and any set $S\subseteq V$, $G/S$ is $k$-edge-connected.
\end{fact}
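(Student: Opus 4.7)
The plan is to prove the fact by showing that every cut in $G/S$ arises from a cut in $G$ that does not split $S$, and then to invoke $k$-edge-connectivity of $G$ directly on that cut.

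First, I would set up notation: let $u$ denote the new vertex of $G/S$ obtained by contracting $S$, so the vertex set of $G/S$ is $(V \setminus S) \cup \{u\}$. Take any nontrivial subset $T$ of the vertices of $G/S$; I want to show $\deg_{G/S}(T) \geq k$. Without loss of generality (by replacing $T$ with its complement if necessary), assume $u \in T$. Define $T' := (T \setminus \{u\}) \cup S \subseteq V$. Note that both $T'$ and $V \setminus T'$ are nonempty because $T$ is a nontrivial subset and $V \setminus T' = (V \setminus S) \setminus T$, which equals the complement of $T$ in $G/S$ minus $u$, a nonempty set.

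Next, I would argue that the edges of $G/S$ leaving $T$ are in bijection with the edges of $G$ leaving $T'$. Indeed, contraction only deletes edges that lie within $S$ (they become self-loops and are discarded), and every other edge of $G$ is preserved in $G/S$, with its endpoint in $S$ (if any) replaced by $u$. An edge of $G$ crosses $(T', V \setminus T')$ iff one endpoint lies in $T \setminus \{u\}$ or in $S$, and the other lies in $(V \setminus S) \setminus T$; after contraction, such an edge crosses $(T, V \setminus T)$ in $G/S$, and conversely. Since no edge of $G$ that lies inside $S$ crosses $(T', V \setminus T')$, the counting matches exactly.

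Finally, since $G$ is $k$-edge-connected and $\emptyset \subsetneq T' \subsetneq V$, we have $\deg_G(T') \geq k$, whence $\deg_{G/S}(T) \geq k$. As $T$ was arbitrary, $G/S$ is $k$-edge-connected. There is no real obstacle here — the only thing to be careful about is making the correspondence between cuts precise and handling the trivial case $T = \{u\}$, which just corresponds to $T' = S$ and gives $\deg_{G/S}(\{u\}) = \deg_G(S) \geq k$.
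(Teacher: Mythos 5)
Your proof is correct. The paper actually states this as a bare \textbf{fact} with no accompanying proof, since it is a standard, textbook property of edge contraction; your argument (cuts of $G/S$ correspond bijectively to cuts of $G$ that do not split $S$, contraction only discards edges internal to $S$, then invoke $k$-edge-connectivity of $G$) is exactly the standard one and is complete.
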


\medskip

Throughout the paper we may use a natural decomposition of a graph $G$ (that is not necessarily $k$-edge-connected) into $k$-edge-connected subgraphs as defined below. 
\begin{definition}
\label{def:decompkconnected}
For a graph $G=(V,E)$ a natural decomposition into $k$-edge-connected subgraphs is defined as follows: Start with a partition $S_1=V$.
While there is a nonempty set $S_i$ in the partition such that $G[S_i]$ is not $k$-edge-connected, find an induced cut $(S_{i,1},S_{i,2})$ in $G[S_i]$  of size less than $k$,
remove $S_i$ and add $S_{i,1}, S_{i,2}$ as  new sets in the partition.
\end{definition}
The following fact follows directly from the above definition.
\begin{lemma}
\label{lem:decompkconnected}
For any natural decomposition of a graph $G=(V,E)$ into $k$-edge-connected subgraphs $S_1,\ldots,S_\ell$ and any $I\subseteq [\ell]$,
$$ \sum_{i_1, i_2\in I: i_1<i_2} |E(S_{i_1}, S_{i_2})| \leq (k-1)(|I|-1).$$
Consequently, 
$$ \sum_{i=1}^\ell \deg(S_i)= 2\sum_{i_1,i_2\in[\ell]: i_1<i_2} |E(S_{i_1},S_{i_2})|  \leq 2(k-1)(\ell-1).$$
\end{lemma}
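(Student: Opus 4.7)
The plan is to use the recursive structure of the decomposition, which produces a rooted binary tree $T$ whose leaves are indexed by $S_1,\ldots,S_\ell$ and whose internal nodes correspond to the successive splits. At each internal node $v$ of $T$, the split $S_v = S_{v_1} \sqcup S_{v_2}$ was performed using a cut in $G[S_v]$ of size at most $k-1$, by the definition of the natural decomposition. So each internal node of $T$ carries a weight of at most $k-1$.

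The key observation is the following charging scheme: every edge $e=\{u,w\}$ of $G$ with $u\in S_{i_1}$ and $w\in S_{i_2}$ for distinct $i_1,i_2$ crosses exactly one split, namely the cut at the least common ancestor $\mathrm{lca}(i_1,i_2)$ in $T$. Hence
\[
\sum_{i_1<i_2} |E(S_{i_1},S_{i_2})| \;=\; \sum_{v \text{ internal in } T} \bigl|E(S_{v_1},S_{v_2})\bigr|,
\]
and restricting to pairs from $I$ gives
\[
\sum_{\substack{i_1,i_2\in I\\ i_1<i_2}} |E(S_{i_1},S_{i_2})| \;=\; \sum_{v \in \mathrm{LCA}(I)} \bigl|\{e\in E(S_{v_1},S_{v_2}):\ e \text{ between two leaves of } I\}\bigr|,
\]
where $\mathrm{LCA}(I) = \{\mathrm{lca}(i_1,i_2) : i_1\neq i_2 \in I\}$ is the set of internal nodes of the subtree of $T$ spanned by the leaves in $I$.

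The next step is to count $|\mathrm{LCA}(I)|$. Since $T$ is a binary tree, a standard induction shows that any set of $|I|$ leaves has at most $|I|-1$ distinct pairwise LCAs (each branching point in the spanned subtree is such an LCA, and a binary tree with $|I|$ leaves has at most $|I|-1$ branching nodes). Combining this with the fact that each internal node of $T$ contributes at most $k-1$ edges to the corresponding cut $E(S_{v_1},S_{v_2})$ yields
\[
\sum_{\substack{i_1,i_2\in I\\ i_1<i_2}} |E(S_{i_1},S_{i_2})| \;\leq\; (k-1)\cdot|\mathrm{LCA}(I)| \;\leq\; (k-1)(|I|-1),
\]
proving the first inequality. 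For the second, take $I=[\ell]$ and note the identity $\sum_{i=1}^\ell \deg(S_i) = 2\sum_{i_1<i_2} |E(S_{i_1},S_{i_2})|$, which follows since every inter-part edge is counted once at each of its two endpoints.

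There is no real obstacle here; the only point to be a bit careful about is the LCA count, which relies on $T$ being a binary tree — guaranteed because the natural decomposition splits one set into exactly two at each step. Everything else is bookkeeping on the decomposition tree.
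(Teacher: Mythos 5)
Your proof is correct and follows essentially the same approach as the paper's: both charge the inter-part edges to the successive small cuts of the decomposition process and observe that only $|I|-1$ of those cuts are relevant. The paper states this more tersely by restricting to $G[S]$ with $S=\cup_{i\in I}S_i$ and noting that a natural decomposition of $G[S]$ recovers $\{S_i\}_{i\in I}$ in $|I|-1$ splits of size $\leq k-1$; your version makes the underlying decomposition tree and the LCA charging explicit, which is a clean way to justify the same count.
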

\begin{proof}
Let $S=\cup_{i\in I} S_i$. 
%The lemma simply follows from the fact that the number of edges between the sets whose index is in $I$ is at most $(k-1)\cdot (|I|-1)$.
%$S_1,\ldots,S_j$ is at most $(k-1)(j-1)$; 
A natural decomposition of the induced subgraph, $G[S]$ into $k$-edge-connected subgraphs gives exactly all set $S_i$ where $i\in I$. This decomposition partitions  $G[S]$ exactly $|I|-1$ times and each time adds at most $k-1$ new edges  between the sets in the partition. 
\end{proof}

\section{Overview of Our Approach}
\label{sec:LCH}
In this section we give a high-level overview of our approach. We will motivate and formally define locally connected hierarchies and we describe our main technical theorem. 
In this section we will not overview the proof of the main technical theorem \ref{thm:main}, see \autoref{sec:dual} for the explanation.

As alluded to in the introduction, in \autoref{thm:avgeffres} we will show that it is not possible to reduce the maximum effective resistance of the edges of every $k$-edge-connected graph using a shortcut matrix. 

The first idea that comes to mind is to reduce the maximum  average effective resistance amongst all cuts of $G$. 
We can use the following convex program to find the best such shortcut matrix.
%The following convex program finds the best shortcut matrix $D$ that minimizes the maximum of average effective resistance of edges across all cuts while preserving the cut structure of $G$.
%\begin{equation}
%\label{cp:avgeffres}
\begin{table}[htb]\centering\begin{tabular}{l}
{\bf \averagecp}:\\
$\begin{aligned}
\min ~~~& ~~{\cal E}\\
\st ~~~&   
%\sum_{e\in E(S,\overline{S})} \frac{1}{|E(S,\overline{S})|} \cdot 
\uE{e\sim E(S,\overline{S})} 
%\X_e^\intercal D^{-1} \X_e \leq {\cal E} 
\reff_D(e)\leq \cE
&~~~~ \forall \emptyset\subsetneq S\subsetneq V, \\
& D\preceq_{\square} L_G, &\\ 
% \bone_S^\intercal D \bone_S \leq \bone_S^\intercal L_G \bone_S & \forall \emptyset \subset S\subset V\\
& D \succ 0. & 
\end{aligned}$
%\end{equation} 
\end{tabular}\end{table}

Note that if the optimum is small, it means that there are at least $k/2$ good edges in every cut of $G$, so the set $F$ of good edges is $\Omega(k)$-edge-connected and we are done.
Unfortunately, as we will show in \autoref{thm:avgeffres} the same example  shows that the optimum of the above convex program 
is very close to 1 for an $\Omega(\log(n))$-edge-connected graph. 
In fact, in the proof of \autoref{thm:avgeffres},
we lower-bound the optimum of \averagecp. 

The above impossibility result shows that it is not possible to reduce  the average effective resistance of all cuts of $G$. 
Our approach is to recognize families of subsets of edges for which it is possible to reduce the maximum average effective resistance.
%we reduce average effective resistance of carefully chosen sets of edges of $G$.

In the first step, we observe that for any partitioning of the vertices of a $k$-edge-connected graph $G$ into $S_1,S_2,\dots$ we can use a variant of the above convex program to reduce the maximum average effective resistance of the sets
$$ E(S_1,\overline{S_1}), E(S_2,\overline{S_2}), \text{ and so on} $$
to $\tilde{O}(1/k)$.
Next, we illustrate why this is useful using an example. Later, we will see that our main technical theorem implies a stronger version of this statement.

\begin{example}\label{ex:cycleexpander} Assume that $G$ is defined as follows: Start with a $k$-regular $\eps$-expander on $\sqrt{n}$ vertices and replace each vertex with a cycle of length $\sqrt{n}$ repeated $k$ times where the endpoints of the expander edges incident to each cycle are equidistantly distributed. This graph is $k$-edge-connected by definition and all expander edges have effective resistance close to 1. 

If we use the $\sqrt{n}$ cycles as our partition, by the above observation, we can reduce the average effective resistance of edges coming out of each cycle to some $\alpha =\tilde{O}(1/k)$. Let $F$ be the union of all of the cycle edges and the  expander edges of effective resistance at most $2\alpha/\eps$. Now, we show that $F$ is $\Omega(k)$-edge-connected. For any cut that cuts at least one of the cycles, obviously there are at least $k$ cycle edges in $F$. For the rest of the cuts, at least $\eps$-fraction of the expander edges incident to the cycles on the small side of the cut cross the cut; among these edges at least half of them are in $F$, so $F$ has at least $\Omega(k)$ edges in the cut.
\end{example}
We can use the above observation in any $k$-edge-connected graph repeatedly to gradually make $F$ $\Omega(k)$-edge-connected as follows: Start with partitioning into singletons; let $D_1$ be a shortcut matrix that reduces the average effective resistance of degree cuts to $\alpha=\tilde{O}(1/k)$, and let $F_1$ be the edges of effective resistance at most $2\alpha$. In the next step, let the partitioning $S_1,S_2,\dots$ be a natural decomposition of $(V,F_1)$ into $k/2$-edge-connected components. Similarly, define $D_2$ and let $F_2$ be the edges connecting $S_1,S_2,\dots$ of effective resistance at most $2\alpha$. This procedure ends in $\l=O(\log n)$ iterations. It follows that $\cup_{i=1}^\l F_i$ is $\Omega(k)$-edge-connected and the average of shortcut matrices, $\mE_{i} D_i$, is a shortcut matrix that reduces the effective resistance of all edges of $F$ to $O(\l\cdot \alpha)$. Therefore, if $\l =\polyloglog(n)$ we are done. 

Unfortunately there are $k$-edge-connected graphs where the above procedure  ends in $\Theta(\log n)$ steps because each time the size of the partition may reduce only by a factor of 2. Note that this procedure defines a laminar family over the vertices. Let $S_1,S_2,\dots$ be all of the sets in all partitions; observe that they form a laminar family; let $S_{i^*}$ be the smallest set that is a superset of $S_i$. Also, let $\cut(S_i)=E(S_i,S_{i^*}\setminus S_i).$ 

Suppose  we write a convex program to {\em simultaneously} reduce the maximum average effective resistance of all $\cut(S_i)$'s; then we may obtain a $k$-edge-connected set $F$ of good edges in a single shot. As we will see next, modulo some technical conditions, this is what we prove in our main technical theorem. Such a statement is not enough to get a $k$-edge-connected set of good edges, but it is enough to get $F$ in $\polyloglog(n)$ steps.

\begin{figure}
\centering
\begin{tikzpicture}[scale=0.8]
\tikzstyle{every node} = [draw,circle,minimum size=6mm,inner sep=0];
\def\anglet{0.6}
\node at (0, 0) (v_0) {$0$};
\foreach \i/\l in {1/1,2/2,3/3,5/2^h}{
	\node at (\i+1, \anglet*\i-\anglet) (v_\i) {$\l$};
	\node at (\i, \anglet*\i) (t_\i) {$t_{\l}$} edge (v_\i);
%	\node at (\i+1, \i-1) (v_\i) {$\i$};
}
\node at (4, \anglet*4) [draw=none] (t_4) {};

\path (t_1) edge (v_0) (t_2) edge (t_1) (t_3) edge (t_2) (t_5) edge (t_4);
%\node at (1,1) (t_0) {$t_0$} edge (v_0) edge (v_1);
%\node at (5,1) (t_1) {$t_1$} edge (v_2) edge (v_3);
%\node at (3,2) (t_2) {$t_2$} edge (t_0) edge (t_1) edge (5,3);
%\node at (10, 0) (v_4) {$2^h$} edge (9,1);
\draw [dotted, line width=1.5] (3.5,\anglet*3.5) -- (4,\anglet*4);
\end{tikzpicture}
\caption{A $\cT(k,1/2,\{1,2,\ldots,2^h\})$-\expandertree~of 
the graph of \autoref{fig:maxeffresbadexample}.}
\label{fig:hierarchicaltree}
\end{figure}
\subsection{Locally Connected Hierarchies}
\label{subsec:expandertree}
For a graph $G=(V,E)$,  
a hierarchy, $\cT$, is a tree where every non-leaf node has at least two children and  each leaf corresponds to a unique vertex of $G$. We use the terminology {\em node} to refer to vertices of $\cT$. For each node $t\in \cT$ let $
V(t)\subseteq V$ be the set of vertices of $G$ that are mapped to the leaves of the subtree of $t$, $E(t)$ be the set of edges between the vertices of  $V(t)$, and
$$ G(t)=G[V(t),E(t)],$$
be the induced subgraph of $G$ on $V(t)$.
Let $\setdeg(t):=E(V(t),  \overline{V(t)})$ be the set of edges that leave $V(t)$ in $G$.
Throughout the paper we use $t^*$ to denote the parent of a node $t$. We define  $\cut(t) := E(V(t), V(t^*)\setminus V(t))$ as the set of edges that leave $V(t)$ in $G(t^*)$.
We abuse notation and use $\cT$ to also denote the set of nodes of $\cT$.

Let us give a clarifying example. Say $G$ is the ``bad'' graph of \autoref{fig:maxeffresbadexample}.
In  \autoref{fig:hierarchicaltree} we give a \expandertree~of $G$.
For each node $t_i$, $V(t_i)=\{0,1,\ldots,i\}$.
For each $1\leq i\leq 2^h$, the set $\cut(i)$ is the set of edges from vertex $i$ to all vertices $j$ with $j < i$. In addition, since $t_i$ has exactly two children,  $\cut(i)=\cut(t_{i-1})$. 
Finally, $\setdeg(i)$ is all edges incident to vertex $i$ and $\setdeg(t_i)$ is the set of edges $E(\{0,1,\ldots,i\},\{i+1,\ldots,2^h\})$.

For an integer $k>1$, $0<\lambda<1$, and $T\subseteq \cT$, we say $\cT$ is a $(k,\lambda, T)$-\expandertree~of $G$, or $(k,\lambda,T)$-LCH if
\begin{enumerate}
\item For each node $t\in \cT$, the induced graph $G(t)$ is $k$-edge-connected.
\item For any node $t\in \cT$ that is not the root, $|\cut(t)| \geq k$.  This property follows from 1 because $\cut(t)=E(V(t), V(t^*)\setminus V(t))$ is a cut of $G(t^*)$.
\item For any node $t\in T$, $|\cut(t)| \geq \lambda\cdot |\setdeg(t)|$. Note that unlike the other two properties, this one only holds for a subset $T$ of the nodes of $\cT$.
\end{enumerate}
 We say $\cT$ is a $(k,\lambda,\cT)$-LCH if $T$ is the set of all nodes of $\cT$. 
For example, the hierarchy of \autoref{fig:hierarchicaltree} is a $(k,1/2,\{1,2,\dots,2^h\})$-LCH of the graph illustrated in \autoref{fig:maxeffresbadexample}.
Condition 1 holds because there are $k$ parallel edges between any pair of vertices $i-1,i$, so
$G(V(t_i))$ is $k$-edge-connected. Condition 2 holds because,  
$$ |\cut(i)| = |\cut(t_{i-1})| =|E(\{0,\ldots,i-1\},\{i\})| \geq k. $$
Lastly, it is easy to see that condition 3 holds for any leaf node $i\in T$,
$|\cut(i)|\geq d(i)/2=|\setdeg(i)|/2$.
%See \autoref{sec:partitioning} for the construction of \expandertree s.

We will use the following terminology mostly in \autoref{sec:charging}. For two nodes $t,t'$ of an \expandertree, $\cT$, 
we say $t$ is an {\em ancestor} of $t'$, if $t\neq t'$ and $t'$ is a node of a subtree of $t$.
We say $t$ is a {\em weak ancestor} of $t'$ if either $t=t'$ or $t$ is an ancestor of $t$.
We say $t$ is a {\em descendant} of $t'$ if $t'$ is an ancestor of $t$.
We say $t,t'\in\cT$ are {\em ancestor-descendant} if either $t$ is a weak ancestor of $t'$ or $t'$ is a weak ancestor of $t$.

\paragraph{\Expandertrees~and Good Edges.}
Let $\cT$ be a hierarchy of $G$. Let $t\in\cT$ have children $t_1,\dots,t_j$. Define $$G\{t\}:=G(t)/V(t_1)/V(t_2)/\dots/V(t_j)$$ to be the graph obtained from $G(t)$ by contracting each $V(t_i)$ into a single vertex. 
We may call  $G\{t\}$ an internal subgraph of $G$.
Let $V\{t\}$ be the vertex set of $G\{t\}$;
we can also identify this set with the children of $t$ in $\cT$. 
Also, let $E\{t\}$ be the edge set of $V\{t\}$. 

The following property  of \expandertrees~is crucial in our proof. Roughly speaking, if a subset $F$ of edges of $G$ is  $k$-edge-connected in each internal subgraph, then it is globally $k$-edge-connected.
\begin{lemma}
\label{lem:LCHinternalconnectivity}
Let $\cT$ be a hierarchy of a graph $G=(V,E)$ and $F\subseteq E$. If for any internal node $t$, the subgraph $(V\{t\}, F\cap E\{t\})$ is $k$-edge-connected, then $(V,F)$ is $k$-edge-connected.
\end{lemma}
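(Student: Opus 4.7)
The plan is to take an arbitrary nontrivial cut $(S,\overline{S})$ of $V$, locate a single internal node $t$ of $\cT$ whose internal subgraph $G\{t\}$ already ``sees'' this cut, and then transfer the $k$-edge-connectivity of $(V\{t\}, F\cap E\{t\})$ back to $(V,F)$. Concretely, fix $\emptyset\subsetneq S\subsetneq V$, and let $t$ be a node of $\cT$ of \emph{minimum depth from the leaves} (equivalently, a lowest node) such that $V(t)$ meets both $S$ and $\overline{S}$. Such a $t$ exists because the root $r$ satisfies $V(r)=V$ and hence meets both sides.

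The node $t$ cannot be a leaf, since a leaf has $|V(t)|=1$. Let $t_1,\dots,t_j$ be its children. By the minimality of $t$, for every child $t_i$ the set $V(t_i)$ lies entirely in $S$ or entirely in $\overline{S}$. This partitions the vertex set $V\{t\}$ (identified with $\{t_1,\dots,t_j\}$) into two classes
\[
A:=\{t_i : V(t_i)\subseteq S\},\qquad B:=\{t_i : V(t_i)\subseteq \overline{S}\},
\]
and both classes are nonempty because $V(t)=\bigcup_i V(t_i)$ meets both sides. So $(A,B)$ is a genuine cut of $G\{t\}$.

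By the hypothesis, $(V\{t\},F\cap E\{t\})$ is $k$-edge-connected, so at least $k$ edges of $F\cap E\{t\}$ cross the cut $(A,B)$. Every such edge is, by construction of $G\{t\}$ (via contraction of the $V(t_i)$'s inside $G(t)$), an edge of $G$ with one endpoint in some $V(t_i)\subseteq S$ and the other in some $V(t_{i'})\subseteq \overline{S}$; thus it is an edge of $F$ crossing $(S,\overline{S})$. Hence $|F(S,\overline{S})|\geq k$, and since $S$ was arbitrary, $(V,F)$ is $k$-edge-connected.

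The only substantive point to check is the two-way correspondence between edges of $G\{t\}$ and edges of $G$ that cross the cut $(S,\overline{S})$; this is automatic from the definition of $G\{t\}$ as a contraction of $G(t)$ (no self-loops, and the children of $t$ partition $V(t)$), so there is no real obstacle. The idea of taking the \emph{lowest} node whose part straddles the cut is the essential trick — it guarantees that each child sits on a single side, which is exactly what turns the cut of $V$ into a cut of $V\{t\}$.
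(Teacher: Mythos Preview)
Your proof is correct and follows essentially the same approach as the paper's: pick the deepest node $t$ whose vertex set $V(t)$ meets both sides of the cut, observe that each child then lies entirely on one side, and invoke the $k$-edge-connectivity of $(V\{t\},F\cap E\{t\})$ on the resulting cut of $G\{t\}$. The paper's version is simply terser; you have spelled out the same argument in more detail.
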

\begin{proof}
	Consider any cut $(S,\overline{S})$ of $G$. Observe that there exists an internal node $t\in \cT$ such that $S$ crosses $V(t)$. Let $t_0$ be the deepest such node in $\cT$ (root has depth $0$).	But then,
	$$ F(S,\overline{S}) \supseteq F(S\cap V(t_0), \overline{S}\cap V(t_0)),$$
	and the size of the set on the RHS is at least $k$ by the assumption of the lemma.
\end{proof}
To prove \autoref{thm:hierarchydecomgeneral} we will find a good set of edges which satisfy the assumption of the above lemma. 
Note that the assumption of the above lemma does not imply that $F$ is dense in $G$. This is crucial because \autoref{thm:avgeffres} shows that there is  no shortcut matrix $D$ which has a dense set of good edges. 
%In particular, even if $G$ is only $2k$-edge-connected, we may have cuts where $G$ has significantly more edges than $F$.

%Recall that in \autoref{thm:avgeffres} we show that
%there is not shortcut matrix $D$ for which at least half of the edges of every cut of $G$ is good. The crucial property of the \expandertrees is that we may be able to find a set $F$ of good edges where $F$ is $\Omega(k)$-edge-connected in each internal subgraphs while 

\paragraph{Construction of an LCH for Planar Graphs.}
%Let us construct \expandertrees~ for planar graphs. 
In this section we give a universal construction of \expandertrees~for $k$-edge-connected planar graphs. 
\begin{lemma}\label{lem:planarLCH}
Any $k$-edge-connected planar graph $G=(V,E)$ has a
$(k/5,1/5,T)$-LCH $\cT$ where $\cT$ is a binary tree, and $T$ contains at least one child of each nonleaf node of $\cT$.
\end{lemma}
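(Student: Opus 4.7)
The plan is to build $\cT$ top-down as a binary tree. Set $V(\text{root})=V$. At every internal node $t^{\ast}$ with vertex set $U=V(t^{\ast})$ and induced subgraph $H=G[U]$, I will choose a bipartition $U=X\sqcup Y$ into two nonempty parts, create children $t_1,t_2$ with $V(t_1)=X,\ V(t_2)=Y$, and recurse until every leaf is a singleton. I maintain the invariant that $H=G(t)$ is $(k/5)$-edge-connected at every node $t$; this holds at the root because $G$ is $k$-edge-connected. For the resulting tree to be a $(k/5,\,1/5,\,T)$-LCH, each split must satisfy (i) both $G[X]$ and $G[Y]$ are $(k/5)$-edge-connected, and (ii) writing $a=|E(X,Y)|$, $b_X=|E_G(X,V\setminus U)|$, $b_Y=|E_G(Y,V\setminus U)|$, one has $\min(b_X,b_Y)\le 4a$ (which is exactly the condition $|\cut(t_i)|\ge |\setdeg(t_i)|/5$ for one of the children). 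I will put whichever of $t_1,t_2$ witnesses (ii) into $T$, which gives the ``at least one child of each nonleaf'' requirement for free.

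The central claim I must prove is a planar split lemma: for every planar $(k/5)$-edge-connected multigraph $H$ on $U$, equipped with nonnegative external-degree labels $\{b_v\}_{v\in U}$, a bipartition satisfying (i) and (ii) exists. Planarity enters through $5$-degeneracy: the underlying simple graph $H^s$ has a vertex $v$ of simple degree at most $5$, and the $(k/5)$-edge-connectivity invariant forces $\deg_H(v)\ge k/5$, concentrated on at most $5$ parallel-edge bundles. The natural singleton split $X=\{v\}$, $Y=U\setminus\{v\}$ makes $G[X]$ trivially $(k/5)$-edge-connected and gives $a=\deg_H(v)\ge k/5$, so (ii) reduces to $b_v\le 4\deg_H(v)$. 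Whenever some low-simple-degree vertex $v$ satisfies both this inequality and the property that $H-v$ is still $(k/5)$-edge-connected, the construction is immediate.

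The hard part is the bad case where no singleton split works, which I expect to split into two subcases. In the first, $H-v$ has a cut $(A,B')$ of size less than $k/5$ for every low-simple-degree candidate $v$; since $H$ itself is $(k/5)$-edge-connected, the edges incident to $v$ must cover the deficit, so I will take $X\in\{A,\,A\cup\{v\}\}$, whichever choice restores a cut of size $\ge k/5$, and argue that both $G[X]$ and $G[Y]$ remain $(k/5)$-edge-connected because any internal cut of either side was already a cut of $H$; then (ii) is arranged by putting $v$ on whichever side has the smaller external degree. In the second subcase, $b_v>4\deg_H(v)$ for every low-simple-degree $v$, and summing against a $5$-degenerate elimination ordering of $H^s$ yields the global imbalance $B=\sum_v b_v>8|E(H)|$; I will convert this into a good bipartition by sweeping a prefix $X$ along the degeneracy order and stopping at the first $X$ for which (ii) holds, with planarity of $H^s$ used to ensure that both sides of the prefix remain $(k/5)$-edge-connected. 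The main technical obstacle, and the step on which I expect the bulk of the planar argument to rest, is verifying this last preservation of $(k/5)$-edge-connectivity under the prefix sweep, since arbitrary prefixes in a degeneracy order need not induce edge-connected subgraphs and the planarity of $H^s$ must be used to choose the order and the stopping rule compatibly.
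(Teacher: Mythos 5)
Your plan is top-down (split $V$ repeatedly), while the paper's \autoref{alg:minorfree} is bottom-up (merge repeatedly). The difference is not cosmetic: the bottom-up direction is what makes the connectivity bookkeeping trivial, because merging two $(k/5)$-edge-connected vertex sets joined by at least $k/5$ parallel edges is automatically $(k/5)$-edge-connected, whereas splitting is not.

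The central ``planar split lemma'' you propose is false as stated. Take $H$ to be a $4$-cycle $v_1v_2v_3v_4$ with exactly $k/10$ parallel edges on each cycle edge. Every degree cut and every arc cut has size $2\cdot k/10 = k/5$, so $H$ is $(k/5)$-edge-connected and planar. But there is no bipartition $U=X\sqcup Y$ into two nonempty parts with both $G[X]$ and $G[Y]$ being $(k/5)$-edge-connected: a singleton leaves a path whose min cut is $k/10$; a two-vertex side $\{v_i,v_{i+1}\}$ is $(k/10)$-edge-connected; the diagonal $\{v_1,v_3\}$ is disconnected. So the object you would need to recurse on simply does not always split, and your intended recursion can be forced into exactly this dead end (e.g.\ if an earlier split produces such an induced cycle).

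The specific argument you offer in the first ``bad subcase'' is also backwards. You claim that $G[X]$ and $G[Y]$ stay $(k/5)$-edge-connected ``because any internal cut of either side was already a cut of $H$.'' But an internal cut $(S,X\setminus S)$ of $G[X]$ is a \emph{subset} of the cut $(S,U\setminus S)$ of $H$ — removing $Y$ can only delete cut edges, never add them — so $(k/5)$-edge-connectivity of $H$ gives $|E_H(S,U\setminus S)|\geq k/5$, which says nothing useful about $|E_{G[X]}(S,X\setminus S)|$. You acknowledge the analogous difficulty in the second subcase (the prefix sweep) as unresolved, and it is: there is no reason a prefix of a degeneracy order induces a $(k/5)$-edge-connected subgraph.

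The paper sidesteps all of this by building the hierarchy bottom-up. Start with singletons, and at each step contract the current blocks to get a planar graph $G_{t^*}$; planarity gives a contracted vertex $t_1$ with at most $5$ neighbors, so by $k$-edge-connectivity (preserved under contraction, \autoref{fact:kconinvariant}) some neighbor $t_2$ carries at least $k/5$ parallel edges; merging $t_1$ and $t_2$ keeps $(k/5)$-edge-connectivity (any cut of $G(t^*)$ either properly separates $V(t_1)$ or $V(t_2)$, inheriting a cut of size $\geq k/5$, or equals the $\geq k/5$ bundle between them), and $|\cut(t_1)|\geq|\setdeg(t_1)|/5$ is immediate because $t_1$ has at most $5$ contracted neighbors. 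In other words, the bottom-up direction lets you \emph{choose} which adjacent blocks to merge rather than having to certify that an arbitrary $(k/5)$-edge-connected planar graph admits a split, and that is exactly the freedom your top-down plan is missing.
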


We will use the following fact about planar graphs, whose proof easily follows from the fact that {\em simple} planar graphs have at least one vertex with degree at most $5$.

\begin{fact}
\label{fact:planardegree}
In any $k$-edge-connected planar graph $G=(V,E)$, there is a pair of vertices $u,v\in V$ with at least $k/5$ parallel edges between them.
\end{fact}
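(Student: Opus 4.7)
The plan is to reduce to the standard fact on simple planar graphs by collapsing parallel edges. Let $G' = (V, E')$ be the simple planar graph obtained from $G$ by keeping, for each unordered pair $\{u,v\}$ with at least one edge of $G$ between them, exactly one copy of that edge. Planarity is preserved under edge deletion, so $G'$ is a simple planar graph. By the well-known consequence of Euler's formula (namely $|E'| \leq 3|V|-6$, hence average degree $< 6$), there is a vertex $v \in V$ with $\deg_{G'}(v) \leq 5$.

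Next, I would transfer this to $G$. The neighbors of $v$ in $G$ are exactly the neighbors of $v$ in $G'$, so $v$ has at most $5$ distinct neighbors $u_1, \ldots, u_r$ in $G$ with $r \leq 5$. Let $m_i$ denote the number of parallel edges between $v$ and $u_i$ in $G$. Since $G$ is $k$-edge-connected, the singleton cut $(\{v\}, V\setminus\{v\})$ contains at least $k$ edges, so
\[
\sum_{i=1}^{r} m_i \;=\; |E(\{v\}, V\setminus\{v\})| \;\geq\; k.
\]
By pigeonhole, some $m_i \geq k/r \geq k/5$, and this gives the desired pair $u := v$, $v := u_i$ with at least $k/5$ parallel edges between them.

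\textbf{Potential obstacles.} There is essentially nothing hard here; the only minor subtlety is making sure the reduction to $G'$ is legitimate. Specifically, one must verify that (i) $G'$ is indeed simple (true by construction, since we remove all loops and extra parallel copies — and recall the paper assumes no loops to begin with), (ii) $G'$ is planar (a planar embedding of $G$ restricts to one of $G'$), and (iii) the low-degree vertex $v$ of $G'$ has the same neighborhood in $G$, so that the bound on $\deg_{G'}(v)$ really constrains the number of distinct $G$-neighbors of $v$. All three are immediate. Note also that we use $k$-edge-connectivity only via the degree lower bound $\deg_G(v) \geq k$, not the full strength of the hypothesis.
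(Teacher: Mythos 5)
Your proof is correct and follows exactly the route the paper indicates: pass to the underlying simple planar graph, find a vertex of degree at most $5$ via Euler's formula, use $k$-edge-connectivity to lower bound the degree in the multigraph, and apply pigeonhole. Nothing to add.
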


%We construct a $(1, 5)$-expanding 
%We construct a
%$\cT=\cT(k/5,.,.)$ \expandertree~of $G$. 
The details of the construction are given in \autoref{alg:minorfree}.
\begin{algorithm}
\begin{algorithmic}[1]
\Input A $k$-edge-connected planar graph $G$.
\Output A  $(k/5,.,.)$-LCH  of $G$.
\State For each vertex $v\in V$, add a unique leaf node to $\cT$ and map $v$ to it. Let $W$ be the set of these leaf nodes.
\Comment{We keep the invariant that $W$ is the nodes of $\cT$ that do not have a parent yet, but their subtree is fixed, i.e., $V(t)$ is well-defined for any $t\in W$.}
\While {$|W| > 1$}
\State Add a new node $t^*$ to $W$. 
\State \parbox[t]{\dimexpr\linewidth-\algorithmicindent}{
Let $G_{t^*}$ be the graph where for each node $t\in W$, $V(t)$ is contracted to a single vertex; identify each $t\in W$ with the corresponding contracted vertex.
\Comment{Note that $G_{t^*}$ is also a planar graph, because for any $t\in W$, the induced graph $G[V(t)]$ is connected.}
\strut}
\State Let $t_1$ be a vertex with at most $5$ neighbors in $G_{t^*}$.  
\Comment{$t_1$ exists by \autoref{fact:planardegree}.}
\State \parbox[t]{\dimexpr\linewidth-\algorithmicindent}{
Let $t_2$ be a neighbor of $t_1$ such that $\{t_1,t_2\}$ has the largest number of  parallel edges among all neighbors of $t_1$. 
\Comment{Note that $t_1,t_2$ are not necessarily vertices of $G$, so parallel edges between them do not correspond to parallel edges of $G$.}
\strut}
\State \parbox[t]{\dimexpr\linewidth-\algorithmicindent}{
 Make $t^*$ the parent of $t_1,t_2$;  remove $t_1,t_2$ from $W$, and add $t_1$ to $T$. 
\Comment{So, $V(t^*)=V(t_1)\cup V(t_2)$.} 
\strut}
\EndWhile
\Return $\cT$.
\end{algorithmic}
\caption{Construction of a \expandertree~for planar graphs.}
\label{alg:minorfree}
\end{algorithm}
Observe that the algorithm terminates after exactly $n-1$ iterations of the loop, because 
 any non-leaf node of $\cT$ has exactly two children,
 so $|W|$ decreases by 1 in each iteration.
 We show that $\cT$ is $\cT(k/5,1/5,T)$-LCH.
First of all, for any non-leaf node $t$ of $\cT$, $G(t)$ is $k/5$-edge-connected. We prove this by induction. Say, $t_1,t_2$ are the two children of $t^*$, and by induction, $G(t_1)$ and $G(t_2)$ are $k/5$-edge-connected. By the selection of $t_2$, there are at least $k/5$ parallel edges between $t_1,t_2$, so $G(t^*)$ is $k/5$-edge-connected. 
Secondly, we need to show that $\cut(t_1) \geq \setdeg(t_1)/5$. This is because by the selection of $t_2$, $1/5$ of the edges incident to $t_1$ in $G_{t^*}$ are $\{t_1,t_2\}$.
This completes the proof of \autoref{lem:planarLCH}.

\subsection{Main Technical Theorem}
Given a $(k,\lambda,T)$-LCH $\cT$ of $G$, in our main technical theorem we minimize the maximum average effective resistance of $\cut(t)$'s among all nodes $t\in T$.
%In our main technical theorem, we show that there is a matrix $D$ that reduces the average effective resistance of edges in all sets $\cut(t)$ to $\tilde{O}(1/k)$. 

The following convex program finds a shortcut matrix $0\prec D\preceq L_G$ that minimizes the maximum of the average effective resistance of edges in $\cut(t)$ for all $t\in T$.
\begin{table}[htb]
\centering
\begin{tabular}{l}
%\begin{equation}
{\bf \maincp}($\cT\in (k,\lambda,T)$-LCH): \\
$\begin{aligned}
\min ~~~& ~~{\cal E}\\
\st ~~~&  
%\sum_{e\in \cut(t)} \frac{1}{|\cut(t)|} \cdot 
\uE{e\sim \cut(t)}
%\X_e^\intercal D^{-1} \X_e 
\reff_D(e)
\leq \cE & \forall t\in T, \\
& D\preceq_{\square} L_G, &\\ 
%& \bone_S^\intercal D \bone_S \leq \bone_S^\intercal L_G \bone_S & \forall \emptyset \subset S\subset V\\
& D \succ 0. & 
\end{aligned}$
%\end{equation}
\end{tabular}
\end{table}
%We emphasize that, unlike classical approximation algorithms, the above program is not a {\em relaxation} of the thin tree problem. 
%Note that each edge $e\in E$ is contained in at most two sets $\cut(t_1),\cut(t_2)$. So, 

%The following is our main technical theorem.
\begin{theorem}[Main Technical]
\label{thm:main}
For any $k$-edge-connected graph $G$, and any $\cT(k,\lambda,T)$-LCH, $\cT$, of $G$,  there is a PD shortcut matrix $D$
such that  for any $t\in T$,
$$ \uE{e\sim \cut(t)}
%\X_e^\intercal D^{-1} \X_e 
\reff_D(e)
\leq \frac{f_1(k,\lambda)}{k},$$
where $f_1(k,\lambda)$ is a poly-logarithmic function of $k,1/\lambda$. 
\end{theorem}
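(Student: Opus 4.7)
The plan is to upper-bound the optimum of \maincp{} via strong duality. First I would form the Lagrangian dual: introduce multipliers $y_t\ge 0$ for the average-effective-resistance constraints (which after minimizing out $\cE$ satisfy the normalization $\sum_{t\in T} y_t\cdot|\cut(t)|=1$) and multipliers $z_S\ge 0$ for each cut constraint $\bone_S^\intercal D\bone_S\le\bone_S^\intercal L_G\bone_S$. Rewriting each $\reff_D(e)\le\cE_e$ via a Schur complement turns \maincp{} into an SDP; minimizing over $D\succ 0$ in closed form (using the stationary condition $D^{-1}AD^{-1}=Z$) yields a dual objective of the shape $2\trace((ZA)^{1/2})-\trace(ZL_G)$, where $A=\sum_{t\in T}\frac{y_t}{|\cut(t)|}L_{\cut(t)}$ and $Z=\sum_S z_S\bone_S\bone_S^\intercal$. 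Since this objective is homogeneous in the scaling of $Z$, optimizing over that scale reduces the task to bounding the ratio $\trace((ZA)^{1/2})^2/\trace(ZL_G)$ by $f_1(k,\lambda)/k$ over all admissible $y,z$.

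Second, I would give the dual a geometric interpretation. The multipliers $z_S$ are exactly the coordinate-weights of a weighted cut embedding $X:V\to\R^h$ whose coordinate at $S$ is $\sqrt{z_S}\,\bone_S$, so that $Z = X^\intercal X$ is its Gram matrix. For every edge $e=\{u,v\}$ one then has $\trace(ZL_e)=\|X_u-X_v\|_1 = \|X_u-X_v\|_2^2$, hence $\trace(ZL_G)=\sum_{uv\in E}\|X_u-X_v\|_1$ and $\trace(ZA)=\sum_{t\in T} y_t\,\uE{e\sim\cut(t)}\|\Xb_e\|_1$. Combined with a Cauchy--Schwarz bound $\trace((ZA)^{1/2})^2\le r\cdot\trace(ZA)$ for an appropriate effective rank $r$ that the LCH controls, the dual estimate reduces to an isoperimetric-type inequality of the form
\[
\sum_{t\in T} y_t\cdot\uE{e\sim\cut(t)}\|\Xb_e\|_1 \;\le\; \frac{f_1(k,\lambda)}{r\cdot k}\cdot\sum_{uv\in E}\|X_u-X_v\|_1,
\]
which must hold for every cut metric $X$ on $V$ and every distribution $y$ on $T$.

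Third, to prove this inequality I would exploit the LCH structure. For any node $t\in\cT$, the $k$-edge-connectivity of $G(t)$ gives a Cheeger-type lower bound that translates, for any cut metric, into $\sum_{e\in E(t)}\|\Xb_e\|_1 = \Omega(k/|V(t)|)\sum_{u,v\in V(t)}\|X_u-X_v\|_1$. For $t\in T$, the density condition $|\cut(t)|\ge\lambda|\setdeg(t)|$ then lets one compare the average of $\|\Xb_e\|_1$ over $\cut(t)$ with the average over $\setdeg(t)$, which in turn is controlled by pairwise distances inside $V(t^*)$. Organizing this charging along the hierarchy (each edge of $\cut(t)$ being charged into the interior edges of the deepest ancestor $G(t')$ that contains both of its endpoints) ensures that no edge of $G$ absorbs too much numerator mass.

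The main obstacle is making the final bound polylog in $(k,1/\lambda)$ rather than $\log n$. A naive telescoping along $\cT$ costs $\Theta(\log n)$ because the hierarchy may be that deep; to sidestep this, I would bucket the cut-metric edge-lengths $\|\Xb_e\|_1$ into $O(\log(k/\lambda))$ geometric scales, and for edges with length in $[2^i,2^{i+1})$ charge them only to the interior edges of the deepest ancestor $G(t)$ whose $X$-diameter is $O(2^i)$, invoking both the $k$-edge-connectivity of $G(t)$ and the $\lambda$-density at that scale. Summing the per-scale bound over $O(\log(k/\lambda))$ scales absorbs the rank prefactor $r$ and delivers the desired $\polylog(k,1/\lambda)/k$ factor; getting this scale-by-scale charging to respect the hierarchy without double-counting is, I expect, the most delicate step.
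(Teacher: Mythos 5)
Your first step (Lagrangian duality, the stationarity condition $D^{-1}AD^{-1}=Z$, and the dual objective $2\trace((Z^{1/2}AZ^{1/2})^{1/2})-Z\bullet L_G$) is correct and matches the paper's Lemma~\ref{lem:sdpdual}, including the observation that $Z$ is the Gram matrix of a weighted cut embedding. After that, however, your argument diverges from what the paper does in a way that I do not think can be repaired.

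The critical step is your Cauchy--Schwarz bound $\trace((Z^{1/2}AZ^{1/2})^{1/2})^2 \le r\cdot\trace(Z^{1/2}AZ^{1/2})$. For a PSD matrix $M$, $\trace(M^{1/2})^2\le\rank(M)\cdot\trace(M)$ and this is tight, so the generic value of $r$ here is $\rank(Z^{1/2}AZ^{1/2})$, which is $\min(n-1,\rank(Z))$ and is typically $\Theta(n)$ --- not $\polylog(k,1/\lambda)$. To get $r$ down you would need to prove that the eigenvalue spectrum of $Z^{1/2}AZ^{1/2}$ concentrates in its top $\polylog$ eigenvalues for every feasible $Z$, and your proposal contains no argument for that; the phrase "an appropriate effective rank $r$ that the LCH controls" is where the actual difficulty lives, and it is left entirely unjustified. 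In effect, you have discarded the quadratic object at the heart of the dual. The paper keeps the nuclear norm $\|\Xb W\|_*$ intact, expands it as $\sup_U\trace(U\Xb W)$ over semiorthogonal $U$ (Lemma~\ref{lem:matrixtrace}), and arrives at the quantity $\sum_{t\in T}\frac{1}{|\cut(t)|}\bigl(\sum_{e\in\cut(t)}\langle U^e,\Xb_e\rangle\bigr)^2$; the semiorthogonal $U$ is carried through the entire subsequent argument. Your reduction replaces this by a linear first-moment bound, which is too lossy.

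Your third and fourth steps also contain claims that are either wrong or unsubstantiated. The "Cheeger-type" inequality $\sum_{e\in E(t)}\|\Xb_e\|_1 = \Omega(k/|V(t)|)\sum_{u,v\in V(t)}\|X_u-X_v\|_1$ is off by a factor of $|V(t)|$: take $G(t)$ to be a path of length $m$ with $k$ parallel edges per step, embedded isometrically on a line. Then the left side is $k\cdot m$ but $\sum_{u,v}\|X_u-X_v\|_1=\Theta(m^3)$, so the correct exponent in the denominator is $|V(t)|^2$. The way the paper uses $k$-edge-connectivity is different: it constructs \emph{disjoint} $L_1$ balls (or hollowed balls) centered at vertices, uses the fact that $k$ edge-disjoint paths must cross each ball (Fact~\ref{fact:ballradii}), and sums over the balls with no double counting; it never invokes a sum-of-pairwise-distances bound. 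Similarly, the density condition $|\cut(t)|\ge\lambda|\setdeg(t)|$ is an edge-count statement and does not, on its own, translate to a comparison of average $L_1$ lengths across $\cut(t)$ versus $\setdeg(t)$: the long edges could all lie in $\setdeg(t)\setminus\cut(t)$. In the paper, the density condition is used in an entirely different place (bounding the number of "non-insertable" balls in Lemma~\ref{lem:noninsertable}).

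Finally, your bucketing idea points in the right direction --- the paper does bucket edge lengths and uses a geometric sequence of scales to avoid a $\log n$ loss --- but the actual machinery needed is substantially heavier. The paper's proof of Theorem~\ref{thm:metricratio} has two major ingredients missing from your sketch: (i) a robust "reverse Cauchy--Schwarz" lemma (Lemma~\ref{lem:orthogonal}), which, given that the average projection $\langle U^e,\Xb_e\rangle$ is comparable to the average squared norm $\|\Xb_e\|^2$ over a homogeneous bucket, produces many disjoint $L_2^2$ balls of large radius via a low-rank approximation/Hoffman--Wielandt argument; and (ii) a ball-charging argument organized through families of "compact" and "assigned" bags of balls (Section~\ref{sec:charging}), where intersecting hollowed balls are permitted but are labeled by disjoint \conflictset s of $\cT$-nodes so as not to double-count $L_1$ edge lengths. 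Your proposal, as written, would stop producing tight bounds at step 2 and does not reach the combinatorial heart of the proof.
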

Note that the statement of the above theorem does not have any dependency on the size of $G$. 

If we apply the above theorem to the $(k/5,1/5,T)$-LCH $\cT$ of a $k$-edge-connected planar graph as constructed in \autoref{alg:minorfree}, we obtain a shortcut matrix $D$ for which the small effective resistance edges are $\Omega(k)$-edge-connected.
Let us elaborate on this. 
Let $F=\{e: \reff_D(e) \leq \frac{2f_1(k/5,1/5)}{k/5}\}$.
First, note that by \autoref{lem:planarLCH}, $\cT$ is a binary tree and at least one child of each internal node of $\cT$ is in $T$. Say $t$ is an internal node with children $t_1,t_2$ and $t_1\in T$. Then, by Markov's inequality 
$$ |F\cap \cut(t_1)| \geq |\cut(t_1)|/2 \geq \frac{k/5}{2}.$$
Since $t$ has only two children, this implies $G(V\{t\}, F\cap E\{t\})$ is $k/10$-edge-connected. Now, by \autoref{lem:LCHinternalconnectivity},  $(V,F)$ is $k/10$-edge-connected. 
%Assuming $G(t_1),G(t_2)$ are $k/10$-edge-connected, the above inequality implies $G(t)$ is also $k/10$-edge-connected.

It is natural to expect that for every $k$-edge-connected graph $G$, one can find a \expandertree~$\cT$ such that one application of the above theorem produces a set $F$ of good edges such that for any $t\in \cT$, $G(V\{t\}, F\cap E\{t\})$ is $\Omega(k)$-edge-connected. 
By \autoref{lem:LCHinternalconnectivity} this would imply $(V,F)$ is $\Omega(k)$-edge-connected.
However, the following example shows that this may not be the case.

\begin{example} \label{ex:hypercube} Let $G=(V,E)$ be the $k$-dimensional hypercube ($n=2^k$). Note that $G$ is $k$-edge-connected. Let $\cT$ be a $(\Omega(k), ., .)$-LCH for $G$. Consider an internal node $t_0\in\cT$, all of whose children are leaves. By definition $G(t_0)$ is $\Omega(k)$-edge-connected. Consider a dimension cut of the hypercube that cuts $G(t_0)$ into $(S, V(t_0)\setminus S)$. Imagine a solution $D$ of $\maincp(\cT)$ which reduces the effective resistance of all edges except those in the cut $(S, V(t_0)\setminus S)$. In such a solution, $\mE_{e\sim \cut(t)}\reff_D(e)$ is small for all $t$. This is because each vertex $v\in G(t)$ has at most one of its $\Omega(k)$ neighboring edges in the cut $(S, V(t_0)\setminus S)$. But note that the small effective resistance edges are disconnected in $G\{t_0\}=G(t_0)$.
\end{example}

Consider a $(\Omega(k),.,.)$-LCH $\cT$ of $G$ and let $t$ be an internal node. \autoref{thm:main} promises that the average effective resistance of all degree cuts of the internal graph $G\{t\}$ are small. If $G\{t\}$ is an \emph{expander} this implies that the good edges are $\Omega(k)$-edge-connected in $G\{t\}$. Therefore, if we can find a \expandertree~whose internal subgraphs are expanding we can find an $\Omega(k)$-edge-connected set of good edges by a single application of \autoref{thm:main}. This is exactly what we proved in the case of planar graphs. The above hypercube example shows that such a \expandertree~does not necessarily exist in all $k$-edge-connected graphs.

\subsection{Expanding \Expandertrees}
In this section we define expanding \expandertrees~and we describe our plan to prove \autoref{thm:hierarchydecomgeneral} using the main technical theorem. 
\begin{definition}[Expanding \Expandertrees]\label{def:expandinghierarchy}
For a node $t$ with children $t_1,\dots,t_j$ in a \expandertree~$\cT$ of a graph $G=(V,E)$,   
an internal node $t$ (or the internal subgraph $G\{t\}$) is called $(\expansion, \beta)$-expanding, if $G\{t\}$ is an $\expansion$-expander and is $\beta$-edge-connected. A subset of the nodes $T$ is called $(\expansion,\beta)$-expanding iff each one of them is $(\expansion, \beta)$-expanding and similarly the \expandertree, $\cT$, is $(\expansion,\beta)$-expanding iff all of its nodes are $(\expansion,\beta)$-expanding.
\end{definition}
Recall that \expandertrees~already guarantee $k$-edge-connectivity of the internal subgraphs for some $k$. So, we always have $\beta \geq k$. If $\beta=k$, we omit it from the notation and write $(\expansion, .)$-expanding; otherwise, the $(\expansion, \beta)$-expanding property guarantees slightly stronger connectivity for a \emph{subset} of the internal subgraphs.

For  example, observe that the \expandertrees~that we constructed in \autoref{alg:minorfree} for $k$-edge-connected planar graphs are $(1,k/5)$-expanding.
In \autoref{thm:constructexpandertree} we construct an $(\Omega(1/k),\Omega(k))$-expanding $(\Omega(k),\Omega(1),\cT)$-LCH for any $k$-edge-connected graph where $k\geq 7\log n$.
But \autoref{ex:hypercube} shows that this is essentially the best possible, as the $k$-dimensional hypercube does not have any $(\omega(1/k),\Omega(k))$-expanding \expandertree. 

It follows that if $G$ has an $(\alpha,\Omega(k))$-expanding \expandertree~then there is a shortcut matrix $D$ and an $\Omega(k)$-edge-connected set $F$ of edges such that
$$ \max_{e\in F}\reff_D(e) \leq  O(\maincp(\cT)/\alpha).$$
Recall the argument in \autoref{ex:cycleexpander} for details.
Since the best $\alpha$ we can hope for is $O(1/\log n)$ this argument by itself does not work.

Our approach is to apply \autoref{thm:main} to an adaptively chosen sequence of \expandertrees. Each time we recognize the internal subgraphs of the  \expandertree~in which the set of good edges found so far are not $\Omega(k)$-edge-connected. Then, we apply \autoref{thm:main} to the nodes in these internal subgraphs. We ``refine'' these internal subgraphs by a natural decomposition of the newly found good edges to get the next \expandertree.
At the heart of the argument we show that this refinement procedure improves the expansion of the aforementioned internal subgraphs by a constant factor. Therefore, this procedure stops after $O(\log(1/\alpha))=\polyloglog(n)$ steps in the worst case.

We conclude this section by describing an instantiation of the above procedure in the special case of a $k$-dimensional  hypercube for demonstration purposes. Let $G$ be a $k$-dimensional hypercube. We let $\cT_1$ be a star, i.e., it has only one internal node and the vertices of $G$ are the leaves. This means that in $\maincp(\cT_1)$ we minimize the maximum average effective resistance of degree cuts of $G$. Let $F_1$ be the edges of effective resistance at most twice the optimum of $\maincp(\cT_1)$. It follows that half the edges incident to each vertex are in $F_1$. Now, we find a natural decomposition of the good edges $F_1$. In the ``worst case'', edges of $F_1$ form $k/2$ dimensional sub-hypercubes and all edges connecting these sub-hypercubes are not in $F_1$. Note that if we contract these sub-hypercubes, we get a $k/2$-dimensional hypercube which is a $2/k$-expander, twice more expanding than $G$. Of course, we cannot contract, because we need good edges having small effective resistance with respect to the original vertex set, but the expansion is our measure of progress. 

In the next iteration, we construct a $(.,.,T_2)$-LCH $\cT_2$ where the vertices of each $k/2$-dimensional sub-hypercube are connected to a unique internal node, and the root is connecting  all internal nodes, i.e., $\cT_2$ has height 2. We let $T_2$ be the set of all internal nodes (except the root). Note that if we delete the leaves, then $\cT_2$ would be the same as $\cT_1$ for a $k/2$-dimensional sub-hypercube. Similarly, we solve $\maincp(\cT_2)$, and in the worst case the new good edges form $k/4$ dimensional sub-hypercubes. Continuing this procedure after $\log(k)=\log\log n$ iterations  the good edges span an $\Omega(k)$-edge-connected subset of $G$.

In the next section, we will use expanding \expandertrees~to prove the main theorem \ref{thm:hierarchydecomgeneral} using the main technical theorem \ref{thm:main}. In the remaining sections we will prove the main technical theorem \ref{thm:main}.

%\input{spectral-to-combinatorial}
% !TEX root = main.tex

\section{Proof of the Main Theorem}
\label{sec:partitioning}
In this section we prove our main theorem \ref{thm:hierarchydecomgeneral} assuming the main technical theorem \ref{thm:main}.
Lastly, we will prove the algorithmic theorem \ref{thm:algorithmicatsp}.
First, in \autoref{subsec:constructtree} we show that
for $k\geq 7\log n$, any $k$-edge-connected graph has a 
$(1/k,.)$-expanding $(k/20,1/4,\cT)$-LCH.
Then, in \autoref{subsec:extractsubgraph}, we show that if a given graph $G=(V,E)$ has an $(\expansion, .)$-expanding $(k, ., .)$-LCH, then there exists a PD shortcut matrix $D$, and an $\Omega(k)$-edge-connected subset $F$ of good edges, such that for any $e\in F$, 
$$\reff_D(e)\leq \frac{\polylog(k,1/\expansion)}{k}.$$
\subsection{Construction of \Expandertrees}
\label{subsec:constructtree}
In this section, we prove the following theorem. We remark that this is the only place in the entire paper where we depend on $k$ being $\Omega(\log(n))$.

\begin{theorem}
\label{thm:constructexpandertree}
Given a $k$-edge-connected graph $G$, with $k\geq 7\log(n)$, one can construct a $(\frac{1}{k}, .)$-expanding $(\frac{k}{20}, \frac{1}{4}, \cT)$-LCH $\cT$.
\end{theorem}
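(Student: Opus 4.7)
I would construct $\cT$ top-down. Start with the root $r$ and $V(r)=V$; note $G(r)=G$ is $k$-edge-connected, well above $k/20$. At each internal node $t$ under construction, set $H=G(t)$ and distinguish two cases based on the expansion of $H$.

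\emph{Split case.} If $\phi(H)<1/k$, pick a ``most balanced'' cut $(A, V(t)\setminus A)$ in $H$ with conductance strictly less than $1/k$, make $A$ and $V(t)\setminus A$ the two children of $t$, and recurse on each. Because this is a binary split, the quotient $G\{t\}$ has only two vertices joined by $|E_H(A, V(t)\setminus A)|\geq k/20$ parallel edges (using the inductive invariant that $G(t)$ is $k/20$-edge-connected), so $G\{t\}$ is trivially a $(1/k)$-expander and is $k/20$-edge-connected. \emph{Terminal case.} If $\phi(H)\geq 1/k$, make every vertex of $V(t)$ a leaf child of $t$. Then $G\{t\}=H$ is itself a $(1/k)$-expander and $k/20$-edge-connected.

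\textbf{Verification of the LCH properties.} The $(1/k,\cdot)$-expanding property is immediate in both cases. Condition 2 of the LCH ($|\cut(t)|\geq k/20$ for non-root $t$) follows from Condition 1 applied to the parent. Condition 3 ($|\cut(t)|\geq |\setdeg(t)|/4$) is automatic for the root (where $\setdeg(r)=\emptyset$) and for children of the root (where $\cut=\setdeg$). For deeper nodes I would enforce Condition 3 by insisting on \emph{balanced} low-conductance cuts in the split step (which exist whenever $\phi(H)<1/k$, up to polylogarithmic slack), combined with a telescoping argument along the root-to-$t$ path: at each level the fraction of edges escaping a piece to outside its parent is bounded by the conductance of the parent's split, so the total escape is geometrically controlled. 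Condition 3 for leaves in the terminal branch, which requires each $v\in V(t)$ to satisfy $\deg_{G(t)}(v)\geq \deg_G(v)/4$, can be built into the termination criterion: refuse to declare $t$ terminal when some vertex fails this condition, and continue splitting instead.

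\textbf{Main obstacle.} The central technical difficulty is maintaining Condition 1 inductively, that every $G(t)$ is $k/20$-edge-connected. A low-conductance split can in principle reduce the minimum cut of each piece, since an induced subgraph of a $k$-edge-connected graph need not be anywhere near $k$-edge-connected. This is exactly where the hypothesis $k\geq 7\log n$ enters in an essential way. I expect to use a probabilistic cut-counting argument, most naturally Karger's bound that any $k$-edge-connected graph has at most $n^{2\alpha}$ cuts of size $\leq \alpha k$. Union-bounding over the at most $n-1$ splits in the recursion and over the polynomially many small cuts inside each piece that could be damaged, the failure probability of preserving $\Omega(k)$-edge-connectivity at each split is polynomially small when $k\gtrsim \log n$; the constant $7$ is calibrated so that this union bound succeeds. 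Alternatively, a deterministic contraction-based accounting (each low-conductance split removes at most $\vol_H(A)/k$ edges, which telescoped over the recursion stays below $k/20$ on every piece) may suffice.

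\textbf{Termination.} Every split strictly decreases the size of each child, so the recursion halts; with balanced cuts the depth is $O(\log n)$, which keeps the union bound in the previous paragraph affordable. This yields the claimed $(1/k,\cdot)$-expanding $(k/20, 1/4, \cT)$-LCH.
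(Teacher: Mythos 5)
Your proposal is top-down (repeatedly split along a low-conductance cut, or stop and make all vertices leaves when the expansion is good), whereas the paper's construction is bottom-up: repeatedly contract an induced subgraph that is simultaneously $k/20$-edge-connected, $1/4$-dense, and a $1/k$-expander, whose existence is the content of \autoref{prop:expsizedenseexpander} (proved via \autoref{lem:expsizedenseexpander} by a delicate size/expansion trade-off). This is not just a cosmetic difference: the bottom-up route lets the paper establish all three LCH conditions at a single level, as properties of $H_{t^*}$ relative to $G_{t^*}$, with no telescoping or global accounting.

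The central gap in your proof is exactly the one you flag as the ``main obstacle,'' and your two proposed fixes do not close it. When you split $H=G(t)$ along a low-conductance cut $(A,V(t)\setminus A)$, there is no reason $G[A]$ should be anywhere near $k/20$-edge-connected: $k$-edge-connectivity of $H$ gives $k$ edge-disjoint $u$--$v$ paths in $H$, but these paths may all route through $V(t)\setminus A$, so the induced graph on $A$ can have tiny cuts. Karger's cut-counting bound is a tool for analyzing \emph{random edge sampling} (as in the proof of \autoref{thm:agmos}), where one shows that few edges of any small cut are likely to be deleted; it says nothing about what a single deterministic balanced low-conductance cut does to the internal cuts of one side, and there is no probability space here over which to union-bound. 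The deterministic ``telescoping'' alternative also fails: a single balanced split can remove a $\Theta(1)$ fraction of the edges incident to a low-degree cut inside $A$, so the loss does not stay below $k/20$ per piece. The $k\geq 7\log n$ hypothesis is used in the paper in an entirely different way, namely to make the recursion in \autoref{lem:expsizedenseexpander} terminate (it forces the induced piece to shrink below $2^{3k/20}$ vertices, contradicting $n\geq 2$), not to make a union bound affordable.

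There is also a secondary problem with your terminal case. You need every leaf $v$ of a terminal node $t$ to satisfy $\deg_{G(t)}(v)\geq \frac14\deg_G(v)$, and you propose to ``refuse to terminate'' when this fails and keep splitting. But if $\phi(H)\geq 1/k$ and some vertex has low internal degree, there may be no low-conductance cut to split along, so the algorithm has no next move. The paper handles exactly this tension in Case~1 of \autoref{lem:expsizedenseexpander}: it \emph{deletes} a single vertex of low internal degree and recurses, showing that $\phi_G(U)$ does not increase. Handling density by vertex deletion rather than by cut-splitting is essential and is missing from your plan.
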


The proof of the theorem will be an adaptation of the proof for the special case of $k$-edge-connected planar graphs that we saw in \autoref{lem:planarLCH}. Given a graph $G$, we iteratively find $\Omega(k)$-edge-connected $\Omega(1/k)$ induced expanders, i.e., a set $S\subseteq V$ where  $G[S]$  is $\Omega(k)$-edge-connected and $\phi(G[S])\geq \Omega(1/k)$. We also need to make sure that $G[S]$ satisfies the following definition to ensure that we get a $(.,\lambda,.)$-LCH.
\begin{definition}
An induced subgraph $H$ of an unweighted graph $G=(V,E)$ is $\lambda$-dense if for any $v\in V(H)$,
$$ d_H(v) \geq \lambda\cdot d_G(v),$$
where we use $V(H)$ to denote the vertex set of $H$.
\end{definition}
The following proposition is the main technical statement that we need for the proof.
\begin{proposition}\label{prop:expsizedenseexpander}
Any $k\geq 7\log n$-edge-connected graph $G=(V,E)$ (with $n$ vertices) has an induced $k/20$-edge-connected, $1/4$-dense subgraph $G[S]$ that is an $1/k$-expander. 
\end{proposition}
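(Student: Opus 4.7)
I plan to construct $S$ via an iterative peeling procedure, in the spirit of expander decomposition with trimming. Start with $S = V$; note that $G[S] = G$ is already $k$-edge-connected (so $k/20$-edge-connected) and $1$-dense, although it need not be a $1/k$-expander. At each step, apply whichever of the following removal rules fires first:
\begin{enumerate}
\item If $G[S]$ has a cut $(A, S\setminus A)$ with $|A| \leq |S\setminus A|$ and $|E(A, S\setminus A)| < k/20$, set $S \leftarrow S\setminus A$.
\item Else, if $G[S]$ has a cut $(A, S\setminus A)$ with $\mathrm{vol}_G(A) \leq \mathrm{vol}_G(S\setminus A)$ and $|E(A, S\setminus A)| < \mathrm{vol}_G(A)/k$, set $S \leftarrow S\setminus A$.
\item Else, if some $v \in S$ has $d_{G[S]}(v) < d_G(v)/4$, set $S \leftarrow S\setminus\{v\}$.
\item Else, halt and return $S$.
\end{enumerate}
By construction, the terminal $G[S]$ is $k/20$-edge-connected, a $1/k$-expander, and $1/4$-dense, so the proposition reduces to showing the procedure terminates with $S \neq \emptyset$.

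To this end, I track the total $G$-volume $\mathrm{vol}_G(V\setminus S)$ of removed vertices and bound it by a strict fraction of $2m$. The crucial leverage comes from the $k$-edge-connectivity of $G$: whenever a set $A$ is removed via rules~(1) or~(2), we have $|E_G(A, V\setminus A)| \geq k$, and since the internal cut $|E_G(A, S\setminus A)|$ is small, at least $k - |E_G(A, S\setminus A)|$ edges of $A$ must go to the already-removed set $V\setminus S$. These ``recouped'' edges represent a budget previously spent, which constrains how much further volume can be removed. For rule~(3), each single-vertex removal destroys at most $d_G(v)/4$ edges while costing $d_G(v)$ of removed volume, so such removals contribute at most $\tfrac14\,\mathrm{vol}_G(V\setminus S)$ to the cumulative destruction. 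Combining these observations into a potential of the form
\[
\Phi(S) \;:=\; \mathrm{vol}_G(V\setminus S) \;-\; C\cdot |E_G(S, V\setminus S)|
\]
for an appropriate $C = \Theta(k)$, and running an analysis patterned after the Spielman--Teng expander-decomposition potential argument, I can show $\mathrm{vol}_G(V\setminus S) < 2m$ at termination, so $S$ is nonempty.

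The main obstacle is the mutual interference of the three rules: an expansion- or cut-based removal may break density, and a density trim may in turn degrade expansion or edge-connectivity. The choice of thresholds ($1/k$, $1/4$, and $k/20$) is calibrated precisely to keep this interplay under control, and the hypothesis $k \geq 7\log n$ enters exactly here: it absorbs the polylogarithmic overhead inherent to the iterated sparse-cut removals and ensures that the net removed $G$-volume remains below a constant fraction of $\mathrm{vol}_G(V)$. The subsequent verification that the final $G[S]$ satisfies condition~(1) for \emph{all} cuts — including small-volume ones where the expansion bound alone is insufficient — uses density together with $d_G(v) \geq k$ to rule out cuts of size less than $k/20$.
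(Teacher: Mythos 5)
Your peeling procedure is a sensible shape, and the terminal guarantee (if none of the rules fires, then $G[S]$ is $k/20$-edge-connected, $1/4$-dense, and a $1/k$-expander) is correct — your rule~(2) with $\vol_G$ in the denominator is in fact sufficient, since $\min\{\vol_{G[S]}(A),\vol_{G[S]}(S\setminus A)\}\le\min\{\vol_G(A),\vol_G(S\setminus A)\}$. The problem is that the proof of termination at a nonempty $S$ — the entire content of the proposition — is only gestured at, and the potential you sketch does not do the work.

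Concretely, the quantity $\Phi(S)=\vol_G(V\setminus S)-C\cdot|E_G(S,V\setminus S)|$ satisfies $\Phi(S)\le 2m$ identically (since $\vol_G(V\setminus S)\le 2m$ and the second term is nonnegative) and $\Phi(\emptyset)=2m$, so ``$\Phi(S)<2m$'' is just a restatement of $S\ne\emptyset$, not something you can hope to prove by a one-sided potential bound. And the ``recouped-edges'' accounting gives you only about $k$ boundary edges back per removal, whereas a single application of rule~(2) can remove a set $A$ with $\vol_G(A)$ far larger than $k$ — for instance $\vol_G(A)\approx\vol_G(S)/2$ — so the recoupment cannot amortize the volume loss. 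The hard part is exactly this: there is nothing local bounding how much rule~(2) removes, and an unbounded cascade of halvings could in principle empty $S$. This is where $k\ge 7\log n$ must enter, but your write-up only asserts that it ``absorbs the polylogarithmic overhead'' without exhibiting a quantity it actually absorbs.

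The paper closes precisely this gap by a different device: a top-down induction (\autoref{lem:expsizedenseexpander}) on the quantity $\log_2|V(H)|$, with the invariant $\phi_G(V(H))$ increasing by at most $2\phi^\ast$ per halving and never increasing under a density trim. Since each halving at least halves $|V(H)|$, at most $\log_2 n$ halvings can occur; since the density-trim analysis (\emph{the} computation that keeps $\phi_G$ monotone) requires $\phi_G<3/10$, the number of halvings is also at most $\frac{3/10}{2\phi^\ast}=\frac{3k}{20}$. Taking $\phi^\ast=1/k$ and $k\ge 7\log n$ makes $\log_2 n<3k/20$, a contradiction that kills the recursion before $|V(H)|$ can reach $1$. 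A further subtlety the paper handles and your procedure ignores: when a sparse cut is found, the paper sometimes recurses into whichever side has \emph{smaller $\phi_G$} (which may be the larger side by volume), and only falls back to the smaller side — the actual halving — when both sides have increased $\phi_G$. Your procedure always discards the smaller-volume side, so $\phi_G(S)$ can increase by $\Theta(1/k)$ on every iteration with no compensating halving, and the density-trim monotonicity (which needs $\phi_G(S)<1/2$ for your threshold $1/4$) can break after $\Theta(k)$ iterations — well before $S$ is large enough to matter. You would need to rebuild the paper's $\phi_G$-versus-$\log|V(H)|$ bookkeeping, at which point the ``Spielman--Teng potential'' framing is doing no work; as written, the proposal has a genuine hole precisely at the step where $k\ge 7\log n$ is supposed to be used.
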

Note that for every edge $\{u,v\}\in E$, the induced graph $G[\{u,v\}]$ is a $1$-expander. But, if there is only one edge between $u,v$ in $G$, then this induced graph is only $1$-edge-connected and $O(1/k)$-dense. 
It is instructive to compare the statement of the above proposition to the planar case. Recall that \autoref{fact:planardegree} asserts that in any $k$-edge-connected planar graph there is a pair of vertices with $k/5$ parallel edges. Such an induced graph is a $k/5$-edge-connected $1$-expander.
Of course, this fact does not necessarily hold for a general $k$-edge-connected graph as $G$ may not have any parallel edges at all. 

Note that, in the above proposition, the condition $k\geq 7\log n$ is necessary up to a  constant; a tight example  is the $\log n$-dimensional hypercube, which  is a $k$-edge-connected for any $k\leq \log n$, but every $\Omega(1)$-dense induced subgraph is no better than $O(1/\log n)$-expanding.

%We cannot  directly apply the same construction to general $k$-edge-connected graphs, because we cannot use \autoref{fact:planardegree}. 

%Instead of finding $\Omega(k)$-connected $1$-expanders   (i.e. parallel edges), which do not necessarily exist, we find  $\Omega(k)$-connected $\Omega(1/k)$-expanders, assuming that $k\geq 7\log(n)$. In the following preposition, we show that these subgraphs always exist. Subsequently, we use it to prove \autoref{thm:constructexpandertree}.
We use proof by contradiction. Suppose $G$ does not have any induced subgraph satisfying  the statement of the proposition. Then, invoking the following lemma with $H=G$ and $\phi^*=1/k$, we obtain that $G$ must have more than $2^{3k/20}$ vertices. But this contradicts the fact that $k\geq 7\log n$.
%We use the following lemma to prove the above proposition.
\begin{lemma}
\label{lem:expsizedenseexpander}
Given a $k$-edge-connected graph $G$,
if every $k/20$-edge-connected $1/4$-dense subgraph $G[S]$ of $G$ satisfies $\phi(G[S]) < \phi^*$, then
for any induced subgraph $H$ of $G$,
$$ \log_2(|V(H)|) \geq \frac{3/10 - \phi_G(V(H))}{2\phi^*}.$$
\end{lemma}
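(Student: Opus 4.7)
The plan is to prove the lemma by strong induction on $|V(H)|$. The base case $|V(H)| = 1$ is immediate because $\phi_G(V(H)) = 1$ makes the right-hand side non-positive. For the inductive step, if $\phi_G(V(H)) \geq 3/10$ the inequality is again trivial, so I may assume $\phi := \phi_G(V(H)) < 3/10$ and split into three sub-cases: (a) $H$ is not $1/4$-dense in $G$; (b) $H$ is $1/4$-dense and $k/20$-edge-connected; (c) $H$ is $1/4$-dense but not $k/20$-edge-connected.

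In case (a), pick $v \in V(H)$ with $d_H(v) < d_G(v)/4$ and delete it to form $H' = G[V(H) \setminus \{v\}]$. A direct computation with $\vol_G(V(H')) = \vol_G(V(H)) - d_G(v)$ and $\deg_G(V(H')) = \deg_G(V(H)) - d_G(v) + 2d_H(v)$ shows that $\phi_G(V(H')) \leq \phi$ whenever $2d_H(v) \leq d_G(v)(1-\phi)$, which holds here since $\phi < 1/2$ and $d_H(v) < d_G(v)/4$. The inductive hypothesis applied to $H'$ combined with $|V(H)| > |V(H')|$ closes this case.

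In cases (b) and (c), I obtain a bipartition $(A, B)$ of $V(H)$: in (b) the hypothesis applied to the $1/4$-dense $k/20$-edge-connected graph $H$ gives $\phi(H) < \phi^*$ and hence a partition with $\phi_H(A), \phi_H(B) < \phi^*$, and in (c) the failure of $k/20$-edge-connectivity gives a partition with $|E_H(A, B)| < k/20$. Applying the inductive hypothesis to $G[A]$ and $G[B]$ separately, setting $\alpha = \vol_G(A)/\vol_G(V(H))$, and combining the volumetric identity
\[\alpha\phi_G(A) + (1-\alpha)\phi_G(B) = \phi + \frac{2|E_H(A, B)|}{\vol_G(V(H))}\]
with the entropy form of Jensen's inequality
\[\log_2 |V(H)| \geq \alpha\log_2|A| + (1-\alpha)\log_2|B| + h(\alpha),\]
where $h(\alpha) = -\alpha\log_2\alpha - (1-\alpha)\log_2(1-\alpha)$, reduces the desired inequality (using that $x \mapsto (3/10 - x)/(2\phi^*)$ is affine) to the single claim
\[h(\alpha) \geq \frac{|E_H(A, B)|}{\phi^* \cdot \vol_G(V(H))}.\]

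In case (b) this follows because $|E_H(A, B)| \leq \phi^* \min(\vol_H(A), \vol_H(B)) \leq \phi^* \min(\alpha, 1-\alpha) \vol_G(V(H))$, combined with the elementary bound $h(\alpha) \geq \min(\alpha, 1-\alpha)$. Case (c) is the main obstacle: $1/4$-density together with $|E_H(A, B)| < k/20 < k/4 \leq d_H(v)$ for every $v \in V(H)$ forces $|A|, |B| \geq 2$ and hence $\min(\vol_H(A), \vol_H(B)) \geq k/2$, but for large $k$ this is insufficient when $\alpha$ is very unbalanced. I would handle this by a further split: if $\min(|A|, |B|) \geq k/5$ then $\min(\vol_H(A), \vol_H(B)) \geq k^2/20$ and the cut satisfies $|E_H(A, B)| \leq \phi^* \min(\vol_H(A), \vol_H(B))$, reducing to the case-(b) argument; otherwise the smaller side has fewer than $k/5$ vertices, and I would strip it off in the manner of case (a), carefully tracking the change in $\phi_G$ using $\deg_G(A) \geq k$ (from the $k$-edge-connectivity of $G$) to reduce to a smaller induced subgraph to which the inductive hypothesis applies directly.
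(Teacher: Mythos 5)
Your case (a) and case (b) are sound, and case (b) is a genuinely different route from the paper: you combine the inductive hypothesis on both sides of the cut through the entropy identity $\log_2(|A|+|B|) \geq \alpha\log_2|A| + (1-\alpha)\log_2|B| + h(\alpha)$ with $\alpha = \vol_G(A)/\vol_G(V(H))$, and you reduce everything to $h(\alpha) \geq |E_H(A,B)|/(\phi^*\vol_G(V(H)))$, which you close via $h(\alpha) \geq \min(\alpha, 1-\alpha)$. The paper instead applies the hypothesis only to the smaller side and absorbs the loss by the coarser bound $\log_2(|A|+|B|) \geq 1 + \log_2\min(|A|,|B|)$; both work, and yours is arguably slicker. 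Your case (a) computation (delete $v$ with $d_H(v) < d_G(v)/4$, check that $\phi_G$ does not increase) is also fine.

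The gap is in case (c), as you already suspected, and I do not think the repair you sketch can be made to work with the ingredients you have. Two concrete problems. First, your threshold $\min(|A|,|B|)\geq k/5$ only yields $|E_H(A,B)| \leq \phi^*\min(\vol_H(A),\vol_H(B))$ when $\phi^* \geq 1/k$, but the lemma is stated for arbitrary $\phi^*$; for smaller $\phi^*$ you would need $\min(|A|,|B|) \geq 1/(5\phi^*)$, pushing more mass into the difficult small-side sub-case. Second, the ``strip off $A$ and apply the hypothesis to $G[B]$'' step can actually increase $\phi_G$: one computes $\phi_G(B) - \phi_G(U) = \frac{\vol_G(A)}{\vol_G(U)-\vol_G(A)}\bigl(\phi_G(U) - \tfrac{\deg_G(A) - 2|E_H(A,B)|}{\vol_G(A)}\bigr)$, and when $A$ is a dense induced clump with few external edges the ratio $(\deg_G(A)-2|E_H(A,B)|)/\vol_G(A)$ can be far below $\phi_G(U)$, so the increase is not controlled by $2\phi^*\log_2(|V(H)|/|B|)$. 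Unlike your case (a), the vertex-deletion criterion $2d_H(v) \leq d_G(v)(1-\phi)$ is not available in case (c), since every vertex is already $1/4$-dense.

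What the paper does instead is change the case decomposition so that case (c) never actually arises. It uses the stronger threshold $7/20$ (not $1/4$) in Case~1, so that after Case~1 is excluded $H$ is $7/20$-dense, giving slack. Then it argues: if Case~2 also fails, $\phi(H) \geq \phi^*$, and the lemma hypothesis forces a cut $S$ with $\deg_H(S) < k/20$; take a natural decomposition of $G[S]$ into $k/20$-edge-connected pieces $S_1,\dots,S_\ell$. The $7/20$-density of $H$, together with $\deg_H(S_i) < k/10$, yields that $G[S_i]$ is $1/4$-dense (here is exactly where $7/20 - 1/10 = 1/4$ is spent), $k/20$-edge-connected, and a $\phi^*$-expander, contradicting the hypothesis unless $\deg_H(S_i) \geq k/10$ for every $i$; then a counting argument via \autoref{lem:decompkconnected} gives the contradiction. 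Your proposal never invokes the natural decomposition and never applies the hypothesis to subgraphs of $H$ other than $H$ itself, so it is missing the mechanism the paper uses to rule out the awkward case. You would need to (i) raise your density threshold in case (a) to $7/20$ so that the remaining $H$ has slack, and (ii) import the natural-decomposition contradiction for the subcase $\phi(H) \geq \phi^*$; with those two changes your cases (a) and (b) can be kept and case (c) disappears.
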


\begin{proof}
%Let $n:=|V(G)|$.
We prove the lemma by induction on the number of vertices of $H$. 
%Suppose the claim holds for any graph $G$ with less than $n$ vertices. 
Fix an induced subgraph $H=G[U]$. Without loss of generality, assume that $\phi_G(U) < 3/10$. We consider two cases, and in the end we show that one of them always happens.

\paragraph{Case 1: There is a vertex $v\in U$ such that $d_H(v) \leq 7d_G(v)/20$.}
%This means that $H$ has a vertex $v$ such that $\partial_H(v) < \partial_G(v)/3$. 
We show that $\phi_G(U)$ decreases when we remove $v$ from $U$.
%By the assumption, $\partial_H(v) \leq 
$$ \phi_G(U) = \frac{\deg_G(U\setminus \{v\}) + d_G(v) - 2d_H(v)}{\vol_G(U\setminus \{v\}) + d_G(v)} \geq \frac{\deg_G(U\setminus \{v\}) + 6d_G(v)/20}{\vol_G(U\setminus \{v\}) + d_G(v)} \geq \phi_G(U\setminus \{v\})$$
The last inequality uses that $\phi_G(U)<3/10$.
By induction, 
$$ \log_2(|U|) \geq \log_2(|U-\{v\}|) \geq \frac{3/10-\phi_G(U-\{v\})}{2\phi^*}\geq \frac{3/10-\phi_G(U)}{2\phi^*},$$
and we are done.
Note that if this case does not happen, then $H$ is $\frac{7}{20}$-dense in $G$. 

\paragraph{Case 2: For some $S\subset U$,  $\max\{\phi_H(S), \phi_H(U\setminus S)\} < \phi^*$.}
%Let $x:=d(S)/|S|$, $y:=d(\overline{S})/|\overline{S}|$ and $\alpha:=|S|/n.$
%Then,
%$$D = \frac{d(S) + d(\overline{S})}{n} = \alpha x + (1-\alpha) y.$$
Let $T:=U\setminus S$. 
Observe that if $\phi_G(S)\leq \phi_G(U)$ or $\phi_G(T)\leq \phi_G(U)$, then
%that if the average degree of the induced subgraph of $G[S]$ or $G[\overline{S}]$
%is at least $D$ then 
we are done by induction. So assume that none of the two conditions hold. We show
that $\phi_G(S), \phi_G(T) \leq \phi_G(U)+ 2\phi^*$.

First, it follows from
$$ \phi_G(U) = \frac{\deg_G(S) + \deg_G(T) - 2\deg_H(T)}{\vol_G(S) + \vol_G(T)}$$
and $\frac{\deg_G(S)}{\vol_G(S)} = \phi_G(S) > \phi_G(U)$ that 
\begin{equation}
\label{eq:phiTupper1}
 \phi_G(U) > \frac{\deg_G(T) - 2\deg_H(T)}{\vol_G(T)} = \phi_G(T) - 2\frac{\deg_H(T)}{\vol_G(T)} \geq \phi_G(T) - 2\phi_H(T) .
 \end{equation}
 Therefore, $\phi_G(T) \leq \phi_G(U) + 2\phi^*$.
Similarly, we can show $\phi_G(S) \leq \phi_G(U)+2\phi^*$.
So, by induction,
$$ \log_2(|U|) =  \log_2(|S| + |T|) \geq 1+\log_2(\min\{|S|,|T|\}) \geq 1+\frac{3/10-\phi_G(U) - 2\phi^*}{2\phi^*} = \frac{3/10-\phi_G(U)}{2\phi^*}.
$$
%2^{\frac{\ln(D(1-2\phi^*)/2k)}{4\phi^*}}$$
%So, we need to show that
%$$ 1 + \frac{\ln(D(1-2\phi^*)/2k}{4\phi^*} \geq \frac{\ln(D/2k)}{4\phi^*}.$$
%Or, equivalently, $1 \geq -\ln(1-2\phi^*)/4\phi^*$.
%The latter holds simply because $\phi^*<0.3$.

\medskip

We  now show that one of the above cases (Case 1 and Case 2) need to happen. Suppose towards contradiction that none of the above cases happens.  Then $H$ is $7/20$-dense and for all $S\subset U: \max \{\phi_H(S), \phi_H(U\setminus S)\} \geq \phi^*$. In other words, $\phi(H) \geq \phi^*$. Therefore, by the assumption of the lemma, there must be a set $S \subset U$ such that $\partial_H(S) < k/20$ (we can also assume that $\phi_H(S) \geq \phi_H(U\setminus S)$, otherwise just take the other side). We now  show that this cannot happen.

%\paragraph{Case 3: For some $S\subset U$, $\deg_H(S) < k/20$ but $\phi_H(S) > \phi^*$.}
%Without loss of generality we assume $D>2k$ (otherwise there is nothing to prove). 
%We show that this case does not happen.
Note that $H$ is $7/20$-dense in $G$, so for each $v\in U$,
\begin{equation}\label{eq:Hdegklowerbound} 
	d_H(v) \geq 7d_G(v)/20 \geq 7k/20,
\end{equation}
where we used the $k$-edge-connectivity of $G$. 
%First, assume that $G$ has a vertex $v$ of degree $\vol(v) < k$. Since $D>2k$ if we delete $v$ from $G$, the remaining graph has a larger average degree and we are done by induction.

%Otherwise let $\vol(S) \leq \vol(\overline{S})$. 
%We reach a contradiction showing that $G$ must have a vertex of degree less than $k/4$.
We start with a natural decomposition of the induced graph $G[S]$ into $k/20$-edge-connected subgraphs, $S_1,\ldots,S_\ell$, as defined in \autoref{def:decompkconnected}.
%Observe that this means that we have partitioned $G[S]$ exactly $\l-1$ times. 
We show that for each $i$, $\deg_H(S_i) \geq k/10$. This already gives a contradiction, because
by \autoref{lem:decompkconnected}
\begin{eqnarray}
  \frac{k}{20} + 2(\ell-1) \frac{k}{20} & > & \deg_H(S) + \sum_{i=1}^\ell \deg_{G[S]}(S_i) \nonumber \\
  & = & \sum_{i=1}^\ell \deg_H(S_i) \geq  \ell \cdot \frac{k}{10}.
  \label{eq:expandercase3}
\end{eqnarray}

It remains to show that $\deg_H(S_i) \geq k/10$. For the sake of the contradiction, suppose that
$\deg_H(S_i) < k/10$ for some $i$. First, observe that $S_i$ cannot be a singleton, because the induced degree of each vertex of $H$ is at least $7k/20 > k/10$. We reach a contradiction by showing that
$G[S_i]$ is a $1/4$-dense, $k/20$-edge-connected induced subgraph of $G$ with expansion $\phi(G[S_i]) \geq \phi^*$.
By definition, $G[S_i]$ is $k/20$-edge-connected.
Next, we show $G[S_i]$ is dense. For every vertex $v\in S_i$,
$$ d_{G[S_i]}(v) \geq d_H(v)  - \deg_H(S_i) \geq d_H(v) - k/10 \geq \frac{7d_G(v)}{20} - \frac{d_G(v)}{10} \geq d_G(v)/4,$$
where the third inequality uses \eqref{eq:Hdegklowerbound}. Therefore $G[S_i]$ is $1/4$-dense.
%such that $v$ sends at least $3 \vol(v)/4$ edges to out of $S_i$. But since $\deg_H(S_i) < k/10$, $v$ sends at least $(3/4-1/10)\vol(v)$ edges to outside of $H$. But this is a contradiction with the assumption that every vertex of $H$ sends less than $7\vol(v)/10$ fraction of its edges to outside of $H$.

Finally, we show that $G[S_i]$ is a $\phi^*$-expander. This is because for any set $T\subseteq S_i$,
$$ \phi_{G[S_i]}(T) \geq \frac{\deg_{G[S_i]}(T)}{d_H(T)} \geq \frac{k/20}{d_H(T)} \geq \frac{\deg_H(S)}{d_H(S)} = \phi_H(S) \geq \phi^*.$$
Therefore, $G[S_i]$ is a $k/20$-edge-connected, $1/4$-dense and $\phi^*$-expander, which is a contradiction. So, $\deg_H(S_i)\geq k/10$, which gives a contradiction by \eqref{eq:expandercase3}.
\end{proof}
This completes the proof of \autoref{prop:expsizedenseexpander}.
We are now ready to prove \autoref{thm:constructexpandertree}.
The details of our construction are given in \autoref{alg:consthierarchy}.

%\begin{proof}
\begin{algorithm}[htb]
\begin{algorithmic}[1]
\Input A $k$-edge-connected graph $G=(V,E)$ where $k\geq 7\log(n)$.
\Output A $(1/k,.)$-expanding $(k/20,1/4,\cT)$-LCH $\cT$ of  $G$.
\State For each vertex $v\in V$, add a unique (leaf) node to $\cT$ and map $v$ to it. Let $W$ be the set of these leaf nodes. \Comment{Throughout the algorithm, we keep the invariant that $W$ consists of the nodes of $\cT$ that do not have a parent yet, but their corresponding subtree is fixed, i.e., $V(t)$ is well-defined for any $t\in W$.}
\While {$|W| >1$}
\State Add a new node $t^*$ to $W$. \label{step:constexpandernewtstar}
\State \parbox[t]{\dimexpr\linewidth-\algorithmicindent}{
Let $G_{t^*}$ be the graph where for each node $t\in W$, $V(t)$ is contracted to a single vertex, and identify $t$ with the corresponding contracted vertex.
\Comment{$G_{t^*}$ is $k$-edge-connected by \autoref{fact:kconinvariant}.} \strut}
\State \parbox[t]{\dimexpr\linewidth-\algorithmicindent}{
Let $H_{t^*}=G_{t^*}[U_{t^*}]$ be the $k/20$-edge-connected, $1/4$-dense $1/k$-expanding induced subgraph of $G_{t^*}$ promised by \autoref{prop:expsizedenseexpander}. 
\label{step:useofdensesubgraph}
%\Comment{By \autoref{prop:expsizedenseexpander}, $H_{t^*}$ always exists because $k> 20\log(n)/3$.} 
\strut}
%Add a new node $t^*$ to $T$ that corresponds to $H_i$.
\State \parbox[t]{\dimexpr\linewidth-\algorithmicindent}{
Let $W=W\setminus U_{t^*}$,  and make $t^*$ the parent of all nodes in $U_{t^*}$. \Comment{So, $V(t^*)=\cup_{t\in U_{t^*}} V(t)$ and $G\{t^*\}=H_{t^*}$.}
\strut}
%\State If $|T|=1$, {\bf return} $\cT$.
\EndWhile
\Return $\cT$.
\end{algorithmic}
\caption{Construction of an \expandertree~for a $7\log(n)$-edge-connected graph.}
\label{alg:consthierarchy}
\end{algorithm}

%Note that in step \ref{step:useofdensesubgraph}
%we crucially use the fact that $k\geq 7\log(n)$. 
First of all, observe that the algorithm always terminates in at most $n-1$ iterations of the loop, because in each iteration $|W|$ decreases by at least 1.
The properties of $H_{t^*}$ in step \ref{step:useofdensesubgraph} translate to the properties of $\cT$ as follows: 
\begin{itemize}
\item $1/k$-expansion of $H_{t^*}$ guarantees that $\cT$ is $(1/k, .)$-expanding. 
\item The $k/20$-edge-connectivity of $H_{t^*}$ implies that $\cT$ is $(k/20,.,.)$-LCH.
\item Finally, the fact that $H_{t^*}$ is $1/4$-dense with respect to $G_{t^*}$ implies that $\cT$ is $(.,1/4,\cT)$-LCH.
\end{itemize}
%Therefore, it suffices to see that the output of the algorithm, $\cT$, is a $(k/20,1/4,\cT)$-LCH of $G$.
%Fix a node $t^*$ constructed in step \ref{step:constexpandernewtstar}. We show that  $G(t^*)$ is $k/20$-edge-connected.
%and that $|\cut(t^*)| \geq \frac14\cdot|\setdeg(t^*)|$. 
%Suppose, by induction, that for any node $t'$ in $G\{t\}$, $G(t')$ is $k/20$-connected. Then, since $G\{t\}$ is $k/20$-connected, $G[t]$ is $k/20$-connected. Now, suppose $t^*$ is the parent of $t$. Then, $\cut(t)$ is simply the edges adjacent to $t$ in $G\{t^*\}$. Since $G\{t^*\}$ is $1/4$-dense with respect to $G_{t^*}$, $|\cut(t)|\geq \frac14\cdot|\setdeg(t)|$. This completes the proof of \autoref{thm:constructexpandertree}.

\subsection{Extraction of an $\Omega(k)$-Edge-Connected Set of Good Edges}
\label{subsec:extractsubgraph}
In this part we prove the following theorem.
\begin{theorem}
\label{thm:expandertreeboosting}
If $G=(V,E)$ has an $(\expansion,.)$-expanding  $(k,\lambda,\cT)$-LCH, then there exists a PD shortcut matrix $D$, and a $k/4$-edge-connected set $F$ of good edges such that
$$ \max_{e\in F}\reff_D(e) \leq \frac{f_2(k,\lambda,\expansion)}{k}, $$
where $f_2(k,\lambda,\expansion)=f_1(k,\lambda\alpha)\cdot O(\log(1/\expansion))$. 
\end{theorem}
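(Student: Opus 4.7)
I will construct $D$ as an average $D := \frac{1}{r}\sum_{i=1}^{r} D_i$ of shortcut matrices produced by iteratively applying \autoref{thm:main} to an adaptively refined sequence of \expandertrees~$\cT_1,\dots,\cT_r$, where $r = O(\log(1/\expansion))$. Each $D_i$ is PD and satisfies $D_i \preceq_\square L_G$, so $D$ inherits both properties. Moreover $D \succeq \tfrac{1}{r} D_i$, so by operator monotonicity of the inverse $D^{-1}\preceq r\cdot D_i^{-1}$, and hence $\reff_D(e)\leq r\cdot\reff_{D_i}(e)$ for every edge $e$ and every $i$. Thus it suffices to arrange that every edge in the final good set $F$ attains $\reff_{D_i}(e)\leq 2f_1(k,\lambda\expansion)/k$ in at least one iteration $i$, which yields $\reff_D(e)\leq 2r\cdot f_1(k,\lambda\expansion)/k = f_2(k,\lambda,\expansion)/k$ as required.

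Initialize $\cT_1 := \cT$ and $F_0 := \emptyset$. At iteration $i$, apply \autoref{thm:main} to $\cT_i$, viewed as a $(k,\lambda\expansion,T_i)$-LCH, to obtain $D_i$ satisfying $\mE_{e\sim\cut(t)}\reff_{D_i}(e)\leq f_1(k,\lambda\expansion)/k$ for every $t\in T_i$. By Markov's inequality, the augmented set $F_i := F_{i-1}\cup\{e: \reff_{D_i}(e)\leq 2f_1(k,\lambda\expansion)/k\}$ contains at least half of each $\cut(t)$, $t\in T_i$. If for every internal node $t$ of $\cT_i$ the subgraph $(V\{t\}, F_i\cap E\{t\})$ is $k/4$-edge-connected in $G\{t\}$, then \autoref{lem:LCHinternalconnectivity} gives that $(V,F_i)$ is $k/4$-edge-connected and we stop with $F:=F_i$. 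Otherwise, at each bad internal node $t$, take a natural $k/4$-edge-connected decomposition of $F_i\cap E\{t\}$ inside $G\{t\}$ (as in \autoref{def:decompkconnected}); the pieces are contracted to new internal nodes inserted between $t$ and those of its children whose contracted image lies in the corresponding piece, producing $\cT_{i+1}$.

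Two invariants carry the argument. First, $\cT_{i+1}$ must remain a $(k,\lambda\expansion,T_{i+1})$-LCH: $k$-edge-connectivity of each newly inserted internal subgraph holds because each piece is an induced subgraph of $G(t)$ (which is $k$-edge-connected) and the natural decomposition cut of size $<k/4$ within $G\{t\}$ does not sever its $k$-edge-connectivity through the ambient $G(t^*)$. The loss of density from $\lambda$ to $\lambda\expansion$ is where the expansion parameter enters: the new cuts $\cut(t'_j)$ are cuts inside $G\{t\}$, so their ratio to $\setdeg(t'_j)$ is controlled by the $\expansion$-expansion of $G\{t\}$, which combined with the original $\lambda$-density of the outer node $t$ yields $\lambda\expansion$-density at the new nodes. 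Second, the refinement must boost the expansion of each bad internal subgraph $G\{t\}$ by a constant factor: since by Markov at least half of each $\cut(t')$ at iteration $i$ was declared good, any remaining bad cut $(S,V\{t\}\setminus S)$ with $<k/4$ good edges concentrates bad edges in a constant fraction of the cut, and after contracting the good components the induced expansion of the shrunken graph is at least a constant multiple of the original $\expansion$ (via \autoref{lem:decompkconnected} applied to the components of the natural decomposition). Iterating this constant-factor doubling $r=O(\log(1/\expansion))$ times brings the expansion of every bad internal subgraph to $\Omega(1)$; combined with $\Omega(k)$-edge-connectivity, this finally forces $F$ to be $k/4$-edge-connected in every internal subgraph.

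The main obstacle I anticipate is the simultaneous control of the two invariants above --- preserving the $\lambda\expansion$-density while genuinely boosting expansion by a constant factor --- since the natural decomposition threshold must be chosen delicately enough that small residual components do not absorb the expansion gain. A careful counting argument combining \autoref{lem:decompkconnected} with the Markov-based density of good edges in each $\cut(t)$ should isolate a small collection of ``large'' components whose contraction drives the constant-factor increase, while the cumulative perimeter of ``small'' residual components is controlled by the LCH density and hence negligible.
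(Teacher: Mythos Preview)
Your high-level plan is exactly the paper's: iteratively refine the LCH, apply \autoref{thm:main} each time, average the $D_i$, and use an expansion-doubling argument to terminate in $O(\log(1/\expansion))$ rounds. However, three of your technical invariants are misstated or under-justified in ways that the paper has to work harder to get right.

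First, your claim that $\cT_{i+1}$ remains a $(k,\lambda\expansion,T_{i+1})$-LCH is wrong on both parameters. The new internal subgraph $G(s_{t,j})$ is only $k/4$-edge-connected, not $k$: being an induced subgraph of the $k$-edge-connected $G(t)$ implies nothing, and the correct reason is that $S_{t,j}$ is a $k/4$-edge-connected component of the good edges inside $G\{t\}$, combined with the inductive $k/4$-connectivity of each child $G(t')$. More seriously, the $\lambda\expansion$-density $|\cut(s_{t,j})|\geq\lambda\expansion\,|\setdeg(s_{t,j})|$ fails in general for one child: the $\expansion$-expansion of $G_{\cT_0}\{t\}$ controls $\deg(S)/\vol(S)$ only for the \emph{smaller} side, so at most one \emph{dominating} child of each refined node can violate density. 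The paper handles this by defining $T_{i+1}$ to contain only nondominating children and carrying the dominating child along silently; your proposal asserts density for all new nodes, which is false.

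Second, the threshold $2f_1$ and the resulting ``at least half of each $\cut(t)$ is good'' are too weak to drive the expansion boost. The paper's Expansion Boosting Lemma (\autoref{lem:expansionboosting}) needs $\vol_F(S_I)\geq\tfrac{7}{8}\vol(S_I)$, i.e.\ a constant fraction much larger than $1/2$ of edges incident to the small side must be good; this is why the paper takes the threshold at $16f_1$ so that Markov gives $15/16$ of each $\cut(t)$. With only $1/2$, the counting in \autoref{lem:decompkconnected} cannot beat the loss from bad edges, and the doubling step does not go through. Your closing paragraph correctly anticipates that this is where the delicacy lies, but the specific mechanism (good-edge density $\geq 7/8$, natural decomposition bound, and the resulting inequality $\deg(S_I)/\sum_i\deg(S_i)\geq 2\phi(S_I)$) is the content of the proof, not an obstacle to be waved past.
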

\noindent The main theorem of the paper , \autoref{thm:hierarchydecomgeneral}, follows from the above theorem together with \autoref{thm:constructexpandertree}.

Let $\cT$ be the $(\expansion,.)$-expanding $(k,\lambda,\cT)$-LCH given to us. First, observe that it is very easy to prove a weaker version of the above theorem where 
$$\reff_D(e)\leq \frac{2 f_1(k,\lambda)}{k\cdot \expansion}$$ 
for edges of $F$ by a single application of \autoref{thm:main}. Let $D$ be the optimum of $\maincp(\cT)$; we let $F\subseteq E$ be the edges where $\reff_D(e) \leq \frac{2 f_1(k,\lambda)}{k\cdot \expansion}$. Let $G'=(V,F)$. It follows that for any node $t$ of $\cT$, $G'\{t\}$ is $k/2$-edge-connected, so by \autoref{lem:LCHinternalconnectivity} $G'$ is $k/2$-edge-connected and we are done. 
%Note that, if $G$ were a planar graph, the above theorem could be proved by a single application of the main technical theorem. 

The main difficulty in proving the above theorem is to reduce the inverse polynomial dependency on $\expansion$ in the above argument to a polylogarithmic function of $\expansion$.
To achieve that, we apply \autoref{thm:main} to $\log(1/\expansion)$  
 \expandertrees, $\cT_0,\dots,\cT_{\log(1/\alpha)}$, of our graph. For each $\cT_i$, $W_i$ is the set of bad internal nodes of $W_{i-1}$, i.e., those where their internal subgraph is not yet $\Omega(k)$-edge-connected with respect to the good edges found so far. Originally, $W_0$ contains all internal nodes of $\cT_0$ and it is a $(1/k,.)$-expanding set. For each $i$, we will make sure that $\cT_i$ is $(k/4,\lambda\alpha^i, T_i)$-LCH and $W_i$  is $(2^i\expansion,k)$-expanding.
 In other words,
   each $\cT_i$ is a ``refinement'' of $\cT_{i-1}$ whose $W_i$ nodes are twice more expanding.

\begin{algorithm}[tb]
\begin{algorithmic}[1]
\Input A graph $G=(V,E)$ and a $(\expansion,.)$-expanding $(k,\lambda,\cT)$-LCH $\cT$.
\Output A PD shortcut matrix $D$ and a $k/4$-edge-connected set $F$ of good edges. 
%and $\reff_D(e)\leq \beta f_2(k,\lambda,\expansion)/k$ for all $e\in F$.
\State Let $W_0$ be all internal nodes of $\cT$, $W_i=\emptyset$ for $i>0$, and $\cT_0=\cT$, and $G'=(V,\emptyset)$. 
\For {$i=0\to \log(1/\expansion)$}
\State Let $D_i$ be the optimum of $\maincp(\cT_{i})$.
\State \parbox[t]{\dimexpr\linewidth-\algorithmicindent}{
Say $\cT_i$ is a $(k',\lambda',T_i)$-LCH of $G$; let 
\begin{equation}
\label{def:FireffDi}
F_i:=\left\{e\in E: \reff_{D_i}(e) \leq \frac{16 f_1(k',\lambda')}{k'}\right\}, 
\end{equation}
add all edges of $F_i$ to $G'$.
\strut}
\State \parbox[t]{\dimexpr\linewidth-\algorithmicindent}{
For any node $t\in W_i$, let $S_{t,1},\ldots,S_{t,\ell(t)}$ be
a natural decomposition of $G'_{\cT_i}\{t\}$ into $k/4$-edge-connected components as defined in \autoref{def:decompkconnected}.
% and assume that  $S_{t,\l(t)}$ has the largest volume in $G_{\cT_i}\{t\}$ among all sets in the partition.
If $\ell(t) > 1$, then we add $t$ to $W_{i+1}$. 
\Comment{Note that if $\l(t)=1$ it means that $G'\{t\}$ is $k/4$-edge-connected.}
\strut}
\State \parbox[t]{\dimexpr\linewidth-\algorithmicindent}{
We construct a $(.,.,T_{i+1})$-LCH of $G$, called $\cT_{i+1}$,  by modifying   $\cT_{i}$. For any node $t\in W_{i+1}$  we add $\ell(t)$ new nodes $s_{t,1},\dots,s_{t,\ell(t)}$ to
$\cT_{i+1}$ and we make all nodes of $S_{t,j}$ children of $s_{t,j}$ and we make $t$ the parent of $s_{t,j}$. Therefore, $t$ has exactly $\ell(t)$ children in $\cT_{i+1}$. See \autoref{fig:cT_2} for an example. The set $T_{i+1}$ is the union of all nondominating nodes children of all nodes of $W_i$. %For any node $t\in W_{i+1}$, we  add $s_{t,1},\dots,s_{t,\ell(t)-1}$ to $T_{i+1}$. If $d_{G_{\cT_{i}}\{t\}}(S_{t,\ell(t)}) \leq \frac12 d_{G_{\cT_{i}}\{t\}}(V_{\cT_{i}}\{t\})$ then we also add $s_{t,\ell(t)}$ to~$T_{i+1}$. 
\strut}
\EndFor
\Return the PD shortcut matrix $\mE_i D_i$ and the good edges $\cup_i F_i$.
\end{algorithmic}
\caption{Extracting Small Effective Resistance Edges}
\label{alg:expandertreeboosting}
\end{algorithm}
Throughout the algorithm we also make sure that all (except possibly one) children of each node in $W_i$ are in $T_i$.
Let us elaborate on this statement. Let $t\in W_i$ and let $t_0,t_1,\dots$ be the children of $t$. Since $W_i\subseteq W_0$, $t\in W_0$. Consider the graph $G_{\cT_0}\{t\}$; by the theorem's assumptions $G_{\cT_0}\{t\}$ is a $k$-edge-connected $\expansion$-expander. 
It follows that $G_{\cT_i}\{t\}$ can be obtained from $G_{\cT_{0}}\{t\}$ by contracting a set $U_{t_j}\subset V_{\cT_0}\{t\}$ corresponding to each children $t_j$ of $t$. 
We use the notation 
$$\vol_0(t_j)=\sum_{t'\in U_{t_j}} d_{G_{\cT_0}\{t\}}(t')$$ 
to denote the sum of the degrees of nodes in $S_{t_j}$ in the noncontracted graph $G_{\cT_0}\{t\}$.
We say a child $t_\l$ of $t$ is {\em dominating} if 
$$\vol_0(t_\l) > \frac12 \sum_{j} \vol_0(t_j).$$
It follows that each node $t\in W_i$ can have at most one dominating child. 
In addition, if $t_\l$ is a dominating child, it may not satisfy $\cut(t_\l) \gtrsim \setdeg(t_\l)$, so we may not add $t_\l$ to $T_i$. Because of this we need to treat the dominating children (of nodes of $W_i)$ differently throughout the algorithm and the proof. 
In our construction $T_i$ consists of all nondominating children of all nodes of $W_i$. %For any node $t\in W_i$ we make sure that all nondominating children of $t$ are in $T_i$. 
It is easy to see that for any nondominating child $t_\l$ of $t\in W_i$,
$$ \cut_{\cT_i}(t_\l) = \deg_{G_{\cT_0}\{t\}}(U_{t_\l}) \geq \expansion  \cdot \vol_{G_{\cT_0}\{t\}}(U_{t_\l}) = \expansion\cdot \sum_{t'\in U_{t_\l}} \cut_{\cT_0}(t') \geq \expansion\cdot\lambda\cdot \sum_{t'\in U_{t_\l}} \setdeg_{\cT_0}(t') \geq \expansion\lambda \setdeg_{\cT_i}(t_\l), $$
where the first inequality uses the fact that $G_{\cT_0}\{t\}$ is an $\expansion$-expander and the second inequality uses the fact that $\cT_0$ is
$(.,\lambda,\cT_0)$-LCH.
The following claim is immediate
\begin{claim}
If $\cT_0$ is an $(\expansion,.)$-expanding $(.,\lambda,\cT_0)$-LCH, then for  any $i\geq 1$, $\cT_i$ is a $(.,\lambda\expansion,T_i)$-LCH, where $T_i$ consists of all nondominating children of the nodes of $W_i$, .
\end{claim}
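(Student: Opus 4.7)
The plan is to verify the three defining properties of a $(\cdot,\lambda\alpha,T_i)$-LCH for $\cT_i$ by tracking how the refinement step relates the hierarchy $\cT_i$ to the original hierarchy $\cT_0$. The only nontrivial condition is the third one: for every $t_\l\in T_i$, we need $|\cut_{\cT_i}(t_\l)| \geq \lambda\alpha\cdot |\setdeg_{\cT_i}(t_\l)|$. The edge-connectivity of the induced subgraphs $G_{\cT_i}(t)$ and the $|\cut_{\cT_i}(\cdot)|\geq k$ bound for non-root nodes are inherited trivially from the corresponding properties of $\cT_0$, since each $V_{\cT_i}(t)$ is either a $V_{\cT_0}(t)$ from the original hierarchy or a union of such sets.

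To handle the main condition, I would fix $t\in W_i$ and a nondominating child $t_\l$ of $t$ in $\cT_i$. By construction $t\in W_i \subseteq W_0$, so $G_{\cT_0}\{t\}$ is $\alpha$-expanding by the hypothesis that $\cT_0$ is $(\alpha,.)$-expanding, and $t_\l$ corresponds to a subset $U_{t_\l}$ of the vertex set $V_{\cT_0}\{t\}$. The nondominating property yields $\vol_{G_{\cT_0}\{t\}}(U_{t_\l}) \leq \tfrac12 \vol_{G_{\cT_0}\{t\}}(V_{\cT_0}\{t\})$, so expansion applies on the correct side and gives
\[ \deg_{G_{\cT_0}\{t\}}(U_{t_\l}) \geq \alpha\cdot \vol_{G_{\cT_0}\{t\}}(U_{t_\l}). \]
Then I would identify $\deg_{G_{\cT_0}\{t\}}(U_{t_\l})$ with $|\cut_{\cT_i}(t_\l)|$ (these are the same physical edges of $G$, namely those joining $V(t_\l)$ to $V(t)\setminus V(t_\l)$), and expand $\vol_{G_{\cT_0}\{t\}}(U_{t_\l}) = \sum_{t'\in U_{t_\l}} |\cut_{\cT_0}(t')|$ by definition of a contracted degree.

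Next I would apply the $(.,\lambda,\cT_0)$-LCH property node-by-node to obtain $|\cut_{\cT_0}(t')|\geq \lambda\cdot |\setdeg_{\cT_0}(t')|$ for every $t'\in U_{t_\l}$, and finally note that every edge in $\setdeg_{\cT_i}(t_\l)$ leaves $V(t_\l)=\bigcup_{t'\in U_{t_\l}} V(t')$ in $G$ and hence lies in some $\setdeg_{\cT_0}(t')$, so $\sum_{t'\in U_{t_\l}} |\setdeg_{\cT_0}(t')| \geq |\setdeg_{\cT_i}(t_\l)|$. Chaining the four inequalities produces $|\cut_{\cT_i}(t_\l)|\geq \lambda\alpha\cdot |\setdeg_{\cT_i}(t_\l)|$, as desired.

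The main subtlety (rather than obstacle) is making sure the expansion inequality is invoked on the correct side of the cut; this is precisely why $t_\l$ must be nondominating, and why a single dominating child per node of $W_i$ must be excluded from $T_i$. Beyond that the argument is essentially bookkeeping, and, as noted, it is already indicated in the display immediately preceding the claim statement.
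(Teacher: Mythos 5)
Your proposal is correct and follows essentially the same chain of inequalities as the paper's own argument (displayed immediately before the claim): identify $|\cut_{\cT_i}(t_\l)|$ with $\deg_{G_{\cT_0}\{t\}}(U_{t_\l})$, apply the $\alpha$-expansion on the small (nondominating) side, rewrite the volume as $\sum_{t'\in U_{t_\l}}|\cut_{\cT_0}(t')|$, invoke the $\lambda$-LCH property termwise, and bound $\sum_{t'}|\setdeg_{\cT_0}(t')|\ge|\setdeg_{\cT_i}(t_\l)|$. You also correctly isolate the one place where the nondominating hypothesis is used; nothing is missing.
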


%By the assumption of the theorem $G_{\cT_0}\{t\}$ is $k$-edge-connected and $\alpha$-expander. 
% In particular, for a node $t\in W_i$ with children $\{t_0,t_1,\dots,t_j\}$ we say $t_0$ is a dominating child if
 %$\cut(t_0)>\frac12 \sum_i \cut(t_i)$, i.e., the number of edges
% In particular, if a node $t\in W_i$ has a child $t'$ such that $\vol_{G\{t\}}(

At the end of the algorithm, we obtain PD shortcut matrices
$D_0,\dots,D_{\log(1/\expansion)}$ and sets $F_0,\dots,F_{\log(1/\expansion)}$ such that the edges of each $F_i$
 have small effective resistance with respect to $D_i$, and $\cup_{i=0}^{\log(1/\expansion)} F_i$ is $\Omega(k)$-edge-connected.
Then, we let $D$ be the average of $D_0,\dots,D_{\log(1/\expansion)}$ and $F$ be the union of $F_0,\dots,F_{\log(1/\expansion)}$.
The details of the construction of these matrices and sets are given in \autoref{alg:expandertreeboosting}.

%Let $W_i$ be the set $W$ after iteration $i$ of the main loop,
%and $W_0$ be the non-leaf nodes of $\cT_0$. 
We prove the claim by induction on $i$.
In the first step we show $\cT_{i+1}$ is a $(k/4,.,.)$-LCH.
Then, we show that $W_{i+1}$ is $(2^{i+1}\alpha,k)$-expanding.
Then, we show that $W_{\log(1/\alpha)}$ is empty and we conclude by showing that $G'=(V,\cup_i F_i)$ is $\Omega(k)$-edge-connected.
% that for any $i\geq 1$, $W_i$ is a $(2^i\expansion,\beta)$-expanding set of nodes of $\cT_i$ and
%$\cT_i$ is a $\cT(k/4\beta,\lambda\expansion^i,T_i)$ \expandertree~of $G$.
%Subsequently, we show that as soon as $W_i$ becomes empty, $G'$ is $k/4\beta$-connected and that $W_i$ always becomes empty in at most $1/\log(1/\alpha)$ iterations. 

\begin{claim}
If  $\cT_i$ is a $(k/4,.,.)$-LCH of $G$, then
  $\cT_{i+1}$ is a $(k/4,.,.)$-LCH of $G$.
 In addition, if $W_i$ is $(.,k)$-expanding, then $W_{i+1}$ is $(.,k)$-expanding.
\end{claim}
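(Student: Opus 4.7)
The claim has two parts. For the first, I need to verify that every node $t \in \cT_{i+1}$ has $G(t)$ being $k/4$-edge-connected. The nodes of $\cT_{i+1}$ fall into two categories: nodes inherited from $\cT_i$, and the newly inserted intermediate nodes $s_{t,j}$ for $t \in W_{i+1}$ and $j \in [\ell(t)]$. For an inherited node $t$, the set $V(t)$ is unchanged, so $G(t) = G_{\cT_i}(t)$, which is already $k/4$-edge-connected by hypothesis. The work is to handle the new nodes $s_{t,j}$, whose vertex sets are $V(s_{t,j}) = \bigcup_{t' \in S_{t,j}} V(t')$.

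To show $G(s_{t,j})$ is $k/4$-edge-connected, I would take an arbitrary cut $(A, V(s_{t,j}) \setminus A)$ and do a two-case analysis. If $A$ properly splits some $V(t')$ with $t' \in S_{t,j}$, then the hypothesis that $\cT_i$ is a $(k/4,\cdot,\cdot)$-LCH makes $G(t')$ already $k/4$-edge-connected, so at least $k/4$ edges of $G(t')$ cross the cut inside $G(s_{t,j})$. Otherwise, $A$ is a union of entire $V(t')$'s, so the cut corresponds, under the contraction defining $G_{\cT_i}\{t\}$, to a cut in $G'_{\cT_i}\{t\}[S_{t,j}]$. Since $S_{t,j}$ is by construction a $k/4$-edge-connected piece of the natural decomposition of $G'_{\cT_i}\{t\}$, this cut has at least $k/4$ crossing edges, and each such edge is inherited as an edge of $G(s_{t,j})$ (because $G' \subseteq G$).

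For the second assertion, the key observation is structural: for $t \in W_{i+1}$ (recall $W_{i+1} \subseteq W_i$), the children of $t$ in $\cT_{i+1}$ are exactly $s_{t,1},\ldots,s_{t,\ell(t)}$, and the vertex set $V(s_{t,j})$ groups together precisely the children of $t$ in $\cT_i$ that belong to $S_{t,j}$. Consequently, $G_{\cT_{i+1}}\{t\}$ is obtained from $G_{\cT_i}\{t\}$ by further contracting each $S_{t,j}$ into a single vertex. Since $W_i$ is $(\cdot,k)$-expanding, $G_{\cT_i}\{t\}$ is $k$-edge-connected, and by $\ell(t)-1$ successive applications of \autoref{fact:kconinvariant} (contraction preserves $k$-edge-connectivity), so is $G_{\cT_{i+1}}\{t\}$.

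Neither part is technically hard; the main thing to get right is the bookkeeping around the contracted graphs, in particular the bijection between a cut of $G(s_{t,j})$ that does not split any $V(t')$ and a cut of $G'_{\cT_i}\{t\}[S_{t,j}]$, and the fact that $G_{\cT_{i+1}}\{t\}$ is a contraction of $G_{\cT_i}\{t\}$ rather than a completely different graph. Once these identifications are spelled out cleanly, the first part is a two-case cut argument and the second part is a one-line invocation of \autoref{fact:kconinvariant}.
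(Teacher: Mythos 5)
Your proof is correct and follows essentially the same approach as the paper; you supply the two-case cut analysis for the new nodes $s_{t,j}$ and the explicit appeal to \autoref{fact:kconinvariant} for the contracted internal graphs, both of which the paper leaves implicit in its one-line justifications. The only trivial quibble is the count ``$\ell(t)-1$ successive applications'' — one contraction per non-singleton $S_{t,j}$ suffices, so the exact number is immaterial — but the argument is sound.
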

\begin{proof}
%We prove this by induction. Suppose $\cT_i$ is a $\cT(k/4\beta,\lambda\expansion^i,T_i)$ \expandertree~of $G$.  
First, for any node $t\in \cT_{i+1}$ that is also in $\cT_i$, $G_{\cT_{i+1}}(t) = G_{\cT_i}(t)$; so, $G_{\cT_{i+1}}(t)$ is $k/4$-edge-connected by induction. So, $G_{\cT_{i+1}}\{t\}$ is also $k/4$-edge-connected.
For any new node $s_{t,j}\in\cT_{i+1}$, since $S_{t,j}$ is a $k/4$-edge-connected subgraph of $G'_{\cT_i}\{t\}$, $G_{\cT_{i+1}}(s_{t,j})$ is $k/4$-edge-connected. 
Therefore, $\cT_{i+1}$ is a $(k/4,.,.)$-LCH of $G$.

Similarly, observe that $W_{i+1}$ is $(.,k)$-expanding, because $W_{i+1}\subseteq W_i$ and for any node $t\in \cT_i$, $G_{\cT_{i+1}}(t) = G_{\cT_i}(t)$.
\end{proof}

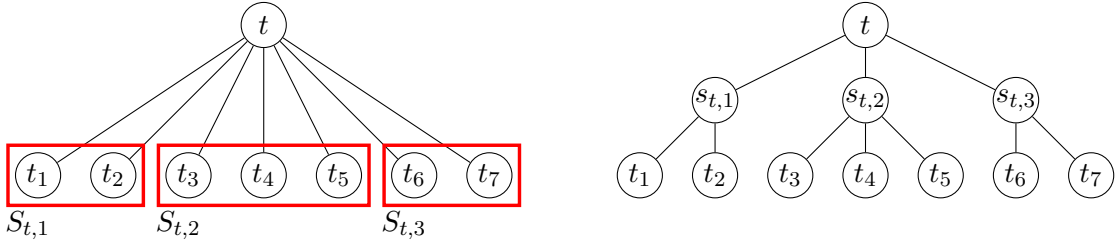
\begin{figure}[tb]
\centering
\begin{tikzpicture}
\tikzstyle{every node} = [draw,circle,minimum size=6mm,inner sep=0];
\node at (4,2) (t) {$t$};
\foreach \i in {1,...,7}{
\node at (\i,0) (t_\i) {$t_\i$} edge (t);
}
\draw [color=red,line width=1.3pt] (0.6,-0.4) node[below right,draw=none,color=black] {$S_{t,1}$} rectangle (2.4,0.4)
(2.6,-0.4) node[below right,draw=none,color=black] {$S_{t,2}$} rectangle (5.4,0.4)
(5.6,-0.4) node[below right,draw=none,color=black] {$S_{t,3}$} rectangle (7.4,0.4);
\begin{scope}[shift={(8,0)}]
\node at (4,2) (t) {$t$};
\foreach \i/\x in {1/2, 2/4, 3/6}{
	\node at (\x,1) (s_\i) {$s_{t,\i}$} edge (t);
}
\foreach \i/\f in {1/1,2/1,3/2,4/2,5/2,6/3,7/3}{
	\node at (\i,0) (t_\i) {$t_\i$} edge (s_\f);
}
\end{scope}
\end{tikzpicture}
\caption{A node $t$ and its children, $t_1,t_2,\dots,$ in $\cT_{i-1}$ are illustrated in left. The right diagram shows the tree $\cT_i$ when the new nodes $s_{t,1},s_{t,2},s_{t,3}$ corresponding to the sets $S_{t,1},S_{t,2},S_{t,3}$ are added.}
\label{fig:cT_2}
\end{figure}

We slightly strengthen our induction; instead of showing that $G_{\cT_i}\{t\}$ is $(2^i\alpha,.)$-expanding for all  $t\in W_i$, we show that for any $t\in W_i$ and any $S\subseteq V_{\cT_i}\{t\}$ where
$ d_0(S) \leq \frac12 d_0(V_{\cT_i}\{t\}),$
$$\phi_{G_{\cT_i}\{t\}}(S) \geq 2^i\expansion.$$

% By this change definition we make sure to avoid including the possible dominating child of $t$ in $S$. Note that the above definition is identical to \autoref{def:expandinghierarchy} for $W_0$ but it is slightly different for $i>0$.
For a set of indices $I\subseteq [\l]$ we use $ S_I=\cup_{i\in I} S_i.$
The following is the key lemma of the proof of this section.

%For a node $t\in \cT_i$ and a set of indices $J\subseteq [\ell(t)]$ let $V_{\cT_i}\{t\}_{|J}=\cup_{j\in J} S_{t,j}$.
\begin{claim}
\label{cl:HSexpander}
For any $i\geq 0$, $t\in W_i$, and any $S\subseteq V_{\cT_i}\{t\}$ where $\vol_0(S) \leq \frac12 \vol_0(V_{\cT_i}\{t\})$,
$$ \phi_{G_{\cT_i}\{t\}}(S) \geq \min\{2^i\alpha,1/8\}.$$
Therefore, for any $i\geq 1$, $W_i$ is $(2^i\alpha,.)$-expanding.
%If $W_i$ is a $(\expansion,k)$-expanding set of nodes of $\cT_i$, $\expansion<1/8$, and $\cT_i$ is a $(k/4,\lambda,T_i)$-LCH of $G$, 
%then $W_{i+1}$ is $(2\expansion,k)$-expanding. 
\end{claim}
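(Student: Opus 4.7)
I proceed by strong induction on $i$, proving the quantified statement in the claim and deducing the $(2^i\expansion,\cdot)$-expansion of $W_i$ as a corollary at the end. For the base case $i=0$, the hypothesis of \autoref{thm:expandertreeboosting} says $\cT_0=\cT$ is $(\expansion,\cdot)$-expanding, so every $G\{t\}$ with $t \in W_0$ is an $\expansion$-expander. Since at level $0$ the measure $\vol_0$ coincides with the ordinary volume in $G\{t\}$, the bound $\phi_{G\{t\}}(S) \geq \expansion$ holds whenever $\vol_0(S) \leq \vol_0(V\{t\})/2$.

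For the inductive step I fix $t \in W_{i+1} \subseteq W_i$ together with $S \subseteq V_{\cT_{i+1}}\{t\}$ satisfying $\vol_0(S) \leq \vol_0(V_{\cT_{i+1}}\{t\})/2$. Recall that $\cT_{i+1}$ was constructed from $\cT_i$ by inserting, for each $t \in W_i$, nodes $s_{t,1},\ldots,s_{t,\ell(t)}$ corresponding to the components $S_{t,1},\ldots,S_{t,\ell(t)}$ of a natural decomposition of $G'_{\cT_i}\{t\}$ into $k/4$-edge-connected pieces. Consequently $V_{\cT_{i+1}}\{t\}=\{s_{t,1},\ldots,s_{t,\ell(t)}\}$ and $G_{\cT_{i+1}}\{t\}$ is $G_{\cT_i}\{t\}$ with each $S_{t,j}$ contracted. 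Writing $S_I = \bigcup_{j \in I} S_{t,j}\subseteq V_{\cT_i}\{t\}$ for the preimage of $S$, I note that $\vol_0$ is invariant under this contraction, so $\vol_0(S_I) \leq \vol_0(V_{\cT_i}\{t\})/2$, and the inductive hypothesis supplies $\phi_{G_{\cT_i}\{t\}}(S_I) \geq \min\{2^i\expansion,1/8\}$.

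Because contraction preserves the boundary, $\partial_{G_{\cT_{i+1}}\{t\}}(S) = \partial_{G_{\cT_i}\{t\}}(S_I)$, the entire task reduces to showing that the volume drops by at least a factor of two, i.e.\ $\vol_{G_{\cT_{i+1}}\{t\}}(S) \leq \tfrac12\, \vol_{G_{\cT_i}\{t\}}(S_I)$, as long as the target $2^{i+1}\expansion$ has not already reached the $1/8$ cap. I plan to handle this in two cases. In the first, the aggregate internal edge count $\sum_{j\in I} e_{\mathrm{in}}(S_{t,j})$ is large; here the $k/4$-edge-connectivity of each $S_{t,j}$ in $G'_{\cT_i}\{t\}$ combined with Nash--Williams (\autoref{thm:Nash-Williams}) yields $\Omega(k)$ edge-disjoint spanning trees inside $S_{t,j}$, so contraction destroys enough internal volume to give the factor-of-two drop. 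In the complementary case, a large share of $\vol_{G_{\cT_i}\{t\}}(S_I)$ is carried by cross-edges between distinct $S_{t,j}$'s; then I invoke \autoref{lem:decompkconnected} to bound the good cross-edges, and use the averaging inequality \eqref{eq:averagefracs} to isolate some $s_{t,j^\star} \in S$ whose own expansion in $G_{\cT_{i+1}}\{t\}$ already meets the $1/8$ ceiling, which then propagates to all of $S$.

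The main obstacle is calibrating the case split: the $G_{\cT_i}\{t\}$-volume includes both good and bad edges, but \autoref{lem:decompkconnected} only controls good cross-edges, so in the cross-heavy regime one must simultaneously use the inductive expansion $\min\{2^i\expansion,1/8\}$ (which applies to \emph{all} edges of $G_{\cT_i}\{t\}$) to absorb the contribution of bad cross-edges. Once the set-level bound is established, the statement that $W_{i+1}$ is $(2^{i+1}\expansion,\cdot)$-expanding follows from the definition of $\phi(G)$ by picking, for every cut, the $\vol_0$-smaller side---which exists because the total $\vol_0$-mass of $V_{\cT_{i+1}}\{t\}$ is preserved at every level of the construction.
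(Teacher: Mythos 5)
Your high-level setup is right: induction on $i$, lifting $S\subseteq V_{\cT_{i+1}}\{t\}$ to its preimage $S_I\subseteq V_{\cT_i}\{t\}$, noting that contraction preserves both $\vol_0$ and the boundary $\deg(S_I)$, and reducing the claim to the volume-halving inequality $\vol_{G_{\cT_{i+1}}\{t\}}(S)\leq\tfrac12\vol_{G_{\cT_i}\{t\}}(S_I)$ when $\phi_{G_{\cT_i}\{t\}}(S_I)<1/8$. The trouble is the two-case argument you propose in place of the Expansion Boosting Lemma. The essential ingredient you never establish is the good-edge density bound $\vol_{F_i}(S_I)\geq\tfrac78\vol(S_I)$. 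In the paper this comes from two facts: first, $\vol_0(S_I)\leq\tfrac12\vol_0(V_{\cT_i}\{t\})$ forces every node of $S_I$ to be nondominating, so $S_I\subseteq T_i$; second, \autoref{thm:main} together with the definition of $F_i$ in \eqref{def:FireffDi} and Markov's inequality gives that at least $\tfrac{15}{16}$ of the edges incident to each $t'\in T_i$ are in $F_i$. Your plan instead invokes ``the inductive expansion $\min\{2^i\expansion,1/8\}$'' to absorb the bad-cross-edge contribution, but the inductive lower bound on $\phi$ says nothing about how many edges of $G_{\cT_i}\{t\}$ lie outside $F$ --- these are orthogonal quantities, and without the $7/8$ density you cannot bound $\vol_{\text{new}}(S)=\deg(S_I)+2e_{\mathrm{cross}}$.

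Second, your cross-heavy Case~2 proposes to isolate a single node $s_{t,j^\star}$ whose expansion in $G_{\cT_{i+1}}\{t\}$ ``already meets the $1/8$ ceiling, which then propagates to all of $S$.'' That propagation step is invalid: the expansion of a union is not controlled by the expansion of any one element, and \eqref{eq:averagefracs} only compares a sum-ratio to the extreme summand ratios, none of which is a prescribed constant $1/8$. Moreover, the cross-heavy regime you split off cannot actually occur under the right hypotheses: once $\phi_{G_{\cT_i}\{t\}}(S_I)<1/8$ and the $7/8$ density hold, the paper's single case-free chain in \autoref{lem:expansionboosting} bounds good cross-edges by $(|I|-1)k/4$ via \autoref{lem:decompkconnected} and $k$-edge-connectivity, bounds bad edges by $\tfrac18\vol(S_I)$, and concludes $\vol_{\text{new}}(S)\leq\vol(S_I)/2$ unconditionally --- so the internal edges always dominate, and the Nash--Williams tree-packing detour in your Case~1 is unnecessary. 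I would reorganize the proof around explicitly invoking (or re-deriving) \autoref{lem:expansionboosting} with the hypotheses $\phi<1/8$ and $\vol_F(S_I)\geq\tfrac78\vol(S_I)$, taking care to first verify $S_I\subseteq T_i$.
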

\begin{proof}
We prove this by induction. Note that the statement obviously holds for $i=0$ because $G_{\cT_0}\{t\}$ is an $\expansion$-expander for all $t\in\cT_0$.
Suppose the statement holds for $i$.
Fix a node $t\in W_{i+1}$ and let $S_{t,1},\dots,S_{t,\l(t)}$ be the natural decomposition of $G_{\cT_i}\{t\}$ into $k/4$-edge-connected components. We abuse notation and drop the subscript $t$ and name these sets $S_1,\dots,S_{\l(t)}$. Choose $I\subset [\l(t)]$ such that $\vol_0(S_I) \leq \frac12 \vol_0(V_{\cT_i}\{t\})$. If $\phi_{G_{\cT_i}\{t\}}(S_I)\geq 1/8$ there is nothing to prove. Otherwise, we invoke \autoref{lem:expansionboosting} for the $k$-edge-connected graph $G=G_{\cT_i}\{t\}$, $F=\cup_{j=1}^i F_j$ and the natural decomposition $S_{1},\dots,S_{\l(t)}$ of $(V_{\cT_i}\{t\},F)$ into $k/4$-edge-connected components. 
The lemma shows that $\phi_{G_{\cT_{i+1}}\{t\}}(S_I) \geq 2^{i+1}\alpha$.

We just need to verify the assumptions of the lemma. 
By the induction hypothesis $\phi_{G_{\cT_i}\{t\}}(S_I) \geq 2^i\expansion$.
%Since $W_i$ is $(\expansion,.)$-expanding, $\phi_{G\{t\}}(S_I)\geq \expansion$.
In addition, $S_I$ only contains nondominating nodes of $t$, i.e., $S_I \subset T_i$.
Therefore, by the main technical theorem \ref{thm:main}, 
%we have
%$$ \mE_{e\sim \cut(t')} \reff_{D_i}(e) \leq \frac{f_1(k/4,\lambda\expansion^i)}{k/4}. $$
%Therefore, by 
equation \eqref{def:FireffDi},  and the Markov inequality,  at least $15/16$ fraction of the edges incident to each $t'\in T_i$ are  in $F_i$. So, $\deg_{F_i}(S_I) \geq \frac{15}{16}\vol(S_I) \geq \frac78 \vol(S_I)$.
\end{proof}

\begin{lemma}[Expansion Boosting Lemma]\label{lem:expansionboosting}
Given a $k$-edge-connected graph $G=(V,E)$, a set $F\subseteq E$ and a  natural-decomposition of $(V,F)$ into $k/4$-edge-connected components $S_1,\dots,S_\l$.
For any $I\subseteq [\l]$ if $\vol_F(S_I) \geq 7\vol(S_I)/8$, and  $\phi(S_I) <1/8$, then
	$$ \frac{\deg(S_I)}{\sum_{i\in I} \deg(S_i)} \geq 2\phi(S_I) .$$
%that is an $\alpha$-expander for $\alpha < 1/8$. 
% then
%$H=G(V/ S_1 / \dots / S_\l)$ is a $2\alpha$-expander.
\end{lemma}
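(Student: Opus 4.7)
When $\deg_G(S_I)=0$ the inequality is trivially $0\ge 0$, so assume $\deg_G(S_I)>0$; the conclusion then becomes the cleaner assertion $\sum_{i\in I}\deg_G(S_i)\le \vol_G(S_I)/2$. Writing $\sum_i\deg_G(S_i)=\deg_G(S_I)+2X$, where $X$ counts the $G$-edges having both endpoints in $S_I$ but lying in distinct parts $S_i,S_j$ with $i,j\in I$, the task reduces to bounding $X$ and combining with $\deg_G(S_I)<V/8$ (where $V:=\vol_G(S_I)$), which follows from $\phi(S_I)<1/8$.

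The plan is to split $X=X_F+(X-X_F)$, where $X_F$ counts the $F$-edges contributing to $X$. Applying Lemma~\ref{lem:decompkconnected} to the natural decomposition of $(V,F)$ into $k/4$-edge-connected pieces (of which the $S_i$'s form a subcollection) yields $X_F\le (k/4-1)(|I|-1)$. For the non-$F$ portion, the hypothesis $\vol_F(S_I)\ge 7V/8$ gives $\vol_G(S_I)-\vol_F(S_I)\le V/8$, and since each non-$F$ edge internal to $S_I$ contributes $2$ to this difference, one has $X-X_F\le V/16$. Plugging these bounds directly into the target inequality would suffice provided $V\ge 2k(|I|-1)$, but the naive estimate $V\ge k|S_I|\ge k|I|$ (using $d_G(v)\ge k$ from $k$-edge-connectivity) falls short of this by roughly a factor of two.

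The crux of the argument, and the step I expect to be the main obstacle, is a ``squeeze'' showing that the hypotheses of the lemma \emph{already} force $V$ to be large enough. Because $G$ is $k$-edge-connected and each $S_i\subseteq S_I\subsetneq V$, we have $\deg_G(S_i)\ge k$, and hence $\sum_i\deg_G(S_i)\ge k|I|$, which rearranges to $2X\ge k|I|-\deg_G(S_I)>k|I|-V/8$. Juxtaposing this \emph{lower} bound on $2X$ against the \emph{upper} bound $2X\le (k/2)(|I|-1)+V/8$ from the previous paragraph and rearranging yields $V>2k(|I|+1)$, which is strictly stronger than what is needed. Plugging this back gives $X_F\le(k/4)(|I|-1)<V/8$, hence $2X<V/4+V/8=3V/8$, and finally $\sum_i\deg_G(S_i)=\deg_G(S_I)+2X<V/8+3V/8=V/2$, as desired.
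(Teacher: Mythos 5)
Your proof is correct and rests on exactly the same ingredients as the paper's: the natural-decomposition bound (\autoref{lem:decompkconnected}) to control the $F$-edges between parts, the density hypothesis to control the non-$F$ edges inside $S_I$, the expansion hypothesis to bound $\deg_G(S_I)$, and the $k$-edge-connectivity lower bound $\deg_G(S_i)\ge k$. The only difference is bookkeeping. You bound $X_F$ and $X-X_F$ by \emph{absolute} quantities and then notice you need $\vol(S_I)\gtrsim k|I|$, which you establish with a separate squeeze. The paper instead bounds $2\sum_{i<j\in I}|F(S_i,S_j)|\le (k/2)(|I|-1)\le \frac12\sum_{i\in I}\deg(S_i)$ (the last step using precisely $\sum_i\deg(S_i)\ge k|I|$), subtracts this self-referentially from the identity $\sum_i\deg(S_i)=\deg_F(S_I)+2X_F+\sum_i\deg_{E\setminus F}(S_i)$, and never has to make the lower bound on $\vol(S_I)$ explicit. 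So the step you flagged as ``the main obstacle'' is an artifact of your chosen rearrangement rather than an intrinsic difficulty; both arguments are otherwise the same length and equally elementary. One small note: applying \autoref{lem:decompkconnected} to a $k/4$-edge-connected decomposition gives $X_F\le (k/4-1)(|I|-1)$, and the paper rounds this up to $(k/4)(|I|-1)$; your version keeps the $-1$, which is harmless.
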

\begin{proof}
%	For a set of indices $I\subseteq [\l]$ let
%	We show that if $\vol(S_I) \leq \vol(V)/2$, then
%	Note that the denominator of the above ratio is the number of edges adjacent to the vertices corresponding to sets $\{S_i\}_{i\in I}$ in the contracted graph $H$.
	Think of the edges in $F$ as good edges and the edges not in $F$, $E\setminus F$ as the bad edges. 
	%The assumption of the lemma, $|F\cap d(v)| \geq 7d(v)/8$ implies that most of the edges incident to every vertex are good.
	We can write the denominator of the above as follows:
	\begin{equation}\label{eq:goodedgehalfcont} \sum_{i\in I} \deg(S_i) = \deg_{F}(S_I) + 2\sum_{i,j\in I, i<j} |F(S_i,S_j)| + \sum_{i\in I} \deg_{E\setminus F}(S_i)
	\end{equation}
	where we used $\deg_F(S)$ to denote the edges of $F$ leaving a set $S$.
	
	First, we observe that by the natural decomposition lemma \ref{lem:decompkconnected}, the middle term on the RHS, i.e., the number of good edges between $\{S_i\}_{i\in I}$ is small,
	\begin{eqnarray*}
		\sum_{i,j\in I, i<j} |F(S_i,S_j)| \leq (|I|-1)(k/4) \leq \frac14 \sum_{i\in I} \deg(S_i),
	\end{eqnarray*}
	where the second inequality follows by $k$-edge-connectivity of $G$.
	Subtracting twice the above inequality from  \eqref{eq:goodedgehalfcont}  we get
	\begin{eqnarray} \label{eq:removegoodmiddleedges}
		\deg_{F}(S_I) + \sum_{i\in I} \deg_{E\setminus F}(S_i) \geq \frac12 \sum_{i\in I} \deg(S_i).
	\end{eqnarray}
	
	Secondly, by the lemma's assumption, %since $\vol(S_I) \leq \vol(V)/2$, for any $v\in S_I$,  $|F\cap d(v)| \leq 7d(v)/8$. Therefore, the number of bad edges incident to  $S_I$ is at most
	\begin{equation}
		\sum_{i\in I} \partial_{E\setminus F}(S_i) \leq \sum_{i\in I} \vol_{E\setminus F}(S_i) = \vol_{E\setminus F}(S_I)=  \vol(S_I) - \vol_F(S_I)%\sum_{v\in S_I} d(v)/8 
		\leq \frac18 \vol(S_I).
	\end{equation}
	Putting the above two inequalities together we get,
	\begin{eqnarray*}
		\frac12 \sum_{i\in I} \deg(S_i) \leq \deg_F(S_I) + \frac18 \vol(S_I)
	\end{eqnarray*}
	Dividing both sides of the above inequality by $\deg(S_I)$ we get
	\begin{eqnarray*} \frac{1}{2\deg(S_I)}{\sum_{i\in I} \deg(S_i)} &\leq& \frac{\deg_F(S_I)}{\deg(S_I)} + \frac{\vol(S_I)}{8\deg(S_I)}\\
	& \leq & 1 + \frac{1}{8\phi(S_I)} \leq \frac{1}{4\phi(S_I)},
	\end{eqnarray*}
	where  the last inequality uses that $\alpha \leq 1/8$.
\end{proof}

%Now, we are ready to show that $W_{\log(1/\expansion)}$ is empty.
\begin{claim}
\label{cl:Wlogempty}
$W_{\log(1/\expansion)}$ is empty.
\end{claim}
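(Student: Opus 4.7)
I would prove \autoref{cl:Wlogempty} by contradiction. Suppose $t \in W_{\log(1/\expansion)}$ and set $i^* := \log(1/\expansion) - 1$. By the update rule of \autoref{alg:expandertreeboosting} we must have $t \in W_{i^*}$, and the natural decomposition $S_1,\ldots,S_{\ell(t)}$ of $G'_{\cT_{i^*}}\{t\}$ into $k/4$-edge-connected components produced at iteration $i^*$ satisfies $\ell(t)\geq 2$. The first split of this decomposition produces a bipartition $(I, J)$ of $[\ell(t)]$ with $|F(S_I, S_J)| < k/4$. I would relabel so that $\vol_0(S_I) \leq \vol_0(V_{\cT_{i^*}}\{t\})/2$; because a dominating child of $t$ (if one exists) carries by definition strictly more than half of $\vol_0$, this choice automatically places it on the $J$-side, so $S_I \subseteq T_{i^*}$. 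This inclusion is exactly what legitimizes applying the Markov bound of \autoref{thm:main} to every vertex of $S_I$.

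The heart of the proof is to collect three estimates for $S_I$ inside $G_{\cT_{i^*}}\{t\}$ and to combine them. First, by the $(k,\lambda,\cT)$-LCH hypothesis $G_{\cT_0}\{t\}$ is $k$-edge-connected, and since $G_{\cT_{i^*}}\{t\}$ is obtained from $G_{\cT_0}\{t\}$ by contracting vertex sets (\autoref{fact:kconinvariant}) it too is $k$-edge-connected; hence $\deg(S_I) \geq k$. Second, \autoref{cl:HSexpander} applied at level $i^*$ gives $\phi_{G_{\cT_{i^*}}\{t\}}(S_I) \geq \min\{2^{i^*}\expansion,\,1/8\} = 1/8$ (using $2^{i^*}\expansion = 1/2$), equivalently $\deg(S_I) \geq \vol(S_I)/8$. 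Third, from \autoref{thm:main} applied to $\cT_{i^*}$ together with the definition \eqref{def:FireffDi} of $F_{i^*}$, the same Markov computation used inside the proof of \autoref{cl:HSexpander} gives that each $t' \in T_{i^*}$ has at least a $15/16$ fraction of its incident edges in $F_{i^*}$. Summing over $S_I \subseteq T_{i^*}$ yields $\vol_{E\setminus F_{i^*}}(S_I) \leq \vol(S_I)/16$, so at most $\vol(S_I)/16$ edges leaving $S_I$ can fail to lie in $F$.

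Combining these, and using $F \supseteq F_{i^*}$, I would obtain the two simultaneous lower bounds $\deg_F(S_I) \geq k - \vol(S_I)/16$ and $\deg_F(S_I) \geq \vol(S_I)/16$. Since $\max\{k - v/16,\;v/16\} \geq k/2$ for every $v \geq 0$ (with equality only at $v=8k$), this forces $\deg_F(S_I) \geq k/2 > k/4$, contradicting $|F(S_I, S_J)| < k/4$ from the natural decomposition and completing the proof.

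The main subtlety is that neither of the two estimates suffices on its own: the expansion bound degrades to $\vol(S_I)/16 < k/4$ whenever $\vol(S_I) < 4k$ (i.e.\ when $S_I$ consists of only a handful of vertices), while the raw $k$-edge-connectivity bound $k - \vol(S_I)/16$ becomes vacuous once $\vol(S_I) > 16k$. The two bounds complement each other exactly, and gluing them by $\max$ is what yields the necessary slack over the threshold $k/4$. The smaller but essential bookkeeping step—choosing $S_I$ to be the side of smaller $\vol_0$—is what guarantees $S_I \subseteq T_{i^*}$ so that the Markov bound of \autoref{thm:main} is actually applicable uniformly across $S_I$.
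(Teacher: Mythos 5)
Your proof is correct and follows essentially the same argument as the paper's: both rely on combining the expansion bound from \autoref{cl:HSexpander} (giving $\phi(S_I)\geq 1/8$), the Markov bound from \autoref{thm:main} applied to the nondominating children (at most a $1/16$ fraction of incident edges outside $F_{i^*}$), and $k$-edge-connectivity of $G_{\cT_{i^*}}\{t\}$ to conclude that at least $k/2$ cut-edges lie in $F$. You phrase this as a contradiction against the first split of the natural decomposition (size $<k/4$) whereas the paper states it directly as ``$(V_{\cT_i}\{t\},F_i)$ is $k/2$-edge-connected''; and you fix $i^*=\log(1/\alpha)-1$ explicitly rather than the paper's smallest $i$ with $2^i\alpha\geq 1/8$, but both indices satisfy $\min\{2^i\alpha,1/8\}=1/8$, so the argument is unchanged. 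The ``$\max\{k-v/16,\,v/16\}\geq k/2$'' framing is just a repackaging of the paper's observation that $\deg_{E\setminus F_i}(S_I)\leq \vol(S_I)/16\leq\deg(S_I)/2$.
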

\begin{proof}
Let $i$ be the smallest integer such that $2^i\alpha \geq 1/8$.
Note that $i < \log(1/\expansion)$. By \autoref{cl:HSexpander},
for any $t\in W_i$, 
\begin{equation}
\label{eq:GTiphi18}
\phi(G_{\cT_{i}}\{t\}) \geq 1/8.
\end{equation}
We show that $W_{i+1}$ is empty.
Fix a node $t\in W_i$. 
Similar to the previous claim, at least $15/16$ fraction of the edges adjacent to any nondominating  child of $t$ are in $F_i$.
For a set $I\subset [\l(t)]$ such that $d_0(S_I) \leq \frac12 d_0(V_{\cT_0}\{t\})$, we have $\phi_{G_{\cT_i}\{t\}}(S_I) \geq 1/8$; therefore at least half of the edges in the cut $(S_I, V_{\cT_i}\{t\}\setminus S_I)$ are in $F_{i}$. By $k$-edge-connectivity of $G_{\cT_i}\{t\}$, 
$F_{i}$ has at least $k/2$ edges in this cut. So, $(V_{\cT_{i}}\{t\},F_{i})$ is $k/2$-edge-connected.
\end{proof}

\begin{claim}
At the end of the algorithm $G'$ is $k/4$-edge-connected.
\end{claim}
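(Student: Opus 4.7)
The plan is to apply Lemma~\ref{lem:LCHinternalconnectivity} to the final hierarchy $\cT_{\text{final}} := \cT_{\log(1/\expansion)+1}$ together with the edge set $F = \cup_i F_i$. So the task reduces to showing that for every internal node $t$ of $\cT_{\text{final}}$, the graph $(V\{t\}, F \cap E\{t\})$ is $k/4$-edge-connected. The previous claim (\autoref{cl:Wlogempty}) that $W_{\log(1/\expansion)}$ is empty is the key input. I would split the internal nodes of $\cT_{\text{final}}$ into two types and handle each in turn: (i) the original internal nodes inherited from $\cT_0$, and (ii) the intermediate nodes $s_{t,j}$ that are introduced during some refinement step.

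For an intermediate node $s_{t,j}$ that is created at iteration $i+1$, the construction directly gives us what we want: by definition $S_{t,j}$ is one of the $k/4$-edge-connected components produced by the natural decomposition of $G'_{\cT_{i}}\{t\}$, where at that moment $G'$ uses the edges $F_0 \cup \dots \cup F_i$. Since $s_{t,j}$ is never added to any future $W_{i'}$, its set of children is frozen, so $V_{\cT_{\text{final}}}\{s_{t,j}\} = S_{t,j}$ and $E_{\cT_{\text{final}}}\{s_{t,j}\}$ coincides with the edges of $G\{t\}$ restricted to $S_{t,j}$. Using the trivial monotonicity $F \supseteq F_0 \cup \dots \cup F_i$, the subgraph $(V\{s_{t,j}\}, F \cap E\{s_{t,j}\})$ contains the $k/4$-edge-connected component $(S_{t,j}, (F_0 \cup \dots \cup F_i) \cap E\{s_{t,j}\})$, hence is itself $k/4$-edge-connected.

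For an original internal node $t \in \cT_0$, the fact that $W_{\log(1/\expansion)} = \emptyset$ together with $W_{i+1} \subseteq W_i$ guarantees the existence of some iteration $i^*$ with $t \in W_{i^*}$ and $t \notin W_{i^*+1}$. By the definition of how $W_{i+1}$ is formed, this means $\ell(t) = 1$ at iteration $i^*$, i.e. $G'_{\cT_{i^*}}\{t\}$ is already $k/4$-edge-connected as a single component. The crucial observation is that since $t \notin W_j$ for all $j > i^*$, no new intermediate nodes are inserted directly below $t$ at any later iteration, so the children of $t$ in $\cT_{\text{final}}$ are exactly the children of $t$ in $\cT_{i^*}$; consequently $V_{\cT_{\text{final}}}\{t\} = V_{\cT_{i^*}}\{t\}$ and $E_{\cT_{\text{final}}}\{t\} = E_{\cT_{i^*}}\{t\}$. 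Once again monotonicity of $F$ gives that $(V\{t\}, F \cap E\{t\})$ is a supergraph of the $k/4$-edge-connected graph at iteration $i^*$, so it is $k/4$-edge-connected.

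The main subtlety --- and the only real content of the argument --- is the bookkeeping in the second case: one needs to be confident that even though the \expandertree{} is being refined, the local combinatorial data near $t$ (children, contracted edges) stops changing the moment $t$ leaves $W_i$, so that the $k/4$-edge-connectivity certified at iteration $i^*$ transfers verbatim to $\cT_{\text{final}}$. Once both cases are established, \autoref{lem:LCHinternalconnectivity} applied to $\cT_{\text{final}}$ immediately yields that $G' = (V, F)$ is $k/4$-edge-connected, completing the claim.
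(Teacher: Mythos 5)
Your proposal is correct and takes essentially the same approach as the paper: split the internal nodes of the final hierarchy into the new nodes $s_{t,j}$ (which are $k/4$-edge-connected by construction via the natural decomposition) and the original nodes that leave $W_{i}$ (which are $k/4$-edge-connected at the moment of leaving), note that the children set of each node is frozen once it leaves $W_i$ and that $F$ only grows, and then invoke \autoref{lem:LCHinternalconnectivity}. Your write-up is more explicit about the bookkeeping (in particular the observation that processing a descendant of $t$ does not alter $t$'s children), but the argument is the same.
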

\begin{proof}
We show that for any $i$ and any node $t\notin W_i$, $G'_{\cT_i}\{t\}$ is 
$k/4$-edge-connected. 
Then, the claim follows by \autoref{cl:Wlogempty}.

At any iteration $i$, for any new node $s_{t,j}$, $G'_{\cT_i}\{s_{t,j}\}$ is $k/4$-edge-connected because $S_{t,j}$ is a $k/4$-edge-connected component of $G_{\cT_i}\{t\}$; this subgraph remains $k/4$-edge-connected in the rest of the algorithm because we never delete edges from $G'$. On the other hand, when we remove a node $t$ from $W_i$, we are guaranteed that $G_{\cT_i}\{t\}$ is $k/4$-edge-connected.
\end{proof}

Now, \autoref{thm:expandertreeboosting}
follows from the above claim and that for any $e\in \cup_i F_i$,
$$ \reff_{\mE_i D_i}(e) \leq \log(1/\expansion)\cdot \min_i \reff_{D_i}(e) \leq \frac{16 f_1(k/4,\lambda\cdot \expansion)\log(1/\expansion)}{k/4}.$$

\subsection{Algorithmic Aspects}
\label{subsec:algorithmicatsp}
In this part we prove \autoref{thm:algorithmicatsp}.
We emphasize that our algorithm does not necessarily find a thin tree. As alluded to in the introduction, the main barrier is that verifying the thinness is a variant of the sparsest cut problem for which the best known algorithm only gives an $O(\sqrt{\log n})$-approximation factor. Instead, we use the fact that ``directed thinness'', as defined in \eqref{eq:directedthinness} of \autoref{thm:agmos_version}, is polynomially testable and it is enough to solve ATSP. We refrain  from giving the details and we refer interested readers to \cite{AGMOS10}.
Our rough idea is as follows: We run the ellipsoid algorithm on the convex program $\maincp$ by first  discarding the $2^n$ constraints $\bone_S^\intercal D\bone_S \leq \bone_S^\intercal L_G \bone_S$ that verify $D$ is a shortcut matrix. If the directed thinness of the output tree fails, the undirected thinness fails as well, so we get a set $S$ for which $\bone_S^\intercal D\bone_S > \bone_S^\intercal L_G \bone_S$. That corresponds to a violating constraint of the convex program which the ellipsoid algorithm can use in the same way that it uses separation oracles. Repeating this procedure, either the ellipsoid algorithm converges, i.e., we find an actual undirected thin tree, or we find an ATSP tour along the way.

\begin{algorithm}
\begin{algorithmic}[1]
	\Input A $k\geq 7\log n$-edge-connected graph $G=(V,E)$.
	\Output A $k/20$-edge-connected, $1/4$-dense induced subgraph that is an $\Omega(1/k^2)$-expander.
	\State Let $U\leftarrow V$. We always let $H$ be the induced subgraph on $U$.
	\Loop \label{step:exptreeloop}
		\If {there is a vertex $v\in U$ such that $d_H(v) \leq 7d_G(v)/20$}
			\State Let $U\leftarrow U\setminus \{v\}$ and {\bf goto} \ref{step:exptreeloop}. 
		\EndIf \Comment{If this case does not happen, $H$ is $7/20$-dense.}
		\State Let $S$ be the output of the spectral partitioning algorithm on $H$, and let $T=U\setminus S$.
		\If {$\phi_G(S)\leq \phi_G(U)$ or $\phi_G(T)\leq \phi_G(U)$}
		\State Let $U=S$ or $U=T$ whichever has the smallest $\phi_G(.)$, and {\bf goto} \ref{step:exptreeloop}.
		\EndIf %\Comment{If this case does not happen, then $\max\{\phi_G(S),\phi_G(T)
		\If {$\max\{\phi_H(S),\phi_H(T)\}<1/k$}
			\State Let $U=S$ or $U=T$ whichever has fewer vertices, and {\bf goto} \ref{step:exptreeloop}.
		\EndIf \Comment{If this case does not happen, by Cheeger's inequality, $H$ is an $\Omega(1/k^2)$-expander. }
		\State \parbox[t]{\dimexpr\linewidth-\algorithmicindent}{If $H$ is $k/20$-edge-connected, {\bf return} $H$. Otherwise, let $S\subseteq U$ be such that $\partial_H(S) <k/20$ and $\phi_H(S) \geq \phi_H(U\setminus S)$. \Comment{So, $\phi_H(S) \geq \Omega(1/k^2)$.}\strut}
		\State \parbox[t]{\dimexpr\linewidth-\algorithmicindent}{Let $S_1,S_2,\dots$ be a natural decomposition of $G[S]$ into $k/20$-edge-connected components. By \eqref{eq:expandercase3} there is $S_i$ such that $\partial_H(S_i) < k/10$. {\bf Return} $G[S_i]$.\strut}
	\EndLoop
\end{algorithmic}
	\caption{Expander Extraction}
	\label{alg:intiialexpandertree}
\end{algorithm}

To complete the proof we need to make sure that we can construct 
the starting \expandertree~in polynomial time; we will describe our algorithm later.
Apart from that, the main difficulty is that to obtain the shortcut matrix $D$ promised in \autoref{thm:hierarchydecomgeneral} we need to solve $O(\log\log(n))$ many convex programs ($\maincp(\cT_i)$) and each one depends on the solution of the previous ones. In other words, we should be recursively calling $O(\log\log n)$ many ellipsoid algorithms. Therefore, if we find a separating hyperplane for one of the ellipsoids, we should restart the ellipsoid algorithms for all the proceeding convex programs. The resulting algorithm runs in time $n^{O(\log\log n)}$ and has an approximation factor of $\polyloglog(n)$. 
We can also tradeoff the approximation factor with the running time of the algorithm by  modifying \autoref{alg:expandertreeboosting} to have $O(\l)$ number of iterations. For constant values of $\l$ this gives a polynomial time approximation algorithm.

We will give an algorithm to construct an $(\Omega(1/k^2),.)$-expanding $(k/20, 1/4, \cT)$-LCH, $\cT_0$ for some $\alpha\asymp 1/\log^2(n)$. Then, we run a modified version of \autoref{alg:expandertreeboosting} to obtain \expandertrees~$\cT_1,\dots,\cT_{2\l}$; in particular,  we only run the loop for $2\l$ iterations;  to make sure that $\cT_{2\l}$ is $(\Omega(1),.)$-expanding, we need to boost the expansion by $\left(\frac1{\alpha}\right)^{1/2\l}$ in every iteration of the loop. To be more precise, for any $1\leq i\leq 2\l$, instead of \eqref{def:FireffDi}, we let
$$ F_i:=\left\{e\in E: \reff_{D_i}(e)\leq \frac{O((1/\alpha)^{1/2\l}) f_1(k',\lambda')}{k'}\right\}.$$
The proof simply follows by a modification to the expansion boosting lemma. 
The resulting algorithm runs in time $n^{O(\l)}$ and has an approximation factor of $\polyloglog(n)\cdot \log^{1/\l}(n)$.

It remains to find the starting \expandertree~$\cT_0$.
Given a $k\geq 7\log n$-edge-connected graph $G=(V,E)$, all we need is to find a $1/4$-dense $k/20$-edge-connected induced subgraph $G[S]$ whose expansion is $\Omega(1/k^2)$. 
We essentially make the proof of \autoref{lem:expsizedenseexpander} constructive using the spectral partitioning algorithm \cite{AM85,Alon86} at the cost of obtaining an $\Omega(1/k^2)$-expander instead of a $1/k$-expander. This is because, by Cheeger's inequality,  the spectral partitioning algorithm gives a square-root approximation to the problem of approximating $\phi(G)$. The details of the algorithm are described in \autoref{alg:intiialexpandertree}.

%To find $D_i$'s we need to solve $\maincp(\cT_i)$ which seems prohibitive; because for a given matrix $D$ we need to 
%Then, we use the ellipsoid algorithm to find an ATSP tour. To find $D_i$'s we need to solve 

% !TEX root = main.tex
\section{The Dual of \maincp}
\label{sec:dual}
In this section we write down the dual of $\maincp$.
Before explicitly writing down the dual, let us give a few lines of intuition.
We do this by writing down the dual of a few convex programs computing the maximum or average effective resistance of a number of pairs of vertices. %Although these programs are different from \maincp, the form of the duals are similar.

For a pair of vertices, $a,b\in V$, the optimum value of the following expression,
\begin{equation}\label{eq:potentialvector} \max_{x:V\to\R}\frac{(x(a)-x(b))^2}{\sum_{u\sim v} (x(u)-x(v))^2}.	
\end{equation}
 is exactly equal to $\reff_G(a,b)$; in particular, if we fix $x(b)=0,x(a)=\reff(a,b)$, then the optimum $x$ is the  \emph{potential} vector of the electrical flow that sends one unit of flow from $a$ to $b$. It is an easy exercise to cast the above as a convex program.

Now, suppose we want to write a program which computes the maximum effective resistance of pairs of vertices $(a_1,b_1),\dots,(a_h,b_h)$.
In this case we need to choose a separate potential vector for each pair, 
We use a matrix $X$ where the $i$-th row of $X$ is the potential vector associated to the $i$-th pair.
The following program gives the maximum effective resistance of all pairs. 
$$ \max_{X\in \R^{h\times V}} \frac{\sum_{i=1}^h (X_{i,a_i} - X_{i,b_i})^2}{\sum_{i=1}^h \sum_{u\sim v} (X_{i,u}-X_{i,v})                                                                                                                                                                                                                                                                                                                                                                                                                                                                                                                                                                                                                                                                                                                                                                                                                                                                                                                                                                                                                                                                                                        ^2} = \max_{X\in \R^{h\times V}}\frac{\sum_{i=1}^h (X_{i,a_i} - X_{i,b_i})^2}{\sum_{u\sim v} (X_u-X_v)^2}$$
It follows by \eqref{eq:averagefracs} that the optimum of the above is the maximum effective resistance of all pairs $(a_1,b_1),\dots,(a_h,b_h)$.
Recall that $X_u$ is the $u$-th column of $X$.

%In general, we should expect the dual program to be ``coordinate independent''. 
Note that the denominator of the RHS is coordinate independent, i.e., it is rotationally invariant. We can rewrite the numerator in the following way and make it rotationally invariant. Instead of mapping the $i$-th pair to the $i$-th coordinate, we map the $i$-th pair to $z_i$ where $\{z_1,\dots,z_h\}$ are $h$-orthonormal vectors. In other words, to calculate the numerator we need to find a coordinate system of the space such that the sum of the square of the projection of the edges on the corresponding coordinates is as large as possible
$$ \max_{\substack{X\in\R^{h\times V},\\ \{z_1,\dots,z_h\} \text{ are orthonormal}}} \frac{\sum_{i=1}^h \langle z_i, X_{a_i}-X_{b_i}\rangle^2}{\sum_{u\sim v} (X_u-X_v)^2}.$$
Instead of choosing $z_1,\dots,z_h$ we can simply maximize over an orthogonal matrix $U\in\R^{h\times h}$ and let $z_1,\dots,z_h$ be the first $h$ rows of $U$,
\begin{equation}
	\label{eq:maxeffresdemonstration}
 \max_{X\in\R^{h\times V},\text{Orthogonal } U} \frac{\sum_{i=1}^h \langle U^i,X_{a_i}-X_{b_i}\rangle^2}{\sum_{u\sim v} (X_u-X_v)^2},
 \end{equation}
where $U^i$ is the $i$-th row of the matrix $U$.
The above program is equivalent to the dual of the following convex program 
\begin{equation*}
\begin{aligned}
	\min \hspace{3ex} &\cE,&\\
	\st \hspace{4ex} &\reff_D(a_i,b_i) \leq \cE &\forall 1\leq i\leq h,\\
	&D\preceq L_G.&
\end{aligned}	
\end{equation*}
We will give a formal argument later.
When we replace the constraint $D\preceq L_G$ with $D\preceq_{\square}L_G$, we get the additional assumption that $X$ is a cut metric. This can significantly reduce the value of \eqref{eq:maxeffresdemonstration}.

Next, we write a program which computes the expected effective resistance of pairs of vertices $(a_1,b_1),\dots,(a_h,b_h)$ with respect to a distribution $\lambda_1,\dots,\lambda_h$,
\begin{equation} \sum_{i=1}^h \lambda_i\cdot  \reff(a_i,b_i)=\max_{X\in \R^{h\times V}}\sum_{i=1}^h \lambda_i\cdot \frac{(X_{i,a_i}-X_{i,b_i})^2}{\sum_{u\sim v} (X_{i,u}-X_{i,v})^2}.
\label{eq:expectedeffresexdual}
\end{equation}
 where we simply used \eqref{eq:potentialvector}. Equivalently, we can write the above ratio as follows:
\begin{equation}  \max_{X\in \R^{h\times V}}\frac{\left(\sum_{i=1}^h \sqrt{\lambda_i}\cdot  (X_{i,a_i}-X_{i,b_i})\right)^2}{\sum_{u\sim v} (X_{u}-X_{v})^2},
\label{eq:expectedeffreseydual}	
\end{equation}

To see that the above two are the same, first, assume $X$ is normalized such that $\sum_{u\sim v} (X_{i,a_i}-X_{i,b_i})^2=1$ for all $i$. 
This simplifies \eqref{eq:expectedeffresexdual} to $\sum_i \lambda_i (X_{i,a_i}-(X_i,b_i))^2$.
Then let 
$$Y^i = X^i \sqrt{\lambda_i}\cdot (X_{i,a_i}-X_{i,b_i}),
%{\sum_{j=1}^h \sqrt{\lambda_j} (X_{j,a_j}-X_{j,b_j})},
$$
where as usual $Y^i$ is the $i$-th row of $Y$.
Plugging in $Y$ in \eqref{eq:expectedeffreseydual} gives the same value $\sum_i \lambda_i (X_{i,a_i}-X_{i,b_i})^2$.

Lastly, we can write a rotationally invariant formulation of \eqref{eq:expectedeffreseydual} using an orthogonal matrix $U$.
$$ \max_{\substack{X\in\R^{h\times V},\\\text{Orthogonal } U }} \frac{\left(\sum_{i=1}^h \sqrt{\lambda_i} \cdot \langle U^i, X_{a_i}-X_{b_i}\rangle\right)^2}{\sum_{u\sim v} (X_u-X_v)^2}$$
Let $\X_h\in\R^{n\times h}$ be the matrix where the $i$-th column is $\X_{a_i,b_i}$. It follows by
\autoref{lem:matrixtrace} that
$$ \max_{\text{Orthogonal }U} \sum_{i=1}^h \langle U^i, X_{a_i}-X_{b_i}\rangle = \max_{\text{Orthogonal }U}\trace(UX\X_h) = \norm{X\X_h}_*.$$
This is is a key observation in the proof of the technical theorem.

%In the rest of this document we upper-bound the value of \maincp.
In the rest of this section we will prove that a similar expression is equivalent to the dual of \maincp. Then, in \autoref{subsec:dualvariants} we write the dual of $\maxcp,\averagecp$ and we will prove \autoref{thm:avgeffres}.
The following lemma is the main statement that we prove in this section. Recall that for a mapping $X$ of vertices of $G$, $\Xb=X\X$ is the matrix where for every edge $e=\{u,v\}$, $\Xb_e = X_u-X_v$. 
\begin{lemma}
\label{lem:sdpdual}
For any graph $G=(V,E)$ and any $(.,.,T)$-LCH of $G$, the optimum of \maincp~(up to a multiplicative factor of 2) is equal to
\begin{equation}
\label{eq:sdpdual}
 \sup_{U, X} \frac{\sum_{t\in T} \frac{1}{|\cut(t)|}\Big( \sum_{e\in \cut(t)} \langle\Ue,  \Xb_e\rangle\Big)^2}{\sum_{e\in E} \norm{\Xb_e}^2} 
 \end{equation}
where the supremum is over all semiorthogonal matrices $U\in\R^{E \times h}$, and all cut metrics $X\in \{0,1\}^{h\times V}$, for  arbitrary $h>0$.
% and all permutations $\pi$ of the edges of $G$.
\end{lemma}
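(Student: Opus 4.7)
The plan is to obtain the dual by two nested applications of convex duality. First I would rewrite $\maincp$ as $\min_{D\succ 0,\, D\preceq_{\square} L_G}\max_{t\in T} R_t(D)$ with $R_t(D) = \uE{e\sim\cut(t)}\reff_D(e)$, and replace the outer $\max_t$ by $\max_{\lambda\in\Delta(T)}\sum_t\lambda_t R_t(D)$. Since $R_t$ is convex in $D$ (as a sum of convex functions $\X_e^\intercal D^{-1}\X_e$, by \autoref{lem:convexityeffres}) and linear in $\lambda$, and since the singleton constraints $\bone_v^\intercal D\bone_v\le d_G(v)$ together with $D\succeq 0$ force the feasible $D$'s into a compact set, Sion's minimax theorem swaps the min and the max, giving $\maincp=\max_\lambda\min_D\trace(M(\lambda) D^{-1})$ with $M(\lambda) = \sum_e\beta_e(\lambda)\X_e\X_e^\intercal$ and $\beta_e(\lambda) = \sum_{t\in T:\,e\in\cut(t)}\lambda_t/|\cut(t)|$.

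Next I would dualize the inner minimization by attaching Lagrange multipliers $\mu_S\ge 0$ to the constraints $\bone_S^\intercal D\bone_S\le \bone_S^\intercal L_G\bone_S$. The stationarity condition in $D$ reads $D^{-1}MD^{-1}=N$ with $N = \sum_S\mu_S\bone_S\bone_S^\intercal$, and substituting $D = N^{-1/2}(N^{1/2}MN^{1/2})^{1/2}N^{-1/2}$ evaluates the Lagrangian to $2\norm{M^{1/2}N^{1/2}}_* - \sum_S\mu_S\bone_S^\intercal L_G\bone_S$. Writing $N = X^\intercal X$ via a weighted (hence arbitrarily well approximated by an unweighted) cut metric $X\in\R^{h\times V}$ with rows $\sqrt{\mu_S}\bone_S^\intercal$, the second term simplifies to $\sum_e\norm{\Xb_e}^2$; and since the nonzero eigenvalues of $N^{1/2}MN^{1/2}$ coincide with those of $(XA)(XA)^\intercal = XMX^\intercal$, where $A\in\R^{V\times E}$ has columns $\sqrt{\beta_e}\X_e$, we obtain $\norm{M^{1/2}N^{1/2}}_* = \norm{XA}_*$. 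Maximizing $2c\norm{XA}_* - c^2\sum_e\norm{\Xb_e}^2$ over the scaling $X\mapsto cX$ then removes the subtraction and yields
\[ \maincp \;=\; \sup_{U,X,\lambda}\frac{\bigl(\sum_e\sqrt{\beta_e(\lambda)}\,\langle U^e,\Xb_e\rangle\bigr)^2}{\sum_e\norm{\Xb_e}^2}, \]
where \autoref{lem:matrixtrace}, applied to the $h\times E$ matrix $XA$ (with columns $\sqrt{\beta_e}\Xb_e$, and $h\ge E$ after zero-padding), has been used to pull out a semiorthogonal $U\in\R^{E\times h}$ with $UU^\intercal = I_E$.

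The hard step is reconciling this $\lambda$-parametrized form with the $\lambda$-free expression in the statement. Writing $v_e=\langle U^e,\Xb_e\rangle$ and $g_t=\sum_{e\in\cut(t)}v_e$, the claim reduces to $(\sum_e\sqrt{\beta_e}\,v_e)^2$ being equal, up to a factor of $2$, to $\sum_{t\in T} g_t^2/|\cut(t)|$. For the upper direction, sign-flipping rows of $U$ (which preserves $UU^\intercal=I_E$ and, once we re-maximize over $U$ on the other side, can only help) lets me assume $v_e\ge 0$; then the subadditivity $\sqrt{a+b}\le\sqrt a+\sqrt b$ gives $\sqrt{\beta_e}\le\sum_{t\ni e}\sqrt{\lambda_t/|\cut(t)|}$, so by Cauchy--Schwarz
\[ \Bigl(\sum_e\sqrt{\beta_e}\,v_e\Bigr)^2 \le \Bigl(\sum_t\sqrt{\lambda_t/|\cut(t)|}\,g_t\Bigr)^2 \le \Bigl(\sum_t\lambda_t\Bigr)\sum_t\frac{g_t^2}{|\cut(t)|} = \sum_t\frac{g_t^2}{|\cut(t)|}. \]
For the reverse direction I would use the crucial property that each edge $e=\{u,v\}$ lies in at most two sets $\cut(t)$ for $t\in T$ (only the two children of the lowest common ancestor of $u$ and $v$ in the hierarchy are candidates); concavity of $\sqrt\cdot$ then gives $\sqrt{\beta_e}\ge\tfrac{1}{\sqrt 2}\sum_{t\ni e}\sqrt{\lambda_t/|\cut(t)|}$, and plugging in the choice $\lambda_t\propto g_t^2/|\cut(t)|$ produces $(\sum_e\sqrt{\beta_e}v_e)^2\ge\tfrac12\sum_t g_t^2/|\cut(t)|$. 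The main obstacle is exactly this last bridge: the $\sqrt{\beta_e}$-weighted aggregate coming out of the Lagrangian dual does not literally coincide with the cut-averaged squared sum, and the factor of $2$ in the lemma statement is an honest loss forced by the ``at most two $\cut(t)$'s per edge'' property of locally connected hierarchies.
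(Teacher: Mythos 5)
Your proof is correct and follows essentially the same route as the paper: Lagrangian duality on $\maincp$, solving the stationarity condition $D^{-1}MD^{-1}=N$ to express the Lagrangian as a trace/nuclear norm, writing $N=X^\intercal X$ as a (weighted) cut metric, extracting a semiorthogonal $U$ via the dual characterization of the nuclear norm, and finally eliminating $\lambda$ by Cauchy--Schwarz together with the observation that each edge lies in at most two sets $\cut(t)$ (which is where the factor of $2$ is lost). The only cosmetic difference is that you invoke Sion's minimax to handle the outer $\max_{t\in T}$ before dualizing in $D$, whereas the paper verifies Slater's condition and does a single joint Lagrangian dualization in $(\cE,D)$ with multipliers $(\lambda,y)$; both are standard and land on the same expression.
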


Note that the dimension $h$ in the above can be arbitrarily large because $X$ is a cut metric. However, only the first $|E|$ rows of $U$ matter. In addition, since $X$ is a cut metric, for any edge $e=\{u,v\}\in E$, $\norm{\Xb_e}^2 = \norm{\Xb_e}_1$; so, throughout the paper, we may use either of the two norms.

\begin{proof}
First, we show \maincp~satisfies  Slater's condition, i.e., that \maincp~has a nonempty interior.
It is easy to see that   $D=\frac12 L_G+ \frac1{3n^2} J$ is a PD matrix that satisfies all constraints strictly.  
In particular, since $G$ is connected, for any set $S$, $\bone_S^\intercal L_G \bone_S \geq 1$, so
$$ \frac{1}{3n^2} \bone_S J\bone_S \leq \frac13 < \frac12\bone_S^\intercal L_G \bone_S.$$
Therefore, $\bone_S^\intercal D \bone_S < \bone_S^\intercal L_G \bone_S$ for all $S$.
Hence,  Slater's condition is satisfied, and
the strong duality is satisfied and the primal optimum
is equal to the Lagrangian dual's optimum (see \cite[Section 5.2.3]{BV06} for more information).

%For of all, the duality gap of SDP \eqref{cp:main} is 0. 
%We write the Lagrange Dual to the SDP \eqref{cp:main}. 
For every $t\in T$ we associate a Lagrange multiplier $\lambda_t$ corresponding to the first set of constraints, and for every set $S$ we associate a nonnegative Lagrange multiplier $y_S$ corresponding to the second set of constraints of the \maincp. % and we associate a PSD Lagrange  matrix $C$ to the constraint $D\succeq 0$. 
The
Lagrange function is defined as follows:
$$ g(\lambda, y) = \inf_{D\succ 0} \cE + \sum_{t\in T} \lambda_t \Big(\frac{1}{{|\cut(t)|}}\sum_{e\in \cut(t)} \X_e^\intercal D^{-1} \X_e - \cE\Big) + \sum_{ S\subset V} y_S (\bone_S^\intercal D\bone_S - \bone_S^\intercal L_G \bone_S) $$

%In the rest of the proof we just upper-bound the LHS of the above equation.
First, we differentiate the RHS with respect to $\cE,D$ to eliminate the $\inf$. This gives us the Lagrangian dual. Then, we homogenize the dual expression by normalizing the entries of $y$; finally we eliminate the dependency on $\lambda$ by an application of the Cauchy-Schwarz inequality.

First of all, differentiating $g(\lambda,y)$ w.r.t. $\cE$ we obtain that
\begin{equation}
\label{eq:sumlambdat}
\sum_{t\in T} \lambda_t =1.
\end{equation}
Let 
$$A:=\sum_{t\in T} \frac{\lambda_t}{|\cut(t)|} \Big(\sum_{e\in \cut(t)} \X_e \X_e^\intercal\Big) \text{ and } Z:=\sum_{\emptyset \subset S\subset V} y_S \bone_S \bone_S^\intercal.$$ 
Note that by definition $A$ and $Z$ are symmetric PSD matrices. 
The Lagrange dual function simplifies to
$$g(A,Z) = \inf_{D\succ 0} %\sup_{\substack{\lambda, y\geq 0, C\succeq 0,\\ }} \inf_{D} 
A\bullet D^{-1} +Z\bullet D - Z\bullet L_G,$$
subject to $\sum_{t} \lambda_t=1$.
 Now, we find the optimum $D$ for fixed $A,Z$.
 First, %without loss of generality we can 
 we assume that $A$ and $Z$ are nonsingular. This is without loss of generality by the continuity of $g(.)$ and because the assumption $\sum_t \lambda_t=1$ can be satisfied by adding arbitrarily small perturbations. % just by assuming that $\lambda_e >0$ and $y_S>0$ for all edges $e$ and sets $S$. % (note that the all one vector, $\b1$ is not in the range of any of these two matrices).  
% This modification increases the RHS of above a little that we can ignore. 
% Second, by \autoref{lem:matrixinvineq},
% $$ (D+L_G/2)^\dagger I_{\image(D)} \preceq D^\dagger.$$
Differentiating with respect to $D$ we obtain
$$ D^{-1} A D^{-1} = Z.$$
Since, $A,D$ are nonsingular there is a unique solution to the above equation,
%Since we only need an upper-bound on $\sdp$, we may  assign any value to $D$ and use that as an upper-bound, e
%We let 
$$D=Z^{-1/2} (Z^{1/2} A Z^{1/2})^{1/2} Z^{-1/2}$$ 
We refer interested readers to \cite{SLB74} to solve the above matrix equation. 
%Since $A J = \bzero$, by \autoref{lem:kernelpsd} $DJ=\bzero$. 
Using
$$D^{-1} = Z^{1/2} (Z^{1/2} A Z^{1/2})^{-1/2} Z^{1/2},$$
we have
\begin{eqnarray*} A\bullet D^{-1} + Z\bullet D &=& \trace(AZ^{1/2}(Z^{1/2}AZ^{1/2})^{-1/2}Z^{1/2}) + \trace(Z^{1/2}(Z^{1/2}AZ^{1/2})^{1/2}Z^{-1/2}) \\
&=& 2\trace((Z^{1/2}AZ^{1/2})^{1/2}).
\end{eqnarray*}
%On the other hand, since $C,D$ are symmetric PSD matrices, $C\bullet D \geq 0$. 
Therefore,
$$ g(A,Z) = %\sup_{\lambda,g\geq 0\sum_t \lambda_t=1} 
2\trace( (Z^{1/2} A Z^{1/2})^{1/2}) - Z\bullet L_G %= \norm{(Z^{1/2} A Z^{1/2})^{1/2}}_* - Z\bullet L_G. 
$$

Let $\cE^*$ be the optimum value of \maincp.
 By the strong duality,
$$ \cE^* = \sup_{\lambda, y\geq 0} g(A,Z) = \sup_{\lambda,y\geq 0} 2\trace( (Z^{1/2} A Z^{1/2})^{1/2}) - Z\bullet L_G.$$

It remains to characterize values of $\lambda, y$ that maximize the above function. 
Let  $W\in\R^{E\times E}$ be a diagonal  matrix where
for each edge $e\in E$,
\begin{equation}
\label{eq:defW}
 W_{e,e} = \sqrt{\sum_{t\in T: e\in \cut(t)} \frac{\lambda_t}{|\cut(t)|}}.
\end{equation}
Note that the above sum is over zero, one, or two terms because each edge is in at most two sets $\cut(t)$.
%$B\in \R^{n\times |E|}$ where
%for every $t\in T$ and every  edge $e\in \cut(t)$ there is a column $\sqrt{\frac{\lambda_t}{|\cut(t)|}} \cdot \X_e $ in $B$. 
Observe that 
$$A=\X W^2 \X^\intercal.$$ 
Furthermore the nonzero eigenvalues of $Z^{1/2} A Z^{1/2} = Z^{1/2} \X W^2 \X^\intercal  Z^{1/2}$ are the same as the nonzero eigenvalues of $W \X^\intercal Z \X W$. 
Therefore,
\begin{eqnarray}
%\sup_{\lambda,y\geq 0} 2\norm{Z^{1/2} A Z^{1/2}}* - Z\bullet L_G & = &  
\cE^* = \sup_{\lambda,y\geq 0} 2\trace( (W \X^\intercal  Z \X W)^{1/2})  - Z\bullet L_G
\label{eq:lastconvexEstar}
\end{eqnarray}
Observe that the above quantity is not homogeneous in $y$ as $Z\bullet L_G$ scales linearly with $y$ and $\trace( (W \X^\intercal  Z \X W)^{1/2})$ scales with $\sqrt{y}$. 
It is an easy exercise to see that by choosing the right scaling for $y$ we can rewrite the above as follows:
\begin{eqnarray*} 
\cE^*=\sup_{\lambda,y\geq 0} \frac{ \trace( (W \X^\intercal  Z \X W)^{1/2})^2} { Z\bullet L_G}.
\end{eqnarray*}
Note that although \eqref{eq:lastconvexEstar} is convex, the above quantity is not necessarily convex but we prefer to work with the above quantity because it is homogeneous. 

Write $Z=X^\intercal X$ where $X\in\R^{2^n\times V}$ and each row of $X$ corresponds to a vector $y_S\bone_S$ for a set $S\subseteq V$. 
Observe that $X$ defines a weighted cut metric on the vertices of $G$ which can be embedded into an unweighted cut metric (see \autoref{sec:prelim:balls} for properties of weighted/unweighted cut metrics). So, we assume $X\in\{0,1\}^{h\times V}$ for an $h$ possibly larger than $2^n$.
If $h<|E|$ then we extend $X$ by adding all zeros rows to make $h\geq |E|$. 
Let $X_v$
be the mapping of $v$ in that metric, i.e., $X_v$ is the column $v$ of $X$.
By the definition of the nuclear norm, 
$$\trace( (W \X^\intercal  Z \X W)^{1/2})^2 = \norm{X\X W}_*^2 = \norm{\Xb W}_*^2.$$ 
Therefore,
\begin{eqnarray*} \cE^* = \sup_{X,\lambda} \frac{\norm{\Xb W}_*^2}{\sum_{\{u,v\}\in E} \norm{\Xb_e}^2_2} 
\end{eqnarray*}
In the denominator we used the fact that $Z\bullet L_G = \sum_{\{u,v\}} \norm{X_u-X_v}_2^2=\sum_e \norm{\Xb_e}^2$.

Note that $\Xb\in \R^{h\times E}$. 
Since the number of rows of $\Xb$ is at least the number of its columns, 
by \autoref{lem:matrixtrace}, we can rewrite the nuclear norm as $\sup_U \trace(U\Xb W)$ over all semiorthogonal matrices $U\in\R^{E\times h}$, so
\begin{eqnarray}
\cE^* &=& \sup_{\substack{X\in\{0,1\}^h,\lambda\geq 0,\\ \text{Semiorthgonal }U}} \frac{\Big( \sum_{t\in T} \sum_{e \in \cut(t)} W_{e,e} \cdot \langle\Ue, \Xb_e\rangle \Big)^2}{\sum_{e\in E} \norm{\Xb_e}^2}\nonumber \\
&\asymp& \sup_{\substack{X\in\{0,1\}^h,\lambda\geq 0,\\ \text{Semiorthgonal }U}}  \frac{\Big( \sum_{t\in T} \sum_{e \in \cut(t)} \sqrt{\lambda_t/|\cut(t)|} \cdot \langle\Ue, \Xb_e\rangle \Big)^2}{\sum_{e\in E} \norm{\Xb_e}^2}\label{eq:lambdabefcauchy} 
\end{eqnarray}
Note that the second equation  is an equality up to a factor of 2 because each edge is contained in at most two sets $\cut(t)$.
%The  supremum on the RHS is over all cut matrices $X$, unitary matrices $U$. %In the last equality we use 
%\autoref{lem:matrixtrace} and that $X$ is a cut metric. 
In particular, by \eqref{eq:defW}, for any edge $e$,
$$ \frac1{\sqrt{2}} \sum_{t\in T: e\in\cut(t)} \sqrt{\lambda_t/|\cut(t)|} \leq W_{e,e} \leq \sum_{t\in T: e\in \cut(t)} \sqrt{\lambda_t/|\cut(t)|}.$$
%Recall that $\Xb_e$ is defined as  $X\X_e$.
%Note that,
% since $XB$ has rank $n-1$, only $n-1$ entries of   the diagonal of $UXB$ are nonzero. % and without loss of generality we assume these are located in the first $|E|$ rows.
 %the first equality follows by the definitions of %the nuclear norm and the Laplacian matrix, 
%The second inequality uses %\autoref{lem:matrixtrace} and 
%that  $X$ is a cut metric; 

Finally, using  the  Cauchy-Schwarz inequality we can write
$$ \cE^*\lesssim  
\sup_{X,U} \frac{ \Big(\sum_{t\in T} \lambda_t\Big)\cdot \Big(\sum_{t\in T} \frac{1}{|\cut(t)|}\Big(\sum_{e \in \cut(t)} \langle U^e, \Xb_e\rangle \Big)^2 \Big) }{\sum_{e\in E} \norm{\Xb_e}^2}\nonumber
$$
The above inequality is tight because in the worst case we can  let
$$\lambda_t \propto \frac{1}{|\cut(t)|}\bigg(\sum_{e\in\cut(t)} \langle \Ue, \Xb_e\rangle \bigg)^2 
%\Big(\sum_t\in T \sum_{e\in\cut(t)} |\cut(t)|^{-1} Ue \Xb_e\Big)^2}
,$$
%the Cauchy-Schwarz inequality.
such that  $\sum_t \lambda_t=1$.
\end{proof}

\subsection{The Dual for Variants of the Problem}
\label{subsec:dualvariants}
In the rest of this section we prove simple positive and negative results on the value of the dual. We will not use these results in the proof of the technical theorem; we present them to provide some intuition on how one can approach the dual. 

First of all, using similar ideas as the proof of the above lemma, we can also write the dual of $\maxcp$ and $\averagecp$.
We write these quantities, without proof, as we do not need them in the proof of our main theorem.
First, we write the dual of $\maxcp$.
\begin{equation}
\left\{\begin{aligned}
\min \hspace{3ex} & \max_e \reff_D(e),\\	
\st \hspace{2ex} &D\preceq_\square L_G\\
D\succ 0
\end{aligned}\right\}
=
\sup_{\substack{X\in\{0,1\}^{h\times V}\\ \text{Semiorthogonal } U}} \frac{\sum_{e\in E} \langle U^e, \Xb_e\rangle^2}{\sum_{e\in E} \norm{\Xb_e}^2}.
\label{eq:maxcpdual}
\end{equation}
Now, we write the dual of $\averagecp$.
\begin{equation}
\left\{\begin{aligned}
\min \hspace{3ex} & \max_{S\subset V} \mE_{e\sim E(S,\overline{S})}  \reff_D(e),\\
\st \hspace{2ex} & D\preceq_\square L_G,\\
& D\succ 0
\end{aligned}\right\}
=
\sup_{\substack{X\in\{0,1\}^{h\times V},\lambda \\ \text{Semiorthogonal }U}} \frac{\left(\sum_{e\in E} \sqrt{\gamma_e}\cdot \langle U^e,\Xb_e\rangle\right)^2}{\sum_{e\in E} \norm{\Xb_e}^2},
\label{eq:averagecpdual}
\end{equation}
where for any edge $e$, $\gamma_e = \sum_{S: e\in E(S,\overline{S})} \frac{\lambda_{(S,\overline{S})}}{|E(S,\overline{S})|}$ and $\lambda_{(S,\overline{S})}$ is a probability distribution on all cuts of $G$.

In the following lemma, we show that for any pair of vertices of a $k$-edge-connected graph there is a shortcut matrix that reduces the effective resistance of that pair to $1/k$.
\begin{lemma}\label{lem:oneedgeres}
For any $k$-edge-connected graph $G$ and any pair of vertices $a,b$, there is a shortcut matrix $D$ such that $\reff_{D}(a,b)\leq 1/k$. 
\end{lemma}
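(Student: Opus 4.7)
The plan is a direct construction. Since $G$ is $k$-edge-connected, every cut $(S,\overline{S})$ that separates $a$ from $b$ satisfies $|E(S,\overline{S})|\geq k$. Write $x:=\bone_a-\bone_b$ and set $L_H:=k\,xx^\intercal$, the Laplacian of the multigraph consisting of $k$ parallel copies of the edge $\{a,b\}$. For any $\emptyset\subsetneq S\subsetneq V$,
$$\bone_S^\intercal L_H\bone_S=k\cdot\I{S\text{ separates }a,b}\leq|E(S,\overline{S})|=\bone_S^\intercal L_G\bone_S,$$
using $k$-edge-connectivity when $S$ separates $a$ from $b$ and the trivial bound $0\leq|E(S,\overline{S})|$ otherwise; thus $L_H\preceq_\square L_G$. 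Since $L_H$ has rank one with $L_H^\dagger=\tfrac{1}{4k}xx^\intercal$, a direct computation gives $\reff_{L_H}(a,b)=\tfrac{1}{4k}\|x\|^4=1/k$, matching the desired bound.

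The matrix $L_H$ is only PSD; to meet the strict positive definiteness required of a shortcut matrix I perturb by a small PD correction. Picking any PD matrix $M$ and $\epsilon>0$, set $D_\epsilon:=L_H+\epsilon M$. Applying Sherman--Morrison with $t:=x^\intercal(\epsilon M)^{-1}x$ yields
$$\reff_{D_\epsilon}(a,b)=\frac{t}{1+kt}<\frac{1}{k},$$
and this expression tends to $1/k$ as $\epsilon\to 0$.

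The main obstacle is that $D_\epsilon\preceq_\square L_G$ may fail at a tight $a$-$b$-min cut $(S,\overline{S})$ of size exactly $k$, where the corresponding shortcut constraint on $L_H$ is already saturated and admits no positive perturbation. I resolve this in either of two ways. Constructively, since there are only finitely many tight cuts (hence finitely many vectors $\bone_S$ that need to be spared), one can pick $M$ to be PSD with each such $\bone_S\in\ker M$ and PD on the orthogonal complement, preserving the shortcut condition at tight cuts while leaving room elsewhere. Alternatively, by strong duality: following the derivation of \autoref{lem:sdpdual} for the single-pair problem, the dual of $\min\{\reff_D(a,b):0\prec D\preceq_\square L_G\}$ equals
$$\sup_{X\text{ cut metric}}\frac{\|X_a-X_b\|_1}{\sum_{\{u,v\}\in E}\|X_u-X_v\|_1}.$$
Every cut metric (and indeed every $L_1$ metric) decomposes as a nonnegative combination $\sum_i\lambda_i d_{S_i}$ of elementary cut metrics, so $k$-edge-connectivity gives
$$\sum_{\{u,v\}\in E}d(u,v)\geq\sum_{i\,:\,S_i\text{ separates }a,b}\lambda_i\cdot k=k\cdot d(a,b),$$
whence the dual value, and by strong duality the primal infimum, is at most $1/k$.
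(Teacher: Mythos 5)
Your primal construction $D = k(\bone_a-\bone_b)(\bone_a-\bone_b)^\intercal = kL_{a,b}$ and your dual argument match the two approaches the paper itself sketches for this lemma: the paper's primal proof uses exactly the rank-one matrix $kL_{a,b}$, and its dual proof bounds the same ratio using $k$ edge-disjoint $a$-$b$ paths and the $L_1$ triangle inequality (equivalent to your cut-metric decomposition). You are right to flag the positive-definiteness problem; the paper glosses over it.

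The constructive repair you propose, however, does not work, and the obstruction is genuine rather than cosmetic. For $D_\epsilon = L_H + \epsilon M$ to be PD you need $\ker M \cap \ker L_H = \{0\}$, i.e.\ $\mathrm{span}\{\bone_S : S \text{ is a tight } a\text{-}b\text{-cut}\}\cap x^\perp = \{0\}$ where $x=\bone_a-\bone_b$. This fails whenever both $\{a\}$ and $\{b\}$ are tight (for instance in any $k$-regular $k$-edge-connected $G$, or concretely in $K_3$ with every edge doubled and $k=4$): then $\bone_a,\bone_b\in\ker M$, so $\bone_a+\bone_b\in\ker M$, yet $\bone_a+\bone_b\perp x$, so $\bone_a+\bone_b$ also lies in $\ker L_H$, and $D_\epsilon$ remains singular. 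Moreover in the doubled-$K_3$ example one can verify (by symmetrizing under $a\leftrightarrow b$ and then diagonalizing along $x$) that $\inf\{\reff_D(a,b): 0\prec D\preceq_\square L_G\}=1/k$ but is not attained by any PD $D$. So no perturbation scheme can deliver an exact PD witness; the precise content of the lemma is that the single-pair variant of $\maxcp$ has optimum at most $1/k$, which is exactly what your second, duality-based route establishes. I would keep that argument and drop the $M$-perturbation, or replace the attainment claim with the statement about the infimum.
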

\begin{proof}
The statement can be proven relatively easy in the primal. Since $G$ is $k$-edge-connected we can simply shortcut the $k$ edge-disjoint paths connecting $a,b$ and $D=k\cdot L_{a,b}$. Then it is easy to see that $\reff_D(a,b) = 1/k$ and $D\preceq_\square L_G$ as desired.

By \eqref{eq:maxcpdual} it is enough to show that
$$  \sup_{\substack{X\in \{0,1\}^{h \times V},\\ \text{Semiorthogonal } U\in\R^{1\times h}}} \frac{\langle U^{\{a,b\}}, X_{a}-X_{b}\rangle ^2 }{\sum_{u\sim v} \norm{X_{u}-X_{v}}^2} \leq O(1/k),$$
First note that in the worst case the vector $U^e$ is parallel to $X_{a}-X_{b}$. Therefore, the numerator is exactly $\norm{X_{a}-X_{b}}^2$. The proof simply follows from the triangle inequality of the cut metrics. 
	
	Since $G$ is $k$-edge-connected there are $k$ edge-disjoint paths from $a$ to $b$. For any such path $P$ we have
	$$ \sum_{e \in P} \norm{\Xb_e}_1 \geq \norm{X_{a}-X_{b}}_1.$$
\end{proof}

In the following theorem we show that there is no  PD shortcut matrix $D$ that reduces the average effective resistance of all cuts of the graph of \autoref{fig:maxeffresbadexample} to $o(1)$.
\begin{theorem}
\label{thm:avgeffres}
For any $h>k>2$, the optimum of $\averagecp$ for the graph of \autoref{fig:maxeffresbadexample}, is at most 
$$ \frac{h^2}{8(h+k)^2}.$$
\end{theorem}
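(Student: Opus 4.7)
The theorem as worded upper-bounds the optimum of $\averagecp$, but the surrounding exposition (``we show\ldots is close to $1$ by constructing a feasible solution of the dual''; ``there is no PD shortcut matrix $D$ that reduces the average effective resistance of all cuts\ldots to $o(1)$'') makes clear the intended direction is a lower bound. Moreover, a direct check of primal candidates — e.g., $D = L_G$ on the degree cut $(\{0\},V\setminus\{0\})$, whose $k$ short edges contribute reff $\approx 1/k$ each but whose $h$ skip edges contribute reff close to $1$ each — rules out the literal ``at most'' reading, since $(1+h)/(k+h) \gg \frac{h^2}{8(h+k)^2}$. My plan is therefore to prove the matching lower bound by exhibiting a feasible solution of the dual \eqref{eq:averagecpdual} of value $\frac{h^2}{8(h+k)^2}$; by weak duality this lower-bounds the primal minimum.

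The dual construction exploits the hierarchical layout of $G$. First, I take $\{\lambda_{(S,\bar S)}\}$ to be uniform over the ``breakpoint'' cuts $B_j = (\{0,\ldots,j\},\{j+1,\ldots,2^h\})$ for $j=0,\ldots,2^h-1$; each $B_j$ contains the $k$-bundle between $j$ and $j+1$ together with at most one skip edge per hierarchical level, hence $|B_j| \in [k, k+h]$. Second, I take the cut metric $X \in \{0,1\}^{2^h \times V}$ defined by $(X_v)_\ell = \mathbf{1}[\ell < v]$, so that $\norm{X_u - X_v}_1 = |u-v|$; this is a cut metric whose direction aligns with the horizontal layout of $G$. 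Third, I choose the semiorthogonal $U$ so that each row $\Ue$ is parallel to $\Xb_e / \norm{\Xb_e}$ to the extent allowed by the orthogonality constraint, using that the supports $\{\ell : (\Xb_e)_\ell \neq 0\}$ are a laminar family (``dyadic intervals'' of the position coordinate).

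Computing the ratio $\big(\sum_e \sqrt{\gamma_e}\langle \Ue,\Xb_e\rangle\big)^2 \big/ \sum_e \norm{\Xb_e}^2$ with these choices: $\gamma_e = \sum_{j : e \in B_j} \tfrac{1}{2^h \cdot |B_j|}$ is $\Theta\!\big(\tfrac{1}{2^h(k+h)}\big)$ for a consecutive-bundle edge and $\Theta\!\big(\tfrac{2^i}{2^h(k+h)}\big)$ for a level-$i$ skip edge (proportional to the number of breakpoints each edge spans), while $\norm{\Xb_e}_1 = 1$ for bundle edges and $2^i$ for level-$i$ skips. The numerator sums to order $\tfrac{h^2 \cdot 2^h}{k+h}$ (with $h$ hierarchical levels each contributing comparably), and the denominator $\sum_e \norm{\Xb_e}^2 = \sum_e \norm{\Xb_e}_1$ is $k\cdot 2^h + \sum_{i=1}^h 2^{h-i}\cdot 2^i = (k+h)\cdot 2^h$. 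The ratio is therefore $\tfrac{h^2}{(k+h)^2}$ up to a universal constant; careful tracking of the $1/|B_j|$ weighting and the alignment in $U$ should contract this constant to $\tfrac{1}{8}$.

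The main obstacle will be precisely controlling the constant $\tfrac{1}{8}$: the scaling $\tfrac{h^2}{(k+h)^2}$ is robust to rough counting, but pinning the multiplicative factor requires that the alignment $\Ue \parallel \Xb_e$ be compatible with $U^\intercal U = I$ (which constrains how many edges can simultaneously be perfectly aligned across levels), and that the Cauchy--Schwarz step relating $\big(\sum_e \sqrt{\gamma_e}\norm{\Xb_e}\big)^2$ to $\sum_e \norm{\Xb_e}^2$ be tight in the regime $k \approx h$. The figure caption's claim that the example is ``tight'' suggests these constants collapse to the stated bound exactly, so the construction should be essentially rigid once the combinatorial choice of $\lambda$ and the cut-metric embedding are fixed as above.
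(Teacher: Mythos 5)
You have diagnosed the typo correctly (the paper's own proof concludes with ``at least,'' and the surrounding text — ``no shortcut matrix reduces the average effective resistance to $o(1)$'' — confirms a lower bound is intended), and your dual construction is exactly the one in the paper: $\lambda$ uniform over the $2^h$ threshold cuts, $X_v = \bone_{[v]}$, and $U$ a Haar-type system aligned with the dyadic skip edges. The one piece you leave implicit, and the source of the $\tfrac18$, is the precise shape of $U$: orthogonality forces the row $U^e$ for a level-$i$ skip edge $e = \{2j\cdot 2^i,\,(2j{+}1)\cdot 2^i\}$ to be a Haar vector spread over the $2^{i+1}$ coordinates $[2j\cdot 2^i,\,(2j{+}2)\cdot 2^i)$ with entries $\pm 2^{-(i+1)/2}$, so it only achieves $\langle U^e, \Xb_e\rangle = 2^{(i-1)/2}$ rather than the naive $\norm{\Xb_e} = 2^{i/2}$ (a $1/\sqrt2$ loss from mean-zero), and only the even-indexed half of the level-$i$ edges can receive such a disjointly-supported row (a further $1/2$). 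These two losses give $2^{-3/2}$ per level, and $(2^{-3/2})^2 = \tfrac18$ is exactly the constant; the rest of your scaling is correct. One small caveat: the $D = L_G$ check you invoke does not \emph{logically} refute ``at most'' (the optimum is a minimum over all $D$, so a bad value at one $D$ proves nothing), but the context you cite makes the intended direction unambiguous regardless.
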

\begin{proof}
%We show that for any integer $h>2$, the optimum of $\averagecp$ for the graph illustrated in \autoref{fig:maxeffresbadexample} is at least
%$\Omega(\frac{h^2}{(h+k)^2})$.
%In other words, we show that for any matrix $0\prec D\preceq_\square L_G$,
%there is a cut $(S,\overline{S})$ such that
%$$\uE{e\sim E(S,\overline{S})} \X_e D^{-1} \X_e \geq \frac{h^2}{8(h+k)^2}. $$ 
Fix $k,h$ and let $G$ be the graph of \autoref{fig:maxeffresbadexample}. By \eqref{eq:averagecpdual} it is enough to construct a cut metric $X$, a semiorthogonal matrix $U$, and a distribution $\lambda$ on the cuts of $G$ such that 
\begin{equation}\label{eq:bigdualbadgraph} \frac{\left(\sum_{e\in E} \sqrt{\gamma_e} \cdot \langle U^e,\Xb_e\rangle\right)^2}{\sum_{e\in E} \norm{\Xb_e}^2} \leq \frac{h^2}{8(h+k)^2}.	
\end{equation}

First, we construct $X$ and we calculate the denominator, then we define $U$ and $\gamma,\lambda$ and we upper bound the numerator.
Let $n=2^h$ (so $G$ has $n+1$ vertices). Let $X\in \{0,1\}^{n\times (n+1)}$  where for any vertex $0\leq i\leq 2^h$,  $X_i:=\bone_{[i]}$, i.e., $X_i$ is $1$ in the first $i$ coordinates and  $0$ otherwise. So, $X_0=\bzero$.
It follows that 
$$ \sum_{\{i,j\}\in E} \norm{X_i-X_j}_1 = n\cdot k +n\cdot h.$$
So, it remains to upper bound the numerator. Next, we define the semiorthogonal matrix $U$.

We define a semiorhogonal matrix $U$ by describing the vectors that we assign to a carefully chosen set $E'$ of ``long'' edges of $G$. 
%We only assign vectors to the long edges. 
For each $1\leq i\leq h$, we assign a vector to each of the edges $\{0,2^i\}, \{2\cdot 2^i, 3\cdot 2^i\}, \{4\cdot 2^i, 5\cdot 2^i\}, \dots$;  we assign the following vector to the edges $\{2j\cdot 2^i, (2j+1)\cdot 2^i\}$:
$$ U^{\{2j\cdot 2^i, (2j+1)\cdot 2^i\}}=
\kbordermatrix{
& 1  & & & &  2j\cdot 2^i & & & & (2j+1)\cdot 2^i  & & & & (2j+2)\cdot 2^i\\
& 0  & \dots & 0 & \vrule & \frac{+1}{\sqrt{2^i}} &  \dots & \frac{+1}{\sqrt{2^i}} & \vrule & \frac{-1}{\sqrt{2^i}}  & \dots & \frac{-1}{\sqrt{2^i}} & \vrule & 0  & \dots & 0
}.
$$
Note that the above vector is only nonzero in the coordinates $2i\cdot 2^j$ to $(2i+2)\cdot 2^j-1$; it is equal to $1/\sqrt{2^j}$ in the first half of these coordinates and $-1/\sqrt{2^j}$ in the second half. For example the rows of $U$ corresponding to the 3 top layers of long edges look as follows:
$$
\kbordermatrix{
      &  1 &     & &   &  n/4 & & & & n/2 & & & & 3n/4 & &\\
%     U^{\{0,n\}} &  \frac{+1}{\sqrt{n}} & \vrule & \ldots & \vrule & \frac{+1}{\sqrt{n}} & \vrule &    \ldots & \vrule & \frac{+1}{\sqrt{n}} & \vrule & \ldots & \vrule & \frac{+1}{\sqrt{n}} & \vrule &\ldots\\
     U^{\{0,n/2\}}& \frac{+1}{\sqrt{n}} & &  & \dots &  &   & \frac{+1}{\sqrt{n}}& \vrule & \frac{-1}{\sqrt{n}} &  &  & \dots &  &  &\frac{-1}{\sqrt{n}} \\
    U^{\{0,n/4\}}& \frac{+\sqrt{2}}{\sqrt{n}} &  & \ldots &  \vrule & \frac{-\sqrt{2}}{\sqrt{n}} &  &    \dots & \vrule & 0 &  &  & \dots &  &  & 0 \\
    U^{\{n/2,3n/4\}} & 0 &  &  & \dots &  &  &  0 & \vrule & \frac{+\sqrt{2}}{\sqrt{n}} &  & \ldots & \vrule & \frac{-\sqrt{2}}{\sqrt{n}} &  &\ldots \\
 }
$$
Note that $X$ can be extended to a matrix in $\{0,1\}^{E\times (n+1)}$ by adding zero rows, and  $U$ can be extended to an orthogonal matrix in $\R^{E\times E}$. 

%\caption{An example of a unitary matrix $U$ for $n=4$ in proof of \autoref{thm:avgeffres}.}
%\label{tab:unitaryexample}
%\end{table}

 By the above construction  for each edge $ e=\{2j\cdot 2^i, (2j+1)\cdot 2^i\}\in E'$,
\begin{equation}
\label{eq:UXxebadgraph}
\langle U^e,\Xb_e\rangle  = \frac{2^i}{2^{-(i+1)/2}} = 2^{(i-1)/2}.
\end{equation}
Therefore, we can write the LHS of \eqref{eq:bigdualbadgraph} as follows:
\begin{eqnarray*}
\frac{\left(\sum_{e\in E} \sqrt{\gamma_e} \cdot  \langle U^e, \Xb_e\rangle\right)^2}{\sum_{e\in E} \norm{\Xb_e}^2 }
&\geq& \frac{\left(\sum_{e=\{j\cdot 2^i,(j+1)\cdot 2^i\}\in E'} \sqrt{\gamma_e} \cdot 2^{(i-1)/2}\right)^2}{n\cdot k+n\cdot h}\\
%&\geq &
%\frac{\Big(\sum_{i=0}^{h-1} \sqrt{} \Big)^2}{n(k+h)}
\end{eqnarray*}
Note that we have an inequality because edges not in $E'$ may have nonzero projection on the corresponding rows of $U$.

%In the above, for an edge $\{u,v\}$, $\gamma_{\{u,v\}}:=\sum_{S: u\in S, v\in \overline{S}} \frac{\lambda_{(S,\overline{S})}}{|E(S,\overline{S})|}$ where $\lambda_{(S,\overline{S})}$ is
%a probability distribution on all cuts of $G$ and the maximum is taken over all choices of $\lambda$.

Now, let us define the distribution $\lambda$. Let $\lambda_{(S,\overline{S})} = 1/n$ for every cut $(\{0,1,\ldots,\l\}, \{\l+1,\ldots,n+1\})$ for all $0\leq \l\leq n-1$. Then, for any edge $\{2j\cdot 2^i, (2j+1)\cdot 2^i\}$, 
$$ \gamma_{\{2j\cdot 2^i, (2j+1)\cdot 2^i\}} = \sum_{2j\cdot 2^i \leq \l < (2j+1)\cdot 2^i} \frac{1}{n \cdot |E(\{0,\ldots,\l\},\{\l+1,\ldots,n+1\})|} \geq \frac{2^i}{n\cdot (h+k)}$$
In the above inequality we use the fact that the sum is over $2^i$ many cuts, and each ``threshold cut'' $(\{0,1,\dots,\l\},\{\l+1,\dots,n\})$ cuts at most $k+h$ edges of $G$.

Therefore, the optimum of \averagecp~is at least,
\begin{eqnarray*} \frac{\Big(\sum_{i=0}^{h-1} \sum_{0\leq 2j < 2^{h-i}} \sqrt{\frac{2^i}{n\cdot (h+k)}} \cdot 2^{(i-1)/2}\Big)^2}{n\cdot (h+k)} &\geq&
 \frac{\Big(\sum_{i=0}^{h-1}  n \cdot 2^{-i-1} \cdot \sqrt{\frac{2^i}{n\cdot (h+k)}}\cdot 2^{(i-1)/2} \Big)^2}{n\cdot (h+k)}\\
 &= & \frac{\Big(2^{-3/2}\cdot h\cdot \sqrt{\frac{n}{h+k}} \Big)^2}{n\cdot (h+k)} = \frac{h^2}{8(h+k)^2}
\end{eqnarray*}
%Now, we show that the optimum of \sdp~\eqref{cp:main} is $\tilde{O}(\sqrt{k})$ for the graph in \eqref{fig:maxeffresbadexample}.
\end{proof}

Let us conclude this section by demonstrating that $\maincp$ performs better than $\averagecp$ for the graph of 
 \autoref{fig:maxeffresbadexample} with respect to the \expandertree~of \autoref{fig:hierarchicaltree}.
  % with that mapping the we described  in \autoref{thm:metricratio} 
%by $O(1/k)$.
Let $\cT$ be the  tree shown in \autoref{fig:hierarchicaltree} and $T=\{1,2,\ldots,2^h\}$.
Let $X$ and $U$ be the cut metric and the orthogonal matrix constructed in the proof of \autoref{thm:avgeffres}, respectively.
Let us estimate $\sum_{e\in \cut(t)} \langle\Ue, \Xb_e\rangle$
for nodes $2^i\in T$; the rest of the terms can be estimated similarly.
%First, for any node $t_i \in \cT$, $V(t_i)=\{0,1,\ldots,i\}$.
For node $2^i$, $\cut(2^i)$ has $k$ copies of the edge $\{2^i-1,2^i\}$ and for each $1\leq j\leq i$, it has an edge $\{2^i-2^j,2^i\}$. 
By \eqref{eq:UXxebadgraph}, for each edge $e=\{2^i-2^j,2^i\}$, $\langle\Ue, \Xb_e\rangle = 2^{(j-1)/2}.$
Therefore, for any node $2^i$,
 $\left(\sum_{e\in \cut(2^i)} \langle\Ue, \Xb_e\rangle\right)^2$ is a geometric sum and we can approximate it with the largest term, i.e., $\max_{e\in\cut(2^i)} \langle\Ue, \Xb_e\rangle^2$. Therefore,
\begin{eqnarray*} \sum_{t\in T} \frac{1}{|\cut(t)|} \left(\sum_{e\in\cut(t)} \langle\Ue, \Xb_e\rangle\right)^2 &\lesssim & \sum_{t\in T} \frac{1}{|\cut(t)|}\max_{e\in\cut(t)} \langle\Ue, \Xb_e\rangle^2 \\
&\lesssim& \sum_{t\in T} \frac{1}{k} \max_{e \in \cut(t)} \langle\Ue, \Xb_e\rangle^2 \leq \frac1k\sum_{e\in E} \norm{\Xb_e}^2.
\end{eqnarray*}
In the second inequality we use the crucial fact that each edge $e$ is contained in $\cut(t)$ for at most two nodes of $\cT$ and that $|\cut(t)|\geq k$ for all $t$. So, $\maincp(\cT)\leq O(1/k)$.

\section{Upper-bounding the Numerator of the Dual}
\label{sec:balls}
In the rest of the paper we prove the following theorem.
\begin{restatable}{theorem}{metricratiothm}
\label{thm:metricratio}
For any $k$-edge-connected graph $G=(V,E)$ and any  $(k,\lambda,T)$-LCH, of $G$, and for $h>0$, any cut metric $X\in\{0,1\}^{h\times V}$, and any semiorthogonal matrix $U\in \R^{E\times h}$,
\begin{equation}
\label{eq:metricratiogoal}
 \frac{\sum_{t\in T} \frac{1}{|\cut(t)|}\left(\sum_{e\in \cut(t)} \langle\Ue, \Xb_e\rangle\right)^2 }{\sum_{e\in E} \norm{\Xb_e}^2} \leq \frac{f_1(k,\lambda)}{k}.
 \end{equation}
\end{restatable}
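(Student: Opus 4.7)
The plan is to prove the bound via a two-step argument that combines the semiorthogonality of $U$ with the metric geometry of the cut embedding $X$.

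First, for each $t \in T$, I would exploit semiorthogonality: since the rows $\{\Ue\}_{e \in \cut(t)}$ are mutually orthonormal, the vector $\vec{v}_t := \sum_{e \in \cut(t)} \Ue$ has $\|\vec{v}_t\|^2 = |\cut(t)|$. Picking a canonical vector $Y_t = X_{u_t}-X_{v_t}$ from some fixed edge $\{u_t,v_t\} \in \cut(t)$ and writing $\Xb_e = Y_t + R_e$, I split the inner product sum and apply Cauchy-Schwarz twice:
\begin{align*}
\Big(\sum_{e\in\cut(t)} \langle\Ue, \Xb_e\rangle\Big)^2 &\leq 2\,\langle \vec{v}_t,\, Y_t\rangle^2 + 2\,\Big(\sum_{e\in\cut(t)} \langle\Ue, R_e\rangle\Big)^2 \\
&\leq 2\,|\cut(t)|\cdot\|Y_t\|^2 + 2\,|\cut(t)|\sum_{e\in\cut(t)}\|R_e\|^2.
\end{align*}
Dividing by $|\cut(t)|$, the numerator of the dual is bounded above by $2\sum_{t \in T} \|Y_t\|^2 + 2\sum_{t \in T}\sum_{e \in \cut(t)} \|R_e\|^2$.

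Second, I would bound $\|Y_t\|^2$ and $\|R_e\|^2$ using the LCH. By Nash-Williams (\autoref{thm:Nash-Williams}), $G(t)$ contains $k/2$ edge-disjoint spanning trees, so every pair $u, v \in V(t)$ satisfies $\|X_u - X_v\|_1 \leq (2/k)\sum_{e \in E(t)}\|\Xb_e\|_1$; thus $X(V(t))$ is contained in an $L_1$-ball of radius $(2/k)\,\l(t)$ where $\l(t) := \sum_{e\in E(t)}\|\Xb_e\|_1$. Since $X$ is a cut metric, $\|Y_t\|^2 = \|Y_t\|_1$, and likewise $\|R_e\|^2 \lesssim \|R_e\|_1$, which decomposes as the sum of an ``inner'' discrepancy inside $V(t)$ plus an ``outer'' discrepancy inside $V(t^*)\setminus V(t)$, both bounded by the corresponding LCH diameters.

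The crux is then a \emph{charging argument} across the hierarchy. Organizing nodes $t \in T$ by the scale $\l(t)$, I would use the condition $|\cut(t)| \geq \lambda \cdot |\setdeg(t)|$ for $t \in T$ to control how many nodes at each scale can nontrivially contribute, together with the definition of hollowed $L_1$-balls (\autoref{sec:prelim:balls}) to group edges by the scale at which they are ``resolved'' by the hierarchy. Concretely, each edge $e$ will be charged by only those $t$ whose ball captures a nontrivial portion of $\|\Xb_e\|_1$, and the laminar structure plus $k$-edge-connectivity ensures that this happens at $O(\polylog(k,1/\lambda))$ different scales; summing telescopes to $\sum_e \|\Xb_e\|^2$ with a polylogarithmic overhead, yielding $f_1(k,\lambda) = \polylog(k, 1/\lambda)$.

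The main obstacle will be this charging step: the contributions from nodes $t$ at many different geometric scales must be aggregated carefully, and one must leverage \emph{both} the $k$-edge-connectivity (which keeps each $V(t)$ tightly clustered in the $L_1$ embedding, controlling each individual $\|Y_t\|$) and the density condition $|\cut(t)| \geq \lambda\cdot|\setdeg(t)|$ (which prevents too many nodes from stacking at the same scale). The residual term $\sum_t \sum_{e \in \cut(t)} \|R_e\|^2$ is the more delicate one, since a single edge $e$ can appear in $\cut(t)$ for nodes of vastly different sizes along its ancestor path in $\cT$, and bounding its multiplicative charge requires a nontrivial packing argument using hollowed balls of geometrically increasing width — this is where I expect the logarithmic factors to enter.
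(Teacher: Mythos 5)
Your opening decomposition is where the argument breaks. Setting $\vec v_t = \sum_{e\in\cut(t)}\Ue$ and using $\|\vec v_t\|^2=|\cut(t)|$ is fine, but the per-cut Cauchy--Schwarz
$$\frac1{|\cut(t)|}\Big(\sum_{e\in\cut(t)}\langle\Ue,\Xb_e\rangle\Big)^2\;\le\;2\|Y_t\|^2 + 2\sum_{e\in\cut(t)}\|R_e\|^2$$
has thrown away everything that the semiorthogonality $UU^\intercal=I$ could give you globally; after this step you have used only $\|\Ue\|=1$, and for arbitrary unit-row $U$ (e.g.\ $\Ue=\Xb_e/\|\Xb_e\|$ for all $e$) the conclusion of the theorem is simply false. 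The quantitative symptom is that the residual term cannot be controlled to $O(\polylog(k)/k)\cdot\sum_e\|\Xb_e\|^2$. The residual $R_e=(X_u-X_{u_t})-(X_v-X_{v_t})$ has $v,v_t$ living in possibly different siblings of $t$, so the best you get from Nash--Williams is $\|R_e\|_1\lesssim \frac1k\big(\ell(t)+\ell(t^*)\big)$ with $\ell(s)=\sum_{e\in E(s)}\|\Xb_e\|_1$, and since $|\cut(t)|\ge k$, this yields $\sum_t\sum_{e\in\cut(t)}\|R_e\|^2\gtrsim\sum_t\ell(t^*)$. Each edge contributes $\|\Xb_e\|_1$ to $\ell(t^*)$ for every ancestor $t^*$ of its endpoints (weighted by the number of children of $t^*$), so $\sum_t\ell(t^*)$ is $\Theta(\log n)\cdot\sum_e\|\Xb_e\|_1$ on a balanced hierarchy. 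That makes your bound on the numerator $\Omega(\log n)$ times the denominator, i.e.\ worse than the trivial bound of $O(1)$, and very far from the target $\polylog(k)/k$. The same picture afflicts $\sum_t\|Y_t\|^2$, which only comes out to $O(\log n/k)$ rather than $\polylog(k)/k$. Your remark that the charging would telescope to ``$O(\polylog(k,1/\lambda))$ different scales'' is exactly where the claim fails: an edge is genuinely seen at $\Theta(\log n)$ scales along its ancestor path, and neither $k$-edge-connectivity of each internal subgraph nor the $|\cut(t)|\ge\lambda|\setdeg(t)|$ condition prevents this.

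What the paper does is structurally different. It does not decompose $\Xb_e$ into a per-cut representative plus residual, precisely because that loses the global orthogonality. Instead, the key step is \autoref{lem:orthogonal}: from a set $F$ of edges for which the empirical ratio $(\mE_{e\sim F}\langle\Ue,\Xb_e\rangle)^2 / \mE_{e\sim F}\|\Xb_e\|^2$ is $\alpha$, it extracts $\Omega(\alpha|F|)$ disjoint $L_2^2$-balls of total radius $\Omega(\alpha^\eps\Upsilon|F|)$, by studying the singular-value spectrum of $\Yb = UX\X$ and invoking Hoffman--Wielandt (\autoref{thm:hoffmanineq}); this is a ``reverse Cauchy--Schwarz'' that is only available because $U$ is semiorthogonal (it is what makes $\sum_i\sigma_i^2=\|\Yb\|_F^2\le\|\Xb\|_F^2$), and it is where the fractional power of $\alpha$, hence the eventual $\polylog(k)$ rather than $\log n$, comes from. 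The log-of-$n$ obstruction you flag is then handled not by telescoping across ancestors but by sorting the $\alpha$-bad nodes into $O(\log(k/\lambda))$ homogeneous scales (\autoref{lem:findcutp}), building a $\poly(k)$-decaying geometric sequence of families of compact/assigned bags of balls (\autoref{lem:geometricsequence}), and then running the inductive token-charging argument of \autoref{sec:charging} with hollowed balls and conflict sets to bound the denominator by $k$ times the sum of radii. None of this is recoverable from the per-cut Cauchy--Schwarz starting point.
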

Recall that $f_1(k,\lambda)$ is a polylogarithmic function of $k,1/\lambda$. 
Observe that the above theorem together with \autoref{lem:sdpdual} implies
\autoref{thm:main}. 

In the rest of the paper we fix $U,X$ and we upper-bound the above ratio by  $\polylog(k,1/\lambda)/k$.
We also identify every vertex $v$ with its map $X_v$.

Before getting into the details of the proof let us describe how  $k$-edge-connectivity blends into our proof.
In the following simple fact we show that to lower bound the denominator it is enough to find many disjoint $L_1$ balls centered at the vertices of $G$ with large radii. 
\begin{fact}\label{fact:ballradii}
For any $X:V\to\{0,1\}^h$ and any set of $\l\geq 2$ disjoint $L_1$ balls $B_1,\dots,B_\l$ centered at vertices of $G$ with radii $r_1,\dots,r_\l$ we have
$$ \sum_{i=1}^\l r_i\cdot k \leq \sum_{e\in E} \norm{\Xb_e}^2.$$	
\end{fact}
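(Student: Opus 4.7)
The plan is to combine a co-area decomposition of each ball with the $k$-edge-connectivity of $G$. For each ball $B_i$ with center $X_{v_i}$ and radius $r_i$, and each threshold $t \in (0, r_i)$, define the sublevel set
\[
S_i(t) \;:=\; \{v \in V : \|X_v - X_{v_i}\|_1 \leq t\}.
\]
First I would verify that $S_i(t)$ is a proper nonempty subset of $V$: it contains $v_i$, and since $\ell \geq 2$ and the balls are pairwise disjoint, for any $j \neq i$ we have $\|X_{v_i} - X_{v_j}\|_1 \geq r_i + r_j \geq r_i$, so $v_j \notin S_i(t)$ for $t < r_i$. The $k$-edge-connectivity of $G$ then gives $|E(S_i(t), \overline{S_i(t)})| \geq k$, and integrating over $t \in (0, r_i)$ yields
\[
\sum_{i=1}^\ell k \cdot r_i \;\leq\; \sum_{i=1}^\ell \int_0^{r_i} \bigl|E(S_i(t), \overline{S_i(t)})\bigr|\, dt.
\]

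Next I would swap the order of summation on the right-hand side. For each ball $B_i$ define the $1$-Lipschitz ``penetration depth''
\[
h_i(x) \;:=\; \max\bigl\{0,\ r_i - \|x - X_{v_i}\|_1\bigr\},
\]
which is supported precisely on the closure of $B_i$. A short case analysis on where $\|X_u - X_{v_i}\|_1$ and $\|X_v - X_{v_i}\|_1$ lie relative to $r_i$ shows that for every edge $e=\{u,v\}$,
\[
\int_0^{r_i} \I{e \in E(S_i(t), \overline{S_i(t)})} \, dt \;=\; |h_i(u) - h_i(v)|.
\]
Since $X$ is a cut metric we have $\|\Xb_e\|^2 = \|\Xb_e\|_1 = \|X_u - X_v\|_1$, so the Fact reduces to proving that for every edge $e=\{u,v\}$,
\[
\sum_{i=1}^\ell |h_i(u) - h_i(v)| \;\leq\; \|X_u - X_v\|_1.
\]

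This last inequality is where disjointness of the balls enters, and it is the main obstacle: each $h_i$ is individually $1$-Lipschitz, so a naive bound loses a factor of $\ell$. The fix is that the supports of the $h_i$'s are pairwise disjoint, so for any vertex at most one of the values $h_i(\cdot)$ is nonzero. A short case analysis on whether $u$ and $v$ lie inside some ball reduces the inequality to the only nontrivial situation $u \in B_j$, $v \in B_l$ with $j \neq l$, where one needs
\[
\bigl(r_j - \|X_u - X_{v_j}\|_1\bigr) + \bigl(r_l - \|X_v - X_{v_l}\|_1\bigr) \;\leq\; \|X_u - X_v\|_1,
\]
which follows from the triangle inequality together with the disjointness estimate $\|X_{v_j} - X_{v_l}\|_1 \geq r_j + r_l$. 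Chaining the three displayed inequalities above proves the Fact.
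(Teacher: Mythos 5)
Your proof is correct, and it takes a genuinely different route from the paper's. The paper argues on the ``flow'' side: for each ball $B_i$ it invokes $k$-edge-connectivity to get $k$ edge-disjoint paths from the center to a vertex outside $B_i$ (which exists because $\ell\geq 2$), notes by the $L_1$ triangle inequality that each such path has length at least $r_i$ inside $B_i$, and uses disjointness of the balls to avoid overcharging any edge; this is the argument it then redeploys in hollowed-ball form in \autoref{lem:countingedgelengthradii}. You argue on the dual ``cut'' side: slice each ball by its sublevel sets $S_i(t)$, lower-bound each sliced cut by $k$, integrate over $t\in(0,r_i)$, and swap the order of summation via the penetration-depth functions $h_i$; disjointness enters only in the final pointwise inequality $\sum_i |h_i(u)-h_i(v)|\leq\|X_u-X_v\|_1$, which you verify by a clean case analysis on which open ball (if any) each endpoint falls into, using $\|X_{v_j}-X_{v_l}\|_1\geq r_j+r_l$. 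Both proofs are valid and are essentially max-flow versus min-cut renditions of the same fact; the co-area version makes it more transparent where each hypothesis is used ($k$-connectivity once per level set, $\ell\geq 2$ to ensure $S_i(t)\subsetneq V$, and disjointness in the last displayed inequality), while the paper's path version is terser and matches the form it needs later.
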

Since there are $k$ edge-disjoint paths connecting the center of each ball to the outside, by the triangle inequality, the sum of the $L_1$ length of the edges of the graph is at least $k$ times the sum of the radii of the balls. Note that if $\l=1$, i.e., if we have only one ball, the conclusion does not necessarily hold. This is because $B_1$ may contain all vertices of $G$.

Now, let us give a high-level overview of the proof of \autoref{thm:metricratio}
The main proof consists of two steps; in the rest of this section we upper-bound the numerator by a quantity defined on a geometric object which we call a sequence of bags of balls. Then, in the next section we lower-bound the denominator by  $\Omega(k)$ times the same quantity. 
The main result of this section is \autoref{lem:geometricsequence}, in which we construct a geometric sequence  of bags of $L_1$ balls, $\cB_1,\cB_2,\dots$, 
centered at the vertices of $G$ such that balls in each $\cB_i$ are
disjoint and their radii are exactly equal to $\br_i$, where $\br_1$, $\br_2$, $\dots$ form a $poly(k)$-decreasing geometric sequence.
We guarantee that the numerator is within a $\polylog(k,\lambda)$
factor of the sum of the radii of balls in the geometric sequence.

In \autoref{sec:charging}
we lower-bound the denominator, i.e., the sum of the $L_1$ lengths of the edges by $\Omega(k)$ times the sum of radii of the balls in our geometric sequence. 
At the heart of our dual proof in  \autoref{sec:charging}, we  use an inductive argument with no loss in $n$. 
We prove that under some technical conditions on $\cB_1$,$\cB_2$,$\dots,$ we can construct a set of label-disjoint ({\em hollowed}) balls  such that the sum of the radii of these (hollowed) balls is a constant factor of the sum of the radii of  balls in the given geometric sequence; by label-disjoint balls we mean that we can assign a set of nodes  $\cC(B)\subset \cT$ to each (hollowed) ball $B$, called the \conflictset~of $B$, such that for any two intersecting (hollowed) balls $B$ and $B'$, $\cC(B)\cap \cC(B')=\emptyset$. Furthermore, we use properties of the \expandertree~to ensure that for each (hollowed) ball $B$, there are $\Omega(k)$ edge-disjoint paths, supported on the vertices of $G$ in $\cC(B)$, crossing $B$.
%we can label each ball with a node $t$ of $\cT$ such that  any two intersecting balls are labeled nodes of $\cT$ which are not ancestor-descendant.

%\textcolor{red}{edited to here!}

In the rest of this section we construct a geometric sequence of bags of balls such that the sum of the radii of balls in the sequence  is at least  the numerator of \eqref{eq:metricratiogoal} up to $\polylog(k,1/\lambda)$ factors (see \autoref{lem:geometricsequence} for the final result of this section). First, in \autoref{subsec:disjointballs} we prove a technical lemma; we show that if the average projection of a set $F$ of edges  on $U$ is ``comparable'' to the average squared norm of these edges, then we can construct a large number of disjoint  balls centered at the endpoints of edges of $F$. We use this technical lemma to show that we can reduce the average effective resistance of a set $F$ of edges of any $k$-edge-connected graph to $\tilde{O}(1/k)$.
Then, in \autoref{subsec:bagsofballs} we  group these balls into several bags of balls.
%these balls into bags such that all balls in the same bag have equal radius. 
Finally, in \autoref{subsec:geometricballs}
we partition the edges of $G$ into parts that have similar projections onto $U$ and for each part we  
use the result of \autoref{subsec:bagsofballs} to find a family of bags of balls. Putting these families together we obtain a geometric sequence of families of bags of balls. 
%in the embedding, i.e., edges $e$ where $\norm{\Xb_e}$ and $U^e \Xb_e$ is equal up to constants.

%Roughly speaking, our goal is to prove the following geometric question: Suppose we have an $L_1$ embedding of $G$,  and we are given a geometric sequence of families of $L_1$ balls $\cB_1$, $\cB_2$, $\etc$ such that centers of all balls in all families are vertices of $G$, balls in each $\cB_i$ are
%disjoint and their radii are exactly equal to $\br_i$, where $\br_1$, $\br_2$, $\etc$ form a geometric sequence where the scale factor is $\poly(k)$. 

\begin{figure}\centering
	\begin{tikzpicture}
		\tikzstyle{every node} = [draw,fill,circle,minimum size=1.3mm,inner sep=0];
		\foreach \i/\x/\y  in {1/0/2, 2/3/0, 3/3/4, 4/7/2}{
			\node at (\x,\y) (v_\i) {};
			\foreach \l in {a,b,c}{
				\node at (\x,\y) (\l_\i_0) {};
			}
		}
		\draw (v_1) circle (1.5) (v_4) circle (2.3) (v_2) circle (1) (v_3) circle (1.8);
		\foreach \i/\r/\t in {1/1.5/45,2/1/60,3/1.8/100,4/2.3/280}{
			\foreach \j/\k/\a in {0/1/0, 1/2/-90, 2/3/0, 3/4/-90}{
				\node at ($(a_\i_\j)+.35*\r*(\t+\a:1)$) (a_\i_\k) {} edge [color=red,line width=1.2pt](a_\i_\j);
			}
		}
		\foreach \i/\r/\t in {1/1.5/150,2/1/170,3/1.8/200,4/2.3/30}{
			\foreach \j/\k/\a in {0/1/0, 1/2/-80, 2/3/10, 3/4/-70}{
				\node at ($(b_\i_\j)+.35*\r*(\t+\a:1)$) (b_\i_\k) {} edge [color=blue,line width=1.2pt](b_\i_\j);
			}
		}
		\foreach \i/\r/\t in {1/1.5/270,2/1/260,3/1.8/300,4/2.3/150}{
			\foreach \j/\k/\a in {0/1/0, 1/2/-60, 2/3/30, 3/4/-80}{
				\node at ($(c_\i_\j)+.35*\r*(\t+\a:1)$) (c_\i_\k) {} edge [color=brown,line width=1.2pt](c_\i_\j);
			}
		}
	\end{tikzpicture}
	\caption{Sets of $k$ edge-disjoint paths in disjoint $L_1$ balls.}
	\label{fig:disjointballs}
\end{figure}
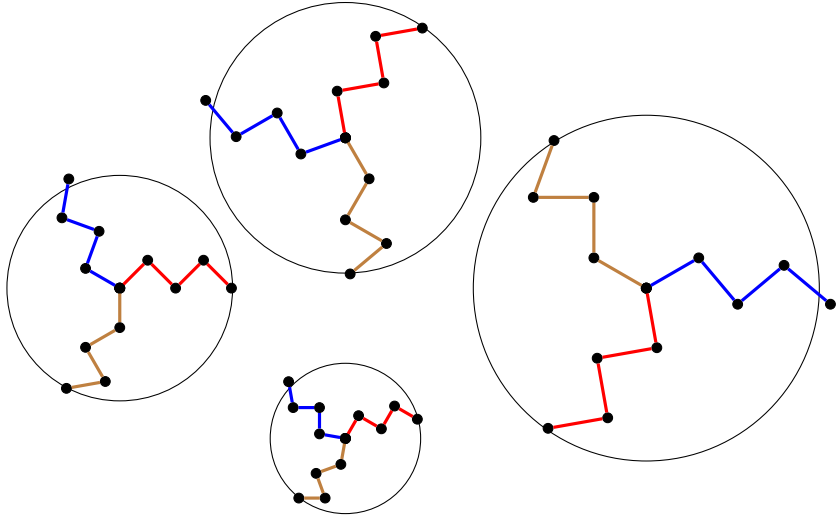

\subsection{Construction of Disjoint $L_1$ Balls}
\label{subsec:disjointballs}
%Let $Y=UX\in\R^{E\times E}$. Since $U$ 
In this section we prove the following proposition; although we do not directly use this proposition in the proof of our main technical theorem, we do use the main tool of the proof, \autoref{lem:orthogonal}, as one of the key components of the proof for the main technical theorem.
\begin{proposition}\label{prop:Favgeffres}
For any $k$-edge-connected graph $G=(V,E)$ and any set $F\subseteq E$, there is a PD shortcut matrix $D$ that reduces the average effective resistance of the edges of $F$ to $\tilde{O}(1/k)$.
\end{proposition}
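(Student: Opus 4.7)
The plan is to invoke duality and reduce the statement to a geometric claim about cut metrics. I would first derive the dual of the convex program computing the minimum, subject to $D\succ 0$ and $D\preceq_\square L_G$, of $\mE_{e\sim F}\reff_D(e)$. Mimicking the Lagrangian calculation that produced \autoref{lem:sdpdual} (with a single constraint on the uniform distribution over $F$ in place of the family indexed by $T$), the optimum equals, up to a constant factor,
\begin{equation*}
\sup_{X,\,U}\ \frac{\tfrac{1}{|F|}\Bigl(\sum_{e\in F}\langle U^e,\Xb_e\rangle\Bigr)^2}{\sum_{e\in E}\|\Xb_e\|^2},
\end{equation*}
where $X\in\{0,1\}^{h\times V}$ ranges over cut metrics and $U\in\R^{E\times h}$ over semiorthogonal matrices. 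Thus it suffices to upper bound this ratio by $\tilde{O}(1/k)$.

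Fix $X,U$ attaining the supremum. First, partition $F$ into $O(\log(nh))$ buckets according to the dyadic scale of $\|\Xb_e\|$ (equivalently $\|\Xb_e\|_1$, since $X$ is a cut metric). By pigeon-hole, one bucket $F'\subseteq F$ contributes at least a $1/O(\log)$ fraction of $\sum_{e\in F}\langle U^e,\Xb_e\rangle$; on $F'$ every $\|\Xb_e\|$ lies within a factor of two of some common value $r$. Replacing $F$ by $F'$ loses only a polylogarithmic factor in the final bound, so I may henceforth assume $\|\Xb_e\|\asymp r$ for all $e\in F$.

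Now I split into two cases. If the average inner product $\mE_{e\sim F}\langle U^e,\Xb_e\rangle$ is smaller than $r/\polylog(k)$, the numerator is already $\tilde{O}(r^2)$ and a direct comparison with the denominator $\sum_{e\in F}\|\Xb_e\|^2\asymp |F|r^2$ gives a ratio of $\tilde{O}(1/|F|)$, which is $\tilde{O}(1/k)$ once $|F|\geq k$; the case $|F|<k$ is trivial since $D=L_G$ already yields $\reff_D(e)\leq 1$ for every $e$, hence an average of $O(1/|F|)$. Otherwise, the hypothesis of the technical disjoint-ball lemma (the main result of this subsection) is met: the average projection of $\Xb_e$ onto the corresponding row $U^e$ is a polylogarithmic fraction of the average norm. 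Invoking this lemma produces a family of $N=\Omega(|F|/\polylog(k))$ pairwise disjoint $L_1$-balls centered at endpoints of $F$-edges, each of $L_1$-radius $\Omega(r^2)$.

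Plugging this family into \autoref{fact:ballradii} gives
\begin{equation*}
\sum_{e\in E}\|\Xb_e\|^2 \;\geq\; N\cdot\Omega(r^2)\cdot k \;\gtrsim\; \frac{|F|\,r^2\,k}{\polylog(k)},
\end{equation*}
while the numerator is bounded by $\tfrac{1}{|F|}(|F|r)^2 = |F|r^2$ using $\langle U^e,\Xb_e\rangle\leq \|U^e\|\cdot\|\Xb_e\|=\|\Xb_e\|\asymp r$. Taking the ratio yields $\tilde{O}(1/k)$. The main obstacle is the technical disjoint-ball lemma itself: producing $\Omega(|F|/\polylog)$ disjoint $L_1$-balls of radius $\Omega(r^2)$ forces one to exploit that rows of a semiorthogonal $U$ can only point in $|F|$ mutually orthogonal directions, and hence the cut-metric images $\Xb_e$ with large correlation to these rows must be geometrically well-spread in $L_1$. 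I expect this packing-style argument in a potentially very high-dimensional cut-metric space to be the bottleneck of the proof, and the polylogarithmic slack in the final bound to come precisely from the bucketing and the tightness of the $L_1$ packing step.
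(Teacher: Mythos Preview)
Your high-level plan (dualize, then disjoint balls, then \autoref{fact:ballradii}) matches the paper, but your case split is placed at the wrong threshold and Case~A contains an arithmetic slip that breaks the argument. The numerator of the dual is $\tfrac{1}{|F|}\bigl(\sum_{e\in F}\langle U^e,\Xb_e\rangle\bigr)^2 = |F|\cdot\bigl(\mE_{e\sim F}\langle U^e,\Xb_e\rangle\bigr)^2$, so under your hypothesis $\mE\langle\cdot\rangle\le r/\polylog(k)$ one gets numerator $\le |F|\,r^2/\polylog^2(k)$; dividing by $\sum_{e\in F}\|\Xb_e\|^2\asymp |F|\,r^2$ yields only $1/\polylog^2(k)$, not $\tilde O(1/|F|)$ as you claim. (Your ``$|F|<k$ is trivial'' remark has the same flavor of error: bounding each $\reff\le 1$ gives average $\le 1$, not $O(1/|F|)$.)

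The paper's fix is to drop the bucketing step entirely and place the threshold at $\alpha=\polylog(k)/k$: if $\bigl(\mE_{e\sim F}\langle U^e,\Xb_e\rangle\bigr)^2\le \alpha\cdot\mE_{e\sim F}\|\Xb_e\|^2$ then the dual ratio is at most $\alpha$ directly, since the numerator equals $|F|\,(\mE\langle\cdot\rangle)^2\le\alpha\sum_{e\in F}\|\Xb_e\|^2\le\alpha\sum_{e\in E}\|\Xb_e\|^2$. In the complementary case one feeds this small $\alpha\approx 1/k$ into \autoref{lem:orthogonal}. Its conclusion is not ``$\Omega(|F|/\polylog)$ balls of radius $\Omega(r^2)$'' but rather $b$ balls of a common radius $r$ with $b\cdot r\ge \alpha^\eps\,\Upsilon\,|F|/C_1(\eps)$ and only $b\ge\alpha|F|/C_1(\eps)$ balls (so roughly $|F|/k$, not $|F|/\polylog$). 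The whole purpose of the free parameter $\eps$ is to keep $\alpha^\eps$ only polylogarithmically small even when $\alpha\approx 1/k$: taking $\eps=\Theta(\log\log k/\log k)$ gives $\alpha^\eps/C_1(\eps)=1/\polylog(k)$, whence $k\cdot b\cdot r\ge |F|\,\Upsilon\cdot k/\polylog(k)$ and the ratio is $\polylog(k)/k$. Your bucketing is therefore unnecessary here and would in any case introduce a $\log h$ (hence $n$-dependent) loss that the $\tilde O(1/k)$ statement does not permit.
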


By \autoref{lem:sdpdual} it is enough to show that for any $X\in\{0,1\}^{h\times V}$ and any semiorthogonal matrix $U\in \R^{E\times h}$,
$$  \frac{\frac{1}{|F|}\left(\sum_{e\in F} \langle U^e,\Xb_e\rangle\right)^2}{\sum_{e\in E} \norm{\Xb_e}^2} = \frac{\left(\mE_{e\sim F} \langle U^e,\Xb_e\rangle\right)^2}{\frac1{|F|}\sum_{e\in E} \norm{\Xb_e}^2}\leq \tilde{O}(1/k).$$
Let $Y=UX$ and $\Yb=Y\X=UX\X$. Note that since $U$ is semiorthogonal, $\norm{\Yb_e}^2 \leq \norm{\Xb_e}^2$ for all $e$. Without loss of generality assume that 
$$ \frac{\left(\mE_{e\sim F} \Yb_{e,e}\right)^2}{\mE_{e\sim F} \norm{\Yb_e}^2} \geq \alpha,$$
for $\alpha = \polylog(k)/k$; otherwise we are done. 
In the following lemma we show that assuming the above inequality we can construct $b$ disjoint $L_2^2$ balls of radius $r$ centered at the vertices of the endpoints of edges of $F$ such that
$$ r\cdot b \geq \frac{\alpha^\eps}{\poly(\eps)} \cdot \left(\mE_{e\sim F} \Yb_{e,e}\right)^2 |F|.$$
On the other hand, since these balls are disjoint, by \autoref{fact:ballradii},
$$ r\cdot b \leq \frac1k \sum_{e\in E} \norm{\Xb_e}^2.$$
Note that we really need to apply \autoref{fact:ballradii} to balls in the space of $X_v$'s, since $Y_v$'s do not necessarily satisfy the triangle inequality. However, given disjoint balls centered around $Y_v$'s, one can take the same balls around the corresponding $X_v$'s and they will remain disjoint, since $U$, the mapping from $X_v$ to $Y_v$, is a contraction.

Now, the above proposition simply follows by the above two inequalities for $\eps=\log k/\log\log k$.

%The following is the main result of this subsection.
\begin{lemma}
\label{lem:orthogonal}
Given  $F\subseteq E$ and a mapping $Y\in\R^{E\times V}$ such that
\begin{eqnarray}
\Upsilon:=\Big( \uE{e\sim F} \Yb_{e,e}\Big)^2 \geq %\aupp\\
%\geq \alpha \cdot 
\alpha\cdot \uE{e\sim F} \norm{\Yb_e}_2^2,
\label{eq:averagebound}
 \end{eqnarray}
 for some $\alpha > 0$, for any $0<\eps<1/2$, there are $b$ disjoint $L_2^2$ balls $B_1,\ldots,B_b$ with radius $r$ such that the center of each ball is an endpoint of an edge in $F$, $b\geq  \alpha |F|/C_1(\eps)$, and 
$$ r\cdot b\geq \frac{\alpha^\eps \cdot\Upsilon\cdot |F|}{C_1(\eps)},$$
where $C_1(\eps)$ is a polynomial function of $1/\eps$.
\end{lemma}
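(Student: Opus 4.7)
The plan is a multi-scale ball-packing argument with three stages: first reduce to a sub-family of $F$ on which the diagonal projections $\Yb_{e,e}$ are aligned in sign and in magnitude; then greedily pack $L_2^2$ balls at a single chosen scale; and finally bound the packing size via an overlap inequality powered by the hypothesis \eqref{eq:averagebound}.

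First I would restrict to $F^+\subseteq F$ consisting of edges with $\Yb_{e,e}\geq 0$, losing at most a factor of $2$ in $|F|$ while preserving \eqref{eq:averagebound} up to the same constant. Setting $h_e:=\Yb_{e,e}\geq 0$, I dyadically bucket the edges of $F^+$ by the magnitude of $h_e$ on a grid with ratio depending on $\eps$. Since the extreme values of $h_e$ are polynomially bounded by $\mE\|\Yb_e\|^2$ and $1/\alpha$, the number of non-trivial buckets is $O(\eps^{-1}\log(1/\alpha))$, so by averaging one bucket $F'\subseteq F^+$ retains an $\Omega_\eps(1)$ fraction of $\sum_{F^+}h_e$. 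On $F'$ every $h_e$ is of a single scale $s$ with $s\gtrsim \sqrt{\Upsilon}$ and $|F'|\cdot s\gtrsim \sqrt{\Upsilon}\,|F|/\polylog(1/\eps)$. A parallel bucketing (or a Markov cut-off) in $\|\Yb_e\|^2$ lets me additionally arrange $\|\Yb_e\|^2\lesssim s^2/\alpha^{1-\eps}$ uniformly on $F'$.

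Next, for each $e\in F'$ I designate the endpoint $c_e$ with the larger $e$-coordinate, so that $Y_{c_e,e}$ lies at least $s/2$ above that of the other endpoint. I set the target squared radius $r=c\cdot s^2/\alpha^{1-\eps}$ and run a greedy packing of $L_2^2$ balls $B(c_{e},r)$ among the candidates $\{c_e\}_{e\in F'}$: at each step add an uneliminated $B(c_e,r)$ to the output and remove every candidate whose center lies in $B(c_e,2r)$ (in $L_2^2$-distance). The output is a set of $b$ pairwise-disjoint $L_2^2$ balls centered at endpoints of $F\subseteq E$ of a common radius $r$; these balls remain disjoint when pulled back to the $X$-space via the contraction $U$, which is what the downstream application through \autoref{fact:ballradii} requires.

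The main obstacle, and the principal technical step, is the lower bound $b\gtrsim \alpha |F|/C_1(\eps)$. For each ball $B_i$ produced by the greedy procedure, I let $F_i'\subseteq F'$ be the set of candidates it eliminates and would bound $|F_i'|$ from above by charging every such $e$ with its squared-norm cost $\|\Yb_e\|^2$: each $c_e\in B(c_{e_i},2r)$ satisfies $\|Y_{c_e}-Y_{c_{e_i}}\|^2\leq 2r$, and since the $e$-coordinate of $Y_{c_e}$ differs from that of the other endpoint by $\geq s/2$, Cauchy--Schwarz in the coordinates $\{e:e\in F_i'\}$ together with the uniform bound $\|\Yb_e\|^2\lesssim s^2/\alpha^{1-\eps}$ on $F'$ yields $|F_i'|\cdot s^2 \lesssim \sum_{e\in F_i'}\|\Yb_e\|^2 \cdot \alpha^{1-\eps}/\polylog$. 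Summing over clusters and invoking the global energy budget $\sum_{e\in F'}\|\Yb_e\|^2 \leq \Upsilon |F|/\alpha$ gives $\sum_i|F_i'|\leq C_1(\eps)\cdot |F|/\alpha\cdot (\text{polylog losses})^{-1}$, and since the $F_i'$ cover $F'$, this forces $b=|\{i\}|\gtrsim \alpha|F|/C_1(\eps)$. Combining with $r\cdot b \gtrsim (s^2/\alpha^{1-\eps})\cdot \alpha|F|/C_1(\eps)\gtrsim \alpha^{\eps}\Upsilon|F|/C_1(\eps)$ completes the proof; calibrating the dyadic grid's fineness to $\eps$ determines $C_1(\eps)$ as a polynomial in $1/\eps$.
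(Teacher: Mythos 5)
Your overall architecture (bucket by scale, greedily pack balls, bound the overlap) is reasonable, and the pre-processing steps (flipping edge orientations so $\Yb_{e,e}\geq 0$, dyadic bucketing in $\Yb_{e,e}$ with ratio depending on $\eps$, Markov cutoff on $\norm{\Yb_e}^2$) are sound and broadly analogous to what the paper's proof does in spirit. However, the proof has a genuine gap at the step that is supposed to do the work: the bound on the cluster sizes $|F_i'|$.

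The step you call ``Cauchy--Schwarz in the coordinates $\{e:e\in F_i'\}$'' does not establish what you need, and indeed as written the inequality $|F_i'|\cdot s^2 \lesssim \sum_{e\in F_i'}\norm{\Yb_e}^2\cdot \alpha^{1-\eps}/\polylog$ collapses into a triviality once you plug in the uniform upper bound $\norm{\Yb_e}^2\lesssim s^2/\alpha^{1-\eps}$: the right-hand side becomes $\lesssim |F_i'|\cdot s^2/\polylog$, which is \emph{smaller} than the left-hand side, so the inequality cannot hold unless the polylog is $O(1)$. More fundamentally, the fact that $\Yb_{e,e}=(Y_{c_e}-Y_{c'_e})_e\geq s$ only constrains the \emph{difference} of the two endpoints in coordinate $e$; it places no constraint on the location of $Y_{c_e}$ alone, and in fact for your chosen radius $r\asymp s^2/\alpha^{1-\eps}\gg s^2$ both endpoints of an edge can easily lie inside the same $L_2^2$ ball of radius $r$. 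So many candidate centers $Y_{c_e}$ with $e\in F_i'$ can cluster in one ball with no contradiction arising from coordinate-wise considerations. Finally, even granting your claimed bound, the deduction ``$\sum_i |F_i'|\leq \cdot$ and the $F_i'$ cover $F'$, hence $b\gtrsim \alpha|F|$'' is a non-sequitur: an upper bound on $\sum_i|F_i'|$ combined with a lower bound $\sum_i|F_i'|\geq|F'|$ constrains $\alpha$, not $b$; to lower-bound $b$ you would need an upper bound on $\max_i|F_i'|$.

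The reason the paper's proof succeeds where a per-coordinate argument does not is that it is a genuinely \emph{global rank} argument: after greedy packing with $b$ balls, the differences $Y_u-Y_v$ can be approximated by differences of the $b$ chosen centers, giving a rank-$(b-1)$ matrix $C$ with $\norm{\Yb-C}_F^2\lesssim r|F|$; the Hoffman--Wielandt inequality then forces the tail singular values $\sigma_b,\ldots,\sigma_{|F|}$ of $\Yb$ to be small, and the reverse--Cauchy--Schwarz hypothesis (\eqref{eq:averagebound}, rewritten via $(\frac1{|F|}\sum\sigma_i)^2\geq\frac{\alpha}{|F|}\sum\sigma_i^2$) rules out such a compressed spectrum. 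None of this structure survives bucketing the edges into single-scale sub-families and treating clusters independently; the constraint \eqref{eq:averagebound} is inherently a statement about the full spectrum of $\Yb$. If you want to salvage your approach, the clustering-to-low-rank step (and something playing the role of Hoffman--Wielandt) seems unavoidable.
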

Before getting to the proof of the lemma, let us give an intuitive description of the statement of the lemma.
The extreme case is for $\alpha\approx 1$. Observe that the inequality   \eqref{eq:averagebound} enforces a very strong assumption on the mapping $\Yb$.  Since for any edge $e$, $\Yb_{e,e} \leq \norm{\Yb_e}$, and $\alpha\approx 1$, the  following two conditions must hold for $\Yb$: 
\begin{enumerate}[i)]
\item For most edges $e\in F$, $\Yb_{e,e}\approx \norm{\Yb_e}$,
\item For most pairs of edges $e,f\in F$, $\norm{\Yb_e}\approx \norm{\Yb_f}$.
\end{enumerate}
The above two conditions essentially imply that the vectors $\{\Yb_e\}_{e\in F}$ form an orthonormal basis up to normalizing the size of the vectors. It is an exercise to see that in this case one can select $\Omega(|F|)$ many $L_2^2$ balls of radius $\Omega(\Upsilon)$ around the endpoints of the edges in $F$; one can show that greedily picking balls that do not intersect each other works.

Our proof can be interpreted as a robust version of this argument.

\begin{proofof}{\autoref{lem:orthogonal}}
For a radius $r>0$,   run the following greedy algorithm. Scan the endpoints of the edges in an arbitrary order;  for each point $Y_u$, if the $L_2^2$ ball $B(Y_u,r)$ doesn't touch the balls that we have already selected, select $B(Y_u,r)$. Suppose we manage to select $b$ balls. We say the algorithm succeeds if both of the lemma's conclusions are satisfied. In the rest of the proof we show that this algorithm always succeeds for some value of $r$.

%Perhaps after reording the rows of matrix $Y$, we assume that $1\leq \pi(e)\leq |F|$ for all $e\in F$. Let $A\in\R^{m\times |F|}$ where  for each $1\leq i\leq |F|$, 
%the $i$-th column of $A$ is $Y_{\pi^{-1}(i)}$.
%Note that by definition of $A$, $\trace(A) = \sum_{e\in F} \Ie Y_e.$

Without loss of generality, in the rest of the proof we drop the columns of $\Yb$ corresponding to edges $e\notin F$ and their corresponding rows and we assume $Y\in\R^{F\times F}.$ Note that by removing the rows, we are decreasing $\norm{\Yb}_F^2$, but this only weakens the assumption of the lemma.
Let $\sigma_1\geq \sigma_2\geq \dots\geq \sigma_{|F|}$ be the singular values of $\Yb$. 
We can rewrite the assumption of the lemma as follows:
\begin{equation}
	\label{eq:ratioalpha}
\left(\frac1{|F|}\sum_i \sigma_i\right)^2\geq \left(\frac{\trace(\Yb)}{|F|} \right)^2=\left(\mE_{e\sim F}\Yb_{e,e}\right)^2
\geq\alpha\cdot\mE_{e\sim F}\norm{\Yb_e}^2 = \frac{\alpha}{|F|} \norm{\Yb}_F^2 = \frac{\alpha}{|F|}\sum_{i=1}^{|F|} \sigma_i^2.
\end{equation}
The first inequality follows by \autoref{lem:matrixtrace}.
Note that, for $\alpha = 1$, the LHS is always less than or equal to the RHS by the Cauchy-Schwarz inequality with equality happening only when $\sigma_1=\dots=\sigma_{|F|}$. So, for large $\alpha$ the above inequality can be seen as a reverse Cauchy-Schwarz inequality.

In the next claim, we show that if the above algorithm finds a ``small number'' $b$ of balls for a choice of $r$, this means that $\sigma_{b},\dots,\sigma_{|F|}$ are significantly smaller than $\sigma_1,\dots,\sigma_{b-1}$. In the succeeding claim we use the above reverse Cauchy-Schwarz inequality to show that this is impossible.
\begin{claim}
\label{eq:greedyfailure}
Given $r>0$, suppose that the greedy algorithm finds $b$ disjoint balls of radius $r$. Then
$$ r \geq \frac{1}{16|F|} \sum_{i=b}^{|F|} \sigma_i^2.$$ 
\end{claim}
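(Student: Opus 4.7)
\medskip\noindent\textbf{Proof plan for \autoref{eq:greedyfailure}.} The plan is to establish the inequality by constructing an explicit low-rank approximation of $\Yb$ whose Frobenius error is controlled by $r$, and then invoking a singular-value inequality such as Eckart--Young or Hoffman--Wielandt (\autoref{thm:hoffmanineq}). First I will unpack what ``the greedy algorithm selects $b$ balls'' says geometrically. An $L_2^2$ ball of radius $r$ around $Y_u$ is exactly an $L_2$ ball of radius $\sqrt{r}$; two such balls around $Y_u, Y_v$ are disjoint iff $\norm{Y_u-Y_v}^2 \geq 4r$. Hence for every endpoint $Y_u$ of an edge in $F$, either $Y_u$ itself was selected, or there is a previously selected center $Y_{v'(u)}$ with $\norm{Y_u - Y_{v'(u)}}^2 < 4r$. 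In either case we get a map $u \mapsto v'(u)$ from endpoints to selected centers with $\norm{Y_u-Y_{v'(u)}}^2 < 4r$ and with image of size exactly $b$.

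Next I will build the approximation. For each edge $f=\{u,w\}\in F$, recall $\Yb_f = Y_u - Y_w$. I define $Z\in\R^{F\times F}$ by $Z_f := Y_{v'(u)} - Y_{v'(w)}$. Every column of $Z$ is a difference of two elements of the $b$-element set $\{Y_{v'} : v' \text{ selected}\}$, so all columns of $Z$ lie in the span of the $b-1$ vectors $\{Y_{v'_i}-Y_{v'_1}\}_{i=2}^{b}$. Thus $\rank(Z) \leq b-1$. On the other hand, using the elementary bound $\norm{a-b}^2 \leq 2\norm{a}^2 + 2\norm{b}^2$,
$$ \norm{\Yb_f - Z_f}^2 = \norm{(Y_u - Y_{v'(u)}) - (Y_w - Y_{v'(w)})}^2 \leq 2\norm{Y_u-Y_{v'(u)}}^2 + 2\norm{Y_w-Y_{v'(w)}}^2 < 16 r. $$
Summing over $f\in F$ gives $\norm{\Yb - Z}_F^2 < 16 r \cdot |F|$.

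Finally, I will conclude by the low-rank approximation bound: since $\rank(Z) \leq b-1$, the Hoffman--Wielandt inequality (applied to the squared singular values, using that $\sigma_i(Z)=0$ for $i\geq b$) yields
$$ \sum_{i=b}^{|F|} \sigma_i(\Yb)^2 \;\leq\; \sum_{i=1}^{|F|} \bigl(\sigma_i(\Yb) - \sigma_i(Z)\bigr)^2 \;\leq\; \norm{\Yb - Z}_F^2 \;<\; 16 r \cdot |F|, $$
which rearranges to the claim $r \geq \tfrac{1}{16|F|}\sum_{i=b}^{|F|}\sigma_i^2$. The only slightly delicate point is getting the right index $b$ (rather than $b+1$); this is handled by the observation that the columns of $Z$ are \emph{differences} of $b$ vectors and hence span a subspace of dimension at most $b-1$, not $b$. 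Everything else is routine: the geometric meaning of the greedy stopping, the triangle-like bound on $\norm{\Yb-Z}_F^2$, and a standard singular value perturbation inequality.
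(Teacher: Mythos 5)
Your proof is correct and takes essentially the same route as the paper: you build a rank-$(b-1)$ matrix (your $Z$, the paper's $C$) by replacing each endpoint with a nearby selected center (the paper uses the closest center $c(v)$; you use any center within squared $L_2$ distance $4r$ — either works), bound $\norm{\Yb - Z}_F^2 < 16r|F|$ with the same elementary inequality, and invoke Hoffman--Wielandt. The handling of the index $b$ versus $b-1$ via the observation that the columns of $Z$ are differences of $b$ vectors matches the paper exactly.
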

\begin{proof}
We construct a low-rank matrix $C\in\R^{F\times F}$. Then, we use
 \autoref{thm:hoffmanineq} to prove the claim. 
Let $Y_{w_1},\ldots,Y_{w_b}$ be the centers of the chosen balls. Then, for any endpoint $v$
of an edge in $F$, let $c(v)$ be the closest center to $Y_v$, i.e., 
$$ c(v):=\argmin_{w_i} \norm{Y_{w_i}- Y_v}^2_2$$
We construct a matrix $C\in\R^{F\times F}$ such that the $e$-th column of $C$ is defined as follows: say the $\{u,v\}$-th column of $\Yb$ is $Y_{u}-Y_{v}$ for $\{u,v\}\in F$; we let the $\{u,v\}$-th column of $C$ be $Y_{c(u)}-Y_{c(v)}$.
%Recall that $i$-th column of $A$ is $Y_{u}-Y_{v}$ for $\{u,v\}=\pi^{-1}(i)$. % $\{u_1,v_1\},\ldots,$ is an ordering of edges of $F$.  For each $1\leq i\leq |F|$, we let $\cen(Y_{u_i})-\cen(Y_{v_i})$ be the $i$-th column of $C$.
By definition,  $\rank(C)\leq b-1$, since $C$'s columns are a subset of the differences between $b$ points.

%\begin{equation}
%\label{eq:centerdist}
% \min_{1\leq i\leq b} \norm{Y_v-Y_i}_2^2 \leq 2r.
% \end{equation}
%
%%Let $B\in\R^{n\times|F|}$ where the $i$-th column of $B$ is $\Y_{\sigma^{-1}(i)}$, and let
%has a very low rank, $\rank(C)\leq b$. 
First, notice that
\begin{eqnarray*} \norm{\Yb-C}_F^2 &=&\sum_{\{u,v\}\in F} \norm{(Y_u-Y_v) - (Y_{c(u)}-Y_{c(v)})}_2^2\\
&\leq & \sum_{\{u,v\}\in F} (\norm{Y_u-Y_{c(u)}}_2 + \norm{Y_v - Y_{c(v)}}_2)^2\\
&\leq&  \sum_{\{u,v\}\in F} 2\norm{Y_u- Y_{c(u)}}_2^2 + 2\norm{Y_v-Y_{c(v)}}_2^2 \leq
16r\cdot |F|,
\end{eqnarray*}
where the first inequality follows by the triangle inequality and the last inequality follows by the
definition of greedy algorithm; in particular, for any point $v$, in the worst case there is a point $p$ in the $L_2^2$ ball about $c(v)$ such that $\norm{p-Y_v}^2< r$, so
$$(\norm{Y_{v}-Y_{c(v)}}_2)^2 \leq (\norm{Y_{v}-p} + \norm{Y_{c(v)}-p})^2 \leq (\sqrt{r}+\sqrt{r})^2\leq 4r.$$ 
Now by \autoref{thm:hoffmanineq},
$$ 16r\cdot |F| \geq \norm{\Yb-C}_F^2 \geq \sum_{i=b}^{|F|} \sigma_i^2.$$
where the second inequality uses the fact that $\rank(C)\leq b-1$. 
\end{proof}

All we need to show is that there is a value of $b\geq \alpha|F|/C_1(\eps)$ such that  $\frac{b}{16|F|} \sum_{i=b}^{|F|}\sigma_i^2 \geq \frac{\alpha^\eps \Upsilon\cdot |F|}{C_1(\eps)}$. 
%This is proved in following claim; because the proof of the next claim is quite technical we first prove a simpler bound $r\cot b \gtrsim \sqrt{\alpha}\Upsilon/|F|$. %The reader may skip the proof of the proof of the following 

\begin{claim}
There is a universal  function $C_1(\eps)$ that is polynomial in $1/\eps$  such that for any $0<\eps \leq 1$ %and $b_0:=\alpha|F|/C_1(\eps)$, 
there is an integer $b\geq \alpha|F|/C_1(\eps)$  such that 
$$ \frac{b}{16|F|} \sum_{i=b}^{|F|} \sigma_i^2 \geq \frac{\alpha^
\eps \cdot \Upsilon\cdot |F|}{C_1(\eps)}.$$
\end{claim}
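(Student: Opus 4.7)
The plan is to argue by contradiction using a dyadic decomposition of the index range together with Cauchy--Schwarz. Set $b_0=\lceil \alpha N/C_1(\eps)\rceil$ (where $N=|F|$), and for $j=0,1,\dots$ let $b_j=2^j b_0$, up to the largest $J$ with $b_J\leq N$. Suppose toward contradiction that the claim fails at every $b_j$, i.e.
$$\sum_{i\geq b_j}\sigma_i^2 < \frac{16\alpha^\eps \Upsilon N^2}{C_1(\eps)\,b_j} \qquad \text{for all } j=0,\dots,J.$$
The two inputs I will use from \eqref{eq:ratioalpha} are $\sum_i\sigma_i\geq N\sqrt{\Upsilon}$ and $\sum_i\sigma_i^2\leq N\Upsilon/\alpha$ (in particular $\alpha\leq 1$ by Cauchy--Schwarz).

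I plan to upper bound $\sum_i\sigma_i$ in three pieces. First, Cauchy--Schwarz on $\{i<b_0\}$ gives $\sum_{i<b_0}\sigma_i\leq \sqrt{b_0\cdot N\Upsilon/\alpha}\leq 2N\sqrt{\Upsilon/C_1(\eps)}$. Second, for each dyadic window $[b_j,b_{j+1})$, Cauchy--Schwarz and the failure hypothesis yield
$$\sum_{i\in[b_j,b_{j+1})}\sigma_i \leq \sqrt{b_j\sum_{i\geq b_j}\sigma_i^2}<4N\sqrt{\alpha^\eps\Upsilon/C_1(\eps)},$$
a bound independent of $j$. Summing this over the $J=O(\log(C_1(\eps)/\alpha))$ dyadic windows and adding the $i<b_0$ contribution produces
$$\sum_i\sigma_i \leq N\sqrt{\Upsilon/C_1(\eps)}\cdot\bigl(2 + O(\alpha^{\eps/2}\log(C_1(\eps)/\alpha))\bigr).$$
Comparing with the lower bound $\sum_i\sigma_i\geq N\sqrt{\Upsilon}$ and dividing through gives
$$\sqrt{C_1(\eps)}\lesssim 2 + \alpha^{\eps/2}\log(C_1(\eps)/\alpha).$$

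To close the argument I use the elementary fact $\sup_{\alpha\in(0,1]}\alpha^{\eps/2}\log(1/\alpha)=\tfrac{2}{e\eps}$ (the maximum occurs at $\alpha=e^{-2/\eps}$). Hence the right-hand side is $O(\log C_1(\eps)/\eps)$, which is strictly dominated by $\sqrt{C_1(\eps)}$ once $C_1(\eps)$ is a sufficiently large polynomial in $1/\eps$ (any $C_1(\eps)$ of order $1/\eps^c$ for a large enough absolute constant $c$ works). This contradiction produces a valid $b_j$ satisfying both conclusions of the claim.

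The main obstacle is the $\log(1/\alpha)$ factor coming from the $\log(N/b_0)=\log(C_1(\eps)/\alpha)$ dyadic windows: this is exactly the reason the target exponent has to be $\alpha^\eps$ rather than the more natural $\alpha^{1}$, since the $\alpha^{\eps/2}$ saving from weakening the conclusion is what absorbs $\log(1/\alpha)$ uniformly in $\alpha\in(0,1]$. Any fixed positive $\eps$ suffices, but taking $\eps\to 0$ forces $C_1(\eps)\to\infty$ polynomially, consistent with the claim's statement.
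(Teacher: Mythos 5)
Your proof is correct, but it takes a genuinely different route from the paper's. The paper lower-bounds $z=\max_{b\geq b_0}\frac{b}{16|F|}\sum_{i\geq b}\sigma_i^2$ directly: it sums $b^\eps$-reweighted versions of $z$'s defining inequality over all $b\geq b_0$, approximates these sums by integrals (which is where the $(1+\eps)$-exponents come from), regularizes by adding $\beta$ times the Frobenius bound \eqref{eq:frobeniusz}, applies Cauchy--Schwarz with the weights $\beta+\frac{[(i-1)^{1+\eps}-(b_0-1)^{1+\eps}]_+}{32|F|^\eps}$, and then optimizes the free parameters $\beta$ and $b_0$ to extract $\alpha^\eps$. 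You instead argue by contradiction across roughly $\log_2(C_1(\eps)/\alpha)$ dyadic scales $b_j=2^j b_0$, with one Cauchy--Schwarz per window, and absorb the resulting logarithmic multiplicity into the $\alpha^{\eps/2}$ saving via the identity $\sup_{\alpha\in(0,1]}\alpha^{\eps/2}\ln(1/\alpha)=2/(e\eps)$; comparing against $\sum_i\sigma_i\geq|F|\sqrt\Upsilon$ then yields $\sqrt{C_1(\eps)}\lesssim \log C_1(\eps)+1/\eps$, a contradiction once $C_1(\eps)$ is a large enough polynomial in $1/\eps$. Both arguments pinpoint the same phenomenon --- one must pay $\alpha^\eps$ rather than $\alpha$ to get a bound uniform over $\alpha\in(0,1]$ --- but your dyadic decomposition sidesteps the integral estimates and the two-parameter optimization, so it is shorter and more elementary. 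The explicit constants differ (the paper gets $C_1(\eps)=(192/\eps+64/\eps^2)^{1+\eps}$, yours is of the form $A/\eps^4$ for a moderate absolute constant $A$), but either form is polynomial in $1/\eps$, which is all the claim demands.
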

\begin{proof}
Let us first prove the claim for $\eps=1$; this special case reveals the meat of the argument.
We show the claim holds for $b=\alpha|F|/4$.
%First, if $\sum_{i=b_0+1}^{|F|} \sigma_i^2 \gtrsim \Upsilon/|F|^2$ we are done.
%Otherwise, 
By the Cauchy-Schwarz inequality,
\begin{eqnarray*}
\left(\frac{1}{|F|}\sum_{i=1}^{|F|}\sigma_i\right)^2 &\leq& 2\left(\frac{1}{|F|}\sum_{i=1}^{b-1}	\sigma_i\right)^2 + 2\left(\frac{1}{|F|}\sum_{i=b}^{|F|}\sigma_i\right)^2\\
&\leq & \frac{2b}{|F|^2} \sum_{i=1}^{b-1} \sigma_i^2  + \frac{2}{|F|}\sum_{i=b}^{|F|}\sigma_i^2\\
& = & \frac{\alpha}{2|F|}\sum_{i=1}^{b-1} \sigma_i^2 + \frac{8b}{\alpha |F|^2}\sum_{i=b}^{|F|}\sigma_i^2 \leq \frac12\left(\frac1{|F|}\sum_{i=1}^{|F|}\sigma_i\right)^2 + \frac{8b}{\alpha|F|^2}\sum_{i=b}^{|F|}\sigma_i^2.
\end{eqnarray*}
where the equality uses the definition of $b$ and the last inequality uses \eqref{eq:ratioalpha}. Therefore, 
$$\frac{\Upsilon}{2} \leq \frac12\left(\frac1{|F|}\sum_{i=1}^{|F|}\sigma_i\right)^2 \leq \frac{8b}{\alpha|F|^2}\sum_{i=b}^{|F|}\sigma_i^2,$$
where the first inequality uses another application of \eqref{eq:ratioalpha}. This proves the claim for $\eps=1$ and $C_1(\eps) \leq 1/256$. 

Now, we prove the claim for $\eps <1$. Let $b_0 \geq \frac{\alpha|F|}{C_1(\eps)}$ be an integer that we fix later. Let $z:=\max_{b\geq b_0} \frac{b}{16|F|} \sum_{i=b}^{|F|} \sigma_i^2.$ To prove the claim, it is enough to lower bound $z$.
 First, by the definition of $z$, for all $b\geq b_0$,
\begin{equation}
\label{eq:zlowerbounds}
 \frac{z}{|F|^{\eps}\cdot b^{1-\eps}} \geq \frac{b^{\eps}}{16 |F|^{1+\eps}} \sum_{i=b}^{|F|} \sigma_i^2.
 \end{equation}
%Note that the above inequality does not bound the first $b_0$ eigenvalues of $\Yb$. However, 
On the other hand, by \eqref{eq:ratioalpha},
%use the Frobenius norm of $\Yb$ to bound those eigenvalues.
\begin{equation}
\label{eq:frobeniusz}
\frac{1}{|F|}\cdot \sum_{i=1}^{|F|} \sigma_i^2  \leq \Upsilon/\alpha.
%= \frac{\norm{\Yb}_F^2}{|F|} =  \uE{e\sim F} \norm{\Yb_e}^2_2 \leq \frac{1}{\alpha}  \Big(\uE{e\sim F}  \Yb_{e,e}\Big)^2 = \Upsilon/\alpha.
\end{equation}
%where the first inequality uses \eqref{eq:ratioalpha}.

%The inequality follows by  \eqref{eq:averagebound}.

Let $\beta>0$  be a parameter that we fix later. Summing up \eqref{eq:zlowerbounds} for all $b_0 \leq b\leq |F|$ and $\beta$ times \eqref{eq:frobeniusz}, we get
$$
 \sum_{i=1}^{|F|} \Big(\beta+ \frac{\int_{x=b_0-1}^{i} (x-1)^\eps dx}{16 |F|^{\eps}}\Big) \cdot \frac{\sigma_i^2}{|F|} \leq \frac{\beta\cdot \Upsilon}{\alpha} + \frac{z}{|F|^{\eps}} \int_{x=b_0}^{|F|} \frac{dx}{(x-1)^{1-\eps}}.
$$
Note that the integral on the LHS lower-bounds $\sum_{b_0\leq b \leq i-1} b^\eps$ and the integral on the RHS upper-bounds $\sum_{b_0 \leq b < |F|} 1/b^{1-\eps}$.
So,
\begin{equation}
\label{eq:sumsigmas}
 \sum_{i=1}^{|F|} \Big(\beta+ \frac{[(i-1)^{1+\eps} - (b_0-1)^{1+\eps}]_+}{32 |F|^{\eps}}\Big) \cdot \frac{\sigma_i^2}{|F|} \leq \frac{\beta\cdot \Upsilon}{\alpha} + \frac{z}{|F|^{\eps}} \cdot \frac{(|F|-1)^\eps}{\eps} \leq \frac{\beta\cdot \Upsilon}{\alpha} + \frac{z}{\eps}.
\end{equation}
where for $x\in\R$, $[x]_+=\max\{x,0\}$.

%Now, we use equation \eqref{eq:ratioalpha} to lower-bound the singular values of $A$.
%On the other hand, by \eqref{eq:ratioalpha},
%$ \frac{1}{|F|}\cdot \sum_{i=1}^{|F|} \sigma_i 
%\geq \sqrt{\Upsilon}$.
Therefore, by \eqref{eq:ratioalpha} and Cauchy-Schwarz,
\begin{eqnarray}
\Upsilon \leq \Big(\frac{1}{|F|}\cdot \sum_{i=1}^{|F|}\sigma_i\Big)^2&\leq&  \Big(\sum_{i=1}^{|F|} \Big(\beta+ \frac{[(i-1)^{1+\eps}-(b_0-1)^{1+\eps}]_+}{32|F|^{\eps}}\Big) \frac{\sigma_i^2}{|F|}\Big) \cdot \Big(\sum_{i=1}^{|F|} \frac{1/|F|}{\beta+\frac{[(i-1)^{1+\eps}-(b_0-1)^{1+\eps}]_+}{32|F|^{\eps}}}\Big) \nonumber \\
&\leq &
\Big(\frac{\beta\cdot \Upsilon}{\alpha} + \frac{z}{\eps}\Big) \cdot \frac{32(3+1/\eps)}{\beta^{\frac{\eps}{1+\eps}} |F|^{\frac{1}{1+\eps}}}.
\label{eq:betaFepshard}
\end{eqnarray}
To see the last inequality we need to do some algebra. The first term on the RHS follows from \eqref{eq:sumsigmas}.
We obtain the second term in the last inequality by choosing $b_0 = 1+\beta^{\frac{1}{1+\eps}}|F|^{\frac{\eps}{1+\eps}}$; later we will choose $\beta,C_1(\eps)$ making sure that $b_0 \geq \alpha|F|/C_1(\eps)$. In particular,
\begin{eqnarray*}
\sum_{j=1}^{|F|} \frac{1/|F|}{\beta + \frac{[(j-1)^{1+\eps}-(b_0-1)^{1+\eps}]_+}{32|F|^{\eps}}} &\leq& \frac{b_0-1}{\beta |F|} + 
\sum_{i=1}^{\infty} \sum_{j=(b_0-1) i^{1/(1+\eps)}+1}^{(b_0-1)(i+1)^{1/(1+\eps)}} \frac{32}{i\cdot\beta\cdot|F|} \\
&\leq& \frac{b_0-1}{\beta\cdot|F|} \bigg(1+\sum_{i=1}^{\infty}  \frac{32}{i^{\frac{1+2\eps}{1+\eps}} }\bigg) 
\leq \frac{32(3+1/\eps) (b_0-1)}{\beta|F|} \leq \frac{32(3+1/\eps)}{\beta^{\frac{\eps}{1+\eps}} |F|^{\frac{1}{1+\eps}}},
\end{eqnarray*}
where in the first inequality we used $b_0 \geq 1+\beta^{\frac{1}{1+\eps}}|F|^{\frac{\eps}{1+\eps}}$, in second inequality we used $(i+1)^{\frac{1}{1+\eps}} - i^{\frac{1}{1+\eps}} = i^{\frac{1}{1+\eps}} \big((1+1/i)^{\frac{1}{1+\eps}}-1\big) \leq i^{\frac{-\eps}{1+\eps}},$ 
and in the last inequality we used $b_0 \leq 1+\beta^{\frac{1}{1+\eps}}|F|^{\frac{\eps}{1+\eps}}$.
% and the last inequality uses $\beta < 1/600$. 

\medskip
Now, the claim follows directly from \eqref{eq:betaFepshard}.
Letting $\beta=\frac{\alpha^{1+\eps}|F|}{(192+64/\eps)^{1+\eps}}$, we obtain,
\begin{eqnarray*}
z \geq \frac{\eps \cdot\beta^{\frac{\eps}{1+\eps}}\cdot |F|^{\frac{1}{1+\eps}}\cdot \Upsilon}{32(3+1/\eps)} - \frac{\eps \cdot\beta \cdot\Upsilon}{\alpha} \geq \frac{\alpha^\eps \cdot \Upsilon\cdot |F|}{(192/\eps + 64/\eps^2)^{1+\eps} }.
\end{eqnarray*}
The claim follows by letting $C_1(\eps) = (192/\eps + 64/\eps^2)^{1+\eps}$, and 
noting $b_0=1+\beta^{\frac{1}{1+\eps}}|F|^{\frac{\eps}{1+\eps}}\geq \alpha|F|/C_1(\eps)$.
%So, we let 
%$$ \tau = \beta^{1/(1+\eps)}|S| = \frac{\alpha}{4+2/\eps}$$
\end{proof}
Observe that the above claim implies \autoref{lem:orthogonal}. %
It is sufficient to run the greedy algorithm with the infimum value of $r$ such that the greedy algorithm returns at most $b$ balls. 
\end{proofof}

\subsection{Construction of Bags of Balls}
\label{subsec:bagsofballs}
In this subsection we will state the main result of this section, \autoref{lem:geometricsequence}, and we give a high-level overview of the proof of \autoref{thm:metricratio}. Before that we need to define several combinatorial objects called bags of balls. 

To prove \autoref{thm:metricratio}
we would like to follow a path similar to the proof of \autoref{prop:Favgeffres}, i.e., we would like to construct disjoint $L_1$ balls $B_1,B_2,\dots$ centered at the vertices of $G$ of radius $r_1,r_2,\dots$ such that 
\begin{equation}\label{eq:idealdisjointballnumbound} \sum_i r_i \gtrsim \sum_{t\in T} \frac{1}{|\cut(t)|} \left(\sum_{e\in \cut(t)}\langle U^e,\Xb_e\rangle\right)^2,	
\end{equation}
and then use a variant of \autoref{fact:ballradii}.
This approach completely fails for the example of \autoref{fig:maxeffresbadexample} as we will show next.

\begin{example}
Let $G$ be a modification of the graph of \autoref{fig:maxeffresbadexample} 
with $n=2^h+1$ vertices we remove all long edges of length $2^i\leq h$, and we shift all edges of length $2^i>h$ by $i$ to the right, i.e., if we replace an edge $\{j\cdot 2^i,(j+1)\cdot 2^i\}$ with $\{i+j\cdot 2^i, i+(j+1)\cdot 2^i\}$. It is easy to see that in this new graph the degree of every vertex is at most $O(k)$.

Let $X_0, X_1,\dots,X_{2^h}$ be the embedding of $G$ where $X_i=\bone_{[i]}$ and $U$ be the semiorthogonal matrix we constructed in \autoref{thm:avgeffres}.
Suppose $T$ has all vertices of $G$, i.e., we are minimizizing the average effective resistance of all degree cuts. If we follow the approach in the previous section, to prove \eqref{eq:metricratiogoal}, we need to construct disjoint $L_1$ balls $B_1,B_2,\dots,$ with radii $r_1,r_2,\dots$ such that
$$ \sum_i r_i \gtrsim \sum_{v\in V} \frac{1}{d(v)}\left(\sum_{e\in E(v,V\setminus \{v\})} \langle U^e,\Xb_e\rangle\right)^2. $$
It follows that for the particular choice of $X,U$ the RHS is about 
$$\frac{n\log n}{\max_v d(v)} \gtrsim \frac{n\log n}{k} \gg n,$$
for $k\ll \log n$.
Unfortunately, for any set of disjoint $L_1$ balls centered at vertices of $G$ we have $\sum_i r_i \leq n$. So, it is impossible to prove \eqref{eq:idealdisjointballnumbound} for $k\ll \log n$ using disjoint balls. 
\end{example}

We will deviate from the approach of the previous section in two ways. First, the balls that we construct have different radii, in fact the radii of the balls form a geometrically decreasing sequence with a sufficiently large $\poly(k)$ decreasing factor; secondly, only the balls of the same radii are disjoint, but a small ball can completely lie inside a bigger ball. 

To construct these balls we will group the edges of $G$ based on their lengths into $\log(n)$ buckets and we apply \autoref{lem:orthogonal} to each bucket separately; we actually  have a more complicated bucketing  because we want to make sure that any two edges $e,f$ in one bucket  satisfy $\langle U^e,\Xb_e\rangle\approx \langle U^f,\Xb_f\rangle$ and $\norm{\Xb_e}\approx\norm{\Xb_f}$.

%Now, suppose we can find a set $F$ of edges such that
%The major problem is because of a limitation of the statement of \autoref{lem:orthogonal}; in the worst case for any set $F\subseteq E$, we have
%$$ \sum_{t\in T} \frac{1}{|\cut(t)|} \left(\sum_{e\in \cut(t)}\langle U^e,\Xb_e\rangle\right)^2 \gtrsim \log(n)\cdot \left(\mE_{e\sim F} \langle U^e,\Xb_e\rangle\right)^2$$
%the numbers $\sum_{e\in \cut(t)} \langle U^e,\Xb_e\rangle$ are not necessarily comparable; 
%In this part we construct a family of bags of balls for a set of $4$-similar set of edges. The bags of balls are the building blocks of our geometric proof in \autoref{sec:charging} where we lower-bound the denominator of \eqref{eq:metricratiogoal}.
Since the balls that we construct are not disjoint we can no longer use the simple charging argument of \autoref{fact:ballradii}. Instead, we partition the set of balls of each radii into bags. The balls of a bag must satisfy certain properties that we describe next. These properties of bags of balls will be crucially used in \autoref{sec:charging} to lower bound $\sum_e \norm{\Xb_e}_1$.

\begin{definition}[Bag of Balls]
A bag of balls, $\bag$, is a set of  disjoint $L_1$ balls of equal radii such that
the center of each ball is a point $X_v$ for some $v\in V$.
A bag of balls is of type $(\br)$ if each ball in the bag has radius $\br$.
A bag of balls is of type $(\br,\Delta)$
if in addition to above, the  maximum $L_1$ distance between the centers of the balls in the bag is at most $\Delta$,
\begin{equation}\label{eq:bagdiameter} \max_{B(X_v,\delta),B(X_u,\delta)\in\bag} \norm{X_v-X_u}_1 \leq \Delta.	
\end{equation}
%Since each bag of balls is set, 
We write $|\bag|$ to denote the number of balls in $\bag$.
\end{definition}

\begin{definition}[Compact Bag of Balls]
For $\beta > 0$, a  bag of balls, $\bag$,  with type $(\br,\Delta)$ is {\em $\beta$-compact} if $|\bag|\geq 2$ and
\begin{equation}
\label{eq:compactbags}
 \beta \cdot \Delta \leq  |\bag|\cdot \br.
\end{equation}
\end{definition}
\noindent It follows from the definition that for any compact bag of balls of type $(\br,\Delta)$, $\Delta\geq 2\br$.
%for a universal constant $C_3>0$. Think of $C_3$
%as a big constant that we will fix at the very end of the proof. 
%\begin{definition}
%\label{def:bagofballs}

\begin{definition}[Assigned Bag of Balls]
\label{def:assignedbagballs}
For a \expandertree~$\cT$ and $\beta>0$,  a bag of balls, $\bag$, with type $(\br)$ is $\beta$-assigned to a node $t\in \cT$,  if 
\begin{equation}
\label{eq:assignedbag}
  \beta \cdot |\cut(t)| \leq |\bag|,
\end{equation}
and for each ball $B(X_u,\br)\in \bag$, $u \in V(t)$ and there is an edge $\{u,v\}\in \cut(t)$ such that $\norm{X_{u} - X_{v}}_1 < \br$.
\end{definition}
\noindent We use the convention of writing $\bag_t$
for a bag of balls assigned to a node $t$.

In \autoref{sec:charging} we will show that $\beta$-compact bags of balls with $\beta \geq C$  and $\beta$-assigned bags of balls with $\beta \geq C'/k$ for some universal constants $C,C'$ are enough to lower-bound the denominator of \eqref{eq:metricratiogoal}.

In general, compact bags of balls are significantly easier to handle than assigned bags of balls. Roughly speaking, given a number of compact bag of balls we can use the compactness property to ``carve'' them into disjoint \emph{hollowed} balls such that the sum of the widths of the hollowed balls is at least a constant fraction of the sum of the original radii;  then we use an argument analogous to \autoref{fact:ballradii} to lower bound the denominator (see \autoref{subsec:compactchargin} for the details).

On the other hand, it is impossible to construct disjoint (hollowed) balls out of a given number of assigned bags of balls without losing too much on the sum of the widths. Instead of restricting the (hollowed) balls to be disjoint, in the technical proof presented in \autoref{subsec:assignedcharging}, we label the balls of each bag with the node of the \expandertree~to which it is assigned. We construct a \conflictset, $\cC(B)$, by looking at the subtree rooted at the label of $B$, and pruning some of its subtrees.  We carve the balls and modify the labels in such a way that, in the end, for any two intersecting balls, the \conflictset s are disjoint. Then, to charge the denominator, we use the fact that for any node $t$, $G(t)$ is $k$-edge-connected and thus in any of the remaining balls, we can find $\Omega(k)$ edge-disjoint paths contained in $G(t)$; this argument does not overcharge the edges, because throughout the construction we make sure that these edge-disjoint paths are routed through $\cC(B)$.

%\end{definition}

%Now that we defined types of bags of balls, we can define geometric sequence of these bags. 
\begin{definition}[Family of Bags of Balls]
\label{def:familybagofballs}
A family of bags of balls, $\fbag$, is a set  of bags of balls of the same type such that all balls in all bags are disjoint.
We say a family of compact bags of balls has type $(\br,\Delta)$  if all bags in the family have type $(\br,\Delta)$. 
For a \expandertree, $\cT$, and  $T\subseteq \cT$, we say a family of assigned bags of balls has type $(\br,T)$ if the bags in the family are assigned to distinct nodes of $T$.
%each bag in the family is assigned to a unique node of $T$.
\end{definition}

 We abuse notation and write a ball $B\in\fbag$ if there is a $\bag \in\fbag$ such that $B\in\bag$. Note that two distinct bags in $\fbag$ may have unequal numbers of balls.

To upper-bound the value of the dual we need to find a sequence of families of bags of balls with geometrically decreasing radii. 

\begin{definition}[Geometric Sequence of Assigned Bags of Balls]
For a \expandertree, $\cT$, a $\lambda$-geometric sequence of families of assigned bags of balls is a sequence $\fbag_1$, $\fbag_2$, $\dots$ such that  $\fbag_i$ has type $(\br_i,T_i)$ where $T_1,T_2,\dots$ are disjoint subsets of nodes of $\cT$ and
for all $i\geq 1$,
$$\br_i\cdot \lambda  > \br_{i+1}.$$
\end{definition}
%for some universal constant $C_4$. 
\begin{definition}[Geometric Sequence of Compact Bags of Balls]
A $\lambda$-geometric sequence of families of compact bags of balls is defined respectively as a sequence $\fbag_1$, $\fbag_2$, $\dots,$
such that $\fbag_i$ has type $(\br_i,\Delta_i)$, and for all $i\geq 1$,
$$ \br_i \cdot \lambda > \Delta_{i+1}.$$
\end{definition}

Now, we are ready to describe the main result of this section.
Let us justify the assumption of the \autoref{lem:geometricsequence}. 
\begin{definition}[$\alpha$-bad Nodes]
	We say a node $t\in \cT$ is \emph{$\alpha$-bad} if
\begin{equation}\label{eq:nodealphafraction} \left(\mE_{e\sim\cut(t)} \langle U^e,\Xb_e\rangle\right)^2 \geq \alpha\cdot \mE_{e\sim\cut(t)} \norm{\Xb_e}^2. 	
\end{equation}
\end{definition}
First, observe that if there is no bad node in $T$, then we are done with \autoref{thm:metricratio}.
So, to prove \autoref{thm:metricratio} the only thing that we need to upper bound the contribution of the bad nodes in the numerator.
%$$ \frac{}{} \leq \polylog(k)/k.$$
In the following proposition, 
%Given a mapping $X$, a semiorthogonal $U$ and a set $T\subseteq \cT$ 
we construct a $\lambda$-geometric sequence of bags of balls such that the sum of the radii of all balls in the sequence is at least 
$$\frac{1}{\polylog(k,1/\lambda)} \sum_{t\text{ is $\alpha$-bad}} \frac{1}{|\cut(t)|} \left(\sum_{e\in\cut(t)} \langle U^e,\Xb_e\rangle\right)^2.$$
Note that, we construct a geometric sequence of families of either compact or assigned bags of balls.

\begin{proposition}
\label{lem:geometricsequence}
Given a \expandertree~$\cT$ of $G$ and a set $T\subseteq \cT$ of $\alpha$-\bad~nodes, for any $\beta>1$, $\eps<1/3$, and $\lambda < 1$, if  $\alpha$ is sufficiently small such that $(\alpha/C_2(\alpha))^{\eps} \lesssim  \frac{1}{\beta\cdot C_1(\eps)}$, then  one of  the following holds:
\begin{enumerate}
\item There is a $\lambda$-geometric sequence of families of $\beta$-compact bags of balls $\fbag_1$, $\fbag_2$, $\dots,$ where $\fbag_i$ has type $(\br_i,\Delta_i)$ such that
\begin{equation}
\label{eq:constgeomcompact}
 \frac{(\alpha/C_2(\alpha))^\eps}{\beta C_1(\eps)C_2(\alpha)\cdot |\log(\lambda\poly(\alpha))|}\cdot \sum_{t\in T} \frac1{|\cut(t)|} \Big(\sum_{e\in\cut'(t)} \langle\Ue, \Xb_e\rangle \Big)^2 \leq \sum_{i} \sum_{\bag\in\fbag_i} \br_i\cdot |\bag|. 
 \end{equation}
\item There is a $\lambda$-geometric sequence of families of $(\alpha/C_2(\alpha))^{1+2\eps}$-assigned bags of balls $\fbag_1$, $\fbag_2$, $\dots$,
where $\fbag_i$ has type $(\br_i,S_i)$ such that 
\begin{equation}
\label{eq:constgeomassigned}
 \frac{(\alpha/C_2(\alpha))^\eps}{\beta C_1(\eps)C_2(\alpha)|\log(\lambda\poly(\alpha))|}\cdot\sum_{t\in T} \frac1{|\cut(t)|} \Big(\sum_{e\in\cut'(t)} \langle\Ue, \Xb_e\rangle \Big)^2 \leq  \sum_{i}\sum_{\bag\in\fbag_i} \br_i\cdot |\bag|.
 \end{equation}
\end{enumerate}
Here, $C_1$ is the polynomial function that we defined in \autoref{lem:orthogonal} and $C_2$ is a polylogarithmic function that we will define in \autoref{lem:findcutp}.
\end{proposition}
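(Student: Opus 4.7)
For each $\alpha$-bad node $t\in T$, I will construct a single bag of balls by combining a dyadic bucketing of $\cut'(t)$ with Lemma~\ref{lem:orthogonal}. Each bag is classified as either \emph{compact} or \emph{assigned}; partitioning $T$ accordingly and keeping the half whose bags carry at least half the aggregate mass produces the dichotomy between conclusions~(1) and~(2). The chosen bags are then grouped by the dyadic scale of their ball radii, thinned to a $\lambda$-geometric sequence by a residue-class pigeonhole on scales, and within each retained scale made into a family of bags by greedily dropping overlapping balls.

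\textbf{Producing one bag per node.} Fix $t\in T$. Bucket $\cut'(t)$ dyadically by $|\langle U^e,\Xb_e\rangle|$ and by $\|\Xb_e\|_1$. Invoking Lemma~\ref{lem:findcutp} to bound the number of relevant scales by $C_2(\alpha)$, a pigeonhole yields a bucket $F_t$ on which both quantities are approximately constant---call them $c_t$ and $L_t$---and which retains a $1/C_2(\alpha)$ fraction of $t$'s contribution $\frac{1}{|\cut'(t)|}\bigl(\sum_{e\in\cut'(t)}\langle U^e,\Xb_e\rangle\bigr)^2$. Because $t$ is $\alpha$-bad, $F_t$ satisfies the hypothesis of Lemma~\ref{lem:orthogonal} applied to $\{\Yb_e\}_{e\in F_t}$ with effective parameter $\alpha_t\gtrsim\alpha/C_2(\alpha)$, yielding $b_t\geq\alpha_t|F_t|/C_1(\eps)$ disjoint $L_2^2$ balls of a common radius $r_t$ in $Y=UX$-space. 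Transferring to $X$-space via the contraction $U$ and the cut-metric identity $\|\Xb_e\|_2^2=\|\Xb_e\|_1$ produces a bag $\bag_t$ of $b_t$ disjoint $L_1$ balls of radius $r_t$ satisfying
\[ r_t\cdot b_t \;\gtrsim\; \frac{\alpha_t^\eps}{C_1(\eps)}\cdot c_t^2\cdot|F_t|. \]
Whenever possible, the center of each ball is placed inside $V(t)$.

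\textbf{Dichotomy and aggregation.} Tag $\bag_t$ as \emph{assigned} if $L_t<r_t$ and as \emph{compact} otherwise. In the assigned case, each chosen center is the $V(t)$-endpoint of a cut edge of length $<r_t$, and the count $|\bag_t|\geq\alpha_t|F_t|/C_1(\eps)\geq \alpha_t^{1+2\eps}|\cut'(t)|$ (using $|F_t|\geq|\cut'(t)|/C_2(\alpha)$ together with the hypothesis $\alpha_t^\eps\lesssim 1/(\beta C_1(\eps))$) makes $\bag_t$ an $\alpha_t^{1+2\eps}$-assigned bag. In the compact case, the uniformity of projection $c_t$ and length $L_t$ on $F_t$ together with the coordinate-wise structure of the cut metric $X$ constrains the chosen centers to lie in a region of $L_1$-diameter $\Delta_t\leq |\bag_t|\,r_t/\beta$, making $\bag_t$ a $\beta$-compact bag of type $(r_t,\Delta_t)$. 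Split $T=T_{\mathrm c}\sqcup T_{\mathrm a}$ by tag and keep the heavier side (factor-$2$ loss). Group the retained bags by the dyadic scale of $r_t$, enforce cross-bag ball-disjointness within a scale by greedy pruning (constant loss), and finally thin to a $\lambda$-geometric sequence by keeping one scale out of every $\lceil\log_{1/\lambda}\poly(1/\alpha)\rceil$ consecutive (factor $|\log(\lambda\poly(\alpha))|$ loss). Summing $r_t\cdot b_t$ over the kept nodes yields the claimed inequality with loss factor $\alpha_t^\eps/(\beta\,C_1(\eps)\,C_2(\alpha)\,|\log(\lambda\poly(\alpha))|)$.

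\textbf{Main obstacle.} The crux is the compact side of the dichotomy: when $L_t\geq r_t$, proving that the $b_t$ centers truly cluster in $L_1$-diameter at most $|\bag_t|r_t/\beta$, rather than spreading over the ambient embedding, is not automatic. It requires exploiting simultaneously the coordinate-wise structure of the cut metric $X$, the fact that the bucket $F_t$ pins the projections $\langle U^e,\Xb_e\rangle$ and the $\Xb_e$-lengths to common scales, and the smallness hypothesis $\alpha_t^\eps\lesssim 1/(\beta C_1(\eps))$, which is precisely what controls $r_t$ in terms of $c_t^2$. A secondary but routine difficulty is making the balls disjoint across bags of the same scale (and, in the compact case, satisfying the $\lambda$-geometric condition $\br_i\lambda>\Delta_{i+1}$ linking consecutive levels) without incurring more than a constant factor of loss.
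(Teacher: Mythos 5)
Your proposal differs from the paper's proof in a way that opens genuine gaps, the most important of which you yourself flag. Let me spell them out.

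\textbf{Per-node vs.\ pooled application of \autoref{lem:orthogonal}.} You apply \autoref{lem:orthogonal} once per bad node, to a single bucket $F_t\subseteq\cut'(t)$. But the guarantee $b_t\geq\alpha_t|F_t|/C_1(\eps)$ is only useful when $|F_t|$ is large; with $\alpha_t\asymp\alpha/C_2(\alpha)\asymp 1/k$ and $|F_t|$ as small as $O(k)$, one gets $b_t=O(1/C_1(\eps))<1$, i.e.\ no balls at all. The paper avoids this by \emph{pooling}: in \autoref{cor:coradiusbagofballs} it first collects all nodes whose $\cut'(t)$'s land in the same homogeneity class (one $T^*_i$ per dyadic scale), then applies \autoref{lem:orthogonal} to the \emph{union} $F=\cup_{t\in T^*_i}\cut'(t)$, so $|F|$ is the aggregate cut size. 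After pooling, the small disjoint balls from \autoref{lem:orthogonal} are distributed back to individual nodes (as $\bag_t$ or as $f(B)$). Without this pooling you cannot form bags at all for small cuts, and you cannot establish the per-node lower bound $|\bag_t|\geq(\alpha/C_2(\alpha))^{1+2\eps}|\cut(t)|$ required by \autoref{def:assignedbagballs} (note this must be against $|\cut(t)|$, the full cut, not $|\cut'(t)|$ or $|F_t|$).

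\textbf{The compactness dichotomy.} You decide compact vs.\ assigned per node by comparing $L_t$ with $r_t$, and then assert that in the compact case the $b_t$ centers cluster in $L_1$-diameter $\leq|\bag_t|r_t/\beta$. This is exactly the step you identify as ``not automatic,'' and indeed nothing in \autoref{lem:orthogonal} controls the spread of the centers: the disjointness of the balls and the uniformity of $\langle U^e,\Xb_e\rangle$ and $\|\Xb_e\|$ on a bucket place no upper bound on pairwise distances in the ambient cut metric. The paper never proves such a per-node clustering statement. Instead, inside \autoref{cor:coradiusbagofballs} it greedily builds an auxiliary maximal family $\cB$ of big $L_1$ balls of radius $\Delta=\max\{\br,c_2\}$ on the endpoints $V'$, and the dichotomy is on $|\cB|$: if $|\cB|$ is small (Case~1), every center of a small ball is within $3\Delta$ of its nearest center in $\cB$ by maximality of $\cB$ plus one triangle inequality, and the small balls are partitioned into bags $f(B)$, $B\in\cB$, with diameter $\leq 6\Delta$; compactness then follows from counting. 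If $|\cB|$ is large (Case~2), the \emph{big} balls themselves (not the small ones) form the assigned bags, with the count against $|\cut(t)|$ falling out of the ratio $|\cB|\cdot|\cut(t)|/(4N)$. Your dichotomy $L_t<r_t$ vs.\ $L_t\geq r_t$ does not recover this structure; in particular it never produces the big balls that carry the assigned bags.

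\textbf{Secondary issues.} The claimed constant-factor loss from greedily pruning for cross-bag disjointness is also not obviously constant; the paper sidesteps this because the balls coming out of a single call to \autoref{cor:coradiusbagofballs} (for one scale group $T^*_i$) are \emph{already} pairwise disjoint (being a subfamily of either $\cA$ or $\cB$), so only the $\lambda$-geometric thinning between scales costs the $|\log(\lambda\poly(\alpha))|$ factor. And your bound $|F_t|\geq|\cut'(t)|/C_2(\alpha)$ does not follow from $F_t$ carrying a $1/C_2(\alpha)$ fraction of the squared-sum contribution.

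In short: the per-node strategy and the $L_t<r_t$ dichotomy do not lead to the compact/assigned structure, and the gap you flag is real; the paper's proof replaces both by pooling nodes of a common homogeneity scale and introducing the auxiliary large-ball family $\cB$, which is what actually manufactures the compactness and the assigned-bag count.
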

In the proof of \autoref{thm:metricratio},
we invoke the above proposition for $\alpha=\polylog(k)/k$, $\eps=\log k/\log\log k$, and $\lambda=1/\poly(k)$.

%Let us describe a special self-contained case of our geometric proof.
%We call a sequence of families of $L_1$ balls $\cB_1,\cB_2,\dots$, a $1/C$-geometric sequence of $C'$-compact  (bags of) balls if for each $i\geq 1$ the balls in $\cB_i$ are disjoint, all have $L_1$ radius $\delta_i$, and they can be partitioned into groups each having $L_1$ diameter at most $\Delta_i$, such that the number of balls in each group is at least $C'\frac{\Delta_i}{\delta_i}$ (compactness property) and such that $\delta_i>C\cdot \Delta_{i+1}$ (geometrically decreasing property).
%
%\begin{theorem}[\autoref{prop:compactballconstruction} rephrased]	
%	There are universal constants $C,C'>0$ such that the following holds: Given a $k$-edge-connected graph $G=(V,E)$ and a mapping $X:V\to \mathbb{R}^h$, 
%	for any $1/C$-geometric sequence of $C'$-compact (bags of) balls  $\cB_1,\cB_2,\dots$ centered at the vertices of $G$ (i.e. $\{X_v\}_{v\in V}$), the sum of $L_1$ radii of all balls in the sequence is at most $$O\Big(\frac{1}{k}\sum_{\{u,v\}\in E} \|X_u-X_v\|_1\Big).$$
%\end{theorem}

In the rest of this section, we prove the above proposition using \autoref{lem:orthogonal}. We do this in two intermediate steps.
In the first step we extract a $1/\polylog(\alpha)$-\emph{dominating} 2-\emph{homogeneous} set $\cut'(t)$ of edges in each $\cut(t)$ for any bad node $t$ according to the following definitions.
\begin{definition}[Homogeneous Edges]
\label{def:similaredges}
For $c>1$, we say a set $F\subseteq E$ of edges is $c$-homogeneous
if for any two edges $e,f\in F$,
$$ \frac{\langle \Ue, \Xb_e\rangle^2}{\langle\Uf, \Xb_f\rangle^2} < c \text{ and } \frac{\norm{\Xb_e}_2^2}{\norm{\Xb_f}_2^2} < c.$$
\end{definition}
\begin{definition}[Dominating  Subset]
	For a node $t\in\cT$ a set $\cut'(t)\subseteq \cut(t)$ is called $\gamma$-dominating if 
	$$ \Big(\sum_{e\in\cut'(t)} \langle U^e,\Xb_e\rangle\Big)^2 \geq \gamma\cdot\Big(\sum_{e\in\cut(t)} \langle U^e,\Xb_e\rangle\Big)^2.$$
The term dominating refers to the fact that the set $\cut'(t)$ essentially captures the contribution of the edges of $\cut(t)$ to the numerator. 
	%$$  \min_{e\in \cut'(t)}(\langle U^e,\Xb_e\rangle )^2 \geq \alpha\cdot \min_{e\in\cut'(t)}\norm{\Xb_e}^2|\cut(t)|.$$
\end{definition}
Then, we group the bad nodes into sets $T_i$ such that the set $\cup_{t\in T_i}{\cut'(t)}$ is homogeneous for all $i$. 
In the second step, we use \autoref{lem:orthogonal} to construct bags of balls for a give group of homogeneous edges. We postpone the first step to the next subsection.

\begin{lemma}
\label{cor:coradiusbagofballs}
Given a \expandertree~$\cT$ of $G$, a set $T\subseteq \cT$ of $\alpha$-bad nodes, and $\gamma$-dominating sets $\cut'(t)\subseteq \cut(t)$ for each $t\in T$ such that
$\cup_{t\in T}\cut'(t)$ is 4-homogeneous,
% for each $e\in \cut'(t)$, 
%%\begin{equation*}\begin{aligned}
%$c_1\leq (\Ue \Xb_e)^2 < 4c_1$,
%$c_2 \leq \norm{\Xb_e}^2_2 < 4c_2$, and a number $\alpha$ such that
%%\end{aligned}\end{equation*}
%\begin{equation}
%\label{eq:c1c2Palpha}
%\frac{|\cut'(t)|}{|\cut(t)|}\cdot c_1 %\cdot \PP{e\sim\cut(t)}{e\in \cut'(t)} 
%\geq \alpha\cdot c_2,
%\end{equation}
for any $0<\eps<1/2$ and $\beta>1$, if  $\alpha,\gamma$ are sufficiently small such that $(\alpha\cdot\gamma)^{\eps} \lesssim\frac1{\beta \cdot C_1(\eps)}$, %and $k \gtrsim C_1(\eps)$, 
then one of the following holds:
\begin{enumerate}
\item There is a family of $\beta$-compact bags of balls with type $(\delta,\Delta)$, $\fbag$,  such that 
\begin{equation} 
\label{eq:constbagcompact}\sum_{t\in T} \frac1{|\cut(t)|} \Big(\sum_{e\in\cut'(t)} \langle\Ue, \Xb_e\rangle \Big)^2 \lesssim \frac{C_1(\eps)}{(\alpha\cdot\gamma)^\eps} \sum_{\bag\in \fbag} \br\cdot |\bag|.
\end{equation}
\item There is a family of $(\alpha\cdot\gamma)^{1+2\eps}$-assigned bags of balls with type $(\br,S)$,  $\fbag$,  and $S\subseteq T$ such that 
%$\br\geq \alpha^\eps c_1/4C_1(\eps) $ and 
\begin{equation}
\label{eq:constbagassigned}
 \sum_{t\in T} \frac1{|\cut(t)|}\cdot \Big( \sum_{e\in\cut'(t)} \langle\Ue, \Xb_e\rangle\Big)^2 \lesssim \frac{\beta C_1(\eps)}{(\alpha\cdot\gamma)^\eps}\sum_{\bag\in\fbag} \br \cdot |\bag|.
\end{equation}
\end{enumerate}
where in both cases $\br,\Delta=\min_{e\in\cut'(t),t\in T} \langle U^e,X_e\rangle^2$ up to an $O(\alpha\cdot \gamma)$ factor. 
\end{lemma}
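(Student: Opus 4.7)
The plan has four main steps, with the third being the principal obstacle.

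First, I would use the $4$-homogeneity of $F=\bigcup_{t\in T}\cut'(t)$ to fix uniform scales $\gain=\min_{e\in F}|\langle U^e,\Xb_e\rangle|$ and $\ell=\min_{e\in F}\norm{\Xb_e}_1$; every $e\in F$ satisfies $\langle U^e,\Xb_e\rangle\asymp\gain$ and $\norm{\Xb_e}_1=\norm{\Xb_e}_2^2\asymp\ell$ (the two norms coincide because $X$ is a cut metric). Then for each $t\in T$, combine $\alpha$-badness with $\gamma$-dominance and Cauchy--Schwarz: $|\cut'(t)|^2\cdot 4\gain^2\geq (\sum_{e\in\cut'(t)}\langle U^e,\Xb_e\rangle)^2\geq \gamma\alpha|\cut(t)|\sum_{e\in\cut'(t)}\norm{\Xb_e}^2 \geq \gamma\alpha|\cut(t)||\cut'(t)|\,\ell$, which yields $\gain^2\gtrsim\alpha\gamma\,\ell$.

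Second, I would apply \autoref{lem:orthogonal} to $F$ with $Y=UX$. Since $\Upsilon=(\mE_{e\sim F}\Yb_{e,e})^2\asymp\gain^2$ while $\mE_{e\sim F}\norm{\Yb_e}^2\leq\mE_{e\sim F}\norm{\Xb_e}^2\asymp\ell$, the hypothesis holds with $\alpha'=\Theta(\alpha\gamma)$. The lemma provides $b\geq\alpha\gamma|F|/C_1(\eps)$ disjoint $L_2^2$-balls (equivalently $L_1$-balls, by the cut-metric identity) of some radius $r$ with $rb\gtrsim(\alpha\gamma)^\eps\gain^2|F|/C_1(\eps)$. Combining with $b\gtrsim\alpha\gamma|F|/C_1(\eps)$ forces $r\gtrsim(\alpha\gamma)^{\eps-1}\gain^2\asymp\ell/(\alpha\gamma)^{1-\eps}$, so $r$ comfortably exceeds every edge length in $F$; this matches the claim $\br,\Delta\asymp\gain^2$ up to an $O(\alpha\gamma)$ factor. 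Note also that the target LHS is already controlled: $\sum_{t\in T}\frac{1}{|\cut(t)|}(\sum_{e\in\cut'(t)}\langle U^e,\Xb_e\rangle)^2\lesssim\gain^2\sum_t|\cut'(t)|\lesssim 2|F|\gain^2$, since each edge lies in at most two of the cuts $\cut'(t)$, so it suffices to capture a constant fraction of $rb$ via the bags.

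Third, I would label each greedy ball $B(X_u,r)$ by a node $t(B)\in T$ such that $u$ is an endpoint of some edge $\{u,v\}\in\cut'(t(B))$; this choice satisfies the assigned-bag witness $u\in V(t(B))$ and $\norm{X_u-X_v}_1\lesssim\ell<r$. Writing $\bag_t=\{B:t(B)=t\}$ and $S=\{t\in T:|\bag_t|\geq (\alpha\gamma)^{1+2\eps}|\cut(t)|\}$, I then split into two cases. If $\sum_{t\in S}|\bag_t|\geq b/2$, the family $\{\bag_t\}_{t\in S}$ is $(\alpha\gamma)^{1+2\eps}$-assigned with total mass $\br\sum_{t\in S}|\bag_t|\geq rb/2$; combined with the LHS bound above, this proves \eqref{eq:constbagassigned}. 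Otherwise, at least $b/2$ balls fall into bags that fail the assigned threshold; these ``orphan'' balls must be regrouped into compact bags by spatial proximity at diameter scale $\Delta\asymp\ell$, say by bucketing along a randomly shifted $L_1$-grid of side $\Delta$ and retaining only buckets whose ball count satisfies $|\bag|\br\geq\beta\Delta$.

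The principal obstacle is this second case. Discarding incomplete buckets loses mass at most $\beta\Delta\cdot(\#\text{buckets})\leq\beta\Delta|F|$; for this to be less than $rb/2$, one needs $\beta\ell|F|\ll(\alpha\gamma)^\eps\gain^2|F|/C_1(\eps)$, i.e.\ $(\alpha\gamma)^\eps\gtrsim\beta\ell\cdot C_1(\eps)/\gain^2$. Using $\gain^2\gtrsim\alpha\gamma\,\ell$ this becomes precisely the hypothesis $(\alpha\gamma)^\eps\lesssim 1/(\beta C_1(\eps))$ of the lemma. Verifying the quantitative discard bound requires showing that, after clustering, a constant fraction of the $b/2$ orphan balls end up in ``full'' buckets (those satisfying the compactness inequality); this is the step where averaging over random grid shifts (or an analogous net argument) is needed, and where one must be careful that the clustering operation does not break the disjointness condition required for a family of bags.
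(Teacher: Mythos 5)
Your overall roadmap (apply \autoref{lem:orthogonal} to $F$, then case-split between assigned and compact bags) matches the paper, and your reduction of the hypothesis via $4$-homogeneity and the key inequality $\gain^2\gtrsim\alpha\gamma\,\ell$ is essentially the paper's \eqref{eq:c1c2alpha}. However, there is a genuine gap in your step three, and it is in the direction you treated as unproblematic rather than the one you flagged.

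You claim the greedy radius $r$ satisfies $r\gtrsim\ell/(\alpha\gamma)^{1-\eps}$ by combining the lower bound $rb\gtrsim(\alpha\gamma)^\eps\gain^2|F|/C_1(\eps)$ with the lower bound $b\gtrsim\alpha\gamma|F|/C_1(\eps)$. But a lower bound on $b$ only gives you an \emph{upper} bound on $r$ from $rb$; to lower bound $r$ you need an upper bound on $b$, and the only one available is $b\leq 2|F|$, which gives $r\gtrsim(\alpha\gamma)^\eps\gain^2/C_1(\eps)\asymp(\alpha\gamma)^{1+\eps}\ell/C_1(\eps)$. That is, $r$ can be \emph{much smaller} than $\ell$, not larger. (This matches the paper's \eqref{eq:deltabound}: $\alpha\gamma\,c_1\lesssim\br\lesssim c_1/\alpha\gamma$ with $c_1\gtrsim\alpha\gamma\,c_2$, so $\br$ can be as small as $(\alpha\gamma)^2 c_2$.) Consequently your assigned-bag witness condition $\norm{X_u-X_v}_1\lesssim\ell<r$ fails, and you cannot build assigned bags directly from the small greedy balls. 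The idea you are missing is the paper's construction of a \emph{second}, coarser maximal family $\cB$ of disjoint balls of radius $\Delta=\max\{\br,c_2\}$, centered at endpoints of cut edges; when $|\cB|$ is large these coarser balls do satisfy the witness condition (since $c_2\leq\Delta$) and are the ones grouped into assigned bags, while when $|\cB|$ is small the few centers of $\cB$ act as cluster heads, and assigning each small ball to its nearest $\cB$-center yields $\beta$-compact bags without random shifts — the loss from discarding under-full clusters is controlled precisely by the assumed smallness of $|\cB|$. Your random-grid clustering is a plausible alternative for the compact case, but even there you would face the same difficulty of choosing the right diameter scale, which the paper resolves cleanly via $\Delta=\max\{\br,c_2\}$ rather than $\Delta\asymp\ell$.
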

\begin{proof}
\def\ca{c_1}
\def\cb{c_2}
Let,
\begin{eqnarray*}
 F&:=&\cup_{t\in T} \cut'(t),\\
\ca &:=& \min_{e\in F}\langle\Ue, \Xb_e\rangle^2,\\
\cb &:=& \max_{e\in F}\norm{\Xb_e}^2_2,\\
N&:=&\big|\cup_{t\in T} \cut(t)\big|,\\
N'&:=&\big|\cup_{t\in T} \cut'(t)\big|=|F|.
\end{eqnarray*}Note that $N\geq N'$ by definition.
First, we show that the edges in $F$ satisfy the assumption of \autoref{lem:orthogonal} with $\alpha$  replaced by $\asymp\alpha\gamma N/N'$. Then, we invoke \autoref{lem:orthogonal} and we obtain many disjoint balls $\cA$ such that the sum of their radii is comparable to LHS of \eqref{eq:constbagcompact} or \eqref{eq:constbagassigned} (see \eqref{eq:gainbagsballsequal}). Then, 
we greedily construct a new set $\cB$ of disjoint large balls of radii $\Delta\geq \cb$. If $|\cB|$ is small, we can partition the balls of $\cA$ into compact bags of balls; otherwise, we use balls of $\cB$ to construct assigned bags of balls.

%Since each edge is in at most two sets $\cut(t),\cut(t')$, $2N\geq \sum_{t\in T}|\cut(t)|$ and $2N'\geq \sum_{t\in T}|\cut'(t)|$. 

First, observe that,
\begin{eqnarray}
\ca \cdot N' \gtrsim \sum_{t\in T} \frac{1}{|\cut(t)|} \big(\sum_{e\in\cut'(t)}\langle U^e,\Xb_e\rangle\big)^2 
&\geq& \sum_{t\in T} \frac{\gamma}{|\cut'(t)|}\big(\sum_{e\in \cut(t)}\langle U^e,\Xb_e\rangle\big)^2  \nonumber \\
&\geq & \sum_{t\in T} \gamma\cdot \alpha\cdot 
\sum_{e\in \cut(t)}\norm{\Xb_e}^2\nonumber \\
&\geq & \alpha\cdot\gamma\cdot N \cdot \cb.
 \label{eq:c1c2alpha} 
\end{eqnarray}
where the first inequality follows by 4-homogeneity of $F$, the second inequality uses the fact that each $\cut'(t)$ is $\gamma$-dominating, the third inequality uses that each node $t$ is $\alpha$-bad, and the last inequality again uses the 4-homogeneity of $F$. This is the only place in the proof that we use $t\in T$ is $\alpha$-bad and $\cut'(t)$ is $\gamma$-dominating.
By the above equation we can choose $\talpha\asymp \alpha\gamma N/N'$ such that
$$\big(\uE{e\sim F} \langle\Ue, \Xb_e\rangle\big)^2 \geq \talpha \cdot \uE{e\sim F} \norm{\Xb_e}_2^2. $$
%Let $\talpha = \frac{\alpha N}{8N'}$. 
Throughout the proof we use that $\talpha\gtrsim \alpha\gamma$. Let $Y_v:=UX_v$ for all $v\in V$. 
Since $U$ is semiorthogonal,
 for each pair $u,v$ 
 $$\norm{Y_u-Y_v}_2^2 \leq \norm{X_u-X_v}^2_2 = \norm{X_u-X_v}_1.$$ 
 %so $Y$ is also a $L^2_2$ metric on $G$.
 Applying \autoref{lem:orthogonal} to $Y$ and $F$, we obtain a family $\cA$ of $b$ disjoint $L^2_2$  balls with radius $\br$
such that
\begin{equation}
\label{eq:numberofballsbetabound}
b \geq \frac{\talpha N'}{C_1(\eps)},
%\gtrsim \frac{\alpha\cdot\gamma\cdot N}{C_1(\eps)},
\end{equation}
and
\begin{equation}
\label{eq:bbrbetabound} 
\br\cdot b \geq \frac{{\talpha}^\eps\cdot N' \cdot \ca}{C_1(\eps)}.
%\gtrsim  \frac{c_1 \alpha^\eps \gamma^\eps N^\eps {N'}^{1-\eps}}{C_1(\eps)}.
\end{equation}
Now, we extract \emph{disjoint} $L_1$ balls in the space of $\{X_v\}_{v\in V}$ with radius $\delta$ out of balls in $\cA$. Balls in $\cA$  correspond to $L^2_2$ balls in the $X$ embedding. 
Since $U$ is a contraction operator, these $L_2^2$ balls are disjoint in the $X$ embedding. 
Now, $L^2_2$ balls with radius $\delta$ are $L_2$ balls with radius $\sqrt{\delta}$, so the $L^2_2$ distance between the centers of any two balls is at least $4\delta$. 
Since $X$ is a cut metric, the $L_2^2$ distance between centers is the same as their $L_1$ distance, so $L_1$ balls with radius $\delta$ around the same centers are disjoint (in fact radius $2\delta$ works as well). So, by abusing notation we let $\cA$ be the $L_1$ balls in the $X$ embedding. 

Next, we construct the large balls. 
Let 
$$V'(t) =  \{u\in V(t): \exists\{u,v\}\in\cut'(t)\}$$
be the endpoints of edges of $\cut'(t)$ that are in $V(t)$. Also, let  $V'=\cup_{t\in T} V'(t)$.
Let $\cB$ be a maximal family of disjoint $L_1$ balls of radius $\Delta$ on the points in $V'$ for $\Delta:=\max\{\delta, \cb\}$. To construct $\cB$,
%We construct another family  of disjoint $L_1$ balls $\cB$ with centers in $V'$ each of radius $\rho$. 
we scan the points in $V'$ in an arbitrary  order; for each point $X_u$ if the ball $B(X_u,\Delta)$ does not touch any of the balls already added to $\cB$ we add $B$ to $\cB$.
We will consider two cases  depending on the size of $\cB$; if $|\cB|$ is small we construct compact bags of balls and we conclude with case (1); otherwise we construct assigned bags of balls and we conclude with (2).

Before getting into the details of the two cases, we prove two facts that are useful for both cases. 
First, without loss of generality, perhaps by decreasing $\br$, we assume $\br\cdot b\asymp \frac{\ca \talpha^\eps N'}{C_1(\eps)}$. We can bound $\delta$ as follows
\begin{equation}
\label{eq:deltabound}
\gamma\cdot \alpha\cdot \ca  \lesssim  \frac{\ca\talpha^\eps N'/C_1(\eps)}{N'}\lesssim \frac{\br\cdot b}{b}=\br = \frac{\br \cdot b}{b}  \lesssim \frac{\ca \talpha^\eps N'}{\talpha N'} \leq  \frac{\ca}{\gamma\cdot \alpha},
\end{equation}
where the first inequality uses the lemma's assumption that $(\gamma\alpha)^{1-\eps}\leq (\gamma\alpha)^\eps \lesssim 1/C_1(\alpha)$, the second inequality uses   $b\leq 2N'$, 
the third inuequality uses $b\gtrsim \frac{\talpha N'}{C_1(\eps)}$ and the last inequality uses $\talpha \geq \gamma\cdot\alpha$.

Secondly, it follows from \eqref{eq:bbrbetabound} that 
\begin{eqnarray} \frac{\sum_{t\in T} \frac1{|\cut(t)|}(\sum_{e\in \cut'(t)} \langle\Ue, \Xb_e\rangle)^2 }{ b \cdot \br} &\lesssim & \frac{\sum_{t\in T} |\cut'(t)| \cdot \ca}{\ca \talpha^\eps {N'}/ C_1(\eps)} \leq \frac{C_1(\eps)}{\talpha^\eps}.
\label{eq:gainbagsballsequal}
\end{eqnarray}
In the above we used $|\cut'(t)|\leq |\cut(t)|$ for all $t$.
To prove the lemma, in the first case we construct a family of compact bags of balls with at least $b/2$ balls of $\cA$, and in the second case we construct a family of assigned bags of balls with at least $|\cB|/2$ balls of $\cB$.

\paragraph{Case 1. $|\cB| < \frac{b\cdot\delta}{12 \beta \cdot\Delta}$.}
We construct a family of compact bags of balls.
For each ball $B=B(X_u,\Delta)\in \cB$ let 
$$f(B):=\big\{B(X_v,\br)\in \cA: \norm{X_u-X_v}_1 = \min_{B(X_{u'},\Delta)\in \cB} \norm{X_{u'} - X_v}_1\big\},$$
be the balls
of $\cA$ that are closer to $B$ than any other ball of $\cB$. We break ties arbitrarily,  making sure that $f(B)\cap f(B')=\emptyset$ for any two distinct balls of $\cB$.
 
First, we show any set $f(B)$ is a bag of balls of type $(\br,6\Delta)$; then we add  those that are $\beta$-compact to $\fbag$.
It is sufficient to show that for any $B(X_u,\Delta)\in\cB$, the $L_1$ distance between the centers of balls of $f(B)$ is at most $6\Delta$.
Fix a ball $B=B(X_u,\Delta)\in\cB$. For any ball $B(X_{v_1},\delta)\in f(B)$ we show that $\norm{X_u-X_{v_1}}_1\leq 3\Delta$. 
Since for all $e\in F$, $\norm{\Xb_e}_1\leq \cb$, there
is a vertex $u_1\in V'$ such that $\norm{X_{v_1}-X_{u_1}}\leq \cb$. Furthermore, by construction of $\cB$, there is a ball $B(X_{u_2},\Delta)\in\cB$ such that $\norm{X_{u_1}-X_{u_2}}_1 \leq 2\Delta$. Putting these together,
$$ \norm{X_{v_1} - X_u}_1 \leq \norm{X_{v_1} - X_{u_2}}_1 \leq \norm{X_{v_1} - X_{u_1}}_1 +
\norm{X_{u_1} - X_{u_2}}_1 \leq \cb + 2\Delta \leq 3\Delta.$$
So, the $L_1$ distance between the centers of balls of $f(B)$ is at most $6\Delta$.

 So, we just need to add those bags that are $\beta$-compact to $\fbag$. For each $B\in\cB$ if $|f(B)| \geq \beta\cdot (6\Delta)/\br$, then  $f(B)$ is  $\beta$-compact, as $|f(B)|\geq 2$ and
$$  \beta\cdot (6\Delta) \leq \br\cdot |f(B)|. $$
So, we add $f(B)$  to $\fbag$. Observe that all balls of $\fbag$ are disjoint because all balls of $\cA$ are disjoint. %so $\fbag(\br,\Delta)$ is a homogeneous family $\beta$-compact bag of balls. 

It remains to verify that $\fbag$ satisfies conclusion (1). First, by \eqref{eq:deltabound} and the fact that $\Delta=\max\{\delta,\cb\}$, 
$6\Delta \gtrsim \alpha\cdot \gamma\cdot \ca$.
On the other hand, by \eqref{eq:c1c2alpha}, $\cb\leq \ca/\alpha$ as shown in , 
$$ 6\Delta \lesssim \max\{\br,\cb\}  \lesssim\{\ca/\alpha\gamma,\ca/\alpha\gamma\}.$$
So we just need to verify \eqref{eq:constbagcompact}. It is easy to see that the number of balls in $\fbag$ is at least $b/2$. This is because,
\begin{eqnarray*}
\sum_{\bag\in\fbag} |\bag| \geq  b - \sum_{B\in\cB} \I{|f(B)| < \frac{\beta\cdot (6\Delta)}{\br}} \cdot |f(B)| \geq b-|\cB|\cdot \frac{\beta\cdot (6\Delta)}{\br}
\geq b/2.
\end{eqnarray*}
The last inequality uses the assumption of case 1, $|\cB|\leq \frac{b\cdot \br}{12\beta\cdot \Delta}$.
So,  \eqref{eq:constbagcompact} follows by  \eqref{eq:gainbagsballsequal}.

\paragraph{Case 2. $|\cB|\geq \frac{b\cdot\br}{12\beta\cdot\Delta}$.}
We construct an assigned family of bags of balls.
For any node $t\in T$, let $\bag_t$ be the set of balls 
in $\cB$ such that their centers are in $V'(t)$.
If the center of a ball $B$ in $\cB$ belongs to multiple $V'(t)$'s we include $B$ in exactly one of those sets arbitrarily.
Note that each $\bag_t$ is a bag of balls with type $(\Delta)$.
For each $t\in T$, if
\begin{equation}
\label{eq:btdef}
\frac{|\bag_t|}{|\cB|} \geq \frac{|\cut(t)|}{4N},
\end{equation}
then we add $\bag_t$ to $\fbag$ and we add $t$ to $S$. Next, we argue that $\fbag$ is a family of $(\alpha\cdot\gamma)^{1+2\eps}$-assigned bag of balls.
%Next, we verify that this is a valid assignment.
First, balls in $\fbag$ are disjoint because they are a subset of balls of $\cB$ and each ball of $\cB$ is in at most one bag of $\fbag$. 

Fix a node $t\in S$. We show $\bag_t$ is $(\alpha\cdot\gamma)^{1+2\eps}$-assigned. Since for any ball $B(X_u,\Delta)\in \bag_t$, 
$u\in V'(t)$, there is an edge $\{u,v\}\in\cut'(t)$ such that
$\norm{X_u-X_v}_1 \leq \cb\leq \Delta$. 
So, we just need to verify \eqref{eq:assignedbag} with $\beta$ replaced by $(\alpha\cdot\gamma)^{1+2\eps}$.
If $\Delta = \br$, by \eqref{eq:btdef},
$$|\bag_t| \geq \frac{|\cB|\cdot |\cut(t)|}{4N} \geq \frac{|\cut(t)|\cdot b\cdot \br}{48\beta \cdot \br \cdot N}\gtrsim  
 \frac{\talpha \cdot |\cut(t)|\cdot N'}{\beta \cdot C_1(\eps)\cdot N} 
%\geq \frac{(\alpha N/4) |\cut(t)|}{2 C_1(\eps) N} \geq \frac{\alpha |\cut(t)|}{8 C_1(\eps)} 
\geq  (\alpha\cdot \gamma)^{1+\eps}\cdot |\cut(t)|,$$
where the second inequality uses the assumption $|\cB|\geq \frac{b\cdot\br}{12\beta\cdot\Delta}$, the third inequality uses \eqref{eq:numberofballsbetabound} and the last inequality uses $(\alpha\cdot \gamma)^\eps \lesssim \frac1{\beta\cdot C_1(\eps)}$.
%and since $|\cut'(t)|\leq k$, $b_t\leq k$.  
Otherwise, $\Delta=\cb$, by \eqref{eq:btdef},
\begin{eqnarray}
|\bag_t| \geq \frac{|\cB|\cdot|\cut(t)|}{4N}\geq \frac{b \cdot \br\cdot |\cut(t)|}{48\beta \cdot\Delta\cdot N} 
&\gtrsim & \frac{\talpha^\eps |\cut(t)|}{C_1(\eps)\beta} \cdot \frac{N'\cdot \ca}{N\cdot \cb} \cdot \nonumber \\
&\gtrsim & \frac{\talpha^{\eps} |\cut(t)|}{C_1(\eps)\beta} \cdot \alpha\cdot\gamma \geq  \alpha^{1+2\eps} \cdot |\cut(t)|.
\label{eq:bagtrholower}
% \frac{\alpha c_1/4800}{2\cb} \geq \frac{2C_1 k c_1}{\alpha \cb} \geq C_1. 
\end{eqnarray}
The third inequality follows by \eqref{eq:bbrbetabound}, the fourth inequality uses \eqref{eq:c1c2alpha}, and the last inequality %of \eqref{eq:bagtrholower} 
uses the assumption that $(\alpha\cdot\gamma)^{\eps} \lesssim \frac1{\beta \cdot C_1(\eps)}$.
%Lastly,  since $b\geq \alpha |T|/32$ and $\br\cdot b = \alpha |T| c_1/2400$, we have 
%$$\br\leq \frac{\alpha^\eps N^\eps {N'}^{1-\eps}}{\alpha N /4C_1(\eps)} \cdot c_1\leq \frac{4C_1(\eps)}{\alpha^{1-\eps}} \cdot c_1
%\leq \Delta/20$$
Therefore, $\fbag$ is a family of $(\alpha\cdot\gamma)^{1+2\eps}$ assigned bags of balls with type $(\Delta,S)$. 

Finally, it remains to verify \eqref{eq:constbagassigned} where $\delta$ is replaced by $\Delta$. %upper-bound $\sum_t \frac1{|\cut(t)|}(\sum_{e\in\cut(t)}(UX\X_e)_{\pi(e)})^2$. 
First, we show that $\sum_{t\in S} |\bag_t| \geq |\cB|/2$. This is because by \eqref{eq:btdef},
$$ \sum_{t\in T\setminus S} |\bag_t| \leq \sum_{t\in T} \frac{|\cut(t)|\cdot|\cB|}{4N}  \leq |\cB|/2.$$
Equation \eqref{eq:constbagassigned} follows by \eqref{eq:gainbagsballsequal}
and the assumption that $|\cB|\geq \frac{b\cdot \br}{12 \beta \cdot \Delta}$. 
%It remains tho verify lemma's conclusions. 
%First, since $b\leq 2N'$ we get that 
%$$\br=\frac{\br \cdot b}{b} \geq \frac{\alpha^\eps N^\eps {N'}^{1-\eps}}{4 C_2(\eps) N'} c_1 \geq \frac{\alpha^\eps c_1}{4 C_2(\eps)}.$$
\end{proof}

\subsection{Construction of a Geometric Sequence of Families of Bags of Balls}
\label{subsec:geometricballs}
In this section we prove \autoref{lem:geometricsequence}.
First, we prove a bucketing lemma. We show that for any $\alpha$-\bad~node $t\in \cT$, we can extract a $1/\polylog(\alpha)$-dominating $2$-homogeneous set $\cut'(t)$ of edges from $\cut(t)$.
\begin{lemma}
\label{lem:findcutp}
For a \expandertree, $\cT$, of $G$, and an $\alpha$-bad node $t\in\cT$, 
if $\alpha$ is sufficiently small, then
there is a 2-homogeneous set $\cut'(t)\subset \cut(t)$ such that $\cut'(t)$ is $1/C_2(\alpha)$-dominating where $C_2(.)$ is a universal polylogarithmic function.
\end{lemma}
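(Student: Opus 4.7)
The plan is to bucket the edges of $\cut(t)$ by the magnitudes of $\langle U^e,\Xb_e\rangle$ and $\norm{\Xb_e}^2$, then argue by pigeonhole that some bucket carries a $1/\polylog(1/\alpha)$ fraction of the overall weighted sum. The delicate point is to bound the dynamic ranges of both quantities in terms of $\alpha$ alone, so that the total number of buckets depends only on $\alpha$ and not on $|\cut(t)|$.

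Write $a_e:=\langle U^e,\Xb_e\rangle$, $b_e:=\norm{\Xb_e}^2$, $n:=|\cut(t)|$, and $S:=\sum_{e\in\cut(t)} a_e$. Since $U$ is semiorthogonal, every row $U^e$ has unit norm, so Cauchy--Schwarz gives $a_e^2\leq b_e$. The $\alpha$-badness condition \eqref{eq:nodealphafraction} rewrites as $\sum_e b_e \leq S^2/(\alpha n)$. By restricting to the edges whose sign of $a_e$ yields the larger absolute total, I may assume all $a_e>0$ and $\sum_e a_e \geq |S|/2$. I would then perform two truncation steps: discard edges with $a_e < |S|/(8n)$, whose aggregate contribution is at most $|S|/8$; and discard edges with $b_e > M_0 := 64 S^2/(\alpha^2 n^2)$, using $\sum_e b_e \leq S^2/(\alpha n)$ with Cauchy--Schwarz to bound both their count by $\alpha n/64$ and their aggregate $a_e$-mass by $\sqrt{(\alpha n/64)\cdot S^2/(\alpha n)} = |S|/8$. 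The surviving edges still contribute at least $|S|/4$ to the sum, and for them $a_e\in[|S|/(8n),\,8|S|/(\alpha n)]$ and $b_e\in[S^2/(64 n^2),\,64 S^2/(\alpha^2 n^2)]$, both of multiplicative range $O(1/\alpha^2)$.

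Having controlled the dynamic range, I would bucket the surviving edges by $(\lfloor\log_2 a_e\rfloor,\lfloor\log_2 b_e\rfloor)$. This produces $O(\log^2(1/\alpha))$ buckets, each $2$-homogeneous by construction. Pigeonhole then yields a bucket $\cut'(t)$ with $\sum_{e\in\cut'(t)} a_e = \Omega(|S|/\log^2(1/\alpha))$, and squaring gives $\big(\sum_{e\in\cut'(t)} a_e\big)^2 = \Omega(S^2/\log^4(1/\alpha))$. Hence $\cut'(t)$ is $1/C_2(\alpha)$-dominating with $C_2(\alpha) = O(\log^4(1/\alpha))$, a polylogarithmic function as required.

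The hard part is controlling the upper tail of $b_e$ without losing a factor depending on $n$. A naive bound such as $\max_e b_e \leq \sum_e b_e$ would leave a dynamic range that scales with $n$ and would force $C_2$ to grow with $n$; the trick is to apply Cauchy--Schwarz to the large-$b_e$ edges using $\sum_e b_e \leq S^2/(\alpha n)$ to bound both their count and their contribution to $S$ simultaneously, which is precisely what localizes the dependence to $\alpha$ alone.
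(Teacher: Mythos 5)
Your proof is correct and conceptually parallels the paper's: both rely on the same key observation that $\langle U^e,\Xb_e\rangle^2\leq\norm{\Xb_e}^2$ combined with the $\alpha$-badness bound on $\sum_e\norm{\Xb_e}^2$ forces the dynamic ranges of $a_e$ and $b_e$ to collapse to $\poly(1/\alpha)$ once negligible tails are discarded, after which a pigeonhole over $O(\log^2(1/\alpha))$ buckets finishes the job. The implementations differ in how that collapse is organized. The paper buckets $\cut(t)$ geometrically in $a_e/\mu$ with ratio $\sqrt{2}$, uses a Markov-type estimate $\sum_{a_e>T}a_e\leq T^{-1}\sum b_e^2$ (together with $a_e\leq b_e$) to cap the top index at $q\asymp\log(1/\alpha)$, picks the heaviest of the $O(\log(1/\alpha))$ surviving buckets, and then repeats a one-dimensional sub-bucketing in $b_e/a_{\wedge i}$. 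You instead truncate both tails up front — trivially below, and for the large-$b_e$ tail via a count-then-Cauchy--Schwarz argument where the paper uses Markov — and then do a single two-dimensional bucketing. Both routes pay two logarithmic factors and land on $C_2(\alpha)=O(\log^4(1/\alpha))$ after squaring.

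One technical adjustment you should make: bucketing the surviving edges by $\lfloor\log_2 a_e\rfloor$ only ensures $a_e^2/a_f^2<4$ within a bucket, i.e., $4$-homogeneity in the first coordinate of \autoref{def:similaredges}, not the claimed $2$-homogeneity. Use base $\sqrt{2}$ for the $a_e$-buckets, exactly as the paper does by taking $c=\sqrt{2}$; this doubles the number of buckets in that dimension but leaves $C_2(\alpha)$ polylogarithmic with the same exponent. (Your $b_e$-bucketing is already fine since $b_e=\norm{\Xb_e}^2$, so $\lfloor\log_2 b_e\rfloor$ gives the required ratio $<2$ directly.)
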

\begin{proof}
We fix $t$ throughout the proof and use $\cut$ instead of $\cut(t)$ for brevity.
Throughout the proof all probabilities are measured under the uniform distribution on $\cut$.
Let 
\begin{eqnarray*}
	a_e&:=&\langle\Ue, \Xb_e\rangle,\\
	b_e &:=&\norm{\Xb_e},\\
	\mu &:=& \iEE{e\sim\cut}{a_e}. 
\end{eqnarray*}
Note that since $\norm{U^e}=1$, $a_e\leq b_e$ for any $e$. To prove the claim it is enough to find a 2-homogeneous set $\cut'$
such that
\begin{equation}\label{eq:goalbucketinghom} \P{e\in\cut'}^2 \cdot \min_{e\in\cut'} a_e^2 \geq \frac{\mu^2}{C_2(\alpha)}.	
\end{equation}
Then, the lemma follows by 
$$\Big(\sum_{e\in\cut'} a_e\Big)^2
\geq |\cut|^2\cdot \P{e\in \cut'}^2 \min_{e\in\cut'}a_e^2 \geq \frac{|\cut|^2\cdot \mu^2}{C_2(\alpha)} = \frac{1}{C_2(\alpha)}\cdot \big(\sum_{e\in\cut} a_e\big)^2.$$ 
We prove \eqref{eq:goalbucketinghom} as follows: First, we partition the edges into sets $\cut_1,\cut_2,\dots$ such that for any $e,f\in\cut_i$, $a_e\approx a_f$. Then, we show that there is an index $i$, such that $\P{e\in\cut_i} \cdot \min_{e\in\cut_i} a_e \gtrsim \frac{\mu}{\log(\alpha)}$ (see \eqref{eq:maxOimulogalpha}). Then, we partition $\cut_i$ into sets $\cut_{i,1},\cut_{i,2},\dots$ such that any $\cut_{i,j}$ is 2-similar. Finally, we show that there is an index $j$ such that $\cut_{i,j}$ satisfies \eqref{eq:goalbucketinghom}.

For $i\in\Z$ and $c:=\sqrt{2}$, define,
$$\cut_i:=\{e\in \cut(t): c^{i} \leq a_e/\mu < c^{i+1}\}.$$
  We  write $\cut_{\geq j} = \cup_{i=j}^\infty \cut_i$. Also, for any $i$ let $a_{\wedge i}=\min_{e\in\cut_i} a_e$.
%We can re-write the lemma's assumption as follows:
%\begin{equation}
%\label{eq:muaebeclaim}
%\mu^2 \asymp \Big(\sum_{i=-\infty}^\infty \P{e\in \cut_i} a_{e_i}\Big)^2 \geq \alpha \cdot \sum_{i=-\infty}^\infty \P{e\in \cut_i} b_{e_i}^2.
%\end{equation}

Next, we show that there exists $-4\leq i < 2(2+\log(1/\alpha))$ such that $\P{e\in \cut_i} a_{\wedge i} \gtrsim \mu/\log(1/\alpha)$.
First, observe that,
\begin{equation}\label{eq:Fineginftysmall} \sum_{i=-\infty}^{-6} a_{\wedge i}\cdot \P{e\in \cut_i} \leq \sum_{i=-\infty}^{-6} c^{-5}\mu\cdot \P{e\in \cut_i}  \leq \mu/c^5.
\end{equation}
Let $q =\Theta(\log(1/\alpha))$ %:=2(2+\log\alpha^{-1})$ 
be chosen such that $c^q=c^5/\alpha$. Then,
\begin{eqnarray} 
\frac{c^5\mu}{\alpha}\cdot \sum_{i=q}^\infty a_{\wedge i} \cdot \P{e\in\cut_i} &\leq & \sum_{i=q}^\infty a_{\wedge i}^2 \cdot \P{e\in\cut_i}\nonumber \\ 
&\leq & \iEE{e\sim \cut_{\geq q}}{b_e^2}\cdot 
\P{e\in \cut_{\geq q}} \nonumber \\
%\sum_{i=q}^\infty \iE{e\in F_i} b_e^2 \\
&\leq & \iEE{e\sim\cut}{b_e^2}
\leq \frac{\mu^2}{\alpha}.
\label{eq:Fiaebig}
\end{eqnarray}
The second  inequality uses $a_e\leq b_e$ and the last inequality uses that $t$ is $\alpha$-\bad.
Summing up \eqref{eq:Fineginftysmall} and $\alpha/c^5\mu$ of \eqref{eq:Fiaebig} we get
$$ \sum_{i\geq q \text{ or } i\leq -6} a_{\wedge i} \cdot\P{e\in\cut_i} \leq \mu/c^3 \Rightarrow \sum_{i\geq q\text{ or } i\leq -6} a_e\P{e\in\cut_i} \leq \mu/2, $$
%$$\Rightarrow \sum_{-4\leq i<q} \P{e\in\cut_i}a_{\wedge i}\geq \mu/2.$$
where we used that for any edge $e\in\cut_i$, $a_{\wedge i} \geq a_e/c$.
Therefore,
\begin{equation}
\label{eq:maxOimulogalpha}
 \max_{-5 \leq i < q} \P{e\in \cut_i} \cdot a_{\wedge i} \geq \frac{1}{5+q}\sum_{i=-5}^q \P{e\in \cut_i} a_{\wedge i} \geq \frac{1}{c(5+q)}\sum_{i=-5}^q a_e\P{e\in\cut_i} \geq \frac{1}{c(5+q)}\cdot \frac{\mu}{2}. %\gtrsim \frac{\mu}{\log(1/\alpha)}.
\end{equation}
%\frac{\mu}{16 + 4\log(1/\alpha)}.$$
%where in the last inequality we used that for any $e\in \cut_i$, $a_{\wedge i} \geq a_e/c$. 
Let $i$ be the maximizer of the LHS of the above equation. It remains to choose a subset of $\cut_i$ 
such that $b_e^2/b_f^2 < 2$ for all $e,f$ in that subset.

%First, note that $b_e \geq a_{e^*}$ for all $e\in F_i$.
For any integer $j\geq 0$, we define 
$$\cut_{i,j}:=\{e\in \cut_i: c^j \leq b_e/a_{\wedge i} < c^{j+1}\}.$$
Note that any set $\cut_{i,j}$ is 2-similar. We show that there is an index $j<q$ such that $\cut_{i,j}$ satisfies \eqref{eq:goalbucketinghom}.
Let $\cut_{i,\geq q} = \cup_{j=q}^\infty \cut_{i,j}$. 
Similar to \eqref{eq:Fiaebig},
$$
c^{2q}\cdot \P{e\in \cut_{i,\geq q}} a_{\wedge i}^2 
%\leq \iE{e\in \cut_{i,\geq q}} b_e^2 
%\leq \iEE{e\sim\cut}{a_e^2} 
\leq \iEE{e\sim \cut}{b_e^2} \leq \frac{\mu^2}{\alpha} 
\leq \frac{1}{\alpha}\cdot 8a^2_{\wedge i} \cdot(5+q)^2\cdot \P{e\in\cut_i}^2,
%\leq \frac1\alpha (2(4+q)\iE{e\in \cut_i} a_e)^2
%\leq \alpha(2c(4+q) a_{e_i} \P{e\in \cut_i})^2 .
$$
where the last inequality uses \eqref{eq:maxOimulogalpha}.
Using $c^q = c^5/\alpha$, we obtain
$$ \P{e\in \cut_{i,\geq q}} 
\leq \frac{\alpha}{4} \cdot (5+q)^2\cdot \P{e\in \cut_i}^2 \leq \frac12\cdot \P{e\in\cut_i}^2,$$
for a sufficiently small $\alpha$.
Now, let $j=\argmax_{0\leq j < q} \P{e\in\cut_{i,j}}$.
Then,
$$ \P{e\in\cut_{i,j}}^2 \cdot a_{\wedge i}^2\geq \frac{a^2_{\wedge i}}{q^2}\cdot  (\P{e\in \cut_i} -\P{e\in \cut_{i,\geq q}})^2 \geq \frac{\P{e\in \cut_i}^2\cdot a^2_{\wedge i}}{4q^2} \geq \frac{\mu^2}{32 q^2(5+q)^2}.$$
The last inequality uses \eqref{eq:maxOimulogalpha}.
Now, \eqref{eq:goalbucketinghom} follows by the above inequality and $C_2(\alpha)=32 q^2(5+q)^2$ and $\cut'(t) = \cut_{i,j}$; 
\end{proof}

%,\cut(t')$ for $t,t'\in T\setminus T'$. 

Now, we are ready to prove \autoref{lem:geometricsequence}.
First, by \autoref{lem:findcutp} for each $\alpha$-\bad~node $t\in T$,
there is a 2-homogeneous $\gamma$-dominating set $\cut'(t)\subseteq \cut(t)$ where $\gamma=1/C_2(\alpha)$.
For each $t\in T$, let  
$$a_t=\min_{e\in\cut'(t)} \langle\Ue, \Xb_e\rangle^2 \text{ and }
b_t=\min_{e\in\cut'(t)} \norm{\Xb_e}^2_2.$$
%Without loss of generality, assume that $\min_{t\in T} a_t\geq 1$. 
%Note that by \eqref{eq:findcutpfeas}, $b_t \leq \talpha a_t$. 
Let $\tlambda<1$ be a function of $\lambda$ that we fix later. 
For any integer $i\in\Z$, let
$$ T_i:=\{t\in T:  \tlambda^{i+1/2}\leq a_t < \tlambda^{i-1/2} \}$$
%Without loss of generality (perhaps by slightly perturbing $\lambda$), we assume 
Note that, by definition, for all $i\neq j$, $T_i\cap T_{j} = \emptyset$. % then we arbitrarily assign the nodes in the intersection to either of the two sets.

Next, we partition the bad nodes of each $T_i$ 
into sets $T_{i,j_a,j_b}$ such that each set $\cup_{t\in T_{i,j_a,j_b}}\cut'(t)$ is 4-homogeneous. We will apply \autoref{cor:coradiusbagofballs} to the a $T_{i,j_a,j_b}$ with the largest contribution in the numerator. This will give us a family of either compact or assigned bags of balls. Then, we will drop the bags for odd (or even) $i$ randomly. Since for any $t\in T_i, t'\in T_{i+2}$, $a_{t'} < \tlambda a_{t}$ we will obtain a $\tlambda$-geometric sequence of bags of balls.

First, we partition the nodes of each $T_{i}$ into sets $T_{i,j_a,j_b}$; for all integers $0\leq j_a$  and $0\leq j_b$ let
$$ T_{i,j_a,j_b} := \{t\in T_i: 2^{j_a} \leq \frac{a_t}{\tlambda^{i+1/2}} < 2^{j_a+1}, 2^{j_b}\leq \frac{b_t}{a_t} < 2^{j_b+1} \}.$$
Observe that  for all $i,j_a,j_b$, $\cup_{t\in T_{i,j_a,j_b}} \cut'(t)$ is  4-homogeneous.
Note that by the definition of $T_i$, for $j_a> \log(1/\tlambda)$, $T_{i,j_a,.}=\emptyset$.
On the other hand, since $t$ is $\alpha$-bad and $\cut'(t)$ is $\gamma$-dominating, $a_t\gtrsim \alpha\gamma b_t$ (see \eqref{eq:c1c2alpha}); so for $j_b > \log(1/\alpha\gamma)+O(1)$, $T_{i,..j_b}=\emptyset$. 
Therefore, for any  $i$, the number of nonempty sets $T_{i,j_a,j_b}$  is at most $O(\log(1/\tlambda \alpha\gamma))$.

For a set $S\subseteq T$, let
$$ \Pi(S) := \sum_{t\in S} \frac1{|\cut(t)|}\Big(\sum_{e\in \cut'(t)} \langle\Ue, \Xb_e\rangle\Big)^2.$$
For each $T_{i}$ let 
$$T^*_{i} = \argmax_{T_{i,j_a,j_b}} \Pi(T_{i,j_a,j_b}).$$ 
Since any $t\in T^*_i$ is $\alpha$-\bad~and $\cut'(t)$ is $\gamma$-dominating, and $\cup_{t\in T^*_i}\cut'(t)$ is 4-homogeneous, and 
by the lemma's assumption
$$ (\gamma\alpha)^{\eps} = \frac{\alpha^\eps}{C_2(\alpha)^\eps} \lesssim \frac{1}{\beta \cdot C_1(\eps)},$$
  we may invoke \autoref{cor:coradiusbagofballs} for each set $T^*_i$. This gives us either a family of $\beta$-compact bags of balls $\fbag_i$ with type $(\br_i,\Delta_i)$, 
or  a  family of $(\alpha\gamma)^{1+2\eps}$-assigned bags of balls, $\fbag_i$ of type $(\br_i,S^*_i)$ where $S^*_i\subseteq T^*_i$.
These families satisfy two additional constraints: Firstly, $\br_i,\Delta_i=\min_{t\in T^*_i} a_t$ up to an $O(\alpha\gamma)$ factor, secondly, the sum of the radii of all balls in the family is at least $\frac{(\alpha\gamma)^\eps}{\beta C_1(\eps)} \Pi(T^*_i)$.

We remove half of the families to obtain a geometric sequence.
First, by the definition of $T_i$, 
$$ \tlambda \cdot \min_{t\in T^*_i} a_t \geq  \min_{t\in T^*_{i+2}} a_t.$$
This means that if we remove families for either odd or even $i$'s, then the decaying rate of $\min_{t \in T^*_i} a_t$ is at least $\tlambda$. 
Therefore by the properties guaranteed by \autoref{cor:coradiusbagofballs}, and the above fact, any subsequence of odd or even compact or assigned families of bags of balls is $O(\tlambda/(\alpha\cdot\gamma)^2)$-geometric. Setting $\tlambda\asymp \lambda\cdot(\alpha\cdot\gamma)^2$ produces $\lambda$-geometric sequences.

Without loss of generality we assume that 
$\Pi(\cup_i T_{2i}) \geq \Pi(\cup_i T_{2i+1})$.
Drop the families for odd $i$; consider the sum of radii of balls in the remaining compact families and in the remaining assigned families; one of them is greater.
We let this be our $\lambda$-geometric family. 

It remains to verify \eqref{eq:constgeomcompact} and \eqref{eq:constgeomassigned}.
By \autoref{cor:coradiusbagofballs}, the sum of the radii  in the constructed geometric sequence is at least $\gtrsim \frac{(\alpha\cdot \gamma)^\eps}{\beta C_1(\eps)} \sum_i \Pi(T^*_{2i})$. By the definition of $T^*_i$,
\begin{eqnarray*} 
\sum_i \Pi(T^*_{2i}) \gtrsim
\frac{1}{|\log(\tlambda\alpha\gamma)|} \sum_i \Pi(T_{2i})
\geq \frac{\Pi(T)}{|\log(\lambda\poly(\alpha))|}.
\end{eqnarray*}
%In the last inequality we crudely upper-bound $\log(1/\talpha)+\log(1/\lambda)$
%by $\log(1/\talpha)\cdot \log(1/\lambda)$.
Now, since each $\cut'(t)$ is $\gamma=1/C_2(\alpha)$-dominating,
$$ \Pi(T) \geq 
\frac1{C_2(\alpha)}\cdot \sum_{t\in T} \frac1{|\cut(t)|} \cdot \Big(\sum_{e\in\cut(t)} \langle\Ue, \Xb_e\rangle\Big)^2.
$$
%\end{proofof}
\def\rho{\br}

\section{Lower-bounding the Denominator of the Dual}
\label{sec:charging}
In this part we upper-bound the sum of radii of balls in a geometric sequence. Throughout this section we use $C_3,C_4>0$ as  large universal constants. 
%We start with the easier case of 
%analyzing geometric sequence of compact bags of balls. 
The following two propositions are the main statements that we prove in this section. 

\begin{proposition}
\label{prop:compactballconstruction}
Given a $k$-edge-connected graph $G$, and a $\lambda$-geometric sequence of families of $C_3$-compact bags of balls $\fbag_1$, $\fbag_2$, $\dots$ where $\fbag_i$ has type $(\br_i,\Delta_i)$, if $\lambda\leq 1/12$ and $C_3\geq 36$,
then
$$ \frac{k}{4}\cdot \sum_{i} \br_i \sum_{\bag\in\fbag_i}  |\bag| \leq \sum_{\{u,v\} \in E} \norm{X_u-X_v}_1.$$
\end{proposition}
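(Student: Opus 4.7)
The strategy is to convert the given $\lambda$-geometric sequence of compact bags into a collection of pairwise disjoint hollowed $L_1$ balls, each centered at a vertex of $G$, whose total widths sum to at least $\tfrac14\sum_i \br_i\sum_{\bag\in\fbag_i}|\bag|$. Once I have such a disjoint family, I conclude via a variant of \autoref{fact:ballradii}: for every hollowed ball $B(X_v,r_1\|r_2)$ centered at a vertex $v$, the $k$-edge-connectivity of $G$ supplies $k$ edge-disjoint paths from $v$ to a vertex outside $B(X_v,r_2)$; each such path must traverse the shell, and hence the $L_1$-length of the (distinct) edges lying within the shell is at least $r_2-r_1$ per path, giving $k(r_2-r_1)$ per hollow. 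Since hollows are disjoint, no edge is charged twice across hollows, and summing yields the claim $\tfrac{k}{4}\sum_i\br_i\sum_{\bag}|\bag|\le\sum_e\|X_u-X_v\|_1$.

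To build the hollows I process families in order of decreasing $\br_i$. For each bag $\bag\in\fbag_i$ I pick a representative center $X_{u(\bag)}$ from among the ball centers of $\bag$, and install a candidate hollow $H(\bag)=B(X_{u(\bag)},r_{\text{in}}(\bag)\|r_{\text{out}}(\bag))$ with $r_{\text{in}}(\bag)\ge\Delta_i+\br_i$ (so that every ball of $\bag$ sits inside the hole) and with target width $r_{\text{out}}(\bag)-r_{\text{in}}(\bag)\asymp \br_i|\bag|$. The $C_3$-compactness condition $C_3\Delta_i\le\br_i|\bag|$ with $C_3\ge 36$ guarantees $r_{\text{in}}(\bag)\le 2\Delta_i\le \br_i|\bag|/18$, so the inner radius is a tiny fraction of the target width and the hollow is dominated by its shell. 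The $\lambda$-geometric condition $\lambda\br_i>\Delta_{i+1}$ with $\lambda\le 1/12$ simultaneously guarantees that every later family $\fbag_j$ ($j>i$) consists of bags of diameter at most $\lambda\br_i<\br_i/12$, and individual balls of radius at most $\lambda\br_i/2<\br_i/24$, so each later bag either lives entirely inside the hole of some earlier hollow or entirely outside its shell.

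The main obstacle lies in ensuring disjointness within a single family, where two representative centers may be close enough for their target hollows to collide: balls of a single family are pairwise disjoint, so centers are at $L_1$-distance $>2\br_i$, but this is much smaller than the target width $\asymp\br_i|\bag|$. To handle this I process the bags of each $\fbag_i$ greedily, and for each bag I grow its hollow outward from $r_{\text{in}}(\bag)$ until hitting either the target width or a previously installed hollow, in which case I skip past it and continue; the slack $C_3\ge 36$ in the compactness ratio lets me absorb all such "skip" losses into a constant factor, retaining at least $\tfrac14\br_i|\bag|$ of effective width per bag. Between families, the geometric bound $\lambda\le 1/12$ ensures that smaller bags can be placed deep inside the holes of larger hollows (and their hollows correspondingly fit inside those holes) without encroaching on the outer shells. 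Chaining these two bookkeeping arguments produces the sought disjoint hollows with total width $\ge\tfrac14\sum_i\br_i\sum_{\bag}|\bag|$, completing the proof.
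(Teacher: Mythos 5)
There is a genuine gap, and it is fatal to your main construction. You replace each bag $\bag\in\fbag_i$ by a \emph{single} hollowed ball $B(X_{u(\bag)}, r_{\mathrm{in}}\| r_{\mathrm{out}})$ of shell width $\asymp\br_i|\bag|$ around one representative center. To charge the width of a hollowed ball against the denominator you need \emph{both} a vertex inside the hole and a vertex outside the shell — this is exactly Property~\ref{propcom:kconnectivity} in the paper's \autoref{fig:compconstprop}, and it is what makes \autoref{lem:countingedgelengthradii} work. Your construction supplies the inner vertex ($X_{u(\bag)}$ itself) but gives no reason for an outer one to exist. Indeed, the only vertices you know to be near $X_{u(\bag)}$ are the other centers of $\bag$, and by $\Delta_i$-boundedness and compactness they all lie at distance at most $\Delta_i\leq\br_i|\bag|/C_3$, i.e.\ well inside the hole. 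It is entirely possible that every vertex of $G$ is mapped inside $B(X_{u(\bag)}, r_{\mathrm{out}})$, in which case the $k$ edge-disjoint paths emanating from $u(\bag)$ never cross your shell and the charging collapses. The paper avoids this by never installing a hollow wider than the object that certifies an outside vertex: it inserts the $|\bag|$ \emph{individual} balls of the bag, each of width only $\br_i$, and each trivially satisfying the inside/outside invariant because the bag has $\geq 2$ disjoint balls (another center is at distance $>2\br_i$). The compactness $C_3\Delta_i\leq\br_i|\bag|$ is then used for a completely different purpose: to show that when the parent hollow is split into two annuli to accommodate the new balls, the width lost in the split (proportional to $\Delta_i$) is dominated by the width gained from the $|\bag|$ small balls (proportional to $\br_i|\bag|$).

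Beyond the missing-outer-vertex problem, two secondary concerns: (i) the claim that later families either lie ``entirely inside the hole'' or ``entirely outside the shell'' of an earlier hollow does not follow — a later bag's centers may land anywhere, including inside your shell of width $\asymp\br_i|\bag|$, which is huge compared to $\br_j$ for $j>i$; (ii) the intra-family ``skip past a previously installed hollow'' accounting is not bounded by the slack $C_3$, since the hollow being skipped has width $\asymp\br_i|\bag'|$, which bears no relation to $\br_i|\bag|$. The paper's inductive token scheme (with $\token_\tau$ discounted by $6\Delta_\l$ to prefund the end-of-phase shrinking) and the interior/border dichotomy are precisely the machinery needed to handle these interactions correctly; you would need to reconstruct something equivalent.
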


\begin{proposition}
\label{prop:assignedballconstruction}
Given a $(k,k\cdot \lambda,T)$-LCH, $\cT$, of $G$ and a $\lambda$-geometric sequence of  families of $24C_3/k$-assigned bags of balls, $\fbag_1$, $\fbag_2$, $\dots$ such that each $\fbag_i$ is of type $(\br_i,T_i)$ where  $T_i\subseteq T$,
if $C_4\geq 3$, $\lambda\leq 1/6C_4$ and $C_3 \geq 2((C_4+1)+4(C_4+2)^2)$, then
$$ \frac{k}{8}\cdot \frac{C_4}{12C_3}\cdot \sum_{i} \rho_i \sum_{t\in T_i}  |\bag_t| \leq  \sum_{\{u,v\}\in E} \norm{X_u-X_v}_1.$$
\end{proposition}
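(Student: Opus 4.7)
The plan is to charge the right-hand side $\sum_{\{u,v\}\in E}\norm{X_u-X_v}_1$ by converting each assigned bag $\bag_t$ into a family of \emph{label-disjoint hollowed balls} and then using the $k$-edge-connectivity of $G(t)$ promised by the LCH property to route $\Omega(k)$ edge-disjoint paths that each must pay the hollowed ball's width. Concretely, I would label each ball $B\in\bag_t\in\fbag_i$ with the node $t\in T_i$ and attach to it a \emph{conflict set} $\cC(B)\subseteq V(t)$, initialized to $V(t)$. The target is to modify the balls into hollowed balls and to shrink their conflict sets so that for any two intersecting hollowed balls $B,B'$ we have $\cC(B)\cap\cC(B')=\emptyset$, while losing only a constant fraction of the total width $\sum_i \br_i \sum_{t\in T_i}|\bag_t|$.

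The construction would proceed from the finest scale (largest $i$) to the coarsest (smallest $i$), and the geometric decay $\br_{i+1}<\lambda\br_i$ together with $\lambda\leq 1/6C_4$ is what keeps the carving affordable. At scale $i$, whenever a ball $B\in\fbag_i$ with label $t$ contains hollowed balls $B_1,\dots,B_s$ from previously processed (finer) scales with labels $t'_1,\dots,t'_s$, I would carve $B$ into a hollowed ball by removing a thin shell slightly larger than each $B_j$, and simultaneously prune $\cC(B)$ by removing $V(t'_j)$ for every $j$. Because the $B_j$'s at each finer scale are themselves pairwise disjoint within the families $\fbag_{j}$, a simple geometric summation, governed by the $\lambda$-decay, shows the total width removed from any single $B$ is at most a $1/(C_4+1)$ fraction of $\br_i$, so $B$ retains width $\gtrsim \br_i$.

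With the label-disjoint hollowed balls in hand, the charging works as follows. For each hollowed ball $B$ with label $t$, the set $S:=\cC(B)\cap\interior(B)$ is a nonempty subset of $V(t)$ (the original center $u$ survives pruning since $u\in V(t)\setminus V(t'_j)$ for every descendant label $t'_j$ of $t$), and $G(t)$ is $k$-edge-connected. Hence there are $k$ edge-disjoint paths in $G(t)\subseteq G$ from $S$ to $V(t)\setminus S$, and each such path must traverse $L_1$ distance at least equal to the width of $B$. Summing over all hollowed balls, the total $L_1$ length charged is at least $k$ times the total width, which is $\Omega(1)\cdot\sum_i \br_i\sum_{t\in T_i}|\bag_t|$. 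Crucially, since intersecting hollowed balls have disjoint conflict sets, each graph edge is charged $O(1)$ times.

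The main obstacle is showing that the pruning does not destroy the $k$-edge-connectivity argument inside $G(t)$: removing a descendant $V(t'_j)$ from $\cC(B)$ disables the edges of $\cut(t'_j)$, so I must argue that the remaining vertices still support $\Omega(k)$ edge-disjoint paths from $u$ into $V(t)\setminus S$ using only edges incident to $\cC(B)$. This is where the $(k,k\lambda,T)$-LCH hypothesis $|\cut(t')|\geq k\lambda\cdot|\setdeg(t')|$ combines with the assigned-ness condition $|\bag_{t'}|\geq (24C_3/k)\cdot|\cut(t')|$: the assigned-ness forces a $24C_3/k$-fraction of $\cut(t'_j)$ to actually cross the boundary of $B_j$, so those boundary edges can be "paid for" by the width of $B_j$ itself rather than by $B$, while the $k\lambda$-density bounds the amount of cut-capacity that is genuinely lost when $V(t'_j)$ is pruned. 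Tracking these two losses through the geometric sum across all nested scales, and choosing $C_3\geq 2((C_4+1)+4(C_4+2)^2)$ to absorb all additive overhead, yields the stated constant factor $C_4/(12 C_3)$ in the conclusion.
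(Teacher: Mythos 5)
Your high-level blueprint---label each ball with a node of $\cT$, prune conflict sets so that intersecting balls become label-disjoint, and then charge each ball with $\Omega(k)$ edge-disjoint paths inside its conflict set---matches the spirit of the paper's framework, but two concrete steps fail. First, the assertion that ``a simple geometric summation, governed by the $\lambda$-decay, shows the total width removed from any single $B$ is at most a $1/(C_4+1)$ fraction of $\br_i$'' is false: a \emph{single} finer family $\fbag_j$ with $j>i$ can place $\Theta(\br_i/\br_j)$ disjoint balls at distinct radial distances inside $B$, and carving a shell of width $2\br_j$ around each of them removes $\Theta(\br_i)$ width at that one scale alone, so the geometric decay across scales does not save you from this within-scale accumulation (this is precisely the failure mode exhibited in \autoref{ex:intersectingballs}). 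Second, once you prune $V(t'_j)$ from $\cC(B)$, you need $\Omega(k)$ edge-disjoint paths crossing $B$ supported on $G[V(t)\setminus\cup_j V(t'_j)]$, but after such vertex deletions this induced graph can be badly disconnected or even have no vertices near the outer shell of $B$; the $k$-edge-connectivity of $G(t)$ gives you nothing once an adversarially large descendant subtree is removed, and your sketch of combining the assignedness and $(k,k\lambda,T)$ conditions does not supply the missing paths.

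The paper resolves both issues with a device your proposal does not contain. It processes coarse-to-fine, and the key new object is the \emph{avoiding ball}: when the balls of a small bag $\bag_t$ threaten to consume the width of an already-inserted larger hollowed ball $B$ whose label $t(B)$ is an ancestor of $t$, the algorithm does not carve a shell around each small ball; it inserts a single avoiding hollowed ball $B_4$ with $t(B_4)=t(B)$ and $t_d(B_4)=t$, which \emph{overlaps} the $\bag_t$-balls but carries a disjoint conflict set. Crucially, $B_4$ is charged via $k/4$ edge-disjoint paths routed through $G[V(t(B))\setminus V(t)]$---the parent's side of the cut rather than the pruned interior of $V(t)$---using a natural decomposition of $G[V(t^*)\setminus V(t)]$ into $k/4$-edge-connected components (\autoref{lem:numcomps} bounds their number by $2|\cut(t)|/k$). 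The assignedness of $\bag_t$ is used to guarantee that enough $\cut(t)$-endpoints land in a single such component to span the carved radial range, and the $(k,k\lambda,T)$-LCH density hypothesis is spent separately, in \autoref{lem:noninsertable}, to control the ``non-insertable'' balls during preprocessing. The avoiding-ball mechanism, together with routing through the parent's complement instead of the pruned interior, is what your argument is missing and is the substance of the proof.
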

\noindent Note that in the above proposition, the assumption $\lambda\leq 1/6C_4$ follows from $k\cdot \lambda < 1$. 

First, we use the above propositions to finish the proof of \autoref{thm:metricratio}. Recall \autoref{thm:metricratio}:
\metricratiothm*
\begin{proof}
%\begin{proofof}{\autoref{thm:metricratio}}
Let $T_{\alpha\text{-\bad}}\subseteq T$ 
%as defined in \autoref{lem:geometricsequence}. 
be the set of $\alpha$-\bad~nodes for a parameter $\alpha$ that we set below. 
It follows that,
\begin{equation}
\label{eq:easynodes}
\alpha \geq \frac{\sum_{t\in T\setminus T_{\alpha\text{-\bad}}} \frac{1}{|\cut(t)|}\cdot \big(\sum_{e\in\cut(t)} \langle U^e, \Xb_e\rangle \big)^2}{\sum_{t\in T\setminus T_{\alpha\text{-\bad}}} \sum_{e\in\cut(t)}\norm{\Xb_e}_1} 
\geq \frac{\sum_{t\in T\setminus T_{\alpha\text{-\bad}}} \frac{1}{|\cut(t)|}\cdot \big(\sum_{e\in \cut(t)} \langle U^e, \Xb_e\rangle \big)^2}{2\sum_{e\in E} \norm{\Xb_e}_1}.
\end{equation}
The second inequality uses the fact that each edge is in at most two sets $\cut(t)$.

We apply \autoref{lem:geometricsequence} to $T_{\alpha\text{-\bad}}$.
Let $C_4=3$, $\beta=36$ and $C_3=104$.
We choose $\alpha=\Theta(\polylog(k)/k),\eps= \Theta(\log\log(k)/\log(k))$ such that the following conditions are satisfied
\begin{eqnarray*}
%\alpha &\leq& 1/800,\\
\bigg(\frac{\alpha}{C_2(\alpha)}\bigg)^\eps &\lesssim& \frac{1}{\beta \cdot C_1(\eps)},\\
 \bigg(\frac{\alpha}{C_2(\alpha)}\bigg)^{1+2\eps} &\geq&  \frac{24C_3}{k}.
\end{eqnarray*}
Recall that $C_1(\eps)$ is an inverse polynomial of $\eps$
and $C_2(\alpha)$ is a polylogarithmic function of $\alpha$ so the above assignment is feasible. 
Also let $\tlambda < \lambda/k$ be such that 
 $\tlambda<1/6C_4.$
 
Now, by \autoref{lem:geometricsequence} either there is a $\tlambda$-geometric sequence of $36$-compact bags of balls $\fbag_1$, $\fbag_2$, $\dots,$ that satisfies \eqref{eq:constgeomcompact}, or there is a $\tlambda$-geometric sequence of $24C_3/k$-assigned
bags of balls $\fbag_1$, $\fbag$, $\dots,$
 that satisfies \eqref{eq:constgeomassigned}.
Now, by \autoref{prop:compactballconstruction} and \autoref{prop:assignedballconstruction} we
get
\begin{eqnarray*}
\frac{\sum_{t\in T_{\alpha\text{-\bad}}} \frac{1}{|\cut(t)|}\cdot\big(\sum_{e \in \cut(t)} \langle U^e, \Xb_e\rangle \big)^2}{\sum_{e\in E} \norm{\Xb_e}_1} \lesssim 
\frac{C_1(\eps)C_2(\alpha)\cdot|\log(\tlambda\poly(\alpha))|}{k\cdot (\alpha/C_2(\alpha))^\eps}
%\frac{\sum_{t\in \cT(G)\setminus T} \Big(\sum_{\{u,v\}\in E(t)} (UX_u - UX_v)_\pi(\{u,v\}) \Big)^2}
%{k\cdot \sum_{i=1}^m \sum_{t\in Z_i} \br_i \cdot b_t }
%\lesssim \frac{k\log^6(k)}{\alpha}.
\end{eqnarray*}
The theorem follows from the above equation together with \eqref{eq:easynodes}.
\end{proof}

\medskip
In the rest of this section we prove above propositions. 
Before getting into the proofs, we give a simple example to show that, in order to bound the denominator, it is necessary to use that the given $\lambda$-geometric sequence of bags of balls is either compact or assigned. 
The following example  is designed based on the dual solution that we constructed in \autoref{thm:avgeffres}.
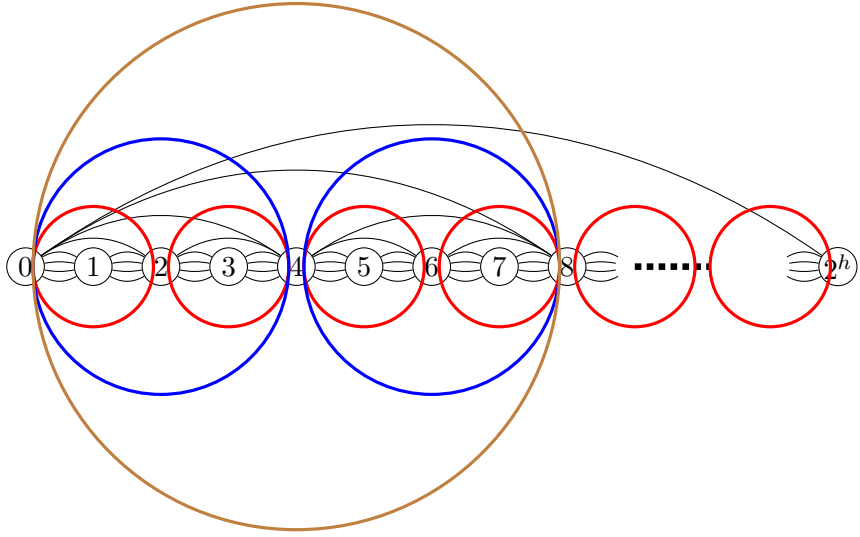
\begin{figure}
\centering
\begin{tikzpicture}[scale=0.5]
\tikzstyle{every node} = [draw,circle,minimum size=5mm,inner sep=0];
\foreach \i/\j in {1/0,2/1,3/2,4/3,5/4,6/5,7/6,8/7,9/8}{
	\node at (1.8*\i,0) (v_\i) {\j};
}
\node at (1.8*10,0) [draw=none] (v_10) {};
\node at (1.8*12,0) [draw=none] (v_11) {};
%\node at (1.8*9,0) (v_n1) {n-1};
\node at (1.8*13,0) (v_n) {$2^h$};
\foreach \i/\j in {1/2, 2/3, 3/4, 4/5, 5/6, 6/7, 7/8, 8/9, 9/10, 11/n}{
	\foreach \a in {30, 10, -10, -30}{
		\path (v_\i) edge [bend left=\a] (v_\j);
	}
}
\path (v_1) edge [bend left=35] (v_3)  (v_3) edge [bend left=35] (v_5) (v_1) edge [bend left=35] (v_5) (v_5) edge [bend left=35] (v_7) (v_1) edge [bend left=35] (v_n)
(v_7) edge [bend left=35] (v_9) (v_5) edge [bend left=35] (v_9) (v_1) edge [bend left=35] (v_9);
\draw [dotted, line width=2.5] (18,0) -- (20,0);
\draw [color=red,line width=1.2pt](v_2) circle (1.6) (v_4) circle (1.6)
(v_6) circle (1.6) (v_8) circle (1.6) (v_10) circle (1.6) (v_11) circle (1.6);
\draw [color=blue,line width=1.2pt] (v_3) circle (3.4) (v_7) ellipse (3.4);
\draw [color=brown,line width=1.2pt] (v_5) circle (7);
\end{tikzpicture}
\caption{Consider the natural $L_1$ mapping of the graph of \autoref{fig:maxeffresbadexample} where vertex $i$ is mapped to the number $i$. Consider  $h$ layers of $L_1$ balls as shown above where the radii of all balls in layer $i$ is $2^i$ and they are disjoint. Although the sum of the radii of all balls in this family is $\Theta(n\cdot h)$, the sum of the $L_1$ lengths of the edges of $G$ is $n\cdot (h+k).$}
\label{fig:badintersectingbagsofballs}
\end{figure}
\begin{example}
\label{ex:intersectingballs}  
Let $G$ be the graph illustrated in \autoref{fig:maxeffresbadexample}, and let $X_0, X_1,\ldots,X_{2^h}$ be an embedding of $G$ where $X_i=\bone_{[i]}$.
Now, for any $1\leq j\leq h-1$, let $\bag_j$ be the union of balls
$$B(X_{2^j}, 2^j), B(X_{3\cdot 2^j}, 2^{j}),B(X_{5\cdot 2^j},2^j),\dots,B(X_{2^h-2^j}, 2^j).$$ 
Note that the center of each of these balls is a vertex of $G$ and that for any $j$, all balls of $\bag_j$ have equal radius and are disjoint (see \autoref{fig:badintersectingbagsofballs}). So we get a $1/2$-geometric sequence of bags of balls (and similarly we can obtain a $\lambda$-geometric sequence by letting $j$ be  multiples of $\log(1/\lambda)$).
As alluded to in the proof of \autoref{thm:avgeffres},
the sum of the radii of balls in the given sequence is $h\cdot 2^h$ while the sum of the $L_1$ lengths of edges of $G$ is only $(h+k)\cdot 2^h$. 
\end{example}

The above example serves as a crucial barrier to both of our proofs. In the proof of \autoref{prop:compactballconstruction} we bypass this barrier
using the compactness of bags of balls. 
Note that in the above example $\bag_j$ is not compact, and indeed the diameter of centers of balls of $\bag_j$ is $2^h$ which is the same as the sum of the radii of balls in $\bag_j$. 
In the proof of \autoref{prop:assignedballconstruction}
we bypass the above barrier using the properties of the \expandertree.

\subsection{Charging Argument for Compact Bags of Balls}
\label{subsec:compactchargin}
In this section we prove \autoref{prop:compactballconstruction}.
We construct a set  of \emph{disjoint} $L_1$ hollowed balls inductively from the given compact bags of balls.
For any integer $\tau\geq 0$, we use $\cZ_\tau$ to denote the set of hollowed balls in the construction at time $\tau$. Initially, we have $\cZ_0=\emptyset$ and $\cZ_\infty$ is the final construction.
We describe the main properties of our construction in
\autoref{fig:compconstprop}.
\begin{figure}
\centering
\fbox{\parbox{6.6in}{ %\vspace*{2mm}
\begin{enumerate}
\item We process bags of balls in phases;
we assume that phase $\l$ starts at time $\tau_{\l-1}+1$ and ends at $\tau_{\l}$. In phase $\l$ we process the bags in $\fbag_\l$; in other words, we process larger balls earlier than smaller ones. In each time step (except the last one) of phase $\l$ we process exactly one bag of $\fbag_\l$.
\item In addition to adding new balls, in each phase we may shrink or delete some of the already inserted (hollowed) balls but when we insert a ball of $\fbag_\l$ we never alter it until after the end of phase $\l$.
\label{propcom:shrink}
\item We keep the invariant that for any $\tau$, all (hollowed) balls in $\cZ_\tau$ are disjoint. 
This crucial property will not hold in our construction of the assigned bags of balls in the next section and it is the main reason that our second construction is more technical. 
\label{propcom:disjointness}
\item For any hollowed ball $B(x,r_1\|r_2)\in\cZ_\tau$,
there are vertices $u,v\in V$ such that $\norm{x-X_u}_1 \leq r_1$ and $\norm{x-X_v}_1\geq r_2$. 
\label{propcom:kconnectivity}
\end{enumerate}
}}
\caption{Properties of the inductive charging argument for compact bags of balls.}
\label{fig:compconstprop}
\end{figure}

\paragraph{Inductive Charging.} Before explaining our construction, we 
describe our inductive charging argument. 
First, by the following lemma, in our construction, we only need to lower-bound the sum of the widths of all hollowed balls of $\cZ_\infty$ by (a constant multiple of) the sum of radii of all balls in the given sequence of compact bags of balls.
\begin{lemma}
\label{lem:countingedgelengthradii}
For any $\tau\geq 0$,
$$ k\cdot \sum_{B(x,r_1\|r_2)\in\cZ_\tau} (r_2-r_1) \leq  \sum_{\{u,v\}\in E} \norm{X_u - X_v}_1.$$
\end{lemma}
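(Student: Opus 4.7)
The plan is to show, for each hollowed ball $B=B(x,r_1\|r_2)\in\cZ_\tau$, that the $k$-edge-connectivity of $G$ forces the edges of $G$ to supply at least $k(r_2-r_1)$ ``units of $L_1$-length'' crossing the annulus $B$, while the pairwise disjointness of $\cZ_\tau$ (property~3) prevents any single edge from being double-charged across different balls. For each edge $e=\{a,b\}\in E$, I would parameterize it by the $L_1$-straight segment $\gamma_e(t)=(1-t)X_a+tX_b$, $t\in[0,1]$, which has $L_1$-length $\norm{X_a-X_b}_1$ and satisfies $\norm{\gamma_e(s)-\gamma_e(t)}_1=|s-t|\cdot\norm{X_a-X_b}_1$. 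For $B\in\cZ_\tau$, define the occupation
\[
\mu_B(e)\ :=\ \norm{X_a-X_b}_1\cdot\bigl|\{t\in[0,1]:\gamma_e(t)\in B\}\bigr|.
\]

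The first step is the inner bound $\sum_{e\in E}\mu_B(e)\geq k(r_2-r_1)$. By property~(4) pick $u,v\in V$ with $\norm{x-X_u}_1\le r_1$ and $\norm{x-X_v}_1\ge r_2$, and use $k$-edge-connectivity to fix $k$ edge-disjoint $u$--$v$ paths $P_1,\dots,P_k$. For each $P_i$, concatenate the segments $\gamma_e$ for $e\in P_i$ into a continuous curve $\Gamma_i:[0,L_i]\to\R^h$ parameterized by $L_1$-arclength. Then $f_i(s):=\norm{x-\Gamma_i(s)}_1$ is $1$-Lipschitz (by the reverse triangle inequality for $\norm{\cdot}_1$ together with the fact that, in any normed space, the straight-line distance between two points on a curve is at most its arclength between them), starts at $f_i(0)\le r_1$, and ends at $f_i(L_i)\ge r_2$. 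Letting $T_0=\sup\{s:f_i(s)\le r_1\}$ and $T_1=\inf\{s>T_0:f_i(s)\ge r_2\}$, one has $T_1-T_0\ge r_2-r_1$ by the $1$-Lipschitz property, and $f_i((T_0,T_1))\subseteq(r_1,r_2)$ by the extremal choice of $T_0,T_1$, so $\Gamma_i((T_0,T_1))\subseteq B$ and $\sum_{e\in P_i}\mu_B(e)\ge r_2-r_1$. Summing over $i=1,\dots,k$ and using edge-disjointness of the paths gives $\sum_{e\in E}\mu_B(e)\ge k(r_2-r_1)$.

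The second step is the outer bound $\sum_{B\in\cZ_\tau}\mu_B(e)\le\norm{X_a-X_b}_1$. By property~(3) the hollowed balls are pairwise disjoint as subsets of $\R^h$, so every point $\gamma_e(t)$ lies in at most one ball; hence
\[
\sum_{B\in\cZ_\tau}\mu_B(e)\ =\ \norm{X_a-X_b}_1\cdot\Bigl|\Bigl\{t\in[0,1]:\gamma_e(t)\in\textstyle\bigcup_{B\in\cZ_\tau}B\Bigr\}\Bigr|\ \le\ \norm{X_a-X_b}_1.
\]
Combining the two bounds and swapping the order of summation,
\[
k\sum_{B\in\cZ_\tau}(r_2-r_1)\ \le\ \sum_{B}\sum_{e}\mu_B(e)\ =\ \sum_{e}\sum_{B}\mu_B(e)\ \le\ \sum_{\{u,v\}\in E}\norm{X_u-X_v}_1,
\]
which is the claim. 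The only subtlety worth checking is the $1$-Lipschitz step along the concatenated curve $\Gamma_i$, which amounts to verifying that $L_1$-arclengths add across consecutive edges sharing a vertex so that the parameterization is well-defined; the rest is a ``Fubini plus disjointness'' charging that uses neither the hierarchy $\cT$ nor any property of $\cZ_\tau$ beyond (3)--(4).
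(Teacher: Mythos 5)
Your proof is correct and takes essentially the same approach as the paper: use property~(4) and $k$-edge-connectivity to find $k$ edge-disjoint $u$--$v$ paths each contributing at least $r_2-r_1$ in $L_1$-length to the annulus, and use property~(3) (disjointness of the hollowed balls in $\cZ_\tau$) to prevent overcharging any edge. The occupation-measure/Fubini formalization via the linear parameterization $\gamma_e$ simply makes precise the step the paper states tersely as ``this argument does not overcount the $L_1$-length of any edge of $G$.''
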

\begin{proof}
We simply use the $k$-edge-connectivity of $G$.
First, by property \ref{propcom:kconnectivity} of \autoref{fig:compconstprop} for each hollowed ball $B=B(x,r_1\|r_2)\in \cZ_{\tau}$ there are vertices $u,v\in V$ such that $\norm{x-X_u}_1\leq r_1$ and $\norm{x-X_v}_1\geq r_2$. 
Since $G$ is $k$-edge-connected, there 
are at least $k$ edge-disjoint paths between $u,v$.
 Each of these paths must cross $B$ and, by the triangle inequality, the length of the intersection
with $B$ is at least $r_2-r_1$. 
Finally, since by property \ref{propcom:disjointness} of \autoref{fig:compconstprop}, balls of $\cZ_\tau$ are disjoint, this argument does not overcount the $L_1$-length of any edge of $G$. 
\end{proof}
Suppose at the end of our construction, we allocate $r_2-r_1$ tokens to any hollowed ball 
$B(x,r_1\|r_2)\in \cZ_{\infty}$. Our goal is to distribute these tokens between all bags of balls such that each bag, $\bag$, of type $(\br_i,\Delta_i)$ receives at least $|\bag|\cdot \br_i/4$ tokens. %for some constant $\gain > 0$ that we fix later. 
We prove this by an induction on $\tau$.
Suppose $\tau_{\l-1} < \tau \leq \tau_\l$; for a hollowed ball $B(x,r_1\|r_2)\in\cZ_\tau$, define
\begin{equation}
\label{eq:losscompact}
\token_\tau(B):=\begin{cases}
\br_\l - 6\Delta_{\l+1} & \text{if $B\in \fbag_\l$}\\
[(r_2-r_1)-6\Delta_\l]^+ & \text{otherwise.}
\end{cases}
%\loss(\l) :=  \sum_{i\geq \l} 2(\Delta_i + \br_i)
% \leq \sum_{i\geq \l} 3\Delta_i \leq 3\Delta_\l\cdot \sum_{i\geq 0} \lambda^i \leq 6\Delta_\l
\end{equation}
%where $0<\pshr<1$ is a probability of shrinkage that we fix later.
%be a loss function; 
Instead of allocating $r_2-r_1$ tokens to a ball at time $\tau$, we allocate $\token_\tau(B)$. The term $6\Delta_\l$ takes into account the fact that we shrink balls in $\cZ_\tau$ later in the post processing phase.
We prove the following lemma inductively.

\begin{lemma}
\label{lem:inductioncompact}
 At any time $\tau_{\l-1}+1\leq \tau\leq \tau_\l$, if we allocate  $\token_\tau(B)$ tokens to any hollowed ball $B \in\cZ_\tau$, then
we can distribute these tokens among the bags of balls that we  processed by time $\tau$ such that each $\bag$ of type $(\br_i,\Delta_i)$ receives at least $\br_i\cdot |\bag|/4$ tokens.
\end{lemma}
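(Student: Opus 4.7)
The plan is to induct on $\tau$. The base case $\tau=0$ is vacuous: $\cZ_0=\emptyset$ and no bag has been processed. For the inductive step at time $\tau$ in phase $\ell$, I need to specify an update rule from $\cZ_{\tau-1}$ to $\cZ_\tau$ upon processing a bag $\bag\in\fbag_\ell$ of type $(\br_\ell,\Delta_\ell)$, together with an extension of the distribution scheme that grants $\bag$ at least $|\bag|\br_\ell/4$ new tokens without touching the shares of older bags. The crucial geometric input is the $C_3$-compactness $C_3\Delta_\ell\le|\bag|\br_\ell$ with $C_3\ge 36$, which packs the $|\bag|$ disjoint balls of radius $\br_\ell$ into a region of $L_1$-diameter $\Delta_\ell\ll|\bag|\br_\ell$.

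The update I would adopt is: insert each ball $B(X_u,\br_\ell)$ of $\bag$ into $\cZ_\tau$ as a solid hollowed ball $B(X_u,0\|\br_\ell)$ and resolve every conflict with an older hollowed ball by locally shrinking the older ball. By property~\ref{propcom:shrink} of \autoref{fig:compconstprop}, the new balls of $\bag$ are never themselves altered within phase~$\ell$. Property~\ref{propcom:kconnectivity} is maintained for the inserted balls (with $X_u$ as the inner witness and any vertex at $L_1$-distance $\ge\br_\ell$ from $X_u$ as the outer witness; such a vertex exists unless the denominator of \eqref{eq:metricratiogoal} vanishes and the theorem is trivial) and for the shrunken remnants of older balls (which inherit their witness vertices).

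The token accounting then goes as follows. I would assign every token of each newly inserted ball of $\bag$, amounting to $\br_\ell-6\Delta_{\ell+1}$ per ball, directly to $\bag$. Because $\lambda\le 1/12$ gives $6\Delta_{\ell+1}\le 6\lambda\br_\ell\le\br_\ell/2$, the bag receives in total at least $|\bag|\br_\ell/2\ge|\bag|\br_\ell/4$, meeting its demand. The shares of older bags are preserved because the width lost by any older hollowed ball $B'$ to the in-phase carvings is absorbed by the $6\Delta_\ell$ shrinkage reserve that $\token_\tau(B')$ already sets aside, as distinct from the tokens previously distributed.

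The main obstacle I expect is bounding the cumulative shrinkage endured by any older hollowed ball $B'$ over the entire phase $\ell$: every bag of $\fbag_\ell$ that meets $B'$ contributes a carve, and the total must fit inside the single $6\Delta_\ell$ reserve. I would handle this by choosing the carves to be $L_1$-coaxial (all along the same radial direction toward $B'$'s center), so that the union of the regions carved by the whole family of bags hitting $B'$ is a single thin shell of radial width $O(\Delta_\ell)$. The $\lambda$-geometric decay $\Delta_\ell<\lambda\br_{\ell-1}\le\br_{\ell-1}/12$ guarantees that this shell is a small fraction of $B'$'s width, and the numerical constants $C_3\ge 36$ and $\lambda\le 1/12$ in the proposition's hypothesis are calibrated precisely so that the three token budgets---bag demand $|\bag|\br_\ell/4$, in-phase shrinkage reserve $6\Delta_\ell$, and next-phase reserve $6\Delta_{\ell+1}$---simultaneously fit.
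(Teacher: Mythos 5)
Your high-level plan (induct on $\tau$, insert the balls of $\bag$, shrink the older hollowed ball, account via tokens) is the right skeleton, and you correctly identify that compactness is the engine.  But the token accounting as written has a genuine gap that makes the argument circular.

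\textbf{The main problem: no feedback to the carved ball.}  You propose to give \emph{all} $\br_\ell - 6\Delta_{\ell+1}$ tokens of each newly inserted ball of $\bag$ to $\bag$, and to argue that the width lost by the older hollowed ball $B$ when you carve out the balls of $\bag$ is ``absorbed by the $6\Delta_\ell$ reserve that $\token_\tau(B)$ already sets aside.''  But that reserve is not a separate stash.  The quantity $\token_\tau(B) = [(r_2-r_1)-6\Delta_\ell]^+$ is a function of the \emph{current} width of $B$, and the induction hypothesis says that exactly those $\token_\tau(B)$ tokens have already been promised to older bags.  When you split $B$ into two pieces $B'_1, B'_2$ and remove the middle annulus where the balls of $\bag$ sit, the total width of $B'_1, B'_2$ is $r_2-r_1 - (r'_2 - r'_1) - 2\br_\ell$, so $\token_{\tau+1}(B'_1)+\token_{\tau+1}(B'_2) = \token_\tau(B) - (r'_2-r'_1) - 2\br_\ell - 6\Delta_\ell < \token_\tau(B)$.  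With all of $\bag$'s new tokens diverted to $\bag$, there is nothing left to make up this deficit; $B$ simply cannot continue to pay the older bags.  This is precisely why the paper only gives $\br_\ell/4$ per ball to $\bag^*$ and routes the remaining $\geq 3\br_\ell/4$ per ball back to $B$: that feedback of roughly $\frac34 b\br_\ell \geq \frac34 C_3\Delta_\ell$ overwhelms the $O(\Delta_\ell)$ loss.  Without it the bookkeeping does not close.

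\textbf{Secondary problem: the ``coaxial shell'' claim.}  Even granting you some reserve, the claim that all in-phase carves from bags of $\fbag_\ell$ that meet a fixed $B$ collapse into a single shell of width $O(\Delta_\ell)$ is not justified and is likely false.  Different bags of $\fbag_\ell$ have centers at different $L_1$-distances from $B$'s center, so the annuli they carve out sit at different radii; making the carves ``coaxial'' (same radial direction) does not make them coradial, so their union can have width $\Omega(m\Delta_\ell)$ when $m$ bags hit $B$.  The paper's construction does not rely on any such accumulation bound: it processes bags one at a time, and the per-bag feedback from the $\br_\ell/4$ split makes each carve \emph{locally} self-financing.  Finally, you have merged the two regimes the paper keeps separate --- bags that have at least one interior ball (processed in the loop with the feedback split) and bags all of whose balls are border balls (processed only at the end of the phase, after the $\shrink_\ell$ operation, where indeed the entire $\br_\ell - 6\Delta_{\ell+1}$ per ball goes to the bag).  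Giving everything to $\bag$ is correct only in the second case; using it uniformly breaks the induction in the first.
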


It is easy to see that \autoref{prop:compactballconstruction} follows by applying the above lemma to the final set of hollowed balls $\cZ_\infty$ and using \autoref{lem:countingedgelengthradii}, since
$$ \frac14\sum_{i} \sum_{\bag\in\fbag_i} \br_i\cdot |\bag|\leq  \sum_{B(x,r_1\|r_2)\in\cZ_\tau} r_2-r_1 \leq \frac1k\cdot \sum_{\{u,v\}\in E} \norm{X_u-X_v}_1.$$
%Note that we used that for any ball $B(x,r_1,r_2)$ and any $\l$, \token_\l(
%Since we are 

\paragraph{Construction.} 
It remains to prove \autoref{lem:inductioncompact}.
%Now, we are ready to describe our construction.
First, we need some definitions.
We say a ball $B=B(X_u,\br_\l)\in \fbag_\l$ is in the {\em interior} of a hollowed ball $B'=B(x,r_1\|r_2)\in\cZ_{\tau}$ if 
$$ r_1 + \br_\l+\Delta_\l \leq \norm{X_u-x}_1 \leq r_2 -\br_\l-\Delta_\l.$$
Note that $B$ is inside $B'$ when $r_1+\br_\l \leq \norm{X_u-x}_1 \leq r_2-\br_\l$; so a ball $B$ may be inside $B'$ but not in the interior of $B'$.
If such a $B'$ exists,  we call $B$ an interior ball. If $B$ is not an interior ball, we call it a border ball.
Since hollowed balls in $\cZ_\tau$ are disjoint, $B$ can be in the interior of at most one hollowed ball of $\cZ_\tau$. 
\begin{fact}
\label{fact:compactinterior}
Any ball $B\in\fbag_\l$ is in the interior of at most one hollowed ball of $\cZ_\tau$.
\end{fact}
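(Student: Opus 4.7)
The plan is to derive the fact almost entirely from the disjointness invariant (Property~\ref{propcom:disjointness} of \autoref{fig:compconstprop}), unpacking the interior condition just far enough to place the center $X_u$ strictly inside the shell of the candidate hollowed ball.

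Concretely, suppose $B=B(X_u,\br_\ell)$ is in the interior of some $B'=B(x,r_1\|r_2)\in\cZ_\tau$. By definition of being in the interior, we have
\[
r_1+\br_\ell+\Delta_\ell\;\leq\;\norm{X_u-x}_1\;\leq\;r_2-\br_\ell-\Delta_\ell,
\]
which in particular forces $r_1<\norm{X_u-x}_1<r_2$, i.e.\ the center $X_u$ itself lies in the shell $B'$. The first step of the argument is thus to observe that any interior-containing hollowed ball contains the center $X_u$ of $B$ as a genuine point.

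The second and final step is to invoke Property~\ref{propcom:disjointness}: the hollowed balls of $\cZ_\tau$ are pairwise disjoint as subsets of $\R^h$. Therefore the point $X_u$ can belong to at most one of them. If $B$ were in the interior of two distinct hollowed balls $B',B''\in\cZ_\tau$, step one would place $X_u$ in both, contradicting disjointness. Hence $B$ is in the interior of at most one hollowed ball.

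There is no real obstacle here; the extra $\Delta_\ell$ slack in the interior definition is not even used by this fact (the $\br_\ell$ buffer alone suffices to put $X_u$ into the shell). The fact exists only to justify, in the inductive construction following this lemma, that when we try to charge tokens from a newly inserted ball $B\in\fbag_\ell$ against at most one pre-existing hollowed ball, we are not making an inconsistent choice.
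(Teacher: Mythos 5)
Your proof is correct and matches the paper's intended (but unstated) argument: the interior condition puts the center $X_u$ strictly inside the shell $B(x,r_1\|r_2)$, and since the hollowed balls of $\cZ_\tau$ are pairwise disjoint sets, $X_u$ can lie in at most one of them. The paper simply asserts the fact as an immediate consequence of Property~\ref{propcom:disjointness}; you have supplied the one-line unpacking it leaves implicit.
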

%\noindent If $B$ does not intersect with any ball of $\cZ_\tau$ we say $B$ is a {\em border} ball. 
%If $B$ is neither an inside ball nor an outside ball, we say $B$ is a {\em border} ball. Note that in this construction we treat border and outside balls similarly, but this will change in our second construction.
Suppose \autoref{lem:inductioncompact} holds at time $\tau> \tau_{\l-1}$; we show it also holds at time $\tau+1$. At time $\tau$, we process a bag of balls in $\fbag_\l$ that has at least one interior ball (and is not processed yet); if there is no such bag then we run the post processing algorithm that we will describe later. 
%We process all remaining bags simultaneously at the end of phase $\l$, i.e., at time $\tau_\l$.  
Suppose at time $\tau$ we are processing $\bag^*=\{B_1=B(X_{u_1},\br_\l),\dots,B_b=B(X_{u_b},\br_\l)\}$
of $\fbag_\l$ and assume that one of these balls, say $B_1$,  is in the interior of a  hollowed ball $B(x,r_1\|r_2)\in\cZ_\tau$.

First, we show that all balls of $\bag^*$ are  inside of  $B$. 
%Recall that, since $\bag^*$ is $C_3$-compact, 
%\begin{equation}
%\label{eq:useofcompactness}
%b\cdot \br\geq C_3\cdot\Delta_\l.
%\end{equation}
Let 
$$ r'_1 = \min_{1\leq i\leq b} \norm{x-X_{u_i}}_1 \text{ and } r'_2 = \max_{1\leq i\leq b} \norm{x-X_{u_i}}_1$$
It follows that
$$ r'_2 \leq \norm{x-X_{u_1}}_1 + \Delta_\l \leq (r_2-\br_\l-\Delta_\l) + \Delta_\l \leq r_2 - \br_\l,$$
where we used \eqref{eq:bagdiameter}; similarly, $r'_1 \geq r_1 + \br_\l$. Therefore, all balls of $\bag^*$ are inside of $B$ and by property \ref{propcom:disjointness} of \autoref{fig:compconstprop} they do not touch any other (hollowed) ball of $\cZ_\tau$.

Now, we construct $\cZ_{\tau+1}$. We remove $B$ and we add two new hollowed balls $B'_1=B(x,r_1\|r'_1 - \br_\l)$ and $B'_2=B(x,r'_2+\br_\l\|r_2)$. In addition, we add all of the balls of $\bag^*$ (see \autoref{fig:hollowedcompactballs}).
It is easy to see that balls in $\cZ_{\tau+1}$ are disjoint.  We send $\br_\l/4$ tokens of each of $B_1,\ldots,B_b$ to $\bag^*$. 
We send the rest of their tokens and all of the tokens of $B'_1,B'_2$ to $B$ and we re-distribute them by the induction hypothesis.
It follows that $\bag^*$ receives exactly $b\cdot\br_\l/4$ tokens and $B$ receives $\token_\tau(B)$.
\begin{align*} 
\token_{\tau+1}(B'_1) + \token_{\tau+1}(B'_2) &+ \sum_{i=1}^b \token_{\tau+1}(B_i) \\
&\geq r_2 - r_1 - (r'_2-r'_1) - 2\br_\l - 12\Delta_\l + b\cdot(\delta_\l-6\Delta_{\l+1})\\
& \geq \token_{\tau}(B) + b\cdot \br_\l(1-6\lambda) - 7\Delta_\l \\
&\geq \token_\tau(B) + b\cdot\br_\l/2 - C_3\Delta_\l/4\\
&\geq \token_\tau(B) + b\cdot \br_\l/4.
\end{align*}
where the first inequality uses \eqref{eq:losscompact}, the second inequality uses $\Delta_{\l+1}\leq \lambda\cdot \br_\l$ and $\Delta_\l \geq 2\br_\l$, the third inequality uses that $\lambda < 1/12$ and $C_3\geq 28$. 
The last inequality uses that $\bag^*$ is $C_3$-compact, i.e., \eqref{eq:compactbags}; this is the only place that we use the compactness of $\bag^*$. 
Therefore, \autoref{lem:inductioncompact} holds at time $\tau+1$.
\begin{figure}\centering
\def\opacity{.3}
\begin{tikzpicture}[scale=.8]
	\tikzstyle{every node} = [draw=none,line width=1.2pt];
%	\node at (0,0) (a){};
	\draw [fill=black,opacity=\opacity](0,0)  node[below left=1.6cm,draw=none,color=black,opacity=1] {$B$} circle (3);
\foreach \i/\a/\r/\p in {1/50/2/above, 2/25/2.1/right, 3/73/1.5/left, 4/43/1.35/left, 5/13/1.46/below}{
\draw [draw=black] (\a:\r) node [\p=.15cm,color=black,opacity=1] {$B_\i$} circle (.3);
}
{\Huge \draw[double,-implies,double distance between line centers=4.5pt] (3.6,0) -- (5.2,0);}

\begin{scope}[shift={(9,0)}]
	\draw [fill=black,opacity=\opacity] (0,0) node [below left=1.6cm,opacity=1] {$B'_2$} circle (3);
	\draw [fill=white] (0,0) circle (2.43);
	\draw [fill=black,opacity=\opacity] (0,0) node [below left=.5cm,opacity=1]{$B'_1$} circle (1.04);
\foreach \i/\a/\r/\p in {1/50/2/above, 2/25/2.1/right, 3/73/1.5/left, 4/43/1.35/left, 5/13/1.46/below}{
\draw [fill=blue,opacity=\opacity] (\a:\r) 
%node [\p=.15cm,color=black,opacity=1] {$B_\i$} 
circle (.3);
}
\end{scope}
\end{tikzpicture}	
\caption{Balls $B_1,\dots,B_5$ represent the balls of $\bag^*$; $B_1$ is in the interior of  a ball $B\in \cZ_{\tau}$. We decompose $B$ into two hollowed balls, $B'_1,B'_2$ that do not intersect any of the balls in the given compact set as shown on the right.}
\label{fig:hollowedcompactballs}
\end{figure}
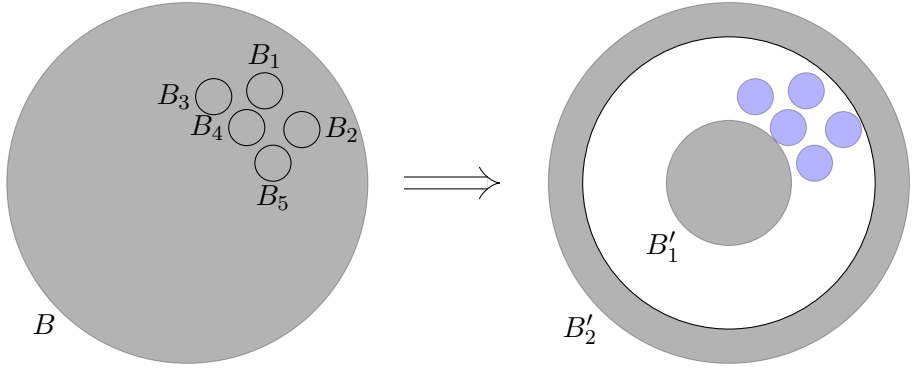
%Assuming $\gain \leq 1/4$ we can send $b\cdot \br_\l\cdot \gain$ tokens to  $\bag^*$.
%It remains to process bags of balls that do not contain any inside balls. 
\paragraph{Post Processing.}
Let $\tau_\l$ be the time by which we have processed all bags of $\fbag_\l$ with at least one interior ball, and
let $\fbag'_\l$ be the set of bags that we have not processed yet, i.e., all balls of $\fbag'_\l$ are border balls with respect to $\cZ_{\tau_\l}$. 
As alluded to, at the end of phase $\l$, i.e., at time $\tau_\l$, we shrink all (hollowed) balls of $\cZ_\tau$ except those that were in $\fbag_\l$. Given a hollowed ball $B=B(x,r_1\|r_2)\in\cZ_{\tau_\l}$, the $\shrink_\l$ operator is defined as follows:
\begin{equation}
\label{eq:shrinkoperator}
\shrink_\l(B):=\begin{cases}
B & \text{if $B\in\fbag_\l$}\\
B(x,r_1+2\br_\l+\Delta_\l\|r_2-2\br_\l-\Delta_\l) & \text{if $B\notin \fbag_\l$ and $r_2-r_1> 2\Delta_\l+4\br_\l$}\\
B(x,0)=\emptyset & \text{otherwise}.
\end{cases}
\end{equation}
At time $\tau_\l$, for any hollowed ball $B\in\cZ_{\tau_\l}$ we add 
$\shrink_\l(B)$ to $\cZ_{\tau_\l+1}$.
In addition, we add all balls of all bags of $\fbag'_\l$ 
to $\cZ_{\tau_\l+1}$. This is the end of phase $\l$ and we consider $\cZ_{\tau+1}$ as our construction in the beginning of phase $\l+1$. 

Let us verify that balls of $\cZ_{\tau+1}$ are disjoint, i.e., $\cZ_{\tau+1}$ satisfies property \ref{propcom:disjointness} of \autoref{fig:compconstprop}.
For any hollowed ball $B=B(x,r_1\|r_2)\in\cZ_{\tau_\l}$ and ball $B'=B(X_{u'},\br_\l)\in\fbag'_\l$, we  show that $\shrink_\l(B)$ and $B'$  do not intersect. First, if $B\in \fbag_\l$, then $\shrink_\l(B)=B$,  by \autoref{def:familybagofballs} any two balls of $\fbag_\l$ do not intersect, so $\shrink_\l(B),B'$ do not intersect. Now, suppose $B\notin\fbag_\l$. Since $B'\in \fbag'_\l$, $B'$ is not in the interior of $B$, i.e., either $\norm{x-X_{u'}}_1 < r_1+\br_\l+\Delta_\l$ or $\norm{x-X_{u'}}_1 > r_2-\br_\l-\Delta_\l$. In both cases, $B'$ does not intersect $\shrink_\l(B)$.
%Note that since balls of $\fbag_\l$ are nonintersecting and  balls of $\fbag'_\l$ are not interior balls, balls of $\cZ_{\tau+1}$ are nonintersecting. 

It remains to distribute the tokens. 
We send all tokens of all balls of all bags of $\fbag'_\l$  to their corresponding bag. Therefore, any $\bag\in\fbag'_\l$,  receives at least 
$$b\cdot (\br_\l - 6\Delta_{\l+1}) \geq b\cdot \br_\l(1-6\lambda) \geq b\cdot\ \br_\l/2$$ 
tokens. 
In addition, for every hollowed ball $B\in \cZ_{\tau_\l}$, 
we send all tokens of $\shrink_\l(B)$ to $B$
and we redistribute by induction. Since % that, say is shrunk to $B'\in\cZ_{\tau_\l+1}$ we have,
$$ \token_{\tau_\l}(B) \leq \token_{\tau_\l+1}(\shrink_\l(B)), $$
$B$ receives at least the same number of tokens. 
%Therefore, we can use the same distribution of the tokens allocated to balls of $\cZ_{\tau_\l}$ for their shrunk version in $\cZ_{\tau_\l+1}$. 
This completes the proof of \autoref{prop:compactballconstruction}.

\subsection{Charging Argument for Assigned Bags of Balls}\label{subsec:assignedcharging}
In this part we prove \autoref{prop:assignedballconstruction}.
Before getting into the details of the proof 
we illustrate the ideas we use to bypass the barrier of \autoref{ex:intersectingballs}.
The first observation is that, unlike  the previous section, we cannot construct a family of \emph{disjoint} hollowed balls in $\cZ_\infty$ in such a way that the sum of widths of hollowed balls of $\cZ_\infty$ is a constant fraction of the sum of radii of all balls in the given geometric sequence. 
Instead, we let hollowed balls of $\cZ_\infty$ intersect and we employ a ball labeling technique that uses the \expandertree, $\cT$.

Let us give a simple example to show the crux of our analysis. Suppose a node $t_1\in\cT$
has exactly two children, $t_2,t_3$. Say at time $\tau_{\l-1} < \tau  \leq \tau_\l$ we are processing $\bag_{t_2}$. Suppose $\cZ_\tau$ has a large 
 ball $B=B(x,r)\in \bag_t$ as shown on the left side of \autoref{fig:L1balls} such that $t$ is an ancestor of $t_1$. Say $\bag_{t_2}$ has four balls $B_1,\ldots,B_4$. Because $\bag_{t_2}$ is not compact, if we remove the part of $B$ that intersects with balls of $\bag_{t_2}$ and add $B_1,\ldots,B_4$, the sum of the widths of hollowed balls in $\cZ_{\tau+1}$ is the same as that sum in $\cZ_{\tau}$, and therefore we gain nothing from adding balls of $\bag_t$. Instead, we add a new ball that intersects $B_1,\ldots,B_4$ as shown on the right side of \autoref{fig:L1balls}.

Say the center of each $B_i$ is $X_{u_i}$ for $u_i\in V(t_2)$; each $X_{u_i}$ corresponds to a blue dot in \autoref{fig:L1balls}. By the definition of assigned bags of balls, \autoref{def:assignedbagballs}, for each $i$ there is a vertex $v_i\in V(t_1)\setminus V(t_2) = V(t_3)$ such that $\norm{X_{u_i}-X_{v_i}}_1\leq \rho_\l$ (each $X_{v_i}$ corresponds to a red dot in \autoref{fig:L1balls}). 
We  add all balls of $\bag_{t_2}$ and a new hollowed ball centered at $x$, the center of $B$, ranging from the closest red vertex to $x$ to the farthest one. We also break $B$ into two hollowed balls and remove the part of it that intersects either of these 5 new (hollowed) balls. 

Observe that, the sum of the widths of hollowed balls of $\cZ_{\tau+1}$ is $\Omega(\rho_\l\cdot |\bag_{t_2}|)$ more than this sum in $\cZ_{\tau}$. The only problem is that, the balls of $\cZ_{\tau+1}$ are intersecting. So, it is not clear if analogous to 
\autoref{lem:countingedgelengthradii}, we can charge  the sum of the widths of hollowed balls of $\cZ_{\tau+1}$ 
to the sum of $L_1$ lengths of edges of $G$. Our idea is to label hollowed balls with different subsets of edges of $G$. 
%We label the red ball in the right of \autoref{fig:L1balls} with $t-t_2$, the blue balls with $t_2$ and the black balls with $t$. 
Although the red hollowed ball and the blue balls intersect, we charge their widths to disjoint subsets of edges of $G$; we charge the width of the red ball with $k$ edge-disjoint paths supported on $G[V(t_1)\setminus V(t_2)]$ going across this hollowed ball and we charge the radius of each blue ball with $k$ edge-disjoint paths supported on $G(t_2)$ going across that ball.

We remark that the above idea is essentially the main new operation we need for the charging argument, compared to the argument for the compact bags of balls. One of the main obstacles in using this idea is that $t_1$ can have more than two children. In that case $G[V(t_1)\setminus V(t_2)]$ is not necessarily $k$-edge-connected. To overcome this, we find a natural decomposition of $G[V(t_1)\setminus V(t_2)]$ into $k/4$-edge-connected components; since each assigned bag of balls, $\bag_t$ has $\gg \cut(t)/k$ balls, the centers of a large  number of balls of $\bag_t$ are neighbors of one of these components; so we can  charge the red ball in the above argument by $k/4$ edge-disjoint paths in that component.

\begin{figure}
\centering
\begin{tikzpicture}
\def\slen{0.49}
\def\hslen{0.3}
\def\opacity{0.25}
\begin{scope}%[rotate=45]
\draw [fill=black] (0,0) circle (2pt);
\draw [fill=black,opacity=\opacity] (0,0) node[below left=2.2cm,draw=none,color=black,opacity=1] {$B$} circle (3.2);
\foreach \i/\bx/\by/\rx/\ry in {1/-2/0/.15/.25, 2/-1/0/-.25/.2, 3/1/0/-.2/-.25, 4/2/0/.2/-.2}{
	\draw  (\bx,\by) node[below=.42cm,draw=none,color=black] {$B_\i$} circle (\slen);
	\draw [fill=blue] (\bx,\by) circle (2pt);
	\draw [fill=red,] (\bx+\rx,\by+\ry) circle (2pt);
}
\end{scope}
\Huge \draw[double,-implies,double distance between line centers=4.5pt] (3.3,0) -- (4.8,0);

\begin{scope}[shift={(8,0)}]
\draw [fill=black] (0,0) circle (2pt);
\draw [fill=black,opacity=\opacity] (0,0) 
%node[below left=2.2cm,draw=none,color=black,opacity=1] {$B$} 
circle (3.2);
\draw [fill=white,draw=none] (0,0) circle (2.5);
\draw [fill=red,opacity=\opacity,draw=none] circle (1.86);
\draw [fill=white,draw=none] circle (.84);
\draw [fill=black,opacity=\opacity] (0,0) circle (.5);
\draw [fill=black] (0,0) circle (2pt);

\foreach \i/\bx/\by/\rx/\ry in {1/-2/0/.15/.25, 2/-1/0/-.25/.2, 3/1/0/-.2/-.25, 4/2/0/.2/-.2}{
	\draw  [draw=none,opacity=\opacity,fill=blue] (\bx,\by) 
	%node[below=.42cm,draw=none,color=black] {$B_\i$} 
	circle (\slen);
	\draw [fill=blue] (\bx,\by) circle (2pt);
	\draw [fill=red,] (\bx+\rx,\by+\ry) circle (2pt);
}
\end{scope}
\end{tikzpicture}
\caption{A simple example of the ball labeling technique. 
%Balls are illustrated as squares to emphasize that these are $L_1$ balls. 
The grey (hollowed) ball $B$ on the left is one of the hollowed balls of $\cZ_\tau$. Small $L_1$ balls with blue vertices as their centers represent balls of $\bag_{t_2}$ that we are processing at time $\tau$. Each red vertex together with the closest blue vertex are the endpoints of an edge of $\cut(t_2)$. The right figure shows new balls added to $\cZ_{\tau+1}$. In particular, each blue vertex is in $V(t_2)$ and each red vertex is in $V(t_3)$ where $t_2,t_3$ are the only children of $t_1$.}
\label{fig:L1balls}
\end{figure}
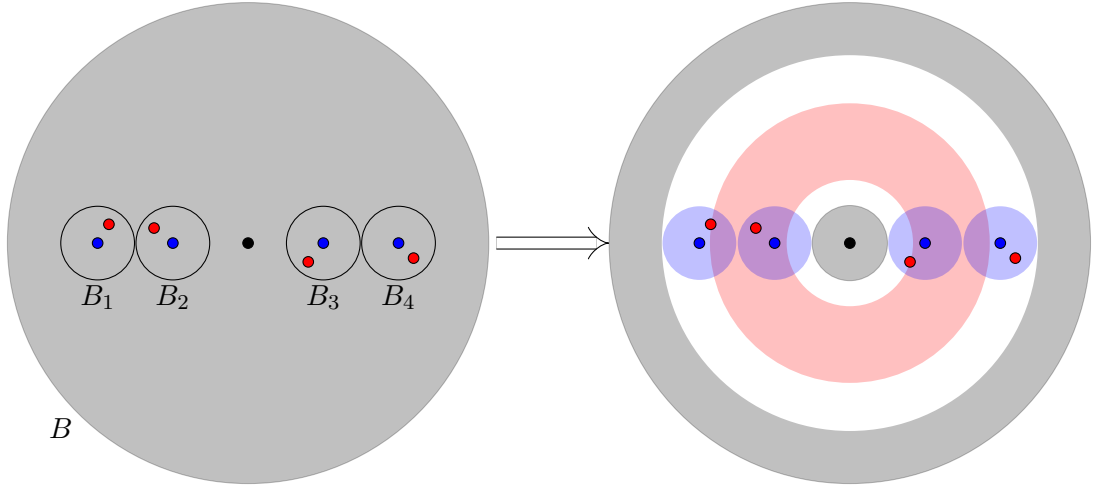

\subsubsection{Ball Labeling} 
In this part we define a valid labeling of hollowed balls
in our construction (see \autoref{fig:labelassignedballs}).
In the proof of \autoref{prop:compactballconstruction}, we used the disjointness property of balls in the construction in two places; namely in the proofs of \autoref{lem:countingedgelengthradii} and \autoref{fact:compactinterior}. 
We address both of these issues by our ball labeling technique.

\paragraph{Basic Label.} In the proof of \autoref{lem:countingedgelengthradii} we used the disjointness property to charge  the sum of the widths of hollowed balls of a set $\cZ_\tau$ to the sum of the $L_1$ lengths of edges of $G$ with no overcounting. Let us give a simple example to show the difficulty in extending this argument to the new setting where balls may intersect.
Suppose $\cZ_\tau$ is a union of $10$ identical copies of $B(x,r)$ with the guarantee that there is a vertex of $G$ at x and one at distance $r$ of $x$. Then,  the sum of the $L_1$ lengths of edges of $G$ can be as small as $k\cdot r$, as $G$ may just be  $k$-edge-disjoint paths from a vertex at $x$ to  a vertex at distance $r$ of $x$.

A hollowed ball $B=B(x,r_1\|r_2)$, 
can be labeled with  $t\in \cT$, denoted by $t(B)=t$, if there are vertices $u,v\in V(t)$ such that $\norm{x-X_u}_1\leq r_1$ and  $\norm{x-X_v}_1 \geq r_2$. 
Recall that, by the definition of $\cT$, for any node $t\in\cT$, $G(t)$ is $k$-edge connected. 
Therefore, if  $B$  is labeled with $t$, then $k$ edge-disjoint paths supported on $E(t)$ cross $B$. 
For any ball $B\in\bag_t$ we let $t(B)=t$. 
Furthermore, when we shrink or divide a ball into smaller ones the label of the shrunk ball or the new subdivisions remain unchanged. 

The simplest definition of the validity of the ball labeling is to make sure that for any two intersecting balls $B$ and $B'$,
$t(B)$ and $t(B')$ are not ancestor-descendant. 
Unfortunately, this simple definition is not enough for our inductive argument,
%We note that $t(B)$ is the basic label of a ball $B$ that is well-defined for all balls in $\cZ_\tau$> 
and as we elaborate next, we will enrich the label of some of the balls $B$ by ``disallowing routing through some of the descendants of $t(B)$''. 
Recall that $t,t'\in\cT$ are {\em ancestor-descendant} if either $t$ is a weak ancestor of $t'$ or $t'$ is a weak ancestor of $t$.
Recall that $t$ is a weak ancestor of $t'$ if either $t$ is an ancestor of $t'$ or $t=t'$.

To this end, we define a \emph{\conflictset}, $\cC(B)$ to be a \emph{connected} subset of the nodes of $\cT$ rooted at $t(B)$ (see \autoref{fig:conflictset}). 
%Now, we address the issue of intersecting balls.
In a valid ball labeling, we  make sure that
for any two intersecting hollowed balls $B$ and $B'$, 
$\cC(B)\cap \cC(B')=\emptyset$. 
For example, if $t(B),t(B')$ are not ancestor-descendant this condition is always satisfied. 
In the charging argument, we may only charge the width of $B$ with edge-disjoint paths supported on the leaves of $\cT$ which are in $\cC(B)$ (see \autoref{fig:conflictset}). Recall that the leaves of $\cT$ are 
identified with the vertices of $G$. 
	\begin{figure}
	\centering
	\begin{tikzpicture}
		\tikzstyle{every node} = [draw,circle,minimum size=5mm,inner sep=0];
		\node [color=red] at (0,0) (a_1) {$t_1$};
		\node [draw=none] at (-1,1) (a_0) {$\ddots$} edge (a_1);
		\node [color=red] at (2,-1) (a_3) {$t_3$} edge (a_1);
		\node at (-2,-1) (a_2) {$t_2$} edge (a_1);
		\node at (-3,-2) () {$v_1$} edge (a_2);
		\node at (-1,-2) () {$v_2$} edge (a_2);
		\node [color=red] at (1,-2) (a_4) {$t_4$} edge (a_3); 
		\node [color=red] at (0,-3) () {$v_3$} edge (a_4);
		\node [color=red] at (1,-3) () {$v_4$} edge (a_4);
		\node at (3,-2) (a_5) {$t_5$} edge (a_3);
		\node  at (2,-3) () {$v_5$} edge (a_5);
		\node  at (3,-3) () {$v_6$} edge (a_5);
	\end{tikzpicture}
		\caption{The red nodes represent the \conflictset~of a ball $B$ with $t(B)=t_1$, i.e., $\cC(B)=\{t_1,t_3,t_4,v_3,v_4\}$. The edge-disjoint paths of $B$ can be routed in the induced subgraph $G[\{v_3,v_4\}]$.}
		\label{fig:conflictset}	
	\end{figure}
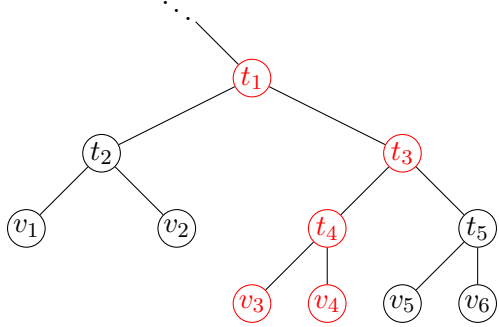
%the two families of  $k$ edge-disjoint paths that cross $B,B'$  are  supported on disjoint subsets of edges. 
%We also assume that $t,t$ are ancestor-descendant.
%The following fact about ancestor-descendancy will be used throughout the proof.
%\begin{fact}
%\label{fact:ancestor-descendancy}
%If $t$ is an ancestor-descendant of $t_1,t_2$, and $t_1$ is not an ancestor-descendant of $t_2$, then both of $t_1,t_2$ must be descendants of $t$.
%\end{fact}

%  In the simplest form let two balls intersect if their labels are not ancestor-descendant.
%
%
%Our idea is to use the hierarchical decomposition, $\cT$, to allow intersecting balls in our construction. 
%Let $t,t'\in\cT$ such that neither $t$ is an ancestor of $t'$ nor $t'$ is an ancestor of $t$.
%If we have two identical copies of $B(x,r)$ one with the promise that $k$ edge-disjoint path supported on $E(t)$ cross it and the other with the promise that $k$ edge-disjoint path supported on $E(t')$ cross it; then we can guarantee that sum of the $L_1$ length of edges of $G$ is at least $2k\cdot r$.
% 
%We now formalize the above idea by labeling balls. 
%It follows that, if $B$ is labeled with $t$, then at least 
%$k$ edge-disjoint path supported on $E(t)$ cross $B$.

\paragraph{\Avoiding~Balls}
As alluded to in \autoref{fig:L1balls}, we may add new (hollowed) balls, called \avoiding~balls, to $\cZ_\tau$ that do not exist in the given geometric sequence.
%We may treat these balls differently and call them {\em \avoiding} hollowed balls, 
An \avoiding~(hollowed) ball $B$, has  an additional label, $t_d(B)$, where $t_d(B)$ is always a descendant of $t(B)$; the name \avoiding~stands for the fact that the edge-disjoint paths of $G(t(B))$ that are crossing $B$ are avoiding the induced subgraph $G(t_d(B))$. Therefore, we exclude the subtree of $t_d(B)$ from $\cC(B)$, i.e., $\cC(B)\cap t_d(B)=\emptyset$.

We insert an \avoiding~hollowed ball only when we shrink or remove part of a nonavoiding (hollowed) ball that already exists in $\cZ_\tau$. 
%Suppose we add an \avoiding~hollowed ball $B'$ to $\cZ_{\tau+1}$  after removing part of a ball $B\in\cZ_\tau$;
%then, we let $t(B')=t(B)$ and $B'$ has an additional label $t_d(B')$  where $t_d(B')$ is  a descendant of $t(B')$.
For example, if $B'$ is the  red ball on the right side of \autoref{fig:L1balls}, then $t(B')=t$, $t_d(B')=t_2$. Note that it is important that avoiding balls are replacing nonavoiding balls; if in the arrangement of \autoref{fig:L1balls} the ball $B$ were an avoiding ball, then the red ball would have to avoid two induced subgraphs; further escalation of this would lead to unmanageable labels. We get around this by never introducing an avoiding ball when the original $B$ is avoiding. Also, for the charging argument to work we need to allocate a fraction of the number of tokens that would be normally allocated to a nonavoiding ball.

For any \avoiding~hollowed  ball $B=B(x,r_1\|r_2)$  there must be  vertices $u,v\in V(t(B)) \setminus V(t_d(B))$
such that $\norm{X_u - x}_1 \leq r_1,$ $\norm{X_v-x}_1 \geq r_2$ and that there are at least $k{\bf/4}$ edge-disjoint paths from $u$ to $v$ in the induced graph $G[V(t(B))\setminus V(t_d(B))]$. Note that if for such a ball, one defines $\cC(B)$ to be the subtree rooted at $t(B)$ minus the subtree rooted at $t_d(B)$, then these $k/4$ edge-disjoint paths must be supported on the leaves of $\cT$ that are in $\cC(B)$.

\paragraph{Non-insertable Balls}  
In \autoref{fact:compactinterior} we used the disjointness property to argue that any ball of $\fbag_\l$ is in the interior of at most one hollowed ball of $\cZ_\tau$. 
Here, this fact may not necessarily hold: 
Suppose at time $\tau$, a ball $B\in \bag_t$ is in the ``interior'' of two balls $B_1,B_2$, i.e., the center of $B$ is far from the boundaries of $B_1,B_2$, and $t$ is an ancestor-descendant of both  $t(B_1),t(B_2)$. Then, $B_1,B_2$ intersect. Assuming that balls of $\cZ_{\tau}$ have a ``valid labeling'', since $B_1,B_2$ are intersecting, $t(B_1),t(B_2)$ are not ancestor-descendant. One would hope that this configuration is impossible. But in fact, it could be the case that $t(B_1), t(B_2)$ are descendants of $t(B)$ that are not ancestor-descendants of each other. In this configuration, one cannot hope to add $B$ with the label $t(B)=t$. 
%Now, if either of $t(B_1)$ or $t(B_2)$ is an ancestor of $t$ we reach to a contradiction; . But a bad case happens when  both of $t(B_1),t(B_2)$ are descendants of $t$. 

In general, the above scenario occurs only if the bags assigned to descendants of a node $t$ appear earlier in the geometric sequence, i.e., if we process $\bag_t$ after processing bags assigned to its  descendants. 
In the first reading of the proof, one can assume that this scenario does not happen and avoid the notation $\Pr(.)$ and (non-)insertable balls that we define below. To address this issue we will use the third property of the \expandertree.
To any (hollowed) ball $B$ in our construction with $t(B)=t$, we will assign $\Pr(B)\subset\cT$ to be a set of descendants of $t$ with the  guarantee that
there are $k$ edge-disjoint paths across $B$ supported on $G[V(t)\setminus \cup_{t'\in\Pr(B)} V(t')]$. 
In other words, we exclude the subtrees rooted at nodes of $\Pr(B)$ from $\cC(B)$. 
We will prune everything from $\bag_t$ except the balls $B$ such that $\Pr(B)$ includes all descendants of $t$ that are processed earlier than $t$. We use the third property of the \expandertree, $\cT$, to show  that the pruning step only removes a small fraction of balls.

\medskip 

Recall that $\fbag_\l$ has type $(\br_\l,T_\l)$. For a node $t\in T_\l$, we say a node $t'$ is a {\em predecessor} of $t$, if  $t'$ is a descendant of $t$ and 
$t'\in T_{i}$ for some $i < \l$.
For any node $t$ and any ball $B=B(X_u,r)\in\bag_t$ we say
$B$ is {\em non-insertable} by $t'$ if $t'$ is a predecessor of $t$ and an endpoint of an edge of $\setdeg(t')$ is in $B$ (see \autoref{subsec:expandertree} for the definition of $\setdeg(t')$).
We say
$B$ is {\em insertable} otherwise. 
For any insertable ball $B\in\bag_t$ we let $\Pr(B)$ be the set of predecessors of $t$. 
In other words, a ball $B=B(X_u,r)\in\bag_t$ is insertable if and only if 
\begin{enumerate}[i)]
\item For any  $t'\in \Pr(B)$, %$u\notin V(t')$ and 
 all endpoints of the edges of $\setdeg(t')$ are outside of $B$, and 
 \item For any $t'\in\Pr(B)$, $u\notin V(t')$, i.e., $u$ does not belong to any of the subtrees rooted at nodes of $\Pr(B)$.
 \end{enumerate}
 Observe that, by the definition of assigned bags of balls, (ii) follows from (i). 
 %if $B=B(X_u,r)$ is insertable, then for any $t'\in\Pr(B)$, $u\notin V(t')$. 
 In particular, since $B\in\bag_t$, there is an edge $\{u,v\}\in \cut(t)$ for $v\notin V(t)$. Therefore,  if $u\in V(t')$, $\{u,v\}\in \setdeg(t')$ which is a contradiction.

\subsubsection{Preprocessing}
In this subsection, we delete all non-insertable balls and we show that they contribute only to a small fraction of the sum of the radii of the given geometric sequence. Then, we formally define a valid labeling and we show that we can lower bound the denominator by the sum of the widths of balls in a valid labeling. 
At the end of this subsection, we reduce \autoref{prop:assignedballconstruction}
to a ``simpler'' statement, that is the existence of an arrangement of a set of hollowed balls with a valid labeling such that the sum of the widths of all hollowed balls in the construction is a constant fraction of the sum of the radii of all balls in the given geometric sequence. 
%(see \autoref{prop:assignedbagreduced}). 

%In \autoref{lem:edgecountingassigned}
%we will prove that for any insertable nonavoiding ball, $B$, there are $k$ edge-disjoint paths across $B$ supported on $G[t(B))\setminus \cup_{t'\in P(B)} V(t')]$.
 
 %Since $B$ is insertable, i.e., $B$ is not  non-insertable by $t'$, and $X_u\in B$, $u\notin V(t')$. 
 
%We say $B\in\bag_t$ is non-insertable by $t'\in\Pr(t)$ if an endpoint of an edge of $\setdeg(t')$ is in $B$.  
%if $t'$ is a predecessor of $t$  and an endpoint of an edge of $\setdeg(t')$ is in $B$, i.e., there is an edge $\{v,w\}\in \setdeg(t')$ such that $\norm{X_u - X_v}_1 < r$. 
In the following lemma we show that for any node $t\in T_\l$, the sum of radii of all balls that are non-insertable by $t$ is 
$\ll \rho_\l \cdot |\bag_t|$.
\begin{lemma}
\label{lem:noninsertable}
For any node $t\in T_\l$,
$$ \sum_{i} \sum_{B\in \fbag_i} \I{B \text{ is non-insertable by } t}\cdot \rho_i \leq \frac{4 \rho_\l \cdot |\bag_t|}{C_3}.$$
\end{lemma}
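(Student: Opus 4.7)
The plan is to unpack what "non-insertable by $t$" means for a ball $B\in\fbag_i$, then bound the count of such balls one ancestor at a time, and finally assemble the sum using the geometric decay of the radii together with the two key inequalities the hypotheses give us.

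First I would translate the hypotheses into the following chain of reductions. Suppose $B\in\bag_{t''}$ for some $t''\in T_i$. Going back to the definition, $B$ is non-insertable by $t$ only if $t$ is a predecessor of $t''$; unpacking this, $t''$ must be a strict ancestor of $t$ in $\cT$ and $i>\ell$. Moreover the definition further requires that $B$ contain an endpoint of some edge of $\setdeg(t)$. So the sum in the lemma is really
\[
\sum_{\substack{t''\text{ strict ancestor of }t\\ t''\in T_i,\, i>\ell}} \rho_i \cdot \#\{B\in\bag_{t''}: B\cap\{\text{endpoints of }\setdeg(t)\}\neq\emptyset\}.
\]

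Second I would bound, for each such ancestor $t''$, the inner count. Since the balls of $\bag_{t''}$ are pairwise disjoint (definition of a bag), and each non-insertable ball of $\bag_{t''}$ must contain at least one of the at most $2|\setdeg(t)|$ endpoints of edges in $\setdeg(t)$, the count is at most $2|\setdeg(t)|$. Also, since the $T_i$'s are pairwise disjoint and the ancestors of $t$ form a chain in $\cT$, each index $i>\ell$ contributes the relevant $t''$ in at most one $T_i$. Hence by the geometric property $\rho_{i+1}<\lambda\rho_i$,
\[
\sum_i \sum_{B\in\fbag_i}\I{B\text{ non-insertable by }t}\,\rho_i \;\leq\; 2|\setdeg(t)|\sum_{i>\ell}\rho_i \;\leq\; 2|\setdeg(t)|\cdot \frac{\lambda\rho_\ell}{1-\lambda}.
\]

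Third I would convert $|\setdeg(t)|$ into $|\bag_t|$ using the two structural hypotheses of \autoref{prop:assignedballconstruction}. Because $\cT$ is a $(k,k\lambda,T)$-LCH and $t\in T=T_\ell$, the third LCH property gives $|\cut(t)|\geq k\lambda\,|\setdeg(t)|$; because $\bag_t$ is $24C_3/k$-assigned to $t$, we have $|\bag_t|\geq (24C_3/k)|\cut(t)|$. Multiplying, $|\setdeg(t)|\leq |\bag_t|/(24C_3\lambda)$. Plugging this in yields
\[
\sum_i\sum_{B\in\fbag_i}\I{B\text{ non-insertable by }t}\,\rho_i \;\leq\; \frac{\rho_\ell\,|\bag_t|}{12C_3(1-\lambda)},
\]
and since $\lambda\leq 1/(6C_4)\leq 1/18$, this is comfortably at most $\frac{4\rho_\ell |\bag_t|}{C_3}$, as required.

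There is no real obstacle here: the whole argument is bookkeeping. The one place where one must be careful is the swap of roles in the definition of non-insertability (the lemma's fixed node $t$ plays the role of the predecessor $t'$, while the bag's assignment node $t''$ plays the role of the $t$ in the definition); getting the direction of the ancestor relation right is what drives both the disjoint-balls count and the use of LCH property 3.
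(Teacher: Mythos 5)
Your argument is correct and follows essentially the same route as the paper: zero out the terms with $i\leq\ell$, bound the per-index count of non-insertable balls by $2|\setdeg(t)|$, sum the geometric tail, and convert $|\setdeg(t)|$ to $|\bag_t|$ via the third LCH property and the assignment hypothesis. One small point worth tightening: to conclude $\sum_{B\in\fbag_i}\I{B\text{ is non-insertable by }t}\leq 2|\setdeg(t)|$ for each fixed $i$, you should invoke the property that \emph{all} balls across all bags of the family $\fbag_i$ are pairwise disjoint (\autoref{def:familybagofballs}), not merely disjointness within a single $\bag_{t''}$ --- otherwise a $T_i$ that happened to contain two ancestors of $t$ would let your per-bag counts compound, whereas the family-level disjointness (exactly what the paper's proof uses) gives the $2|\setdeg(t)|$ bound uniformly for each $i$.
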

\begin{proof}
For any $i$ let $b_i$ be the number of balls  in $\fbag_i$ that are non-insertable  by $t$. By definition, $b_i=0$ for $i\leq \l$. We will show that for all $i > \l$, 
\begin{equation}
\label{eq:bidegt}
b_i \leq 2|\setdeg(t)|.
\end{equation}
Then,
\begin{eqnarray*}
\sum_{i}\sum_{B\in \fbag_i} \I{B \text{ is non-insertable by } t}\cdot \rho_i & = &\sum_{i>\l} b_i\cdot \rho_i  \\
&\leq &
2|\setdeg(t)|\sum_{i>\l} \rho_i \\
&\leq& 4\lambda\cdot |\setdeg(t)|\cdot\rho_\l \\
&\leq & \frac{4 |\cut(t)|}{k}\cdot\rho_\l \\&\leq& \frac{4|\bag_t|\rho_\l}{C_3}.
\end{eqnarray*}
where the second to last inequality uses $\cT$ is a $(k,k\lambda,T)$-LCH of $G$, i.e., that $t\in T$ and $\lambda \cdot k\cdot |\setdeg(t)|\leq |\cut(t)|$. The last inequality uses \eqref{eq:assignedbag} and that $\bag_t$ is a $C_3/k$-assigned bag of balls. 

It remains to prove \eqref{eq:bidegt}. Fix $i> \l$. For any ball $B=B(X_u,\rho_i) \in \bag_{t'}$ that is non-insertable by $t$, at least one endpoint of an edge of $\setdeg(t)$ is in $B$. %This is because either both endpoints of an edge of $\setdeg(t)$ is in $B$, or $u\in V(t)$. In the latter case there is an edge $\{u,v\}\in \cut(t')$ for $v\notin V(t')$. 
%Since $t$ is a descendant of $t'$, $V(t)\subseteq V(t'),$ so $\{u,v\}\in \setdeg(t)$.
Since all balls of $\fbag_i$ are disjoint, $b_i \leq 2|\setdeg(t)|$. 
\end{proof}
By the above lemma it is sufficient to prove \autoref{prop:assignedballconstruction} with the assumption that all balls in the given geometric sequence are insertable (see \autoref{prop:assignedbagreduced} at the end of this part).

In \autoref{fig:labelassignedballs} we define a valid labeling of balls.
Later, in our inductive argument we will make sure that at any time $\tau$, $\cZ_\tau$ has a valid labeling.

%Note that, by definition, any insertable ball satisfies property \ref{labelprop:nonavoid}. Therefore, for any insertable ball $B\in \bag_t$,  $B$ can be added to $\cZ_\tau$, if $B,B'$ satisfy condition \ref{labelprop:validlabeling} of \autoref{fig:labelassignedballs} 
%for all already inserted balls $B'$.

\begin{figure}[htb]
\centering
\fbox{\parbox{6.6in}{
\vspace{2mm}
Any set of balls has a valid ball labeling if it satisfies the following properties. 
\begin{enumerate}
%\item\label{labelpropP}
%For any hollowed ball $B=B(x,r_1\|r_2)$ and any $t'\in \Pr(B)$, $t'$ is a predecessor of $t(B)$ and
%all endpoints of the edges of $\setdeg(t')$ are outside of $B$.
\item
\label{labelprop:conflictsetdef}
For any nonavoiding ball $B$, $\cC(B)$ is  the connected subtree rooted at $t(B)$ excluding the subtrees rooted at nodes of $\Pr(B)$.
If $B$ is avoiding, in addition to above, 
$\cC(B)$ excludes the subtree rooted at $t_d(B)$. Note that we always have $t(B)\in\cC(B)$.
%Any non\avoiding~hollowed ball $B=B(x,r_1\|r_2)$ with labels $t(B)=t$, $\Pr(B)$ satisfies the following:
%\begin{enumerate}[i)]
\item \label{labelprop:insertability} For any hollowed ball $B=B(x,r_1\|r_2)$, any $t'\in\Pr(B)$, and $\{u,v\}\in\setdeg(t')$, $\norm{x-X_{u}}_1,\norm{x-X_{v}}_1 \geq r_2$.
%\end{enumerate}
%\item 
%\label{labelprop:avoid}
%Any \avoiding~hollowed ball $B=B(x,r_1\|r_2)$ with labels $t(B)=t$, $t_d(B)=t_d$, $\Pr(B)$ satisfies the following:  
%\begin{enumerate}[i)]
%\item 	$\cC(B)$ is the set nodes $t'$ of the subtree of $t(B)$ excluding the subtrees rooted at nodes of $\Pr(B)$ and $t_d(B)$.
%$t_d$ is a descendant of $t$, and there are vertices 
%$$u\in V(t)\setminus V(t_d)\setminus \cup_{t'\in\Pr(B)} V(t') \text{ and } v\in V(t)\setminus V(t_d)$$ such that
%$\norm{x-X_u}_1 \leq r_1$ and $\norm{x-X_v}_1\geq r_2$ and there are at least $k/4$ edge-disjoint paths in the induced subgraph $G[V(t)\setminus V(t_d)]$ from $u$ to $v$.
%\item For any $t'\in\Pr(B)$ and $\{u',v'\}\in\setdeg(t')$, $\norm{x-X_{u'}}_1,\norm{x-X_{v'}}_1 \geq r_2$.
%\end{enumerate}
%\item For any ball $B=(x,r_1\|r_2)$ and any $t\in\setdeg(B)$
%and any $\{u,v\}\in \cut(t)$, 
%$$ \norm{x-X_u}_1, \norm{x-X_v}_1 \geq r_2.$$
\item \label{labelprop:disjointpaths} For any ball $B=(x,r_1\|r_2)$, there is a vertex $u\in \cC(B)$ such that $\norm{x-X_u}_1\leq r_1$ and there are at least $k/4$ edge-disjoint paths originating from $u$, crossing $B$, supported on $V(t(B))\setminus V(t_d(B))$. In the proof of \autoref{lem:edgecountingassigned} we show that this implies that we have $k/4$ edge-disjoint paths crossing $B$ and supported on leaves of $\cT$ which are in $\cC(B)$.  %originating from one of the  leaves of $\cT$ which are in $\cC(B)$. 
%There are vertices $u\setminus \cup_{t'\in\Pr(B)} V(t')$ and $v\in V(t)$ such that
%$\norm{x-X_u}_1\leq r_1$ and $\norm{x-X_v}_1\geq r_2$.
\item 
For any two intersecting (hollowed) balls $B_1$ and $B_2$, $\cC(B_1)\cap \cC(B_2)=\emptyset$.
Observe that $\cC(B_1)\cap \cC(B_2)\neq\emptyset$ if and only if either $t(B_1)\in\cC(B_2)$ or $t(B_2)\in\cC(B_1)$.
\label{labelprop:validlabeling}
%one of the following four conditions is satisfied: 
%\begin{enumerate}[i)] 
%\item $B_1$ and $B_2$ do not intersect. 
%\item  $t(B_1)$ and $t(B_2)$ are not ancestor-descendant, 
%\item  $\Pr(t(B_1))$ contains a weak ancestor of $t(B_2)$, or $B_1$ is \avoiding~and $t_d(B_1)$ is a weak ancestor of $t(B_2)$, 
%\item  Conversely, $\Pr(t(B_2))$
% contains a weak ancestor of $t(B_1)$, or $B_2$ is  \avoiding~and $t_d(B_2)$
% is a weak ancestor $t(B_1)$.
% \end{enumerate}
\end{enumerate}
}}
\caption{Properties of a valid ball labeling}
\label{fig:labelassignedballs}
\end{figure}

%A ball $B=\in \cZ_{tau}$

The following lemma extends \autoref{lem:countingedgelengthradii} to the new setting where the balls of $\cZ_\tau$ may intersect.

\begin{lemma}
\label{lem:edgecountingassigned}
For any set of hollowed balls $\cZ$ with a valid labeling we have,
$$ \frac{k}{4}\cdot\sum_{B(x,r_1\|r_2)\in\cZ} (r_2-r_1) \leq  \sum_{\{u,v\}\in E} \norm{X_u-X_v}_1.$$
\end{lemma}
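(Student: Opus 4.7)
The plan has two components: (i) upgrading the paths supplied by property~3 to paths whose vertices all lie in the leaves of $\cC(B)$, and (ii) a charging argument that uses property~4 and the spatial structure of $L_1$ balls to avoid double-counting.

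For (i), fix a ball $B=B(x,r_1\|r_2)$ and one of the $k/4$ edge-disjoint paths $P$ from $u$ crossing $B$ and supported on $V(t(B))\setminus V(t_d(B))$. Let $v'$ be the first vertex on $P$ with $\|x-X_{v'}\|_1\ge r_2$, and let $P'$ be the initial segment of $P$ up to $v'$; the $L_1$-triangle inequality already gives $L_1(P')\ge r_2-r_1$. I will argue that every vertex of $P'$ lies in $\cC(B)$. If some vertex of $P'$ lay in $V(t')$ for a predecessor $t'\in\Pr(B)$, then $P'$, starting at $u\in\cC(B)\subseteq V\setminus V(t')$, would have to cross an edge $\{a,b\}\in\setdeg(t')$ with $a\notin V(t')$, $b\in V(t')$, and $b$ reached no later than $v'$. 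Property~2 forces $\|x-X_a\|_1,\|x-X_b\|_1\ge r_2$, which together with the definition of $v'$ (all vertices of $P'$ strictly before $v'$ have distance $<r_2$) leads to a contradiction in each case: $b$ strictly before $v'$; $b=v'$ with $a$ strictly before $v'$; or the recursive case of a contiguous $V(t')$-run ending at $v'$, which forces an earlier crossing edge. Hence $P'$ is a path from $u$ to $v'$ of $L_1$-length $\ge r_2-r_1$, supported entirely on leaves of $\cC(B)$.

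For (ii), parameterize each edge $\{a,b\}$ as an arc-length $L_1$-geodesic $\gamma$ in $\R^h$ from $X_a$ to $X_b$, of length $\|X_a-X_b\|_1$, and concatenate these to parameterize each $P'_i$. Define $\mu(e,B)$ to be the one-dimensional Lebesgue measure of the set of arc-length positions on the parameterization of $e$ at which the distance to $x$ lies in the open annulus $(r_1,r_2)$. Since $s\mapsto \|x-\gamma(s)\|_1$ is $1$-Lipschitz, and along $P'_i$ it moves from a value $\le r_1$ to a value $\ge r_2$, the clamped function $\max(r_1,\min(\cdot,r_2))$ has total variation $\ge r_2-r_1$; because this variation is accrued only on the set where the distance is in the annulus and the derivative there has absolute value at most $1$, we get $\sum_{e\in P'_i}\mu(e,B)\ge r_2-r_1$. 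Summing over the $k/4$ edge-disjoint paths of $B$ (so each edge of $B$ contributes at most once) and then over all balls in $\cZ$ lower-bounds the total annulus-measure by $\tfrac{k}{4}\sum_B(r_2-r_1)$.

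Finally, I will bound this total above by $\sum_{\{u,v\}\in E}\|X_u-X_v\|_1$ by showing $\sum_B \mu(e,B)\le \|X_e\|_1$ for every fixed edge $e=\{a,b\}$, summed over balls $B$ that actually use $e$ in one of their paths. Any such $B$ has both endpoints of $e$ in $\cC(B)$, so if two such balls $B_1$ and $B_2$ intersected as hollowed balls, property~4 would give $\cC(B_1)\cap\cC(B_2)=\emptyset$, contradicting $\{a,b\}\subseteq\cC(B_1)\cap\cC(B_2)$. Thus the balls using $e$ have pairwise-disjoint annuli in $\R^h$, so the positions on the arc-length parameterization of $e$ that they ``claim'' are disjoint subsets, and $\sum_B\mu(e,B)$ is bounded by the total arc length $\|X_e\|_1$. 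The hard part is step (i): the bookkeeping around $v'$ --- verifying $v'\in\cC(B)$ and that $P'$ never slips into some $V(t')$ without first triggering the distance lower bound of property~2 --- is where essentially all the information in the valid-labeling definition is used. Step (ii) is then a clean Lipschitz/measure-disjointness calculation.
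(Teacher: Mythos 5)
Your proposal is correct and follows essentially the same strategy as the paper's proof: take the $k/4$ edge-disjoint paths from property~3, truncate at the first exit from $B(x,r_2)$, use property~2 to show the truncated path never enters any $V(t')$ for $t'\in\Pr(B)$ (hence is supported on leaves of $\cC(B)$), and then invoke property~4 to ensure intersecting balls have disjoint conflict sets so no edge is overcharged. The main differences are presentational — your Lipschitz/annulus-measure formalization of the charging step and the unnecessary third case in the predecessor argument (only $b<v'$ and $b=v'$ can occur, and both are immediately dispatched by property~2 applied to $a$) — but neither changes the substance.
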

\begin{proof}
%First, we show that property \ref{labelprop:insertability}\ref{labelprop:disjointpaths} 
By property \ref{labelprop:disjointpaths}, for any ball $B=(x,r_1\|r_2)$ there are $k/4$ edge-disjoint paths crossing $B$ originating from a vertex $u\in \cC(B)$ such that $\norm{X_u-x}_1\leq r_1$.
We only keep the portion of each of these paths starting from $u$ until  the first vertex that lies outside of $B(x,r_2)$ (and we discard the rest). 
%By property \ref{labelprop:insertability}, 
Next, we show that these paths remain inside $\cC(B)$. 
This is because by property \ref{labelprop:disjointpaths} these paths exclude the subtree rooted at $t_d(B)$. In addition, these paths start at a vertex that does not lie in any of the subtrees rooted at $\Pr(B)$; by property \ref{labelprop:insertability} they can never enter such a vertex.
Therefore, these paths avoid the subtrees rooted at $\Pr(B)$ as well, or in other words they are completely supported on $\cC(B)$. 

%each ball $B$ of $\cZ$; 
We further trim each of these paths from both ends so that the resulting paths lie inside $B$. By the $L_1$ triangle inequality, the $L_1$ length of the trimmed paths is at least the width of $B$.
Now, by property \ref{labelprop:validlabeling}, no edge of $G$ is charged by more than its $L_1$ length. 
\end{proof}
%By the above lemma, to prove \autoref{prop:assignedballconstruction}, it is enough to construct a set of hollowed balls with a valid labeling
%such that the sum of the widths of hollowed balls in our construction is proportional to sum of the radii of balls in the given geometric sequence of assigned bags of balls.
\begin{proposition}
\label{prop:assignedbagreduced}
Given a $(k,k\cdot \lambda,T)$-LCH $\cT$ of $G$ and a $\lambda$-geometric sequence of  families of $12C_3/k$-assigned bags of balls, $\fbag_1$, $\fbag_2$, $\dots$, such that $\fbag_i$ has type $(\rho_i,T_i)$ and $T_i$'s are disjoint subsets of $T$, if all balls of all bags in the sequence are insertable, $C_4\geq 3$, $\lambda\leq 1/6C_4$, and $C_3 \geq 2((C_4+1)+4(C_4+2)^2)$, then
there is a set $\cZ$ of hollowed balls with a valid labeling such that
$$ \frac{C_4}{12C_3} \cdot \sum_{i} \sum_{t\in T_i} \rho_i\cdot |\bag_t| \leq  \sum_{B(x,r_1\|r_2)\in\cZ} (r_2-r_1).$$
\end{proposition}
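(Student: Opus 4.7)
The proof mirrors the architecture of \autoref{prop:compactballconstruction}: we process the families $\fbag_1,\fbag_2,\ldots$ in order, largest radius first (so phase $\l$ processes $\fbag_\l$), inductively maintaining a set $\cZ_\tau$ of labeled hollowed balls with valid labeling (per \autoref{fig:labelassignedballs}) and a token allocation scheme. Initially $\cZ_0=\emptyset$. At any time $\tau_{\l-1}<\tau\le\tau_\l$ inside phase $\l$, we allocate to each $B=B(x,r_1\|r_2)\in\cZ_\tau$ a token budget of $[(r_2-r_1)-c\rho_\l]^+$ for a fixed overhead $c$ that absorbs future shrinking. We prove inductively that these tokens can be redistributed so that every bag $\bag\in\fbag_i$ processed so far receives at least $|\bag|\cdot\rho_i\cdot C_4/(12 C_3)$ tokens, from which the conclusion follows by applying \autoref{lem:edgecountingassigned} to $\cZ:=\cZ_\infty$.

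The key new operation, relative to the compact case, handles a bag $\bag_t\in\fbag_\l$ whose balls lie inside some $B\in\cZ_\tau$ with $t(B)$ a strict ancestor of $t$ (which by property \ref{labelprop:validlabeling} of the valid labeling must happen whenever any ball of $\bag_t$ is an interior ball of a nontrivially-labeled $B$). We cannot simply split $B$ as in the compact case because $\bag_t$ has no diameter bound and such a split would lose most of the width of $B$. Instead, mimicking \autoref{fig:L1balls}, we (i)~insert all balls of $\bag_t$, each labeled $t(\cdot)=t$; (ii)~insert a single \emph{avoiding} hollowed ball $B'$ concentric with $B$, whose inner and outer radii are the min and max $L_1$ distances from the center of $B$ to the shadow vertices $X_{v_i}$, where each $v_i\in V(t^*)\setminus V(t)$ comes from the assigned property applied to the corresponding $B_i\in\bag_t$; we set $t(B')=t^*$ and $t_d(B')=t$; (iii)~replace $B$ by the two concentric hollowed rings exterior to all newly inserted items, inheriting $B$'s label. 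The predecessor labels $\Pr(\cdot)$ are propagated in the natural way to preserve property \ref{labelprop:insertability}, using the insertability hypothesis.

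The subtle point is verifying property \ref{labelprop:disjointpaths} for the new avoiding ball $B'$: we must exhibit $k/4$ edge-disjoint paths crossing $B'$ supported on $G[V(t^*)\setminus V(t)]$. Let $S_1,\ldots,S_\ell$ be a natural decomposition of $G[V(t^*)\setminus V(t)]$ into $k/4$-edge-connected components. The $12C_3/k$-assigned property $|\bag_t|\ge(12C_3/k)\cdot|\cut(t)|$ combined with the $(k,k\lambda,T)$-LCH guarantee $|\cut(t)|\ge k\lambda|\setdeg(t)|$ gives $|\bag_t|\gtrsim C_3\lambda\cdot |\setdeg(t)|$, which by \autoref{lem:decompkconnected} forces some component $S_j$ to be adjacent to a constant fraction of the shadow vertices $\{v_i\}$; picking two among these that realize (or are close to) the extremal radii of $B'$ yields the required $k/4$ edge-disjoint paths inside $\cC(B')$, which is exactly the degradation factor behind the $C_4\ge 3$ threshold.

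The main obstacle is the token bookkeeping across the operation above, the post-processing shrink step analogous to \eqref{eq:shrinkoperator}, and the insertion of border-ball bags of $\fbag_\l$. The gain from the avoiding ball $B'$ must cover the width lost when $B$ is carved around the $\bag_t$ balls, plus the new $\rho_\l|\bag_t|\cdot C_4/(12C_3)$ tokens owed to $\bag_t$; the hypotheses $\lambda\le 1/(6C_4)$ and $C_3\ge 2((C_4+1)+4(C_4+2)^2)$ are calibrated precisely so that this accounting closes at every step, paralleling the constants appearing in the proof of \autoref{prop:compactballconstruction} but scaled by the $1/C_4$ degradation stemming from the $k/4$-connectivity of $B'$ and the natural-decomposition step. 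Care is needed in the post-processing phase to re-shrink existing labeled hollowed balls without destroying validity of $\Pr(\cdot)$ or the disjointness of conflict sets of remaining intersecting pairs; this is handled exactly as in the compact case once the above avoiding-ball mechanism is in place.
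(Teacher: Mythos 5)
The high-level blueprint is right (phase-by-phase processing, inductive token accounting, a valid labeling in the sense of \autoref{fig:labelassignedballs}, an \avoiding~ball whose $k/4$ edge-disjoint paths come from a natural $k/4$-edge-connected decomposition of $G[V(t^*)\setminus V(t)]$, and the size bound from the $(k,k\lambda,T)$-LCH property together with $\beta$-assignedness). But the proposal diverges from the paper's proof in three places, and each of those deviations is fatal as written.

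First, your token budget is uniform, $[(r_2-r_1)-c\rho_\l]^+$ for every hollowed ball. The paper instead allocates only a $\frac{1}{2(2+C_4)}$ fraction of the width to \avoiding~balls. This is not cosmetic: when a ball of $\bag_t$ lands in the interior of an \emph{\avoiding} ball $B$ (Case~1 of \autoref{alg:ballproc}) you must split $B$, losing $\approx 2\rho_\l$ of its width, while gaining only $\approx\rho_\l$ from the one inserted small ball; under a uniform rule the balance is negative, while under the differential rule the loss on the avoiding side shrinks to $\approx\frac{2\rho_\l}{2(2+C_4)}$ and the inequality closes. The paper remarks explicitly that this is the crucial use of the avoiding-token discount.

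Second, your key operation inserts the small balls of $\bag_t$ plus one avoiding ring $B'$ of width $r'_2-r'_1$, and then carves $B$ to the exterior rings. But there is no insertion of anything non\avoiding~of width $\approx r'_2-r'_1$: once $B'$ is avoiding (and hence charged at a fraction $\frac{1}{2(2+C_4)}$ of its width), the only thing that covers the remaining $\left(1-\frac{1}{2(2+C_4)}\right)(r'_2-r'_1)$ of $B$'s lost width is the small balls, whose total is $m\rho_\l$ with $m=|\interior_{B'}|$. There is no lower bound forcing $m\rho_\l\gtrsim r'_2-r'_1$. The paper solves this with the dichotomy of Cases 2 and 3: if $m\rho_\l>3(r'_2-r'_1)$ (a \emph{compact}-like situation), insert the small balls and do not insert $B_3,B_4$; otherwise insert only $B_3=B(x,r'_1+\rho_\l\|r'_2-\rho_\l)$ (non\avoiding, labeled $t$, charged at full rate) and $B_4$ (avoiding, labeled $t(B)$), and delete the small interior balls from consideration. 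The non\avoiding~$B_3$ is the mechanism that pays back $B$'s lost width; without it your accounting fails whenever $m$ is small relative to $(r'_2-r'_1)/\rho_\l$.

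Third, you define the inner and outer radii of $B'$ as the extremal $L_1$ distances over \emph{all} shadow vertices $v_i\in V(t^*)\setminus V(t)$, and then appeal to some component $S_j$ being adjacent to a constant fraction of them. But the $k/4$ edge-disjoint paths live inside a single component $S_j$; if the two shadow vertices that actually realize $r'_1$ and $r'_2$ belong to different components, then no $k/4$ edge-disjoint path family in $G[V(t^*)\setminus V(t)]$ crosses the whole width of $B'$, and property~\ref{labelprop:disjointpaths} of \autoref{fig:labelassignedballs} fails. The paper avoids this by fixing the component $S_i$ with $|U_i|=\max_j|U_j|$, defining $V_i$ (shadow vertices in $S_i$ only), and setting $r'_1,r'_2$ from $V_i$ alone; the bound $|U_i|\ge 3C_3$ then comes from \autoref{lem:numcomps} together with the $12C_3/k$-assigned hypothesis. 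You would need to redefine $B'$ from a single component for the validity argument to go through, and you would then additionally need Lemma~\ref{lem:rp2-rp1} to control $r'_2-r'_1$.

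Two further omissions are worth flagging even if you could arrange the above: you do not specify the within-phase ordering (ancestors before descendants, property~\ref{assprop:root} of \autoref{fig:assignedballsorder}), which is what guarantees $t(B)$ is always a weak ancestor of the currently processed $t$ (\autoref{fact:maptBtotau}) and hence that $B'$ is a legitimate avoiding ball; and you say nothing about the $\excess$-based dichotomy in post-processing (Cases 1 and 2 there), where the shrink step can cost $\Theta(C_3\rho_\l)$ per hollowed ball and can only be paid for because the budget increases by $C_4(\rho_\l-\rho_{\l+1})$ when the phase index advances.
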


It is easy to see that the above proposition together with \autoref{lem:edgecountingassigned} implies \autoref{prop:assignedballconstruction}.
\begin{proofof}{\autoref{prop:assignedballconstruction}}
For any $i$ and any $t\in T_i$
we remove all non-insertable balls in $\bag_t$.
If  at least half of the balls of $\bag_t$ are insertable then we will have a $12 C_3/k$-assigned bag of balls. Otherwise, we remove $\bag_t$ from our geometric sequence and we remove $t$ from $T_i$. The resulting geometric sequence satisfies the conditions of  \autoref{prop:assignedbagreduced}.

By \autoref{lem:noninsertable},
 the sum of the radii of balls that we removed, which is at most twice the sum of the radii of all non-insertable balls, is at most half of the radii of all
balls in the given geometric sequence,
\begin{eqnarray*} 
\sum_{j} \sum_{B\in \fbag_j} \I{B \text{ is non-insertable}}\cdot\rho_j 
&\leq & \sum_{i} \sum_{t\in T_i} \sum_{j} \sum_{B\in\fbag_j} \I{B \text{ is non-insertable by $t$}}\cdot \rho_j \\
&\leq& 
\sum_{i}\sum_{t\in T_i} \frac{4 |\bag_t|\cdot\rho_i}{C_3}\leq \sum_{i} \sum_{t\in T_i} \frac{|\bag_t|\cdot\rho_i}{4}
\end{eqnarray*}
where the last inequality uses $C_3\geq 16$. 
Therefore, the proposition follows by \autoref{lem:edgecountingassigned}.
\end{proofof}

We conclude this section with a simple fact. We show that 
any insertable ball $B\in\bag_t$ satisfies properties \ref{labelprop:insertability} and \ref{labelprop:disjointpaths}.
\begin{fact}
	Any insertable ball $B=B(X_u,\br_\l)\in\bag_t$ satisfies properties \ref{labelprop:insertability} and \ref{labelprop:disjointpaths} of \autoref{fig:labelassignedballs}.
\end{fact}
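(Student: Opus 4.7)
The plan is to verify each of the two properties directly from the definition of insertability, using the $k$-edge-connectivity of $G(t)$ guaranteed by the LCH structure; no calculations should be needed.

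First I would check property~\ref{labelprop:insertability}. Since $B = B(X_u, \br_\l)$ is a (non-hollowed) ball we have $r_1 = 0$ and $r_2 = \br_\l$. Insertability condition (i) states that for any $t' \in \Pr(B)$, every endpoint of every edge in $\setdeg(t')$ lies outside $B$; for a ball of radius $\br_\l$ centered at $X_u$ this says precisely that $\norm{X_u - X_w}_1 \geq \br_\l = r_2$ for any such endpoint $w$, which is exactly what property~\ref{labelprop:insertability} demands.

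For property~\ref{labelprop:disjointpaths}, I would take the distinguished vertex in the statement of the property to be $u$ itself, so that $\norm{X_u - X_u}_1 = 0 \leq r_1$. Then I need to check $u \in \cC(B)$: since $B$ is non-avoiding, by property~\ref{labelprop:conflictsetdef} the set $\cC(B)$ equals the subtree of $\cT$ rooted at $t(B) = t$ minus the subtrees rooted at the nodes of $\Pr(B)$. The definition of an assigned bag of balls gives $u \in V(t)$, while $u \notin V(t')$ for every $t' \in \Pr(B)$ by insertability condition (ii); hence $u$ lies in $\cC(B)$.

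The one step requiring a bit of thought — and the main (minor) obstacle — is producing $k/4$ edge-disjoint paths originating from $u$, supported on $V(t(B)) = V(t)$, and crossing $B$. The key observation is that $\bag_t$ is $(12C_3/k)$-assigned with $|\cut(t)| \geq k$ (LCH property~2), so $|\bag_t| \geq 12 C_3 \geq 2$. Consequently there exists another ball $B(X_w, \br_\l) \in \bag_t$ with $w \in V(t)$, and the disjointness of the balls in $\bag_t$ forces $\norm{X_u - X_w}_1 \geq 2\br_\l$, placing $w$ strictly outside $B$. Since $G(t)$ is $k$-edge-connected, Menger's theorem furnishes $k$ edge-disjoint $u$--$w$ paths supported on $V(t)$; each must cross $B$ since $u$ is its center and $w$ lies outside. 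This yields at least $k \geq k/4$ such paths, completing the verification.
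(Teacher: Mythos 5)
Your proof is correct and follows essentially the same route as the paper's (which is terser): verify property~\ref{labelprop:insertability} directly from insertability condition~(i), and for property~\ref{labelprop:disjointpaths} take $u$ as the distinguished vertex, note $u\in\cC(B)$ from insertability condition~(ii), and use $k$-edge-connectivity of $G(t)$ to route $k$ edge-disjoint paths to a vertex of $V(t)$ outside $B$. The only cosmetic difference is that you derive the existence of such a vertex by exhibiting the center of a second ball of $\bag_t$ and computing $|\bag_t|\geq 12C_3$ explicitly, whereas the paper simply invokes $|\bag_t|>1$; both are fine, and your lower bound $\norm{X_u-X_w}_1\geq 2\br_\l$ is more than what is needed (already $\geq\br_\l$ suffices to place $w$ outside $B$).
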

\begin{proof}
Property \ref{labelprop:insertability} follows by the definition of insertable balls. 	To see \ref{labelprop:disjointpaths} note that all balls of $\bag_t$ are nonavoiding; in addition, since $B$ is insertable, $u$ does not belong to any of the subtrees rooted at $\Pr(B)$. Since by definition of $\cT$, $G(t)$ is $k$-edge-connected, there are $k$ edge-disjoint paths from $u$ to a vertex of $V(t)$ outside of $B$ (note that since $|\bag_t|>1$ there is always a vertex of $V(t)$ outside of $B$). 
\end{proof}

%$B=B(x,r_1\|r_2)$, there are $k/4$ %edge-disjoint paths originating from a %vertex $u\in V(t(B))\setminus V(

\subsubsection{Order of Processing}
In the rest of this section we prove \autoref{prop:assignedbagreduced}. 
So from now on, we assume all balls of all bags in the sequence are insertable and that every bag is $12 C_3/k$-assigned.

Similar to \autoref{subsec:compactchargin}, we give an inductive proof. In this part we describe general properties of our construction and we use them to prove two essential lemmas. We process families of bags of balls in phases, and in phase $\l$ we process $\fbag_\l$. %There is one modification. We process bags of balls in $\fbag_\l(\rho_\l,T_\l)$ in an increasing order of the distance to the root of $\cT$. For example, if $\bag_{t_1}(\rho_\l, b_{t_1}), \bag_{t_2}(\rho_\l, b_{t_2})\in\fbag_\l$, such that $t_1$ is an ancestor of $t_2$, then we first process $t_1$ and then $t_2$. 
We need to use slightly larger (compared to the previous section) constants in the definition of interior balls.
\begin{definition}[Interior ball]
%Say at time $\tau$ we are processing $\bag_t(\rho_\l,b_t)$.
We say a ball $B=B(X_u,\br_\l)\in\bag_t$  is {\em in the interior of} a hollowed ball $B'=B(x,r_1\|r_2)$ if  %$\{B,B'\}$'s labeling is invalid 
$\cC(B)\cap \cC(B')\neq\emptyset$
and,
$$ r_1+C_3\cdot \rho_\l < \norm{x-X_u}_1 < r_2-C_3\cdot \rho_\l. $$
%if $B'$ is not a child-avoiding ball and
%$$ r_1+\rho_\l \leq \norm{x-X_u}_1 \leq r_2-\rho_\l, $$
%otherwise.
We say $B$ is an interior ball (with respect to $\cZ$) if $B$ is in the interior of a %non child-avoiding 
hollowed ball (of $\cZ$). If $B$ is not an interior ball, we call it a border ball. Similar to the previous section we insert all border balls of phase $\l$ at time $\tau_\l$. 
\end{definition}

See \autoref{fig:assignedballsorder} for the main properties of our inductive construction.
In the rest of this  part we use these properties to prove lemmas \ref{lem:uniqueinterior} and \ref{lem:invariantborder}.
The following fact follows simply by 
property~\ref{assprop:tau}.
\begin{lemma}
\label{fact:maptBtotau}
%For any time $\tau_{\l-1}+1 \leq \tau\leq \tau_\l$, $s\geq 0$, any ball $B\in\cZ_\tau$,
% $t(B)\in T_i$ for $i\leq \l$. If $i=\l$, then $\bag_t$ must be processed by time $\tau+1$.
Suppose we are processing $\bag_t\in\fbag_\l$ at time $\tau$. For any $s\geq 0$ and any ball $B\in\bag_t$ and $B'\in\cZ_{\tau,s}$, if $\cC(B)\cap \cC(B')\neq \emptyset$, then $t(B')$ is a weak ancestor of $t$. 
\end{lemma}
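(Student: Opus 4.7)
The plan is to leverage the processing order specified by property~\ref{assprop:tau} of \autoref{fig:assignedballsorder} together with the structure of conflict sets (property~\ref{labelprop:conflictsetdef}). Since both $\cC(B)$ and $\cC(B')$ are (pruned) rooted subtrees of $\cT$ rooted at $t = t(B)$ and $t(B')$ respectively, the nonempty intersection $\cC(B)\cap \cC(B')\neq\emptyset$ forces $t$ and $t(B')$ to lie on a common root-to-leaf path, i.e., they are ancestor--descendant in $\cT$. Hence $t(B')$ is either a weak ancestor of $t$ --- the desired conclusion --- or a strict descendant of $t$; it suffices to rule out the latter.

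Suppose for contradiction that $t(B')$ is a strict descendant of $t$. Labels are assigned the moment a ball is first inserted into $\cZ$ and are preserved under every shrink, split, or avoiding-ball carving operation. Tracing $B'$ back through its modification history therefore yields a bag $\bag_{t''}\in\fbag_j$, with $j\leq \l$, such that $B'$ descends from a ball initially placed into $\cZ$ during the processing of $\bag_{t''}$; in particular $t(B')=t''$, and so $t''$ itself is a strict descendant of $t$.

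Consider first the case $j<\l$. By the definition of predecessor given before \autoref{lem:noninsertable}, $t''$ is then a predecessor of $t$, so $t''\in\Pr(B)$; by property~\ref{labelprop:conflictsetdef} the entire subtree of $\cT$ rooted at $t''=t(B')$ is thereby excluded from $\cC(B)$. Since $\cC(B')$ is contained in this very subtree, we obtain $\cC(B)\cap \cC(B')=\emptyset$, contradicting our hypothesis. The only remaining possibility is $j=\l$, and this is precisely what property~\ref{assprop:tau} of \autoref{fig:assignedballsorder} precludes: within phase $\l$, bags are processed in an order that, by the time $\bag_t$ is reached, guarantees no bag assigned to a strict descendant of $t$ in $T_\l$ has yet been processed. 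The hard part of the argument --- and the only place where a nontrivial ingredient is needed --- is therefore encapsulated in that ordering requirement, which is exactly why property~\ref{assprop:tau} is imposed.
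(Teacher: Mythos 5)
Your proof follows essentially the same path as the paper's: use property~\ref{labelprop:conflictsetdef} to get that $t$ and $t(B')$ are ancestor--descendant, then rule out the strict-descendant case by splitting on whether the bag to which $t(B')$ belongs lives in an earlier phase ($j<\l$, handled via the predecessor/insertability argument) or in the current phase ($j=\l$, handled via the processing order). Both cases match what the paper does.

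The one slip is a citation error in the $j=\l$ case: you twice attribute the relevant facts to property~\ref{assprop:tau} of \autoref{fig:assignedballsorder}, but that item only governs how new balls are labeled within a single time step. The fact that, within phase $\l$, bags assigned to deeper nodes are processed \emph{later} --- so that no bag assigned to a strict descendant of $t$ in $T_\l$ has been touched before $\bag_t$ --- is property~\ref{assprop:root}. (Property~\ref{assprop:labelinvariant}, together with property~\ref{assprop:tau}, is what lets you trace $t(B')$ back to a $t''\in T_j$ with $j\leq\l$ in the first place, since labels are only ever set to $t$ or inherited from already-present balls.) Correcting the reference makes the argument complete; the reasoning itself is right.
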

\begin{proof}
Let $t'=t(B')$.
If $\cC(B)\cap \cC(B')\neq\emptyset$, then by property \ref{labelprop:conflictsetdef} of \autoref{fig:labelassignedballs}, $t,t'$ are ancestor-descendant. So, we just need to show that $t'$ is not a descendant of $t$.

First, by properties \ref{assprop:labelinvariant} 
and \ref{assprop:tau} of \autoref{fig:assignedballsorder},  $t'\in T_i$ for some $i\leq \l$. If $t'\in T_\l$  either $t'=t$ or $\bag_{t'}$ is processed by time $\tau$. 
Therefore, by property \ref{assprop:root} of \autoref{fig:assignedballsorder},
$t'$ is not a descendant of $t$ and we are done.
Otherwise, $t'\in T_i$ and $i <\l$.
If $t'$ is a descendant of $t$, then it is a predecessor of $t$ and since $B$ is an insertable ball, $t'\in\Pr(B)$. So $t'\notin \cC(B)$ and $\cC(B')\cap \cC(B)=\emptyset$, which cannot be the case.
\end{proof}

\begin{figure}[htb]
\fbox{ \parbox{6.6in} {
\begin{enumerate}
\item Phase $\l$ starts at $\tau_{\l-1}+1$ and ends at $\tau_\l$. In phase $\l$, we process assigned bags of  balls in $\fbag_\l$, in the increasing order of the depth\footnotemark of the node to which they are assigned in $\cT$. For example, if $\bag_{t_1}, \bag_{t_2}\in\fbag_\l$ and $t_1$ is an ancestor of $t_2$,  we process $\bag_{t_1}$ before $\bag_{t_2}$. 
\label{assprop:root}
%\item For any $\bag_t\in \fbag_\l$  $B=B(X_u,\rho_j)\in\bag_t$ 
%if there is a node $t'$ that is a descendant of $t$ and $\bag_{t'} \in \fbag_\i$ for $i<j$ such that 
\item  Any ball of $\fbag_\l$ that we insert (in phase $\l$) remains unchanged till the end of phase $\l$. 
All other hollowed balls may be shrunk or be split into several balls but their labels (and their \conflictset s) remain invariant. 
%At any time $\tau_{\l-1} < \tau \leq \tau_\l$ we may add new balls not in $\fbag_\l$ but any of these balls 
\label{assprop:labelinvariant}

\item Say at time $\tau_{\l-1} < \tau< \tau_\l$ we are processing $\bag_t$. We construct $\cZ_{\tau+1}$ inductively by constructing $\cZ_{\tau,0}=\cZ_\tau$, $\cZ_{\tau,1}$, $\ldots$, $\cZ_{\tau,\infty}=\cZ_{\tau+1}$.
We make sure that each set $\cZ_{\tau,s}$ has a valid labeling.
When we are constructing $\cZ_{\tau,s+1}$, we  insert several new (hollowed) balls where only some of them are in $\bag_t$. 
Those not in $\bag_t$ are inserted as a result of a conflict in labeling that would be introduced if we inserted a ball of $\bag_t$.
%, i.e., a ball of $\bag_t$ is in the interior of a non\avoiding~ball $B'=B(x,r'_1\|r'_2)\in \cZ_{\tau,s}$
%where $t(B')$ is an ancestor of $t$, and $\Pr(B')$ does not contain any weak ancestor of $t$.
In these cases, we split or shrink an already inserted nonavoiding ball $B'$ %where $t(B')$ is an ancestor of $t$, 
and we insert
 new hollowed balls $B\subseteq B'$ 
such that $\cC(B)\subseteq \cC(B')$. We also let $t(B)=t(B')$ or $t(B)=t$ depending on whether $B$ is avoiding or nonavoiding.
%in $\cZ_{\tau,s+1}$ %, that is not in $\bag_t$, must
% be {\em inside} a non\avoiding~hollowed ball 
%$B'=B(x,r'_1\|r'_2)\in \cZ_{\tau,s}$, i.e., 
%such that $r'_1\leq r_1$, $r_2\leq r'_2$ and $B'$ has the same center as $B$.
%We define the label of $B$ in such a way 

%More precisely, the label of $B$ is defined as follows:
% The label of $B$ is defined as follows:
 %If $B$ is  \avoiding, %that is a subset of a non child-avoiding ball $B'\in\cZ_\tau$ satisfies 
 %then $t(B)=t(B')$, $t_d(B)=t$ and $\Pr(B)=\Pr(B')$; otherwise,  $t(B)=t$ and  $\Pr(B)$ consists of nodes in $\Pr(B')$ which are descendants of $t$. 
\iffalse
\begin{center}
\begin{tabular}{|l|l|l|l|}
\cline{2-4}
\multicolumn{1}{c|}{}& $t(B)$ & $t_d(B)$ & $\Pr(B)$  \\
\hline
$B$ is nonavoiding & $t$ & NA & $\Pr(B')\cap \{\text{descendants of } t\}$\\
\hline
$B$ is avoiding & $t(B')$ & $t$ & $\Pr(B')$\\
\hline
\end{tabular}
\end{center}
\fi
\label{assprop:tau}

\item At time $\tau_\l$ we process the border balls of all bags of $\fbag_\l$. 
\end{enumerate}
}}
\caption{Properties of our Inductive Construction}
\label{fig:assignedballsorder}
\end{figure}
\footnotetext{Note that the root has depth $0$.}

In the following lemma we show that when we are processing $\bag_t$ (at time $\tau$) any ball in this bag 
is in the interior of at most one hollowed ball of $\cZ_{\tau,s}$.
\begin{lemma}
\label{lem:uniqueinterior}
Say we process $\bag_t\in\fbag_\l$ at time $\tau$. 
For any $s\geq 0$, and any ball $B\in\bag_t$
and $B'\in\cZ_{\tau,s}$, if $\cC(B)\cap \cC(B')\neq\emptyset$, then for any ball $B''\neq B'$ in $\cZ_{\tau,s}$
that intersects $B'$, $\cC(B)\cap \cC(B'')=\emptyset$.

Consequently, if $B$ is in the interior of $B'\in \cZ_{\tau,s}$, then $\cC(B)\cap \cC(B'')=\emptyset$ for any $B''\in \cZ_{\tau,s}$ that intersects $B'$ and $B''\neq B'$. So, $B$ is in the interior of at most one ball of $\cZ_{\tau,s}$.
\end{lemma}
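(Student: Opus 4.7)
The plan is to argue by contradiction using \autoref{fact:maptBtotau} together with property~\ref{labelprop:conflictsetdef} and \ref{labelprop:validlabeling} of a valid ball labeling. Suppose the first statement fails: there exist distinct $B', B'' \in \cZ_{\tau,s}$ that intersect, and both $\cC(B) \cap \cC(B') \neq \emptyset$ and $\cC(B) \cap \cC(B'') \neq \emptyset$. I will derive a contradiction with the fact that $\cZ_{\tau,s}$ has a valid labeling (so by property~\ref{labelprop:validlabeling}, $\cC(B') \cap \cC(B'') = \emptyset$).

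By \autoref{fact:maptBtotau}, both $t(B')$ and $t(B'')$ are weak ancestors of $t=t(B)$ in the hierarchy $\cT$. Since any two weak ancestors of a common node lie on the root-to-$t$ path, they are ancestor-descendant; without loss of generality $t(B')$ is a weak ancestor of $t(B'')$. The intended contradiction is now to show that $t(B'') \in \cC(B')$, which already forces $\cC(B') \cap \cC(B'') \ni t(B'')$, contradicting property~\ref{labelprop:validlabeling}.

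Suppose instead $t(B'') \notin \cC(B')$. Since $t(B'')$ lies in the subtree rooted at $t(B')$ (by the ancestor assumption), property~\ref{labelprop:conflictsetdef} applied to $B'$ implies that there is an excluded node $t^* \in \Pr(B') \cup \{t_d(B')\}$ (with $t_d(B')$ available only when $B'$ is avoiding) such that $t(B'')$ lies in the subtree rooted at $t^*$. So $t^*$ is a weak ancestor of $t(B'')$, and hence of $t$. But then the entire subtree rooted at $t$ is contained in the subtree rooted at $t^*$, which is excluded from $\cC(B')$ by property~\ref{labelprop:conflictsetdef}. Since $\cC(B) \subseteq$ subtree rooted at $t$ (again by property~\ref{labelprop:conflictsetdef}), this yields $\cC(B) \cap \cC(B') = \emptyset$, contradicting our hypothesis.

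The main step to be careful about is the case distinction on whether the excluded $t^*$ comes from $\Pr(B')$ or equals $t_d(B')$; both cases are handled uniformly by the observation that the excluded piece is always a full subtree of $\cT$, and any such subtree that contains $t(B'')$ (hence contains $t$) also contains $\cC(B)$. The ``Consequently'' assertion is then immediate: if $B$ is in the interior of $B' \in \cZ_{\tau,s}$, then in particular $\cC(B) \cap \cC(B') \neq \emptyset$ by the definition of an interior ball, so the first part of the lemma applies verbatim; if $B$ were in the interior of two distinct balls $B', B''$, these two balls would intersect (both would contain the center $X_u$ of $B$, since the interior condition $r_1 + C_3 \rho_\l < \norm{x - X_u}_1 < r_2 - C_3 \rho_\l$ forces the center to lie well inside each), and then the first part rules out $\cC(B) \cap \cC(B'') \neq \emptyset$.
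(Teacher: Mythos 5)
Your proof is correct and follows essentially the same approach as the paper's: both invoke \autoref{fact:maptBtotau} twice to place $t(B')$ and $t(B'')$ as weak ancestors of $t$, and both derive a contradiction with property~\ref{labelprop:validlabeling} by exhibiting a common node of $\cC(B')$ and $\cC(B'')$. The only cosmetic difference is the choice of witness: the paper shows directly that $t\in\cC(B')$ and $t\in\cC(B'')$ (using the ``iff'' observation built into property~\ref{labelprop:validlabeling}), whereas you order the two labels along the ancestor path and show $t(B'')\in\cC(B')$ via the excluded-subtree structure of $\cC(B')$; both rest on the same facts about conflict sets being connected subtrees with full subtrees pruned. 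Your justification of the ``Consequently'' clause is also sound and in fact more explicit than the paper, which treats that part as immediate.
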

\begin{proof}
Let $t'=t(B')$; fix a ball $B''\in\cZ_{\tau,s}$ and let $t''=t(B'')$.
Assume, for the sake of contradiciton, that $\cC(B)\cap \cC(B'')\neq\emptyset$. 
First, by \autoref{fact:maptBtotau},
$t',t''$ are weak ancestors of  $t$. %Therefore, $t',t''$ are ancestor-descendant. 
Since $t'$ is a weak ancestor of $t$ and $\cC(B)\cap \cC(B')\neq\emptyset$, we have $t\in\cC(B')$. Similarly, $t\in\cC(B'')$. Therefore, $\cC(B')\cap\cC(B'')\neq\emptyset$ which contradicts the validity of the labeling since $B',B''$ intersect.
%Since the center of $B$ is far from the boundary of $B'$ and $B$ intersects $B''$, $B'$ also intersects $B''$.
%Since $B',B'' \in \cZ_{\tau,s}$, $\cC(B')\cap \cC(B'')=\emptyset$. 
%If $t'=t''$, then we reach a contradiction. So, without loss of generality assume that $t'$ is an ancestor of $t''$ (see \autoref{fig:uniqinterior}).
%By property \ref{labelprop:validlabeling} of \autoref{fig:labelassignedballs}, $\Pr(B')$ has a weak ancestor of $t''$, or $B'$ is \avoiding~and $t_d(B')$ is a weak ancestor of $t''$. 
%But, either of these implies that $\{B,B'\}$'s labeling  valid, which is a contradiction. 
\end{proof}
%\begin{figure}
%\centering
%\begin{tikzpicture}
%\tikzstyle{every node} = [draw,circle,minimum size=5mm,inner sep=0];
%	\node at (2,2) (a) {$t'$};
%	\node at (1,1) (b) {$t''$} edge (a);
%	\node at (0,0) (c) {$t$} edge (b);
%	\node at (3,1) [draw=none] (d) {$\vdots$} edge (a);
%	\node at (2,0) (e) [draw=none] {$\vdots$} edge (b);
%\end{tikzpicture}	
%\caption{The setting of \autoref{lem:uniqueinterior}.}
%\label{fig:uniqinterior}
%\end{figure}

Next, we show that once a ball of  $\bag_t$ becomes a border ball, it remains a border ball till the end of phase $\l$.
In the proof we use the following simple fact.
\begin{fact}
\label{fact:shrinkinglabels}
Suppose $B,B'$ have a valid labeling; for any ball $B''\subseteq B$ such that $\cC(B'')\subseteq \cC(B)$, $\{B',B''\}$'s labeling is valid as well.
\end{fact}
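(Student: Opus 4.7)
The plan is to verify the four labeling conditions listed in Figure~\ref{fig:labelassignedballs} for the pair $\{B', B''\}$, using the assumption that they already hold for $\{B, B'\}$, together with the two hypotheses $B'' \subseteq B$ and $\cC(B'') \subseteq \cC(B)$.

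First I would dispose of Conditions~1--3, which are intrinsic single-ball properties. Condition~1 is a definitional statement about $\cC(\cdot)$ in terms of the ball's own labels $t(\cdot)$, $t_d(\cdot)$, $\Pr(\cdot)$; Condition~2 is a geometric statement about how $\setdeg(t')$ sits relative to the radii of the ball itself for $t' \in \Pr(\cdot)$; Condition~3 is the existence of $k/4$ edge-disjoint paths anchored inside $\cC(\cdot)$ and crossing the ball. For $B'$ all three are granted by hypothesis. For $B''$, the fact is invoked in the algorithm exactly in those situations where $B''$ is constructed as a shrunken or carved sub-region of $B$ with labels chosen so that these per-ball conditions are maintained by construction (and indeed the statement of the fact already presumes a well-defined $\cC(B'')$, so $B''$ comes equipped with an admissible label set). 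Hence no additional work is needed for these conditions.

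The only cross-ball condition is Condition~4: any two intersecting hollowed balls must have disjoint conflict sets. Here is the one-line argument I expect to suffice: if $B'' \cap B' = \emptyset$ there is nothing to check, and otherwise $B'' \subseteq B$ gives $\emptyset \neq B'' \cap B' \subseteq B \cap B'$, so $B$ and $B'$ also intersect; validity of $\{B, B'\}$ then yields $\cC(B) \cap \cC(B') = \emptyset$, and combining with the hypothesis $\cC(B'') \subseteq \cC(B)$ one obtains $\cC(B'') \cap \cC(B') \subseteq \cC(B) \cap \cC(B') = \emptyset$, as required.

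I do not foresee a genuine obstacle: this is a bookkeeping lemma whose only role is to let the inductive construction split or shrink a single ball $B$ into a sub-ball $B''$ without having to re-verify labeling validity against every already-placed ball $B'$. The monotonicity of conflict sets under the containment $\cC(B'') \subseteq \cC(B)$ propagates disjointness for free, and the geometric containment $B'' \subseteq B$ is what ensures that any new intersection of $B''$ with some $B'$ is already witnessed by an old intersection of $B$ with $B'$. Once these two monotonicity observations are recorded, the proof is essentially one line and no subtlety remains.
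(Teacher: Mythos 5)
Your proof is correct. The paper states this as a ``simple fact'' with no accompanying proof, so there is nothing to compare against, but the argument you give is certainly the one the authors had in mind: properties~1--3 of the valid-labeling definition are per-ball and transfer automatically, and for the cross-ball property~4 the geometric containment $B''\subseteq B$ upgrades any intersection $B''\cap B'\neq\emptyset$ to an intersection $B\cap B'\neq\emptyset$, at which point validity of $\{B,B'\}$ gives $\cC(B)\cap\cC(B')=\emptyset$ and the hypothesis $\cC(B'')\subseteq\cC(B)$ finishes it.

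One small caveat worth recording for yourself: the fact as stated only guarantees the pairwise condition~4 for $\{B',B''\}$; that $B''$ itself satisfies the per-ball properties~1--3 is indeed something the algorithm must arrange separately (and does, via property~\ref{assprop:tau} of \autoref{fig:assignedballsorder} and the explicit verifications in Cases~1--3), so you are right not to try to extract those from this fact.
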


\begin{lemma}
\label{lem:invariantborder}
Suppose we are processing $\bag_t\in\fbag_\l$ at time $\tau$.
For any ball  $B\in\bag_t$ if $B$ is an interior ball with respect to (some hollowed ball of) $\cZ_{\tau',s}$ for some $ \tau \leq \tau' < \tau_\l$ and $s>0$, then, it is also in the interior of a ball of $\cZ_{\tau',s-1}$. 
\end{lemma}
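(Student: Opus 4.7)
The plan is to show that interior-ness can only decrease through the inductive construction within a single time step. Specifically, I will prove the contrapositive: if $B$ is not in the interior of any ball in $\cZ_{\tau', s-1}$, then it is not in the interior of any ball in $\cZ_{\tau', s}$. Equivalently, I will show that any $B''\in\cZ_{\tau',s}$ containing $B$ in its interior forces the existence of a witnessing ball in $\cZ_{\tau',s-1}$.

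First, I would invoke property~\ref{assprop:tau} of \autoref{fig:assignedballsorder} to classify the balls of $\cZ_{\tau',s}$ according to their provenance: (i) balls that were already present in $\cZ_{\tau',s-1}$ and remain unchanged, (ii) new hollowed balls created by splitting or shrinking a pre-existing nonavoiding ball $B'\in\cZ_{\tau',s-1}$ (these satisfy $B''\subseteq B'$ geometrically, concentrically, and $\cC(B'')\subseteq \cC(B')$), and (iii) newly inserted basic $L_1$ balls from the bag $\bag_{t'}\in\fbag_\l$ that is being processed at time $\tau'$.

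Type (i) is immediate, since then $B$ is already in the interior of $B''\in\cZ_{\tau',s-1}$. Type (iii) cannot occur: both $B$ and $B''$ have radius $\rho_\l$ (as $\bag_t,\bag_{t'}\in\fbag_\l$), so the interior condition would require $C_3\rho_\l<\|x''-X_u\|_1<\rho_\l-C_3\rho_\l$, which is an empty interval for $C_3\geq 2$. The main case is (ii): writing $B''=B(x,r_1''\|r_2'')$ and $B'=B(x,r_1'\|r_2')$ concentrically with $r_1'\leq r_1''$ and $r_2''\leq r_2'$, the interior condition for $B''$ yields
\[ r_1'+C_3\rho_\l \;\leq\; r_1''+C_3\rho_\l \;<\; \|x-X_u\|_1 \;<\; r_2''-C_3\rho_\l \;\leq\; r_2'-C_3\rho_\l, \]
and the containment $\cC(B'')\subseteq\cC(B')$ gives $\cC(B)\cap\cC(B')\supseteq\cC(B)\cap\cC(B'')\neq\emptyset$, so $B$ lies in the interior of $B'\in\cZ_{\tau',s-1}$.

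The main subtlety I expect to be the key technical ingredient is verifying that the explicit splitting/shrinking step of the construction always produces concentric pieces of a pre-existing hollowed ball (as illustrated in \autoref{fig:L1balls}); once this structural feature of the construction is established, the monotonicity of both geometric containment and conflict-set containment propagates the interior condition from $B''$ back to the parent $B'$ without any loss, and the lemma follows.
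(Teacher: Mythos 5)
Your proof is correct and takes essentially the same route as the paper's: classify the balls of $\cZ_{\tau',s}$ by provenance (via property 3 of Figure 14), observe that every new hollowed ball is a concentric piece of a pre-existing one with a contained conflict set, and conclude that interiority is preserved when passing to the parent. Your case (iii) — ruling out that $B$ could sit in the interior of a freshly inserted ball of the bag being processed, since two radius-$\rho_\ell$ balls cannot be nested with a $C_3\rho_\ell$ margin — is a detail the paper's one-line proof silently skips, and is a worthwhile addition. One small inaccuracy to fix before final writeup: you inherit the paper's wording ``split or shrink an already inserted nonavoiding ball'' for case (ii), but step 5 of Algorithm~\ref{alg:ballproc} (Case~1) splits an \emph{avoiding} hollowed ball; the ``nonavoiding'' qualifier should be dropped, since your actual argument uses only that the pieces are concentric subsets with $\cC(B'')\subseteq\cC(B')$, which holds in that step as well.
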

This lets us backtrack through $\cZ_{\tau',s}$'s until we reach $\cZ_{\tau,0}$. So, if $B$ is a border ball at the time we start processing  $\bag_t$ it remains a border ball until time $\tau_\l$. 
\begin{proof}
If $B$ is in the interior of a newly inserted ball $B'\in\cZ_{\tau',s}$, by property \ref{assprop:tau}, the \conflictset~of $B'$ is a subset of a \conflictset~of a ball $B''\in\cZ_{\tau'-1,s}$ containing $B'$.  
So, by \autoref{fact:shrinkinglabels}, 
$B$ is also in the interior of $B''$.
\end{proof}
%\begin{figure}
%\centering
%\begin{tikzpicture}
%\tikzstyle{every node} = [draw,circle,minimum size=5mm,inner sep=0];
%	\node at (2,2) (a) {$t''$};
%	\node at (1,1) (b) {$t$} edge (a);
%	\node at (0,0) (c) {$t'$} edge (b);
%	\node at (3,1) [draw=none] (d) {$\vdots$} edge (a);
%	\node at (2,0) (e) [draw=none] {$\vdots$} edge (b);
%\begin{scope}[shift={(5,0)}]
%	\node at (2,2) (a) {$t$};
%	\node at (1,1) (b) {$t''$} edge (a);
%	\node at (0,0) (c) {$t'$} edge (b);
%	\node at (3,1) [draw=none] (d) {$\vdots$} edge (a);
%	\node at (2,0) (e) [draw=none] {$\vdots$} edge (b);	
%\end{scope}
%
%\end{tikzpicture}	
%\caption{The setting of \autoref{lem:invariantborder}. }
%\label{fig:invariantborder}
%\end{figure}

\subsubsection{The Construction} 
At any time $\tau_{\l-1} < \tau \leq \tau_\l$ and $s\geq 0$, we allocate $\token_{\tau,s}(B)$ tokens to any hollowed ball $B=B(x,r_1\|r_2)\in\cZ_{\tau,s}$, where
$$
\token_{\tau,s}(B) = \begin{cases}
\rho_\l - C_4\cdot \rho_{\l+1} & \text{ if $B\in\fbag_\l$}\\
[r_2 - r_1 - C_4 \cdot \rho_\l]^+ & \text{ if $B\notin\fbag_\l$ is non\avoiding}\\
[\frac{r_2-r_1 - C_4\cdot \rho_\l}{2(2+C_4)}]^+ & \text{ otherwise.}
\end{cases}
$$
Note that we allocate significantly smaller number of tokens to the \avoiding~hollowed balls; roughly speaking we allocate $1/2C_4$ fraction of what we allocate for a same-sized nonavoiding ball.

%Our induction hypothesis is as follows:
%{\em At time $\tau_{\l-1}+1\leq \tau\leq \tau_\l$, if we allocate $\token_\l(B)$ tokens to any ball $B(x,r_1\|r_2)\in\cZ_\tau$, then we can distribute these tokens among the bags of balls that we have processed by time $\tau$ such that any $t\in T_{i}$ for $i<\l$ receives at least $\frac{C_4}{6C_3}\cdot b_t \cdot \rho_{i}$ tokens and any $t\in T_\l$, that is processed by time $\tau$, receives at least $\frac{C_4}{6C_3}\cdot (b_t-|\border_t|)\cdot \rho_\l$ tokens.}
%The loss function is also similar to the previous section with one modification. Here, we shrink the balls of $\cZ_{\tau_\l}\setminus \fbag_{\tau_\l}$
%with probability $\pshr$ where $0<\pshr<1$ is a constant that we fix later. 
%We re-write the loss function correspondingly.
%\begin{equation}
%\label{eq:lossassigned}
%\loss(\l):=\pshr \cdot \sum_{i\geq \l} 4\rho_\l \leq 6\pshr\cdot \rho_\l. 
%\end{equation}

Say we are processing $\bag_t$ at time $\tau_{\l-1}+1 \leq \tau < \tau_\l$. 
We process $\bag_t$ in several steps; we start with $\cZ=\cZ_\tau$ and in each iteration of the loop we may add/remove several (hollowed) balls to/from $\cZ$. We use $\cZ_{\tau,s}$ to denote
the set $\cZ$ after the $s$-th iteration of the loop, so, $\cZ=\cZ_{\tau,0}=\cZ_{\tau}$ before entering the loop and $\cZ=\cZ_{\tau,\infty}=\cZ_{\tau+1}$ after the loop. 
Before processing $\bag_t$, we let $\border_t$ be the set of border balls of $\bag_t$ with respect to $\cZ_{\tau,0}$ and $\interior_t$ be the set of interior balls. We update these sets in each iteration of the loop. We use $\border_{t,s},\interior_{t,s}$ to denote the sets $\border_t,\interior_t$ after the $s$-th iteration of the loop, respectively. In addition, we use
$\border_{t,\infty}, \interior_{t,\infty}$ to denote these sets after the execution of the loop. We will process the balls in $\border_{t,\infty}$   at the end of phase $\l$. 
The details of our construction are described in \autoref{alg:ballproc}.

\begin{algorithm}[phtb!]
\begin{algorithmic}[1]
\Input $\cZ_{\tau}$ and $\bag_t\in\fbag_\l$.
\Output $\cZ_{\tau+1}$
\State Let $\cZ=\cZ_\tau$, $t^*$ be parent of $t$ and $\border_t,\interior_t$ be the border balls and interior balls of $\bag_t$ respectively.
Also, let $\cut'(t)=\{ \{u,v\}\in \cut(t): \norm{X_u-X_v}_1 < \rho_\l\}$. 
\While {$|\interior_t|\geq |\bag_t|/2$}
\If {$\exists B'\in\interior_t$ s.t. $B'$ is  in the interior of an {\em\avoiding}~hollowed ball $B\in\cZ$,}
\State Suppose $B'=B(X_u,\rho_\l)$ and $B=B(x,r_1\|r_2)$.
\State \label{step:intchildavoiding} \parbox[t]{\dimexpr\linewidth-\algorithmicindent}{
{\bf Update $\cZ$:} Remove $B$ and add $B_1=B(x,r_1\|\norm{X_u-x}_1-\rho_\l)$ and $B_2=B(x,\norm{X_u-x}_1+\rho_\l\|r_2)$ with the same labels as $B$. Add $B'$ (to $\cZ$) and remove it from $\interior_t$.
{\bf Goto} step \ref{step:updateintt}.
\strut}
\Else
\State \parbox[t]{\dimexpr\linewidth-\algorithmicindent}{
Let $S_1,\ldots,S_j$ be a natural decomposition of 
$G[V(t^*)\setminus V(t)]$ into $k/4$-edge-connected subgraphs
as defined in \autoref{def:decompkconnected}. \Comment{In \autoref{lem:numcomps} we will show that $j\leq 2|\cut(t)|/k$.}
\strut}
\State \label{step:constUiVi}\parbox[t]{\dimexpr\linewidth-\algorithmicindent}{
Let $U\subseteq V(t)$ be the centers of balls of $\interior_t$, 
\begin{eqnarray*}
V_i&:=&\{v\in S_i: \exists u\in U, \{u,v\}\in\cut'(t)\},\\
U_i&:=&\{u\in U: \exists v\in S_i,\{u,v\}\in\cut'(t)\}
\end{eqnarray*}
%be the vertices of $S_i$ that are incident to an edge of $\cut'(t)$ and
\Comment{By \autoref{def:assignedbagballs},  every vertex of $U$ is incident to an edge of $\cut'(t)$, so $\cup_{i=1}^j U_i=U$. Also, since $\bag_t$ is a $12 C_3/k$-assigned bag, $|U|=|\interior_t|\geq |\bag_t|/2 \geq \frac{6C_3 |\cut(t)|}{k}$.}
\strut}
\State Let $i=\argmax_{1\leq i\leq j}|U_i|$. \Comment{So, $|U_i| \geq |U|/j \geq 3 C_3$.} \label{step:Uilowerbound}
\State \parbox[t]{\dimexpr\linewidth-\algorithmicindent}{Let $B=B(x,r_1\|r_2)\in \cZ$ be a non\avoiding ~ball such that a ball of $\interior_t$ with its center in $U_i$ is in the interior of $B$. 
\Comment{We will show that $t(B)$ is an ancestor of $t$.}
\strut}
\State 
We define $r'_1 = \max\{r_1, \min_{v\in V_i} \norm{x-X_v}_1\}$ 
and $r'_2:=\min\{r_2,  \max_{v\in V_i} \norm{x-X_v}_1\}.$
% In \autoref{} we show that $r'_2-r'_1\geq \Omega(C_3 \rho_\l).$ 
\State \parbox[t]{\dimexpr\linewidth-\algorithmicindent}{
Let $\interior_{B'}$ be the balls of $\interior_t$ whose centers are in the hollowed ball $B'=B(x,r'_1-\rho_\l\|r'_2+\rho_\l)$
%\{u\in U: r'_1-\rho_\l\leq \norm{X_U-x}_1\leq r'_2+\rho_\l\}$
and $U_{B'}$ be the centers of balls of $\interior_{B'}$. \Comment{We may have $U_i\not\subseteq U_{B'}$ as some vertices of $U_i$ may not even be in $B$, but all vertices of  $U_{B'}$ are in $B$.}
\strut}
\If {$|\interior_{B'}|\cdot \rho_\l > 3 (r'_2-r'_1) $}
\Comment{We treat $\interior_{B'}$ as if it was a 3-compact bag of balls.}
\State \label{step:compactint} \parbox[t]{\dimexpr\linewidth-\algorithmicindent}{
{\bf Update $\cZ$:} Remove $B$ and add $B_1=B(x,r_1\|r'_1-2\rho_\l)$ and $B_2=B(x,r'_2+2\rho_\l\|r_2)$ with the same labels as $B$.
Add all balls of $\interior_{B'}$ to $\cZ$ and remove them from $\interior_t$.
\strut}
\Else %\Comment{In \autoref{} we show $r'_2-r'_1\geq \Omega(C_3\rho_\l)$.}
\State \label{step:hardint} \parbox[t]{\dimexpr\linewidth-\algorithmicindent}{
{\bf Update $\cZ$:} Remove $B$ and add $B_1=B(x,r_1\|r'_1)$ and
$B_2=B(x,r'_2\|r_2)$ to $\cZ$, with the same labels as $B$.
Add a new (non\avoiding) hollowed ball $B_3=B(x,r'_1+\rho_\l\|r'_2-\rho_\l)$
with $t(B_3)=t$ and $\Pr(B_3)$ consisting of nodes $t'\in \Pr(B)$ such that $t'$ is a descendant of $t$. 
Add an \avoiding~hollowed ball $B_4=B(x,r'_1\|r'_2)$
with $t(B_4)=t(B)$, $t_d(B_4)=t$ and $\Pr(B_4)=\Pr(B)$. 
Remove all balls of $\interior_{B'}$ from $\interior_t$. See \autoref{fig:Case3} for an example.
\Comment{Note that no ball of $\interior_t\setminus\interior_{B'}$ is in the interior of $B_1$ or $B_2$.}
\strut}
\EndIf
\EndIf
\State Move all balls of $\interior_t$ that become  border balls w.r.t. $\cZ$ into $\border_t$. \label{step:updateintt}
\EndWhile
\Return $\cZ$.
\end{algorithmic}
\caption{Construction of $\cZ_{\tau+1}$ by processing $\bag_t$.}
\label{alg:ballproc}
\end{algorithm}

The following is the main result of this part.
\begin{lemma}
\label{lem:inductionassigned}
For any $\tau,s\geq 0$ the following holds. 
The set $\cZ_{\tau,s}$'s labeling is valid. 
%Say we process $\bag_t\in\fbag_\l$ at time $\tau$, and $\border_{t,s},\interior_{t,s}$ are disjoint subsets of $\bag_t$ such that each ball of $\border_{t,s}$ is a border ball and each ball of $\interior_{t,s}$ is an interior ball.
If we allocate $\token_{\tau,s}(B)$ tokens to any hollowed ball $B(x,r_1\|r_2)\in\cZ_{\tau,s}$, then we can distribute these tokens among nodes whose bags we have processed by time $\tau$ such that for any $i<\l$, any $t'\in T_{i}$  receives at least $\frac{C_4}{12C_3}\cdot |\bag_{t'}| \cdot \rho_{i}$ tokens, and any $t'\in T_\l$ that is processed by time $\tau$ receives at least 
$$\frac{C_4}{6C_3}\cdot (|\bag_{t'}|-|\border_{t',\infty}|-|\interior_{t',\infty}|)\cdot \rho_\l$$ 
tokens, and the node $t$ that we are processing at time $\tau$ receives at least 
$$\frac{C_4}{6C_3}\cdot (|\bag_t|-|\border_{t,s}|-|\interior_{t,s}|)\cdot\rho_\l$$ tokens.
%If $|\interior_{\tau,s}\geq |b_t|/2$, then we can construct $\cZ_{\tau,s+1}$, and $\border_{\tau,s+1},\interior_{\tau,s+1}$ such that $\interior_{\tau,s+1}\subseteq\interior_{\tau,s}$ and $\border_{\tau,s+1}\cup \interior_{\tau,s+1}\subseteq \border_{\tau,s}\cup \interior_{\tau,s}$. Furthermore, 
\end{lemma}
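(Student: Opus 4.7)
The plan is to prove both claims of the lemma simultaneously by induction on the pair $(\tau,s)$ in lexicographic order. The base case $\cZ_{0,0}=\emptyset$ is trivial since there are no balls, no nodes have been processed, and the validity condition holds vacuously. For the inductive step, I will inspect each possible way $\cZ$ changes in Algorithm~\ref{alg:ballproc}: the avoiding-interior case (step \ref{step:intchildavoiding}), the compact-like case (step \ref{step:compactint}), the hard case (step \ref{step:hardint}), and finally the end-of-phase update where all remaining border balls in $\cup_{t\in T_\l}\border_{t,\infty}$ are inserted at time $\tau_\l$. In each sub-case I will verify the two things separately: first that the labeling of the newly produced $\cZ_{\tau,s+1}$ is valid (properties \ref{labelprop:conflictsetdef}--\ref{labelprop:validlabeling} of \autoref{fig:labelassignedballs}), and second that the tokens $\token_{\tau,s+1}(\cdot)$ allocated in $\cZ_{\tau,s+1}$ can be rerouted to cover the increased lower bound demanded by the lemma.

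For label validity the key mechanism is inherited from the inductive hypothesis via \autoref{fact:shrinkinglabels}: whenever we shrink or split a ball $B\in \cZ_{\tau,s}$ into subballs contained inside $B$ whose \conflictset s are contained in $\cC(B)$, no intersection-vs-\conflictset~violation can be newly created with balls that already existed. The only danger is when we insert a completely new ball, which happens when we add $B'\in\bag_t$ itself (in steps \ref{step:intchildavoiding} and \ref{step:compactint}) or when we insert the fresh hollowed balls $B_3,B_4$ in step \ref{step:hardint}. For inserted balls of $\bag_t$, \autoref{lem:uniqueinterior} ensures $B'$ collides in \conflictset~with at most one ball $B$ of $\cZ_{\tau,s}$, and we cut $B$ so that its shrunk pieces no longer contain $X_u$, removing the conflict. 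For $B_3$ we set $t(B_3)=t$ and restrict $\Pr(B_3)$ to descendants of $t$ in $\Pr(B)$, so any weak ancestor of $t$ in an intersecting ball is excluded from $\cC(B_3)$; for the avoiding $B_4$ we set $t(B_4)=t(B)$ with $t_d(B_4)=t$, which means $\cC(B_4)\subseteq \cC(B)\setminus\{\text{subtree of }t\}$, so $\cC(B_4)$ avoids $\cC(B_3)$ and any newly inserted ball of $\bag_t$. I will also verify property \ref{labelprop:disjointpaths} for $B_4$ using the natural decomposition $S_1,\dots,S_j$ of step \ref{step:constUiVi}: the $k/4$ edge-disjoint paths across $B_4$ will be routed from a vertex of $V_i\subseteq S_i$ to a vertex outside $B_4$ inside $V(t^*)\setminus V(t)$, using $k/4$-edge-connectivity of $G[S_i]$ extended by the at least $|U_i|\geq 3C_3$ edges of $\cut'(t)$ joining $V_i$ to $U_i$ (this is a separate lemma, analogous to the natural-decomposition argument of \autoref{lem:decompkconnected}, that I will call on here).

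For token accounting the argument is local. In the avoiding-interior case (step \ref{step:intchildavoiding}), the single $B'\in\bag_t$ needs $\tfrac{C_4}{6C_3}\rho_\l$ new tokens; cutting the avoiding $B$ into $B_1,B_2$ loses at most $2\rho_\l+C_4\rho_\l$ of width at the factor $\tfrac{1}{2(2+C_4)}$, hence loses at most $\tfrac{2\rho_\l+C_4\rho_\l}{2(2+C_4)}=\rho_\l/2$, and $B'$ itself receives $\rho_\l-C_4\rho_{\l+1}\geq \rho_\l(1-C_4\lambda)\geq \rho_\l/2$ from its own allocation, yielding a net surplus since $\frac{C_4}{6C_3}\leq \frac12$. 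In the compact-like case (step \ref{step:compactint}), the inequality $|\interior_{B'}|\cdot \rho_\l>3(r'_2-r'_1)$ mimics the compactness hypothesis of \autoref{lem:inductioncompact}, and an identical computation—with the shrink loss $2\cdot(2\rho_\l+C_4\rho_\l)$ on $B$ and the per-ball allocation $\rho_\l-C_4\rho_{\l+1}$ summed over $|\interior_{B'}|$ many inserted balls—gives each such ball $\rho_\l/4$ tokens for $t$, which is above the threshold $\tfrac{C_4}{6C_3}\rho_\l$ because $C_3\geq 2(C_4+1)+8(C_4+2)^2$ was chosen exactly to absorb these losses. In the hard case (step \ref{step:hardint}), we charge the new demand of $t$ against the width of $B_3$: its width is $(r'_2-r'_1)-2\rho_\l$, and since the non-compactness hypothesis gives $(r'_2-r'_1)\geq |\interior_{B'}|\rho_\l/3$, there is $\Omega(|\interior_{B'}|\rho_\l)$ of new non-avoiding tokens available, which after paying the coefficient $\tfrac{C_4}{6C_3}$ suffices; the avoiding $B_4$ absorbs what remained in $B$, losing a factor $2(2+C_4)$ which is exactly the factor already built into the definition of $\token$, so the pre-existing demand from previously processed nodes continues to be met via the induction hypothesis applied to $B$.

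The main obstacle, as expected, will be the hard case (step \ref{step:hardint}) together with the end-of-phase insertion of border balls. For the former, I need to simultaneously verify that $B_3$'s label is legitimate (i.e.\ that $B_3$ genuinely admits $k/4$ edge-disjoint paths supported on $\cC(B_3)$ and avoiding the excluded subtrees), and that the factor $\tfrac{1}{2(2+C_4)}$ reduction on the avoiding $B_4$ is absorbed without breaking the inductive bound for the ancestor node that owned $B$. A subsidiary lemma bounding $j\leq 2|\cut(t)|/k$, and therefore $|U_i|\geq |U|/j\geq 3C_3$, is needed in step \ref{step:Uilowerbound} to guarantee enough incident $\cut'(t)$-edges to a single component $S_i$; this uses the $(k,k\lambda,T)$-LCH structure together with \autoref{lem:decompkconnected}. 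For the border balls processed at time $\tau_\l$, I will show they are never in the interior of any ball of $\cZ_{\tau_\l}$ (by \autoref{lem:invariantborder}), hence their insertion only requires cutting existing balls by at most $2\rho_\l$ in width and paying them $\rho_\l - C_4\rho_{\l+1}\geq \rho_\l/2$ tokens directly, finishing the bookkeeping.
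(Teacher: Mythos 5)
Your overall plan mirrors the paper's: induction on $(\tau,s)$, with one subcase per step of \autoref{alg:ballproc} plus a post-processing step, separately verifying labeling validity (via \autoref{fact:shrinkinglabels} and \autoref{lem:uniqueinterior}) and token accounting. The identification of \autoref{lem:numcomps} ($j\leq 2|\cut(t)|/k$, hence $|U_i|\geq 3C_3$) and the use of \autoref{lem:invariantborder} to stabilize border balls are on target. However, there are two genuine gaps, one small and one fatal.

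The small gap: in Cases 2 and 3 you never derive the bound $|\interior_{B'}|\geq 3(C_3-1)$, which is what lets the tokens of $\interior_{B'}$ absorb the $O(C_4)\rho_\l$ constant loss from splitting $B$ (beyond the $(r'_2-r'_1)$ part that the compactness-like hypothesis covers). This needs the two-case argument $U_i\subseteq U_{B'}$ vs.\ $U_i\not\subseteq U_{B'}$, the latter requiring a separate claim (the paper's \autoref{lem:rp2-rp1}) that $r'_2-r'_1\geq(C_3-1)\rho_\l$; without this, you have no lower bound on the number of balls in $\interior_{B'}$ and the Case 2/3 arithmetic does not close. The fatal gap is in post-processing. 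You assert that inserting a border ball ``only requires cutting existing balls by at most $2\rho_\l$'' and that $\rho_\l-C_4\rho_{\l+1}$ tokens cover it. But a border ball centered within $C_3\rho_\l$ of the boundary of some $B'$ (which is what non-interior means) can still lie fully inside $B'$. Even if you remove only a shell of width $2\rho_\l$ around each such border ball, the split of $B'$ into two hollowed balls incurs a token penalty of $2\rho_\l+C_4\rho_\l$ against a gain of $\rho_\l-C_4\rho_{\l+1}$, which is strictly negative. The paper instead shrinks \emph{every} hollowed ball uniformly by $(C_3+1)\rho_\l$ on each side and pays for this through an excess argument: either $\excess=\sum_B(\token_{\tau_\l+1}(B)-\token_{\tau_\l}(B))$, the windfall from stepping the phase index, is large (at least $\frac{C_4}{6C_3}b\rho_\l$), in which case border balls are not inserted at all and the excess alone covers their nodes' demand; or $\excess$ is small, in which case the paper's \autoref{lem:changeoftoken} bounds the total shrink loss by $b\rho_\l/2$, which the $b$ border balls can pay. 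This trade-off between the phase-transition excess and the shrink cost is the crux of the post-processing and is missing entirely from your proposal; without it the token accounting at time $\tau_\l$ cannot be closed.
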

Later, in the post processing phase we show that
any node $t$ receives 
at least $\frac{C_4}{6C_3}|\border_{t,\infty}|\cdot \rho_\l$ new tokens.
This implies \autoref{prop:assignedbagreduced} as by the stopping condition of the main loop of \autoref{alg:ballproc}, for any $t\in T_\l$, $|\interior_{t,\infty}| < |\bag_t|/2$.

We prove the above lemma by an induction on $\tau,s$.
From now on, we assume that all conclusions of the lemma hold for $\tau,s$ and we prove the same  holds for $\tau,s+1$. 
%We show that each of the sets $\cZ_{\tau,s}$ has a valid labeling. Furthermore, we show that after $s$ iterations, $\bag_t$ receives at least $\frac{C_4}{6C_3}\cdot (b_t-|\border_{t,s}|-|\interior_{t,s}|)\cdot\rho_\l$ tokens.
We construct  $\cZ_{\tau,s+1}$ (from $\cZ_{\tau,s}$) in one of the three steps of the loop, i.e., steps \ref{step:intchildavoiding}, \ref{step:compactint}, \ref{step:hardint}. 
We analyze these steps in the following three cases. %From now on, we fix $s$. We assume that $\cZ_{\tau,s}$ has a valid labeling and that each 

\paragraph{Case 1:} A ball  $B'\in \interior_{t,s}$ is in the interior of an \avoiding~hollowed ball $B=B(x,r_1\|r_2)\in\cZ_{\tau,s}$.\\
In this case by \autoref{lem:uniqueinterior}, for any  ball $B''\in\cZ_{\tau,s}$ such that $B\neq B''$, $\{B',B''\}$'s labeling is valid. Since, by definition, $B'$  intersects neither of $B_1,B_2$, $\cZ_{\tau,s+1}$'s labeling is valid. 
We send all tokens of $B_1$ and $B_2$ and $\rho_\l/2$ of the tokens of $B'$ to $B$ and we redistribute them by the induction hypothesis. We send the rest of the tokens of $B'$ to $t$. 
Then, $B$ receives,
$$ \token_{\tau,s+1}(B_1)+\token_{\tau,s+1}(B_2)+\frac{\rho_\l}{2} \geq \frac{(r_1-r_2-2\rho_\l)-2C_4\cdot\rho_\l+\rho_\l(2+C_4)}{2(2+C_4)}
= \token_{\tau,s}(B).$$
In the above equation, we crucially use that, roughly speaking, $\token_{\tau,s}(B)$ is a only a constant fraction of the width of $B$ when $B$ is an avoiding ball. This is not the case when we deal with nonavoiding balls in cases 2,3.

On the other hand, $t$ receives
$$ \token_{\tau,s+1}(B')-\rho_\l/2 \geq \rho_\l - C_4\cdot \rho_{\l+1}-\rho_\l/2\geq \rho_\l/4.$$
new tokens, where we used $\rho_{\l+1}\leq \lambda\cdot  \rho_\l$ and $\lambda\leq 1/4C_4$. Since $|\border_{t,s+1}| + |\interior_{t,s+1}| = |\border_{t,s}|+|\interior_{t,s}|-1$
we are done by induction.
\medskip

Now suppose that the above does not happen.
Consider the induced graph $G[V(t^*)-V(t)]$. Note that this graph may be disconnected. Let $S_1,S_2,\ldots,S_j$ be a natural decomposition of this graph as defined in \autoref{def:decompkconnected}. In the following lemma we show that $j\leq 2|\cut(t)|/k$.
\begin{lemma}
\label{lem:numcomps}
%For any node $t\in\cT(k,.,.)$ with parent $t^*$, let $S_1,\ldots,S_j$ be a natural decomposition of $G[V(t^*)\setminus V(t)]$ into $k/2$-edge-connected subgraphs as defined in \autoref{def:decompkconnected}. Then,
$ j\leq \frac{2|\cut(t)|}{k}.$
\end{lemma}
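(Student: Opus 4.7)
The plan is to use the $k$-edge-connectivity of $G(t^*)$ (guaranteed by the LCH structure, since $t^*$ is a non-leaf node) together with the bound on inter-component edges in a natural decomposition from \autoref{lem:decompkconnected}. The key observation is that each $k/4$-edge-connected component $S_i$ of the induced subgraph $H := G[V(t^*) \setminus V(t)]$ has, in the ambient graph $G(t^*)$, at least $k$ outgoing edges; these edges are split between edges going into $V(t)$ (which contribute to $\cut(t)$) and edges going into other components $S_{i'}$ (which lie inside $H$).

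More precisely, I would argue as follows. Fix an index $i\in[j]$. Since $G(t^*)$ is $k$-edge-connected,
$$ k \;\leq\; |E_{G(t^*)}(S_i,\, V(t^*)\setminus S_i)| \;=\; |E(S_i, V(t))| \,+\, \deg_H(S_i), $$
because $V(t^*)\setminus S_i = V(t) \cup \bigcup_{i'\neq i} S_{i'}$ and every edge from $S_i$ to that complement either crosses into $V(t)$ or stays inside $H$. Summing over all $i\in[j]$ and noting that $S_1,\dots,S_j$ partition $V(t^*)\setminus V(t)$, so $\sum_i |E(S_i,V(t))| = |\cut(t)|$, we obtain
$$ k\cdot j \;\leq\; |\cut(t)| \,+\, \sum_{i=1}^j \deg_H(S_i). $$

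The last sum is controlled by the natural-decomposition bound. Applying \autoref{lem:decompkconnected} to the decomposition of $H$ into $k/4$-edge-connected components with $I=[j]$ gives
$$ \sum_{i=1}^j \deg_H(S_i) \;\leq\; 2\bigl(\tfrac{k}{4}-1\bigr)(j-1) \;<\; \tfrac{k j}{2}. $$
Plugging this into the previous inequality yields $k j \leq |\cut(t)| + \tfrac{kj}{2}$, i.e.\ $\tfrac{kj}{2}\leq |\cut(t)|$, which rearranges to $j \leq 2|\cut(t)|/k$ as desired.

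I do not foresee a real obstacle here; the only subtlety is recognizing that the $k$-edge-connectivity we should invoke is that of the \emph{parent} graph $G(t^*)$ (which holds because $t^*$ is a node of the $(k,\cdot,\cdot)$-LCH), not that of $H$ itself, which need not be connected at all. Once that is set up, the argument is a one-line double-counting combined with \autoref{lem:decompkconnected}.
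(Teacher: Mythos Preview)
Your proof is correct and follows essentially the same argument as the paper: use the $k$-edge-connectivity of $G(t^*)$ to get $\deg_{G(t^*)}(S_i)\geq k$, sum over $i$ and split the sum into $|\cut(t)|$ plus the inter-component term, then bound the latter by $2(k/4-1)(j-1)<kj/2$ via \autoref{lem:decompkconnected}.
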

\begin{proof}
By the definition of $\cT$, $G(t^*)$ is $k$-edge-connected. 
Therefore, for any $1\leq i\leq j$,
$$ \deg_{G(t^*)}(S_i) \geq k.$$
Therefore,
$$ j\cdot k \leq \sum_{i=1}^j \deg_{G(t^*)}(S_i) = \deg_{G(t^*)}(V(t)) + \sum_{i=1}^j \deg_{G[V(t^*) \setminus V(t)]} (S_i) = |\cut(t)| +\sum_{i=1}^j \deg_{G[V(t^*) \setminus V(t)]} (S_i).$$
But, by \autoref{lem:decompkconnected}, the second term on the RHS is at most $2(j-1)(k/4-1)$. Therefore,
$j\leq 2|\cut(t)|/k$. 
\end{proof}
As we mentioned in the comments of the algorithm, by the assumption that $\bag_t$ is  $12C_3/k$-assigned, the above lemma implies that
\begin{equation}	
\label{eq:UigeqC3}
|U_i| \geq 3C_3.
\end{equation}
%In the remaining two cases of the algorithm
%we have to be aware that $U_i$ may not be a subset of $U_{B'}$.
Next, we prove a technical lemma which will be used in both of cases 2 and 3. In case 2 we use this lemma together with the above inequality to show that $|\interior_{B'}|\geq 3(C_3-1)$; we will use this in our charging argument to compensate for the tokens lost by splitting $B$.
In case 3, we use the following lemma to show that $r'_2 - r'_1 \geq (C_3-1)\cdot \br_\l$. Similarly, we use this inequality to compensate for the tokens lost by splitting $B$.
\begin{lemma}
\label{lem:rp2-rp1}
Let $U,U_i,V_i$ be defined as in step \ref{step:constUiVi}.
If $U_i\not\subseteq U_{B'}$, then $r'_2-r'_1 \geq (C_3-1)\cdot\rho_\l$.
\end{lemma}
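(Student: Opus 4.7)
The plan is to pick any $u \in U_i \setminus U_{B'}$ and chase the triangle inequality through its $\cut'(t)$-partner in $V_i$, using the interior ball $u^*$ promised in Step \ref{step:Uilowerbound} to pin down the opposite endpoint of the range $[r'_1, r'_2]$. The only structural input is that centers of $\cut'(t)$-edges are within $L_1$-distance $< \rho_\l$, that $r'_1, r'_2$ are the clamped min/max over $V_i$, and that interior balls have centers strictly inside $B$ by margin $C_3\rho_\l$.

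Concretely, fix $u \in U_i \setminus U_{B'}$ together with its partner $v \in V_i$ (so $\{u,v\} \in \cut'(t)$ and $\norm{X_u-X_v}_1 < \rho_\l$). Since $X_u \notin B'$, either $\norm{X_u-x}_1 \le r'_1 - \rho_\l$ (Case~1) or $\norm{X_u-x}_1 \ge r'_2 + \rho_\l$ (Case~2); by $L_1$ triangle inequality this forces, respectively, $\norm{X_v-x}_1 < r'_1$ or $\norm{X_v-x}_1 > r'_2$. In Case~1 this means $\min_{v'\in V_i}\norm{X_{v'}-x}_1 < r'_1$, so the max in the definition of $r'_1$ must be achieved by $r_1$, i.e.\ $r'_1 = r_1$; Case~2 symmetrically yields $r'_2 = r_2$.

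Now invoke the guaranteed interior center $u^* \in U_i$ for $B$: by the definition of interior, $\norm{X_{u^*}-x}_1 \in (r_1 + C_3\rho_\l,\ r_2 - C_3\rho_\l)$, which in particular forces $r_2 - r_1 > 2C_3\rho_\l$. Its partner $v^* \in V_i$ then satisfies $\norm{X_{v^*}-x}_1 \in (r_1 + (C_3-1)\rho_\l,\ r_2 - (C_3-1)\rho_\l)$, again by the triangle inequality. In Case~1 this gives $\max_{v'\in V_i}\norm{X_{v'}-x}_1 > r_1 + (C_3-1)\rho_\l$, which, combined with $r_2 - r_1 > 2C_3\rho_\l \ge (C_3-1)\rho_\l$, yields $r'_2 \ge r_1 + (C_3-1)\rho_\l = r'_1 + (C_3-1)\rho_\l$. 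Case~2 is analogous: $\min_{v'\in V_i}\norm{X_{v'}-x}_1 < r_2 - (C_3-1)\rho_\l$ forces $r'_1 \le r_2 - (C_3-1)\rho_\l = r'_2 - (C_3-1)\rho_\l$.

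There isn't really a hard step here; the only thing to be careful about is handling the clamping in the definitions of $r'_1$, $r'_2$ correctly, which is why we must separately exploit the $u \in U_i \setminus U_{B'}$ hypothesis to collapse one of the clamps to $r_1$ or $r_2$, and then use the interior margin $C_3\rho_\l$ from $u^*$ to push the other endpoint across. Note also that the proof only uses the existence of \emph{some} $u \in U_i \setminus U_{B'}$ and the existence of the interior ball $u^*$; the lower bound $|U_i|\ge 3C_3$ from Step~\ref{step:Uilowerbound} is not needed for this lemma itself (it is used elsewhere in the charging).
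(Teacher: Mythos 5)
Your proof is correct and follows essentially the same approach as the paper. The only stylistic difference is in establishing that one of the clamps in $r'_1,r'_2$ is tight: you work directly with a witness $u\in U_i\setminus U_{B'}$ and push through the triangle inequality to conclude $r'_1=r_1$ or $r'_2=r_2$, whereas the paper argues by contradiction that some $v\in V_i$ must lie outside $B$ (if $V_i\subseteq B$ then $U_i\subseteq U_{B'}$) and then reads off the same clamp-tightness. These two routes are logically equivalent, and the second half of both arguments — using the interior ball $B(X_{u^*},\rho_\l)$ and its $\cut'(t)$-partner $v^*$ to force the opposite endpoint to be at distance at least $(C_3-1)\rho_\l$ — is identical. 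Your remark that $|U_i|\ge 3C_3$ is not needed here (only the existence of some $u\in U_i\setminus U_{B'}$ and the interior ball) is also accurate.
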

\begin{proof}
First, we show that there is a vertex $v\in V_i$
such that $X_v\notin B$. 
For the sake of contradiction assume $V_i\subset B$.
We show that any vertex $u\in U_i$ is in $U_{B'}$ which is a contradiction. Fix a vertex $u\in U_i$. By \autoref{def:assignedbagballs}, there is a vertex $v\in V_i$ such that  $\{u,v\}\in \cut'(t)$.
Since $X_v\in B$, by the definition of $r'_1,r'_2$, we have
$r'_1 \leq  \norm{X_v-x}_1  \leq r'_2$.
So, $X_u\in B(x,r'_1-\rho_\l\|r'_2+\rho_\l)$, i.e., $u\in U_{B'}$. This is a contradiction.

Now, let $v\in V_i$
be such that  either $\norm{X_v-x}_1 \geq r_2$ or $\norm{X_v-x}_1\leq r_1$. Here, we assume the former; the other case can be analyzed similarly. Then, we have $r'_2=r_2$. 
But by definition of $B$, there is a ball $B(X_u,\rho_\l)\in\interior_{t,s}$ in the interior of $B$ such that $u\in U_i$. Since $u\in U_i$, there is a vertex $w\in V_i$ such that $\norm{X_u-X_w}_1 < \rho_\l$.
Therefore,
$$ r'_1\leq \norm{x-X_w}_1 \leq  \norm{x-X_u}_1+\rho_\l \leq r_2-C_3\rho_\l+\rho_\l. $$
where the last inequality uses that $B(X_u,\br_\l)$ is in the interior of $B$.
So, $r'_2-r'_1 \geq (C_3-1)\rho_\l$.
\end{proof}

\paragraph{Case 2:} $|\interior_{B'}|\cdot\rho_\l > 3(r'_2-r'_1)$.\\
First, we show $\cZ_{t,s+1}$'s labeling is valid.
Then, we distribute the tokens. 
To show that $\cZ_{t,s+1}$'s labeling is valid, first
we argue that all balls of $\interior_{B'}$ are in the interior of $B$.
Fix a ball $A\in \interior_{B'}$, we show $A$ is in the interior of $B$.
First, $\{A,B\}$'s labeling is {\em in}valid. Because i) $A,B$ intersect by the definition of $\interior_{B'}$ and ii) a ball of $\bag_t$ is in the interior of $B$ and all balls of $\bag_t$ have the same labels. 
Secondly, since $\interior_{B'}\subseteq \interior_{t,s}$, $A$ is an interior ball. Therefore, by \autoref{lem:uniqueinterior}, $A$ is in the interior of $B$.
%$\{B,B'\}$ does not have a valid labeling. Therefore, $B$ does not have a valid labeling with each ball of $\bag_t$. But, by definition of $\interior_B$, each ball of $\interior_B$ intersects $B$.  
%by \autoref{lem:uniqueinterior}, all of them must be  in the interior of $B$. 
Now, by \autoref{lem:uniqueinterior},  for any $B''\in\cZ_{\tau,s}$ where $B''\neq B$, $\{A,B''\}$'s labeling is valid. Furthermore, by construction,  $B_1,B_2$ do not intersect any balls of $\interior_{B'}$. Hence, $\cZ_{t,s+1}$'s labeling is valid. 

Next, we describe the distribution of tokens allocated to the balls of $\cZ_{\tau,s+1}$. Before that,  we show that $|\interior_{B'}|\geq 3 (C_3-1)$.
We consider two cases. If $U_i\subseteq U_{B'}$. Then, by \eqref{eq:UigeqC3}, 
$$|\interior_{B'}|=|U_{B'}|\geq |U_i|\geq 3 C_3.$$
%The last inequality holds by the definition of $i$ and \autoref{lem:numcomps} (see the comment of step \ref{step:Uilowerbound} of the algorithm). 
Otherwise, $U_i\not\subseteq U_{B'}$. Then, by  \autoref{lem:rp2-rp1},
$$ |\interior_{B'}| \geq \frac{3(r'_2-r'_1)}{\rho_\l} \geq \frac{3(C_3-1)\cdot \rho_\l}{\rho_\l} = 3(C_3-1).$$
Therefore, $|\interior_{B'}|\geq 3(C_3-1)$.

Now, we send all tokens of $B_1,B_2$ and $3/4$ of the tokens of each ball of $\interior_{B'}$ to $B$ and we redistribute them by the induction hypothesis. $B$ receives,
\begin{align*}
\token_{\tau,s+1}(B_1)+\token_{\tau,s+1}(B_2) &+ \frac34 |\interior_{B'}| (\rho_\l - C_4\cdot \rho_{\l+1}) \\
&\geq r_2-r_1 - 4\rho_\l - (r'_2-r'_1) - 2C_4\cdot \rho_\l + \frac34\cdot|\interior_{B'}|\cdot \frac56\rho_\l\\
&\geq \token_{\tau,s}(B) - (4+C_4)\cdot \rho_\l + \frac7{24}|\interior_{B'}|\cdot\rho_\l \\
&\geq  \token_{\tau,s}(B) -(4+C_4)\cdot \rho_\l + \frac78(C_3-1)\cdot \rho_\l \\
&\geq \token_{\tau,s}(B).
\end{align*}
where the first inequality uses $\rho_{\l+1}<\lambda\cdot \rho_\l$ and $\lambda < 1/6C_4$, the second inequality uses the assumption $3(r'_2-r'_1) < |\interior_{B'}|\cdot\rho_\l$, the third inequality uses $|\interior_{B'}| \geq 3(C_3-1)$ and the last inequality uses $C_3\geq 8(C_4+5)/7$. 
On the other hand, each ball $B'\in\interior_{B'}$ sends
$$ \frac14 \token_\tau(B') \geq \frac14 \cdot \frac56 \rho_\l$$
to $t$. So, $t$ receives $|\interior_{B'}|\cdot \rho_\l/5$ new tokens. Since 
$$|\border_{t,s+1}|+|\interior_{t,s+1}| = |\border_{t,s}|+|\interior_{t,s}|-|\interior_{B'}|,$$ 
and we are done by induction.

%It remains to analyze the last case.
\begin{figure}
\centering
\begin{tikzpicture}
\def\slen{0.35}
\def\hslen{0.3}
\def\opacity{0.25}
\begin{scope}%[rotate=45]
\draw [fill=black,opacity=\opacity] (0,0) node[below left=2.2cm,draw=none,color=black,opacity=1] {$B$} 
	%node[below left=2.2cm,draw=none,color=black,opacity=1] {$r_2$}
circle (3.2);
\draw[fill=white,draw=none] (0,0) node[below left=0cm,draw=none,color=black] {$r_1$} circle (.6);
\draw[dashed,line width=1.3pt] (0,0) -- (-.6,0) (0,0)  -- (0,-3.2);
\node at (.3,-1.5) [draw=none] {$r_2$};
\draw [fill=black] (0,0) circle (2pt);
\foreach \i/\bx/\by/\rx/\ry/\col in {1/-2/0/.15/.25/red, 2/-1.3/-1/-.25/.2/green, 3/1.3/0.4/-.2/-.25/red, 4/2.3/-.3/.2/-.2/red}{
	\draw  (\bx,\by) %node[below=.42cm,draw=none,color=black]  
	circle (\slen);
	\draw [line width=1.2pt] (\bx+\rx,\by+\ry) -- (\bx,\by);
	\draw [fill=blue] (\bx,\by) circle (2pt);
	\draw [fill=\col,] (\bx+\rx,\by+\ry) circle (2pt);
	
	\ifthenelse {\equal{\i}{3}}{\draw [dashed,line width=1.3pt] (\bx+\rx,\by+\ry) node [above left,draw=none] {$r'_1$} -- (0,0);}{}; 
	\ifthenelse {\equal{\i}{4}}{\draw [dashed,line width=1.3pt] (\bx+\rx,\by+\ry) node [below left,draw=none] {$r'_2$} -- (0,0);}{}; 

}
\end{scope}
{\Huge \draw[double,-implies,double distance between line centers=4.5pt] (3.3,0) -- (4.8,0);}

\begin{scope}[shift={(8,0)}]
\draw [fill=black,opacity=\opacity] (0,0) node[below left=1.75cm,draw=none,color=black,opacity=1] {$B_2$} circle (3.2);
\draw[fill=white,draw=none] (0,0) circle (2.55);
\draw[fill=blue,opacity=\opacity,draw=none] (0,0) node [below=1.5cm,color=blue,opacity=1] {$B_3$} circle (2.25);
\draw[fill=white,draw=none] (0,0) circle (1.42);
\draw[fill=red,opacity=\opacity,draw=none] (0,0) node [below=2.1cm,color=red,opacity=1] {$B_4$} circle (2.55);
\draw[fill=white,draw=none] circle (1.12);
\draw[fill=black,opacity=\opacity] (0,0) node [below left=.25cm,draw=none,color=black,opacity=1] {$B_1$} circle (1.12);

\draw[fill=white,draw=none] (0,0) circle (.6);
\draw [fill=black] (0,0) circle (2pt);
\foreach \i/\bx/\by/\rx/\ry/\col/\u/\v in {1/-2/0/.15/.25/red//, 2/-1.3/-1/-.25/.2/green//, 3/1.3/0.4/-.2/-.25/red/u_1/v_1, 4/2.3/-.3/.2/-.2/red/u_2/v_2}{
	\draw [line width=1.2pt] (\bx+\rx,\by+\ry)node [below=1mm,color=black] {$\v$} -- (\bx,\by);
	\draw [fill=blue] (\bx,\by) node [above=1mm,color=black] {$\u$} circle (2pt);
	\draw [fill=\col,] (\bx+\rx,\by+\ry) circle (2pt);
}
\end{scope}
\end{tikzpicture}
\caption{An illustration of Case 3. $U_{B'}$ is the blue vertices.  $V_i$ is the set of red vertices. The green vertex belongs to $V_{i'}$ for $i'\neq i$. The edges between the blue vertices and red/green vertices are in $\cut'(t)$. We update $\cZ_{\tau,s}$ as follows: We split $B$ to balls $B_1,B_2$. We also add an avoiding  $B_4$ from the  closest red point ($r'_1$) to the farthest one ($r'_2$), and  a nonavoiding ball, $B_3=B(.,r'_1+\br_\l\|r'_2-\br_\l)$. 
%Note that there are blue vertices $u_1,u_2$ such that $
}
\label{fig:Case3}
\end{figure}
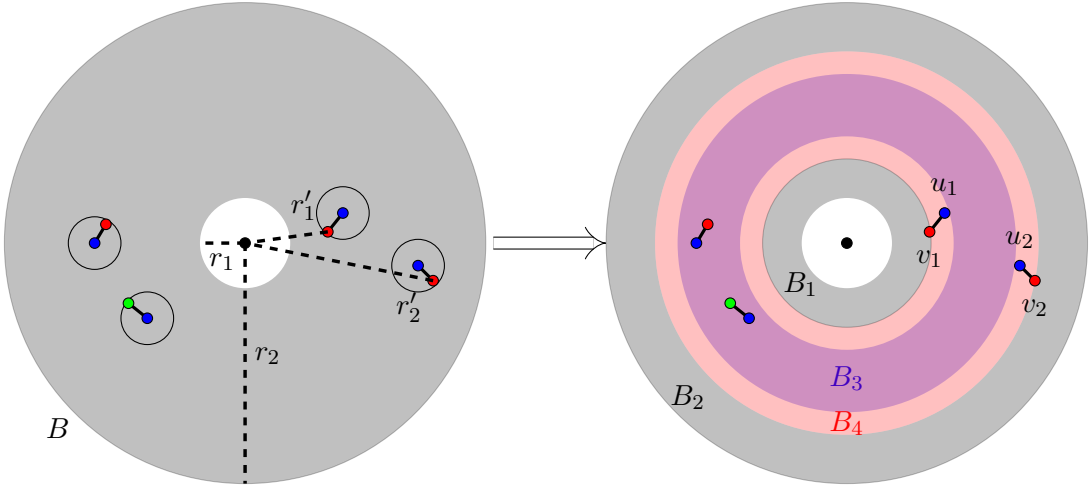

\paragraph{Case 3: } $|\interior_B|\cdot\rho_\l \leq 3(r'_2-r'_1)$.\\
As usual, first we verify the validity of the labeling, then we show that the tokens assigned to $B_3,B_4$ compensate the loss of $B$ and the balls of $\interior_{B'}$ that we delete.
We emphasize that verifying the validity of labeling is more involved in this case compared to cases 1, 2; this is because case  3 is the only one in which we insert new balls, i.e., $B_3,B_4$, that do not exist in the given geometric sequence of bags of balls. 

First, we show that property \ref{assprop:tau} of \autoref{fig:assignedballsorder} is satisfied; then we verify properties \ref{labelprop:validlabeling}, \ref{labelprop:insertability}, \ref{labelprop:disjointpaths} of \autoref{fig:labelassignedballs} in that order. 
Recall that the labels of $B_3$ and $B_4$ are defined as follows:
\begin{center}
\begin{tabular}{|l|l|l|l|l|}
\cline{2-5}
\multicolumn{1}{c|}{}& $t(.)$ & $t_d(.)$ & $\Pr(.)$  & $\cC(.)$ \\
\hline
$B_3$ & $t$ & NA & $\Pr(B)\cap \{\text{descendants of } t\}$ & $\cC(B)\cap \text{ subtree rooted at } t$\\
\hline
$B_4$ & $t(B)$ & $t$ & $\Pr(B)$ & $\cC(B)\setminus \text{ subtree rooted at } t$\\
\hline
\end{tabular}
\end{center}
Note that by  \autoref{fact:maptBtotau} and that a ball of $\bag_t$ is in the interior of $B$, $t(B)$ is a weak ancestor of $t$. 
Therefore, $\cC(B_3),\cC(B_4)\subseteq \cC(B)$ as required by property \ref{assprop:tau} of \autoref{fig:assignedballsorder}.
Let us now verify that $t_d(B_4)=t$ is a proper descendent of $t(B_4)=t(B)$, i.e., $B_4$ is a valid avoiding ball.
Since we showed $t(B)$ is a weak ancestor of $t$, it is enough to show that $t(B)\neq t$. If $t(B)=t$, then 
%either $B\in \bag_t$ or 
$B$ is constructed in an iteration  $s'\leq s$  of the loop. 
%The former cannot happen because balls of $\bag_t$ do not intersect. The latter 
This does not happen because whenever we construct a new ball in step \ref{step:hardint}
we delete all balls of $\interior_t$ that intersect with the new ball; in addition, no new interior balls are added throughout the loop by \autoref{lem:invariantborder}. Therefore $t(B)\neq t$.

%This together with the fact that $B$ is nonavoiding imply that, 

Next, we verify property \ref{labelprop:validlabeling} of \autoref{fig:labelassignedballs}. Since $\cC(B_3),\cC(B_4)\subseteq \cC(B)$, by \autoref{fact:shrinkinglabels}, $B_3,B_4$ do not have a conflict with any ball of $\cZ_{\tau,s}\setminus\{B\}$, i.e., for any ball $B''\in \cZ_{\tau,s}\setminus \{B\}$ that intersects one of them,
$$ \cC(B_3)\cap \cC(B'')=\emptyset \text{ and } \cC(B_4)\cap \cC(B'')=\emptyset.$$
In addition, since $t_d(B_4)=t=t(B_3)$, $\cC(B_3)\cap \cC(B_4)=\emptyset$. Furthermore, $B_3$ and $B_4$ do not intersect $B_1,B_2$.
So the labelings satisfy property \ref{labelprop:validlabeling} of \autoref{fig:labelassignedballs}.

It remains to verify that $B_3,B_4$ satisfy properties \ref{labelprop:insertability} and \ref{labelprop:disjointpaths} of \autoref{fig:labelassignedballs}.
$B_3$ and $B_4$ satisfy property \ref{labelprop:insertability} because $\Pr(B_3),\Pr(B_4)\subseteq \Pr(B)$ and they are inside $B$. Finally, we need to verify property \ref{labelprop:disjointpaths}. 
First, we show $B_3$ satisfies property \ref{labelprop:disjointpaths}. By the definition of $U_i$ there are vertices $u_1,u_2\in U_i$ such that $\norm{x-X_{u_1}} < r'_1+\rho_\l$ and $\norm{x-X_{u_2}} > r'_2-\rho_\l$ (see \autoref{fig:Case3}). 
Since $G(t)$ is $k$-edge-connected there are $k$ edge-disjoint paths between $u_1$ and $u_2$ supported on $V(t)$.
So, we just need to argue that $u_1\in\cC(B_3)$, i.e., for any $t'\in\Pr(B_3)$, $u_1\notin V(t')$. This is because, $u_1\in U_i$ is incident to an edge $e$ of $\cut'(t)$.  Since  $t'$ is a descendant of $t$, if $u_1\in V(t')$ then $e\in \setdeg(t')$ so  an endpoint of an edge of $\setdeg(t')$ has distance less than $r_2$ from the center of $B$ which is contradictory with $t'\in \Pr(B_3)\subseteq \Pr(B)$.

Lastly, we show $B_4$ satisfies property \ref{labelprop:disjointpaths}. By the definition of $V_i$ there are vertices $v_1,v_2\in V_i$ such that
$\norm{x-X_{v_1}} \leq r'_1$ and $\norm{x-X_{v_2}} \geq r'_2$ (see \autoref{fig:Case3}). Since $V_i\subseteq S_i$ and $S_i$ is $k/4$-edge-connected in $G[V(t^*)\setminus V(t)]$, there are $k/4$ edge-disjoint paths from $v_1$ to $v_2$ in $G[V(t(B))\setminus V(t)]$. We  need to argue that $v_1\in\cC(B_4)$, i.e., it is enough to show that for any $t'\in \Pr(B_4)$, we have $v_1\notin V(t')$. This is similar to the argument in the previous paragraph.
First, since $v_1\in V_i$, $v_1$ is incident
to an edge $e\in \cut'(t)$. 
Since $t'\in \Pr(B)$ and $\norm{X_{v_1}-x}_1 \leq r_2$, we must have $e\notin \setdeg(t')$. Therefore, if $v_1\in V(t')$,  $t'$ must be a weak ancestor of $t^*$. But, since $t(B)$ is an ancestor of $t$ and a ball of $\bag_t$ is in the interior of $B$, we must have $t\in\cC(B)$, i.e., $\Pr(B)$ cannot not contain a weak ancestor of $t$.
So, $v_1\notin V(t')$.
\medskip 

It remains to distribute the tokens.
First, we show that $r'_2-r'_1 \geq (C_3-1)\cdot \rho_\l$.
If $U_i\not\subseteq U_{B'}$, then by \autoref{lem:rp2-rp1},
$r'_2-r'_1\geq (C_3-1)\cdot\rho_\l$.
Otherwise, by the assumption of Case 3,
%$\bag_t$ is  $12 C_3/k$-assigned,
$$ r'_2-r'_1\geq \frac13 |\interior_{B'}|\cdot\rho_\l \geq \frac13 |U_i| \cdot\rho_\l  \geq C_3\cdot \rho_\l,$$
where the last inequality follows by \eqref{eq:UigeqC3}.
We send all tokens of $B_1,B_2,B_3$, and $(2C_4+2)\rho_\l$ tokens of $B_4$ to $B$ and we redistribute them by the induction hypothesis. We send the rest of the tokens of $B_4$ to $t$.
Ball $B$ receives
$$\sum_{i=1}^3 \token_{\tau,s+1}(B_i) + (2C_4+2)\cdot \rho_\l \geq r_2-r_1 - 2\rho_\l - 3C_4\rho_\l + (2C_4+2)\rho_\l =\token_{\tau,s}(B).  $$
On the other hand, $t$ receives,
\begin{eqnarray*} 
\token_{\tau,s+1}(B_4) - (2C_4+2)\rho_\l 
&= & \frac{r'_2-r'_1 - C_4\cdot\rho_\l - 4(2+C_4)^2\cdot\rho_\l}{2(2+C_4)}\\
&\geq& \frac{r'_2-r'_1 - (C_3-1)\rho_\l/2}{2(2+C_4)}\\
& \geq &\frac{r'_2-r'_1}{4(2+C_4)} \\
&\geq& \frac{|\interior_B|\cdot\rho_\l}{12(2+C_4)} \geq \frac{C_4 |\interior_B|\cdot\rho_\l}{6C_3},
\end{eqnarray*}
new tokens. In the first inequality we used $(C_3-1) \geq 2(C_4+4(C_4+2)^2)$, the second inequality uses $r'_2-r'_1 \geq (C_3-1)\cdot \rho_\l$, and the third inequality uses 
the assumption $r'_2-r'_1\geq \frac13\cdot|\interior_B|\cdot\rho_\l$.
This concludes the proof of \autoref{lem:inductionassigned}.

\subsubsection{Post-processing}
Say we have processed all $\bag_t\in\fbag_\l$ and we are the end of phase $\l$, i.e., time $\tau_\l$.
%We may add  border balls. 
We need to make sure that each node $t\in T_\l$ receives at least $\frac{C_4}{6C_3}|\border_{t,\infty}|\cdot \rho_\l$ new tokens.
Then, by \autoref{lem:inductionassigned}, each node $t$, altogether,  receives at least
$$ \frac{C_4}{6C_3}(|\bag_t|-|\interior_{t,\infty}|)\cdot\rho_\l \geq \frac{C_4}{12C_3}|\bag_t|\cdot \rho_\l$$
tokens. The above inequality uses that by the  condition of the main loop of \autoref{alg:ballproc},
for any $t\in T_\l$, $|\interior_{t,\infty}|\leq |\bag_t|/2$.

\medskip
We define the shrink operator as follows:
For any hollowed ball $B=B(x,r_1\|r_2)\in\cZ_{\tau_\l}$,
\begin{equation}
\shrink_\l(B)=\begin{cases}
B & \text{if $B\in\fbag_\l$}\\
B(x,r_1+(C_3+1)\rho_\l\|r_2-(C_3+1)\rho_\l) & \text{if $B\notin\fbag_\l$ and $r_2-r_1>2(C_3+1)\rho_\l$}\\
B(x,0)=\emptyset & \text{otherwise}.
\end{cases}
\end{equation}

 Let
\begin{eqnarray*}
 b&:=&\sum_{t\in T_\l} |\border_{t,\infty}|,\\
\excess &:=&\sum_{B\in\cZ_{\tau_\l}} (\token_{\tau_\l+1}(B) -\token_{\tau_\l}(B)).
\end{eqnarray*}
Think of $\excess$ as the additional number of tokens that we gain for all hollowed balls $B\in\cZ_{\tau_\l}$ when we go to the new phase $\l+1$.
Our idea is simple. If $\excess$ is very large then we do not add any of the border balls and we just distribute $\excess$ between all nodes of $T_\l$. Otherwise, we shrink balls of $\cZ_{\tau_\l}$ and we add the border balls.

\paragraph{Case 1: }$ \excess\geq \frac{C_4}{6C_3}\cdot b\cdot\rho_\l$.\\
In this case,  we do not add any of the border balls and 
we simply let $\cZ_{\tau_\l+1}=\cZ_{\tau_\l}$.

Now, observe that for any hollowed ball $B\in \cZ_{\tau_\l}$, we have $\token_{\tau_\l+1}(B)-\token_{\tau_\l}(B)$ additional tokens that $B$ has not used. We distribute these tokens between the nodes of $T_\l$ proportional to their number of border balls.
More precisely, for any ball $B\in \cZ_{\tau_\l}$ and $t\in T_\l$, we send
$$\frac{|\border_{t,\infty}|}{b}\cdot (\token_{\tau_\l+1}(B)-\token_{\tau_\l}(B))$$ 
tokens to $t$.
Therefore, $t$ receives
\begin{eqnarray*} 
\sum_{B\in\cZ_{\tau_\l}} \frac{|\border_{t,\infty}|}{b}\cdot(\token_{\tau_\l+1}(B)-\token_{\tau_\l}(B)) 
&=& \frac{|\border_{t,\infty}|\cdot \excess}{b} \\
&\geq& \frac{C_4}{6C_3}\cdot |\border_{t,\infty}|\cdot \rho_\l,
\end{eqnarray*}
and we are done.

\paragraph{Case 2: } $\excess < \frac{C_4}{6C_3}\cdot b\cdot\rho_\l$.\\
For each hollowed ball $B\in\cZ_{\tau_\l}$
we replace $B$ by $\shrink_\l(B)$ in $\cZ_{\tau_\l+1}$.
We also add all balls of $\border_{t,\infty}$ for all $t\in T_\l$ to $\cZ_{\tau_\l+1}$. 
By \autoref{lem:invariantborder} any border ball $B\in\border_{t,\infty}$ is not in the interior of any ball of $\cZ_{\tau_\l}$. 
%Furthermore, since all balls of $\fbag_t$ are disjoint, any pair of border balls are nonintersecting.
By the definition of the shrink operator, and using the fact that balls of $\fbag_\l$ do not intersect,  
any ball of $\cup_{t\in T_\l} \border_{t,\infty}$ does not intersect any ball of $\cZ_{\tau_\l+1}$.
So, $\cZ_{\tau_\l+1}$'s labeling is valid. 

It remains to distribute the tokens. First, we prove a technical lemma. 
\begin{lemma}
\label{lem:changeoftoken}
If $\excess < \frac{C_4}{6C_3}\cdot b\cdot\rho_\l$, then
$$ b\cdot \rho_\l \geq 2\sum_{B\in\cZ_{\tau_\l}} (\token_{\tau_\l}(B) - \token_{\tau_\l+1}(\shrink_\l(B))).$$
\end{lemma}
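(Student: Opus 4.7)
The plan is to bound the total shrinking loss
$$L := \sum_{B \in \cZ_{\tau_\l}}\bigl(\token_{\tau_\l}(B) - \token_{\tau_\l+1}(\shrink_\l(B))\bigr)$$
by a uniform multiple of the \emph{per-ball} excess $\token_{\tau_\l+1}(B) - \token_{\tau_\l}(B)$, and then apply the hypothesis. The intuition is that crossing the phase boundary only replaces the correction $-C_4\rho_\l$ by $-C_4\rho_{\l+1}$ in the token formula: a ball of small width gains little from this change, but its old token was already small so shrinking it costs little too, while a ball of large width loses a full $2(C_3+1)\rho_\l$ from the shrink but in exchange contributes a full $C_4(\rho_\l-\rho_{\l+1})$ to $\excess$.

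First I would dispose of the balls $B \in \fbag_\l$: by definition $\shrink_\l(B) = B$, and $B$ has width $\rho_\l$, so $\token_{\tau_\l}(B) = \rho_\l - C_4\rho_{\l+1}$, while $\token_{\tau_\l+1}(B) = [\rho_\l - C_4\rho_{\l+1}]^+ = \rho_\l - C_4\rho_{\l+1}$ (at time $\tau_\l+1$ the ball is treated as a non-$\fbag_{\l+1}$ nonavoiding ball of width $\rho_\l$, and $\rho_{\l+1} \leq \lambda\rho_\l < \rho_\l/C_4$). Such $B$ therefore contribute $0$ to both $L$ and $\excess$. For any remaining $B \in \cZ_{\tau_\l}$ of width $w$, a direct case analysis on $w$ against the thresholds $C_4\rho_{\l+1}$, $C_4\rho_\l$, and $2(C_3+1)\rho_\l + C_4\rho_{\l+1}$ shows that $\mathrm{Loss}(B) > 0$ forces $w \geq C_4\rho_\l$; in that regime $\mathrm{Excess}(B)$ equals $C_4(\rho_\l-\rho_{\l+1})$ exactly (divided by $2(2+C_4)$ if $B$ is avoiding), and the telescoping positive parts collapse to
$$\mathrm{Loss}(B) + \mathrm{Excess}(B) = [w - C_4\rho_{\l+1}]^+ - [w - 2(C_3+1)\rho_\l - C_4\rho_{\l+1}]^+ \leq 2(C_3+1)\rho_\l,$$
again divided by $2(2+C_4)$ in the avoiding case. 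The avoiding factor cancels in the ratio, giving the uniform bound $\mathrm{Loss}(B) \leq K\cdot\mathrm{Excess}(B)$ with $K := \frac{2(C_3+1)\rho_\l}{C_4(\rho_\l-\rho_{\l+1})} - 1$.

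Using $\rho_{\l+1} \leq \lambda\rho_\l \leq \rho_\l/(6C_4)$ I would upper bound $K \leq \frac{12(C_3+1)}{6C_4-1} - 1$. Summing the per-ball inequality (nonpositive loss terms only help) yields $L \leq K\cdot \excess$; combined with the hypothesis $\excess < \frac{C_4}{6C_3}b\rho_\l$, the lemma reduces to the algebraic inequality $K \leq 3C_3/C_4$. Cross-multiplying and simplifying, this is equivalent to $3C_3(2C_4-1) + C_4(6C_4-13) \geq 0$, which is manifest for $C_4 \geq 3$ since both summands are nonnegative. Chaining gives $L \leq K\cdot \excess < (3C_3/C_4)\cdot (C_4/(6C_3))\cdot b\rho_\l = b\rho_\l/2$, as required. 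The only mildly subtle point in the per-ball analysis is that in the intermediate regime $w \in [C_4\rho_{\l+1}, C_4\rho_\l)$ both positive parts in $\mathrm{Loss}(B)$ vanish, so positive loss can only occur in the ``big'' regime where $\mathrm{Excess}(B)$ has already attained its full value $C_4(\rho_\l-\rho_{\l+1})$; this is what makes the telescoping identity above clean.
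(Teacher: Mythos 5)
Your proof is correct and follows essentially the same route as the paper's: both reduce the claim to a per-ball inequality $\token_{\tau_\l}(B) - \token_{\tau_\l+1}(\shrink_\l(B)) \leq \frac{3C_3}{C_4}\bigl(\token_{\tau_\l+1}(B) - \token_{\tau_\l}(B)\bigr)$, sum over $B \in \cZ_{\tau_\l}$, and feed in the hypothesis on $\excess$. Your telescoping identity $\mathrm{Loss}(B) + \mathrm{Excess}(B) \leq 2(C_3+1)\rho_\l$ is just the paper's intermediate bound written once rather than split into the separate estimates $\mathrm{Loss}(B) \leq 2C_3\rho_\l$ and $\mathrm{Excess}(B) \geq \tfrac{2}{3}C_4\rho_\l$, and the final constants agree.
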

\begin{proof}
It is sufficient to show that for any hollowed ball $B=B(x,r_1\|r_2)\in\cZ_{\tau_\l}$ %and $B'=B(x,r_1+C_3\rho_\l, r_2-C_3\rho_\l)$. Then,
\begin{equation}
\label{eq:changeoftoken}
\token_{\tau_\l+1}(B) - \token_{\tau_\l}(B)\geq \frac{C_4}{3 C_3} \cdot (\token_{\tau_\l}(B) - \token_{\tau_\l+1}(\shrink_\l(B))). \end{equation}
Because, then
\begin{eqnarray*}
\sum_{B\in\cZ_{\tau_\l}} \token_{\tau_\l}(B)-\token_{\tau_\l+1}(\shrink_\l(B)) &\leq& \frac{3C_3}{C_4} \sum_{B\in\cZ_{\tau_\l}} \token_{\tau_\l+1}(B)-\token_{\tau_\l}(B)\\
&=& \frac{3C_3}{C_4} \excess\leq \frac{b\cdot \br}{2},
\end{eqnarray*}
as desired. The last inequality follows by the lemma's assumption.

It remains to prove \eqref{eq:changeoftoken}.
First, note that if $\token_{\tau_\l}(B)=0$ then the above holds trivially. So assume $\token_{\tau_\l}(B)>0$. 
We consider three cases. 
i) $B\in\fbag_\l$. In this case both sides of the above inequality is zero. This is because $\shrink_\l(B)=B$ and $\token_{\tau_\l}(B)=\token_{\tau_\l+1}(B)$.
ii) $B$ is a non\avoiding~hollowed ball. Since $\token_{\tau_\l}(B)>0$, $r_2-r_1 > C_4\cdot \rho_\l$. Therefore, 
\begin{eqnarray*}
\token_{\tau_\l+1}(B)-\token_{\tau_\l}(B) &=& C_4\cdot (\rho_\l-\rho_{\l+1}) \geq \frac23\cdot C_4\cdot \rho_l\\
\token_{\tau_\l}(B) - \token_{\tau_\l+1}(\shrink_\l(B)) &\leq & 2(C_3+1)\rho_\l + C_4\cdot (\rho_{\l+1}-\rho_\l) \leq 2C_3\cdot \rho_\l.
\end{eqnarray*}
using $\rho_{\l+1}\leq \rho_\l/3$ and $C_4\geq 3$. So, \eqref{eq:changeoftoken} is correct.
iii) $B$ is an \avoiding~hollowed ball. 
Equation \eqref{eq:changeoftoken} is equivalent
to case (ii) up to a $2(2+C_4)$ factor in both sides of the inequality.
\end{proof}
%by the assumption $c< C_4\cdot b\cdot\rho_\l/6C_3$ and \autoref{lem:changeoftoken}, 
%$$ b\cdot\rho_\l \geq \frac{6C_3}{C_4}\cdot c \geq 2\sum_{B\in\cZ_{\tau_\l}} (\token_\l(B) - \token_{\l+1}(\shrink_\l(B)))=:2c'$$
For any ball $B\in \border_{t,\infty}$ and any ball $B'\in\cZ_{\tau_\l}$, we send 
$$\frac{\rho_\l}{2} \cdot \frac{\token_{\tau_\l}(B') - \token_{\tau_\l+1}(\shrink_\l(B'))}{\sum_{B''\in\cZ_{\tau_\l}} \token_{\tau_\l}(B'') - \token_{\tau_\l+1}(\shrink_\l(B''))}$$ 
tokens to $B'$ and we send the remaining tokens to $t$. For any ball $B\in\cZ_{\tau_\l}$, also send all of the tokens of $\shrink_\l(B)$ to $B$.

Therefore, by \autoref{lem:changeoftoken}, any ball $B\in\cZ_{\tau_\l}$ receives at least
\begin{align*} 
\token_{\tau_\l+1}(\shrink_\l(B)) &+ b\cdot \frac{\rho_\l}{2}\cdot \frac{\token_{\tau_\l}(B) - \token_{\tau_\l+1}(\shrink_\l(B))}{\sum_{B'\in\cZ_{\tau_\l}} \token_{\tau_\l}(B') - \token_{\tau_\l+1}(\shrink_\l(B'))} \\
&
\geq \token_{\tau_\l+1}(\shrink_\l(B)) + (\token_{\tau_\l}(B) - \token_{\tau_\l+1}(\shrink_\l(B))) \\
&= \token_{\tau_\l}(B),
\end{align*}
that we redistribute by the induction hypothesis.
On the other hand, any $t\in T_\l$ receives
$$ |\border_{t,\infty}| \cdot (\rho_\l-\rho_\l/2-C_4\cdot\rho_{\l+1}) \geq |\border_{t,\infty}|\cdot \rho_\l/4 $$
new tokens, and we are done with the induction.
This completes the proof of \autoref{prop:assignedbagreduced}.

\subsubsection*{Acknowledgment.}
We would like to thank James R Lee, Satish Rao, Amin Saberi, Luca Trevisan, and Umesh Vazirani for enlightening and stimulating discussions.
We also would like to specially thank Lap Chi Lau for proofreading and extensive comments on earlier versions of this manuscript. 
%\newpage 
%\printbibliography
\bibliographystyle{alpha}
\bibliography{references}
%\bibliography{ref2.bib}

\appendix
\section{Missing proofs of \autoref{sec:introduction}}
\label{app:missingintroduction}

\atspthinnessthm*
\begin{proof} %of}{\autoref{thm:agmos}}
For a feasible vector $x$ of LP \eqref{lp:tsp},
let $c(x)=\sum_{u,v} c(u,v)\cdot x_{u,v}$.
For two disjoint sets $A,B$ and a set of arcs $T$ let 
$$\Vec{T}(A,B):=\{(u,v): u\in A, v\in B\},$$
be the set of arcs from $A$ to $B$. 
We use the following theorem that is proved in \cite{AGMOS10}.
\begin{theorem}
\label{thm:agmos_version}
For a feasible solution $x$ of LP \eqref{lp:tsp} and a spanning tree $T$ such that
for any $S\subseteq V$,
\begin{equation}\label{eq:directedthinness} |\Vec{T}(S,\overline{S})|- |\Vec{T}(\overline{S},S)| \leq \alpha\cdot \sum_{u\notin S,v\in S} x_{u,v}+x_{v,u}=: \alpha\cdot x(S,\overline{S}),	
\end{equation}
and $\sum_{(u,v)\in T} c(u,v)\leq \beta\cdot c(x)$, there is a polynomial time algorithm that finds a tour of length $O(\alpha+\beta)\cdot c(x)$.
\end{theorem}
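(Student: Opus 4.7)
The plan is to use Theorem~\ref{thm:agmos_version} as a black box: from $x$ and the thin-tree oracle I must manufacture a single spanning tree $T$ of $K_n$ that is simultaneously $O(\alpha)$-directed-thin and $O(\alpha)$-cheap relative to $c(x)$. The strategy is to build an auxiliary unweighted $k$-edge-connected multigraph $H$ whose edges come from the support of $x$, apply the oracle to $H$, and transfer both guarantees onto $T$.

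First I would solve the Held-Karp LP to get $x$, symmetrize to $z_{\{u,v\}} := x_{u,v}+x_{v,u}$, and observe that the cut constraints give $z(S,\overline{S})\geq 2$ for all $\emptyset\neq S\subsetneq V$. Consequently $\tfrac{n-1}{n}z$ lies in the spanning tree polytope of $K_n$ and decomposes as a convex combination of spanning trees. Sampling $m=\Theta(k)$ spanning trees i.i.d.\ from this decomposition and taking their edge multi-union yields $H$, with $\E[|H(S,\overline{S})|] = m\cdot\tfrac{n-1}{n} z(S,\overline{S})$ and $\E[c(H)]\leq O(k)\cdot c(x)$. Because $k\geq\log n$, a Chernoff bound combined with a union bound over the $2^n$ cuts (the exponentially small tail in $k$ gives room) conditions me on the high-probability event that, simultaneously for every $S$, $|H(S,\overline{S})|=\Theta(k\cdot z(S,\overline{S}))$; in particular $H$ is $\Omega(k)$-edge-connected, and $c(H)=O(k\cdot c(x))$ by Markov.

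I then invoke the oracle on $H$ to obtain a spanning tree $T$ with $|T(S,\overline{S})|\leq(\alpha/k)|H(S,\overline{S})|$ for every $S$. Combined with the two-sided bound on $|H(S,\overline{S})|$ this gives $|T(S,\overline{S})|\leq O(\alpha)\cdot z(S,\overline{S})=O(\alpha)\cdot x(S,\overline{S})$, and the chain $|\Vec{T}(S,\overline{S})|-|\Vec{T}(\overline{S},S)|\leq |T(S,\overline{S})|$ certifies the directed-thinness hypothesis of Theorem~\ref{thm:agmos_version} with parameter $O(\alpha)$. Plugging $T$ and $x$ into that theorem then delivers a tour of length $O(\alpha+\beta)c(x)$, which is the desired $O(\alpha)c(x)$ provided $\beta:=c(T)/c(x)=O(\alpha)$.

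The main obstacle is controlling $c(T)$, since the oracle only guarantees a cut bound. I would address this by randomizing the construction of $H$ and symmetrizing the labeling of its parallel copies before invoking the oracle: each LP-edge $e$ contributes roughly $kz_e$ parallel copies to $H$, so a uniformly random relabeling of these copies makes the oracle's output, in expectation, behave like a symmetric selection, giving $\E[c(T)]=\sum_e c_e\cdot \E[|T\cap\text{copies}(e)|]\leq O(c(x))$; a Markov argument then produces a realization in which $T$ is both $O(\alpha)$-thin and $O(1)$-cheap. For the purely existential integrality-gap claim, Nash-Williams guarantees that $H$ (being $\Omega(k)$-edge-connected with $c(H)=O(k\cdot c(x))$) contains a spanning tree of cost $O(c(x))$, and a swap-rounding style interchange between such a cheap tree and any oracle-guaranteed thin tree establishes the existence of a single spanning tree that is simultaneously $O(\alpha)$-thin and $O(c(x))$-cheap; applying Theorem~\ref{thm:agmos_version} then yields the claimed $O(\alpha)$ bound on the integrality gap.
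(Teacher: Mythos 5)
You have proved the wrong theorem. The statement to be proved is Theorem~\ref{thm:agmos_version} itself, which is the \emph{rounding} theorem from \cite{AGMOS10}: it takes as input a feasible LP solution $x$ \emph{and} a spanning tree $T$ that is already known to be directed-thin and cheap, and produces an ATSP tour. Your proposal opens by declaring ``the plan is to use Theorem~\ref{thm:agmos_version} as a black box,'' and then goes on to construct a suitable $T$ from a thin-tree oracle. That is precisely the content of Theorem~\ref{thm:agmos} (which the paper proves in \autoref{app:missingintroduction} by feeding a constructed tree into \ref{thm:agmos_version}); as a proof of \ref{thm:agmos_version} your argument is circular. The paper, for its part, does not prove \ref{thm:agmos_version} either: it cites \cite{AGMOS10} for it verbatim.

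A genuine proof of \ref{thm:agmos_version} is a min-cost circulation argument, and none of your ingredients (sampling trees from the spanning-tree polytope, Karger cut concentration, Nash--Williams, swap rounding) appear in it. The actual route is: set up an integral circulation problem on the complete digraph with lower bound $1$ on each arc of $\Vec{T}$ and $0$ elsewhere, and with generous upper bounds proportional to $x$. Apply Hoffman's circulation theorem: feasibility requires that for every $S$ the forced out-flow $|\Vec{T}(S,\overline{S})|$ can be returned across the cut, and after accounting for the arcs of $\Vec{T}(\overline{S},S)$ this is exactly the quantity bounded by your hypothesis \eqref{eq:directedthinness}, so $O(\alpha)\cdot x$ provides a feasible fractional circulation. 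By integrality of the circulation polytope a min-cost integral circulation $f$ of cost $O(\alpha)\cdot c(x)$ exists and is found in polynomial time. The arc multiset $f+\Vec{T}$ is Eulerian, it is connected since it contains the spanning tree $T$, and its cost is $O(\alpha)c(x)+\beta c(x)=O(\alpha+\beta)c(x)$; an Eulerian walk through it is the required tour (ATSP here permits revisits, so no shortcutting is needed). That is the proof you were asked for; you should either supply it or, since the paper simply cites \cite{AGMOS10}, state that the result is taken as given.

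As a side remark, even read as a proof of Theorem~\ref{thm:agmos} your cost-control step is shaky: ``symmetrizing the labeling of parallel copies'' does not obviously make the oracle's output symmetric in expectation, and the swap-rounding interchange between a cheap tree and a thin tree is not justified. The paper's own proof of \ref{thm:agmos} sidesteps this cleanly by peeling off $\Theta(1/\beta)$ edge-disjoint thin trees from the sampled multigraph and taking the cheapest, which bounds cost by averaging rather than by any exchange argument.
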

%Suppose for some $C\geq 1$, any $C\cdot \log(n)$-edge-connected graph has an $\alpha/\log(n)$-thin tree.
Given a feasible solution $x$ of LP \eqref{lp:tsp},
for a constant $C\geq 4$, we sample $C k\cdot n$ arcs where the probability of choosing each arc $(u,v)$ is proportional to $x_{u,v}$.
We drop the direction of the arcs
and we call the sampled graph $G=(V,E)$.
Since  $x(S,\overline{S}) \geq 2$ for all $S\subseteq V$, and $k\geq \log n$, it follows by the seminal work of Karger~\cite{Kar99}  that for a sufficiently large $C$, with high probability, for any $S\subseteq V$,
$|E(S,\overline{S})|$ is between 1/2 and 2 times $Ck \cdot x(S,\overline{S}).$
Since this happens with high probability, by Markov's 
inequality we can also assume that 
$$c(E) \leq 2C\cdot k \cdot c(x),$$
where for a set $F\subseteq E$, 
$$c(F):=\sum_{\{u,v\}\in F}
\min\{c(u,v),c(v,u)\}.$$

Since $x(S,\overline{S}) \geq 2$ and $C\geq 4$, $G$ is $2k$-edge-connected. Let $\beta=\alpha/k$. By the assumption of the theorem, 
$G$ has a $\beta$-thin tree, say $T_1$. Because of the thinness of $T_1$, $G(V,E\setminus T_1)$ is $2k (1-\beta) \geq k$-edge-connected. 
Therefore, it also has a $\beta$-thin tree.
By repeating this argument, we can find
 $j=\frac{1}{2\beta}$ edge-disjoint $\beta$-thin spanning trees in $G$, $T_1,\dots,T_j$.
 
Without loss of generality, assume that $c(T_1)=\min_{1\leq i\leq j} c(T_i)$.
We show that $T_1$ satisfies the conditions of the above theorem. 
First, since $c(T_1)=\min_{1\leq i\leq j} c(T_i)$,
$$ c(T_1) \leq \frac{c(E)}{j} \leq \frac{2C\cdot  k\cdot c(x)}{j} = 4C\cdot \alpha \cdot c(x). $$
On the other hand, since $T_1$ is $\beta$-thin with respect to $G$, for any set $S\subseteq V$,
$$ |T_1(S,\overline{S})| \leq \beta \cdot |E(S,\overline{S})| \leq 2C\cdot k\cdot \beta\cdot x(S,\overline{S}) = 2C\cdot \alpha\cdot x(S,\overline{S}). $$
Therefore, the theorem follows from an application of \autoref{thm:agmos_version}.
\end{proof} %of}

\end{document}